\newcommand*{\Comb}[2]{{}^{#1}C_{#2}}
\def\SC{{\rm SC}}
\def\SCN{{\rm SCN}}
\def\LP{{\rm LP}}
\def\DPI{{\rm DPI}}
\def\DP{{\rm DP}}
\newtheorem{theorem}{Theorem}[section]
\newtheorem{lemma}[theorem]{Lemma}
\newtheorem{proposition}[theorem]{Proposition}
\newtheorem{corollary}[theorem]{Corollary}
\newtheorem{definition}{Definition}[section]
\def\bkE{{\rm I\kern-.17em E}}
\def\bk1{{\rm 1\kern-.17em l}}
\def\bkD{{\rm I\kern-.17em D}}
\def\bkR{{\rm I\kern-.17em R}}
\def\bkP{{\rm I\kern-.17em P}}
\def\Fbb{{\mathbb{F}}}
\def\Vbb{{\mathbb{V}}}
\def\bkZ{{\bf{Z}}}
\def\bkE{{\rm I\kern-.17em E}}
\def\bk1{{\rm 1\kern-.17em l}}
\def\bkD{{\rm I\kern-.17em D}}
\def\bkR{{\rm I\kern-.17em R}}
\def\bkP{{\rm I\kern-.17em P}}
\newcommand{\pushright}[1]{\ifmeasuring@#1\else\omit\hfill$\displaystyle#1$\fi\ignorespaces}
\newcommand{\pushleft}[1]{\ifmeasuring@#1\else\omit$\displaystyle#1$\hfill\fi\ignorespaces}
\def\bkZ{{\bf{Z}}}
\def\b12{(\beta_1,\beta_2)}
\newenvironment{proofarg}[1][]{\noindent\hspace{2em}{\itshape Proof #1: }}{\hspace*{\fill}~\qed\par\endtrivlist\unskip}
\newenvironment{example}{{\noindent \bf Example}}{\hfill $\square$\hspace{-4.5pt}\vspace{6pt}}
\newcounter{example}
\renewcommand{\theexample}{\thesection.\arabic{example}}
\newenvironment{examplec}[1][]{\refstepcounter{example}
\par\medskip \noindent%
   \textbf{Example~\theexample. #1} \rmfamily}{\hfill $\square$   \hspace{-4.5pt} \vspace{6pt}}
\newcounter{remark}
\renewcommand{\theremark}{\thesection.\arabic{remark}}
\newenvironment{remarkc}[1][]{\refstepcounter{remark}
\noindent{\itshape Remark~\theremark. #1} \rmfamily}{\hspace*{\fill}~$\square$\vspace{0pt}}
\def\t{^\top}
\def\Bscr{\mathscr{B}}
\def\Xscr{\mathcal{X}}
\def\Yscr{\mathcal{Y}}
\def\Ebb{\mathbb{E}}
\newlength{\noteWidth}
\long\def\notes#1{\ifinner
{\tiny #1}
\else
\marginpar{\parbox[t]{\noteWidth}{\raggedright\tiny #1}}
\fi\typeout{#1}}
 \def\notes#1{\typeout{read notes: #1}} 
\newcommand{\I}[1]{\mathbb{I}_{\{#1\}}}
\newcommand{\ie}{i.e.\@\xspace} 
\newcommand{\eg}{e.g.\@\xspace} 
\newcommand{\etal}{et al.\@\xspace} 
\newcommand{\Real}{\ensuremath{\mathbb{R}}}
\newcommand{\minimize}[1]{\displaystyle\minim_{#1}}
\newcommand{\minim}{\mathop{\hbox{\rm min}}}
\DeclareMathOperator*{\st}{subject\;to}
\def\OPT{{\rm OPT}}
\def\FEA{{\rm FEA}}
\def\Ebb{\mathbb{E}}
\def\Pbb{{\mathbb{P}}}
\def\Nbb{{\mathbb{N}}}
\def\Qbb{{\mathbb{Q}}}
\def\Ibb{{\mathbb{I}}}
\def\Dbf{{\bf D}}
\def\dbf{{\bf d}}
\def\fbf{{\bf f}}
\def\ext{{\rm ext}}
\def\exp{\mathop{\hbox{\rm exp}}}
\def\half  {{\textstyle{1\over 2}}}
\def\conv{\textrm{conv}\:}
\def\spose#1{\hbox to 0pt{#1\hss}}
\def\sub#1{^{\null}_{#1}}
\def\text #1{\hbox{\quad#1\quad}}
\def \dtilde{\tilde{d}}
\def\nthinsp{\mskip -2   mu}
\def\S{_{\scriptscriptstyle S}}
\def\superstar{^{\raise 0.5pt\hbox{$\nthinsp *$}}}
\def\SUPERSTAR{^{\raise 0.5pt\hbox{$*$}}}
\def\lamstarT {\lambda^{\raise 0.5pt\hbox{$\nthinsp *$}T}}
\def\Ascr{{\cal A}}
\def\Bscr{{\cal B}}
\def\Iscr{{\cal I}}
\def\Jscr{{\cal J}}
\def\Mscr{{\cal M}}
\def\Oscr{{\cal O}}
\def\Pscr{{\cal P}}
\def\Sscr{{\cal S}}
\def\Sscrhat{\widehat{\mathcal{S}}}
\def\Vscr{{\cal V}}
\def\Mscr{{\cal M}}
\def\Rscr{{\cal R}}
\def\Zscr{{\cal Z}}
\def\Xscr{{\cal X}}
\def\Yscr{{\cal Y}}
\def\dbar{\bar d}
\def\dtilde{\widetilde d}
\def\Nbar{\skew{4.4}\bar N}
\def\Pbar{\skew5\bar P}
\def\Qbar{\bar Q}
\def\Qhat{\widehat Q}
\def\sbar{\bar s}
\def\shat{\widehat s}
\def\Shat{\widehat S}
\def\Ybar{\skew2\bar Y}
\def\Dbf{{\bf D}}
\def\Pbf{{\bf P}}
\def\xbf{{\bf x}}
\def\shatbf{{\bf \shat}}
\def\ybf{{\bf y}}
\def\aur{\;\textrm{and}\;}
\def\non{\nonumber}
\let\forallnew\forall
\renewcommand{\forall}{\forallnew\ }
\let\forall\forallnew
		\def\bkE{{\rm I\kern-.17em E}}
		\def\bk1{{\rm 1\kern-.17em l}}
		\def\bkD{{\rm I\kern-.17em D}}
		\def\bkR{{\rm I\kern-.17em R}}
		\def\bkP{{\rm I\kern-.17em P}}
		\def\bkY{{\bf \kern-.17em Y}}
		\def\bkZ{{\bf \kern-.17em Z}}
		\def\bkC{{\bf  \kern-.17em C}}
		\def\bsp{\begin{split}}
		\def\beq{\begin{eqnarray}}
		\def\bal{\begin{align*}}
		\def\bc{\begin{center}}
		\def\be{\begin{enumerate}}
		\def\bi{\begin{itemize}}
		\def\bs{\begin{small}}
		\def\bS{\begin{slide}}
		\def\ec{\end{center}}
		\def\ee{\end{enumerate}}
		\def\ei{\end{itemize}}
		\def\es{\end{small}}
		\def\eS{\end{slide}}
		\def\eeq{\end{eqnarray}}
		\def\eal{\end{align*}}
		\def\esp{\end{split}}
		\def\qed{ \vrule height7.5pt width7.5pt depth0pt}  
	\def\cp2problem#1#2#3#4{\fbox
		 {\begin{tabular*}{0.9\textwidth}
			{@{}l@{\extracolsep{\fill}}l@{\extracolsep{6pt}}l@{\extracolsep{\fill}}c@{}}
				#1 & & $#4 $ 
			\end{tabular*}}}
		\def\bkE{{\rm I\kern-.17em E}}
		\def\bk1{{\rm 1\kern-.17em l}}
		\def\bkD{{\rm I\kern-.17em D}}
		\def\bkR{{\rm I\kern-.17em R}}
		\def\bkP{{\rm I\kern-.17em P}}
		\def\bkZ{{\bf{Z}}}
\newcommand {\beeq}[1]{\begin{equation}\label{#1}}
\newcommand {\eeeq}{\end{equation}}
\newcommand {\bea}{\begin{eqnarray}}
\newcommand {\eea}{\end{eqnarray}}
\def\texitem#1{\par\smallskip\noindent\hangindent 25pt
               \hbox to 25pt {\hss #1 ~}\ignorespaces}
\def\st{\mbox{subject to}}
\def\bsp{\begin{split}}
		\def\beq{\begin{eqnarray}}
		\def\bal{\begin{align*}}
		\def\bc{\begin{center}}
		\def\be{\begin{enumerate}}
		\def\bi{\begin{itemize}}
		\def\bs{\begin{small}}
		\def\bS{\begin{slide}}
		\def\ec{\end{center}}
		\def\ee{\end{enumerate}}
		\def\ei{\end{itemize}}
		\def\es{\end{small}}
		\def\eS{\end{slide}}
		\def\eeq{\end{eqnarray}}
		\def\eal{\end{align*}}
		\def\esp{\end{split}}
		\def\qed{ \vrule height7.5pt width7.5pt depth0pt}  
\newenvironment{proof}[1][]{{\noindent \emph {Proof} #1: }}{\hfill \qed \vspace{3pt}\\ }
\def\ext{{\rm ext}}
\def\sub{\hbox{\rm s.t}}
\def\conv{{\rm conv}}
			\def\problemsmalla#1#2#3#4{\fbox
		 {\begin{tabular*}{0.47\textwidth}
			{@{}l@{\extracolsep{\fill}}l@{\extracolsep{-4pt}}l@{\extracolsep{\fill}}c@{}}
				#1 &  & $\minimize{#2}$  $#3$ & $ $ \\[4pt]
					  $\sub \ $  &   & $#4$ &  $ $
			\end{tabular*}}
			}
			\renewcommand{\I}[1]{\mathbb{I}\{#1\}}
\author{Sharu Theresa Jose 
\qquad 
 Ankur A. Kulkarni \thanks{Sharu and Ankur are with the Systems and Control Engineering group at the  
Indian Institute of Technology Bombay in Mumbai, India 400076. They can be reached at sharutheresa@sc.iitb.ac.in, kulkarni.ankur@iitb.ac.in.}
\thanks{The results in this paper were presented in part at the IEEE Conference on Decision and Control, held in Osaka, Japan in 2015 \cite{jose2015linear} and at the National Conference on Communications in Chennai, India in 2017~\cite{jose2017linear}.}}
\title{Linear Programming based Converses for Finite Blocklength Lossy Joint Source-Channel Coding}
\begin{document}
\maketitle
\begin{abstract}
A linear programming (LP) based framework is presented for obtaining converses for finite blocklength lossy joint source-channel coding problems. The framework applies for any loss criterion, generalizes certain previously known converses, and also extends to multi-terminal settings. The finite blocklength problem is posed equivalently as a nonconvex optimization problem and using a lift-and-project-like method, a close but tractable LP relaxation of this problem is derived.
 Lower bounds on the original problem are obtained by the construction of feasible points for the dual of the LP relaxation. 
A particular application of this approach leads to new converses which recover and improve on the converses of Kostina and Verd{\'u} for finite blocklength lossy joint source-channel coding and lossy source coding. For finite blocklength channel coding, the LP relaxation recovers the converse of Polyanskiy, Poor and Verd\'{u} and leads to a new improvement on the converse of Wolfowitz, showing thereby that our LP relaxation is asymptotically tight with increasing blocklengths for channel coding, lossless source coding and joint source-channel coding with the excess distortion probability as the loss criterion. Using a duality based argument, a new converse is derived for finite blocklength joint source-channel coding for a class of source-channel pairs. Employing this converse, the LP relaxation is also shown to be tight for all blocklengths for the minimization of the expected average symbol-wise Hamming distortion of a $q$-ary uniform source over a $q$-ary symmetric memoryless channel for any $q\in \Nbb$. The optimization formulation and the lift-and-project method are extended to networked settings and demonstrated by obtaining an improvement on a converse of Zhou \etal for the successive refinement problem for successively refinable source-distortion measure triplets.
\end{abstract}
\begin{IEEEkeywords}
Converses, lossy joint source-channel coding, finite blocklength regime, linear programming relaxation, lift-and-project, strong duality.
\end{IEEEkeywords}
\section{Introduction}
A general problem of finite blocklength lossy joint source-channel coding can be framed as 
the following optimization problem, denoted SC:
$$
\problemsmalla{SC}
	{f,g}
	{\displaystyle \mathbb{E}[\kappa(S,X,Y,\Shat)]}
				 {\begin{array}{r@{\ }c@{\ }l}
				 				 X&=&f(S),\\
								 \widehat{S}&=&g(Y).
	\end{array}}
	$$ Here $S,X,Y,\Shat$ are random variables taking values in fixed spaces $\Sscr, \Xscr,\Yscr$ and $\Sscrhat$ respectively and $\kappa: \Sscr \times \Xscr \times \Yscr \times \Sscrhat \rightarrow \Real$ is a given loss function.  $S$ is a source signal distributed according to a given probability distribution $P_S$. An encoder $f:\Sscr \rightarrow \Xscr$  maps $S$ to an  encoded signal $X$ (see Fig~\ref{fig:sc}). The encoded signal is sent through a channel which given $X$ produces an output signal $Y$ according to a known channel law, denoted by $P_{Y|X}$, following which a decoder, $g:\Yscr \rightarrow \Sscrhat$, maps the channel output signal to a destination signal $\Shat$. Each pair $f,g$ induces a joint distribution on $\Sscr \times \Xscr \times \Yscr \times \Sscrhat$ and the expectation $\Ebb$ is with respect to this joint distribution. 
Problem SC seeks to minimize the expectation of the loss function $\kappa$ over all \textit{codes}, \ie, over all encoder-decoder pairs $(f,g)$.

If $\kappa(S,X,Y,\Shat) = d(S,\Shat)$ for a distortion function $d:\Sscr \times \Sscrhat \rightarrow \Real$, SC gives the  encoder-decoder pair $(f,g)$ yielding minimum expected distortion between $S$ and $\Shat$. When $\Sscrhat=\Sscr$ and $d(S,\Shat) = \I{S\neq \Shat}$, SC\footnote{$\I{\bullet}$ denotes the indicator function of `$\bullet$'.}  finds the code $(f,g)$ that minimizes the probability of error in the reproduction of a message $S$. Note that the spaces $\Sscr,\Xscr,\Yscr,\Sscrhat$ are taken as fixed, whereby problem SC corresponds to a fixed  blocklength setting (unit blocklength, if the alphabet is defined appropriately). 
In the infinite blocklength setting one has a sequence of problems SC parameterized by the blocklength and the spaces $\Sscr,\Xscr,\Yscr,\Sscrhat$ are structured as Cartesian products of smaller fixed spaces. Our interest in this paper is in the finite blocklength problem, and our main contribution is a new framework for obtaining \textit{lower bounds} or \textit{converses} for this problem.

   \begin{figure}[t]
\begin{center}
\includegraphics[scale=0.3,clip=true, trim = 0in 4.5in 0in 2.5in]{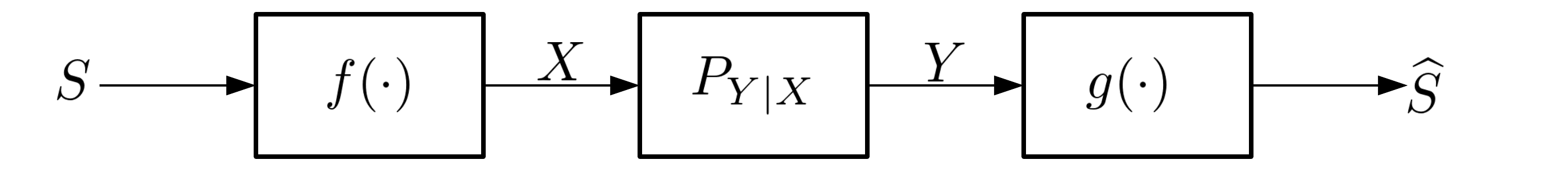} 
\caption{Source-Channel Communication System} \vspace{-.3cm}
\label{fig:sc}
\end{center}
\end{figure}

 Finite blocklength problems have great significance in practical communication systems like multimedia communication which cannot afford to have large delays. 
 However, with the exception of a few cases, such as a Gaussian source with mean square distortion measure across an additive white Gaussian noise channel with a power constraint \cite{goblick1965theoretical},
ascertaining the optimal performance under finite blocklength joint source-channel coding is a challenging problem. Consequently, a natural approach to these problems is to find tight bounds. 

Over the years there has been significant amount of work in deriving such bounds using a variety of tools and arguments. Classically, converses for lossy source coding have been found using $\fbf$-divergences~\cite{ziv73functionals}. 
For the case of channel coding, the state of the art is perhaps the work of Polyanskiy, Poor and Verd\'{u}~\cite{polyanskiy2010channel} (and its numerous follow ups) that employed hypothesis testing to derive converses. For joint source-channel coding with the probability of excess distortion as the loss criterion, 
Kostina and Verd\'{u} in \cite{kostina2013lossy} derived asymptotically tight lower and upper bounds using \textit{tilted information}. While many sharp converses have been discovered for specific loss criteria, what is perhaps unsatisfying is the absence of a common framework using which converses can be found for any loss criterion. Our central contribution is such a framework.

We introduce a linear programming (LP) based approach for obtaining lower bounds on SC that applies for any loss criterion and is also extendable to multi-terminal settings.  
A fundamental difficulty in the finite blocklength problem is the lack of a sufficient condition for establishing an expression as a lower bound on $\SC$. 
The linear programming approach we introduce remedies this. It asks for the construction of functions on subspaces of $\Sscr\times \Xscr\times \Yscr \times \Sscrhat$ such that they satisfy certain pointwise inequalities; any such construction leads to a lower bound.  As such the approach provides a general purpose framework from which specific converses can be derived.

The crux of the approach lies in the derivation of a close but tractable \textit{LP relaxation} of problem SC. Applying the duality theorem of linear programming we then get that the objective value of \textit{any} feasible point of the \textit{dual} of this LP relaxation is a lower bound on the optimal value of SC. The variables and constraints of the dual program are precisely the functions and the pointwise inequalities   mentioned above. 

\subsection{Main Results}
In general, relaxation of an optimization problem may lead to large and persistent departures from the true optimal. Our main results show that quite the opposite is the case with our LP relaxation while considering canonical loss criteria.
 With one dual feasible construction we recover the tilted information based lower bound derived by Kostina and Verd\'{u} \cite{kostina2013lossy} on the minimum excess distortion probability for a finite blocklength lossy joint source-channel code. By variants of this construction we then obtain two levels of improvements on the above converse of Kostina and Verd\'{u}. For finite blocklength lossy source coding, these results imply new lower bounds that improve on the tilted information based converse of Kostina and Verd\'{u} \cite{kostina2012fixed}. For finite blocklength channel coding, our LP relaxation  recovers the converse derived by Polyanskiy, Poor and Verd\'{u}~\cite{polyanskiy2010channel} using hypothesis testing and yields a new improvement on the converse of Wolfowitz~\cite{wolfowitz1968notes}.  
  It follows that the LP relaxation we derive is \textit{asymptotically tight with increasing blocklengths} for channel coding, lossless source coding and joint source-channel coding with the excess distortion probability as the loss criterion -- problems for which the Polyanskiy-Poor-Verd\'{u} and Kostina-Verd\'{u} converses are also tight. Moreover,
   our LP relaxation also implies the strong converse for channel coding.

Kostina and Verd\'{u} have also derived hypothesis testing based converses in \cite{kostina2013lossy} and \cite{kostina2012fixed}  that are known to be better than their tilted information based converses for the case of lossy transmission of a binary uniform source over a binary symmetric channel, and for the case of lossy source coding of  a binary memoryless source. When particularized to these cases our improved converses imply the above converses obtained by Kostina and Verd\'{u}  using hypothesis testing. 
 
While the Kostina-Verd\'{u} converse pertains only to the excess distortion criterion, the LP relaxation framework can be applied to any loss criterion. 
Using a duality based argument, we derive a new lower bound on the expected distortion of a finite blocklength lossy joint source-channel code for a class of channels which includes the binary symmetric channel and the binary erasure channel, amongst others.
Employing this bound, we show that our LP relaxation is \textit{tight for all blocklengths} for the problem of minimizing the expected average symbol-wise Hamming distortion of a $q$-ary uniform source over a $q$-ary symmetric, memoryless channel for any $q\in \Nbb$. The above source-channel pair is \textit{probabilistically matched} in the sense of~\cite{gastpar03tocode}. It is known that the finite blocklength optimal performance of this system is the same as that in the infinite blocklength setting; indeed it is $\epsilon$ where $\frac{\epsilon}{q-1} <\frac{1}{q}$ is the channel crossover  probability. We obtain the same result by showing there is a dual feasible point with objective value $\epsilon$, independently of the blocklength. 

Our method of obtaining LP relaxations can be mechanistically extended to find relaxations for networked settings. Analyzing the dual of the resulting LP relaxation could provide converses for very general problems in network information theory. As a first step in this regard, we consider the successive refinement source coding problem and improve on the converse of Zhou \etal  on joint excess distortion probability of successively refinable source-distortion measure triplets \cite[Lemma 9]{zhou2016successive} which generalizes the Kostina-Verd\'{u} lossy source coding converse to a network setting.
\subsection{The LP Relaxation}
To find the LP relaxation,  we first pose SC equivalently as a \textit{continuous optimization problem} over joint probability distributions, an approach studied in \cite{kulkarni2014optimizer} for stochastic control problems by the second author of the present paper. The resulting optimization problem has a linear objective but a necessarily \textit{nonconvex} feasible region thereby making the problem intractable in general. A natural `optimizer's' approach proposed in \cite{kulkarni2014optimizer} is to seek a convex relaxation of this problem. Since solutions of the original problem and of the relaxation lie on extreme points of the respective feasible regions, a desirable property of any good relaxation is that all extreme points of the feasible region of the original nonconvex problem be retained as extreme points of the relaxation. Our LP relaxation of SC has precisely this property. We argue that classical bounding arguments using the data processing inequality induced by $\fbf$-divergences, such as in~\cite{ziv73functionals}, amount to performing a relaxation that \textit{does not} enjoy this property.

A distinct feature of our LP relaxation is that it is obtained by appealing exclusively to convex analytic principles and does not rely on mutual information or similar other constructs of information theory. We first identify the underlying cause of nonconvexity as the presence of multiple bilinear terms which are coupled in the constraints of the continuous optimization formulation of SC. To obtain the relaxation we replace these terms by their respective convex hulls. Equivalently, we \textit{lift} this problem to a higher dimensional space by introducing new variables replacing the bilinear  terms, and \textit{project} it back on the original space by introducing new implied affine inequalities in the enlarged space. These additional constraints together with the affine constraints present in the original nonconvex problem give our LP relaxation. Since the relaxation only affects the feasible region and not the objective of SC, the relaxation applies to any choice of $\kappa$ and thereby for any loss criterion. 

We also note that the above manner of constructing relaxations extends readily to multi-terminal settings. A finite blocklength joint source-channel coding problem on a network admits a continuous optimization formulation as a particularly structured \textit{polynomial optimization problem}; in the point-to-point setting this polynomial is a bilinear function.

Coincidentally, the recent past has seen a spurt of results using LP duality for obtaining outer bounds for problems in combinatorial coding theory. In~\cite{kulkarni12nonasymptotic} nonasymptotic upper bounds for zero-error deletion correcting codes were derived  using LP duality by the second author of the present paper. This argument was soon refined and extended to other combinatorial channels, see \eg, \cite{fazeli2015generalized}, \cite{kashyap2014upper}, \cite{gabrys2015correcting}, \cite{cullina2016restricted}. Closely related to our work is the work of Matthews \cite{matthews2012linear} wherein he posed the hypothesis testing based channel coding converse of Polyanskiy, Poor and Verd\'{u} \cite{polyanskiy2010channel} as the optimal solution of a linear program obtained by relaxing the problem to non-signaling codes.  The LP relaxation proposed in this paper may be viewed as a logical extension of the above lines of work to the problem of joint source-channel coding. Consequently, the results in this paper conceptually situate distinct converses -- of Kostina-Verd\'{u} and of Polyanskiy-Poor-Verd\'{u} (thereby lossless source and channel coding), the probabilistically matched case of a $q$-ary uniform source and $q$-ary symmetric channel with Hamming distortion, the converse of Zhou \etal for successive refinement problem and several recent converses from combinatorial coding theory -- within a larger unified class of  convex analytic or duality-based converses. This is attractive from the point of view of understanding these converses
and for the promise it holds for a general unified theory of converses for problems in information theory. 

The holy grail in joint source-channel coding is probably a formal understanding of how the complex geometry of the combinatorial finite blocklength problem transforms into smooth and convex characterizations in the large blocklength limit. The LP relaxation, being an approximation of this problem via supporting hyperplanes, serves as a modest tool for understanding this geometry. 
On a related note, the asymptotic tightness of the LP relaxation could be useful as an analytical lemma, and may thereby be of independent interest. 



 \subsection{Outline}

This paper is organized as follows. Section~\ref{sec:prelims} consists of some preliminaries, including some notation and the optimization concepts we need. Section~\ref{sec:optimisiation} consists of the optimization formulation, the LP relaxation and a discussion of the properties of the relaxation. Section~\ref{sec:channelBSC} explains the construction of dual variables to derive converses through the example of channel coding of a binary symmetric channel. Section~\ref{lowerbounds} consists of our main results on the application of duality for obtaining converses. Section~\ref{sec:BMS-BSC} focusses on numerical examples and certain new converses for lossy transmission of a binary memoryless source over a binary symmetric channel. Section~\ref{sec:bsc} discusses a new general duality based converse for finite blocklength lossy joint source-channel coding. The extension to a networked setting is discussed in Section~\ref{sec:networked} and we conclude in Section~\ref{conclusion}.
Appendices are included in Section~\ref{sec:appendices}.

\section{Preliminaries}\label{sec:prelims}
\subsection{Notation}
All random variables in this paper are discrete. Let $\mathcal{P}(\cdot)$ represent the set of (joint) probability distributions on `$\cdot$' and $P \in \mathcal{P}(\cdot)$ or $Q \in \mathcal{P}(\cdot)$ represent specific (joint) distributions. These distributions are interpreted as column vectors in a finite dimensional Euclidean space. 
If $Q$ is a joint probability distribution, let $Q_{\bullet}$ denote the marginal distribution of `$\bullet$'. For example, $Q_{X|S}$ represents the vector with $Q_{X|S}(x|s)$ for  $x \in \mathcal{X},s\in \mathcal{S}$ as its components. 
In general, we use capital letters $A,B,C$ to represent random variables, the corresponding calligraphic letters $\mathcal{A},\mathcal{B},\mathcal{C}$ represent the space or alphabet of these random variables and small letters $a,b,c$ to denote their specific values. 
We use $\mathcal{Z}$ to denote $\mathcal{Z} := \mathcal{S} \times \mathcal{X} \times \mathcal{Y} \times \mathcal{\widehat{S}}$ and $z :=(s,x,y,\shat)\in \mathcal{Z}$. 
For any $z\in \Zscr$, we use $P_SQ_{X|S} P_{Y|X}Q_{\widehat{S}|Y}(z) $ to represent the  product, $P_S(s)Q_{X|S}(x|s)P_{Y|X}(y|x)Q_{\widehat{S}|Y}(\shat|y)$.

 $I(A;B)$ represents the mutual information between random variables $A$ and $B$ and $\I{\bullet}$ represents the indicator function of the event `$\bullet$' which is equal to one when `$\bullet$' is true and is zero otherwise. 
A \textit{string} or \textit{sequence} is a vector of symbols from a given alphabet. We use $\Fbb_q:=\{0,\hdots,q-1\}$ to represent the $q$-ary alphabet and $\Fbb_q^n$ to represent the set of all $q$-ary strings of length $n$. 
For any strings $u,v$ of the same length, 
  we use  $d_{u,v}$ to represent the Hamming distance between $u$ and $v$ (\ie number of positions at which corresponding symbols differ in $u$ and $v$). If $u$ is a binary string, we use $w_{u}$ to represent the Hamming weight of $u$ (\ie number of ones in the binary string $u$). The abbreviations LHS and RHS stand for Left Hand Side and Right Hand Side, respectively. LP stands for linear program or linear program\textit{ming}, based on the context.

\subsection{Convex hull, valid inequalities and duality}	  
This paper relies on some concepts of optimization; although most of them are found in standard literature (\eg,~\cite{rockafellar97convex} and \cite{boyd04convex}), we recount them here in the context of the challenges encountered in this paper. 

A set $K \subseteq \Real^n$ is said to be convex if for any $x,y \in K$ and  $\alpha \in (0,1)$, the convex combination $ \alpha x+(1-\alpha) y \in K$. The \textit{convex hull} of a set $K$, denoted $\conv(K)$ is the intersection of all convex sets containing $K$.  
A \textit{halfspace} is a set of the form $\{x \in \Real^n | a\t x \leq b\}$ where $a$ is a vector in $ \Real^n$ and $b$ is a scalar. $K$ is a \textit{polyhedron} if it is the intersection of finitely many halfspaces; these halfspaces constitute its \textit{halfspace representation}. A bounded polyhedron, called a  \textit{polytope}, also admits another equivalent representation. By the Minkowski-Weyl theorem~\cite{rockafellar97convex}, a set is a polytope if and only if it is the convex hull of finitely many points;  these points may be taken as its \textit{extreme points} and constitute its \textit{vertex representation}. A point $x$ of a set $K$ is an extreme point if it cannot be written as  a convex combination of two distinct elements of $K$, \ie,  if for any $y,z \in K$ and $\alpha \in (0,1)$ we have 
$x= \alpha y +(1-\alpha) z,$ 
then we must have $x=y=z.$ We use $\ext(K)$ to denote the set of extreme points of $K$. 

For an optimization problem $P$ involving the minimization of a continuous function $f_0$ over a closed set  $K$,
\begin{align}
\min_x \qquad & f_0(x) \label{eq:opti} \tag{$P$}\\
\st \qquad & x \in K, \non
\end{align}
$f_0$ is referred to as the \textit{objective function}, points in $K$ are called \textit{feasible} and the set $K$ is called the \textit{feasible region} (denoted $\FEA(P)$). $\OPT(P)$ denotes its optimal value. $K$ is often expressed as $K=\{x \in \Real^n | f_1(x)\leq 0, f_2(x)=0\}$, where the vector-valued functions $f_1,f_2$ are referred to as \textit{constraints}. 
Problem \eqref{eq:opti} is a \textit{convex optimization problem} if $f_0,f_1$ are convex and $f_2$ is affine; in this case $K$ is a convex set. 
Problem \eqref{eq:opti} is a \textit{linear program} if $f_0,f_1,f_2$ are all affine; in this case $K$ is a polyhedron. By introducing a new variable, say $y$, to represent $f_0(x)$ we may equivalently write \eqref{eq:opti} as 
\begin{align*}
\min_{x,y} \qquad & y, \\
\st \qquad & y\geq f_0(x),\\
& x \in K.
\end{align*}
Thus, from now on,  without loss of generality, we consider $f_0$ in \eqref{eq:opti} to be a linear function. For such a problem, a solution lies on an extreme point of the feasible region (if there exists an extreme point).

 \eqref{eq:opti} is a \textit{nonconvex} optimization problem if $\FEA \eqref{eq:opti}$ is not convex. 
Nonconvex optimization problems lack an easily verifiable characterization of optimality whereby these  problems are, in general, extremely difficult to solve both analytically and computationally. Problem SC is of this kind. 
A \textit{convex relaxation} of  \eqref{eq:opti} is the problem, 
\begin{align}
\min_x \qquad & f_0(x) \label{eq:relax} \tag{$P'$}\\
\st \qquad & x\in K', \non
\end{align}
where $K'$ is a {convex} set that contains $K.$ If $K'$ is a polyehdron, then \eqref{eq:relax} is a linear program, and hence a \textit{linear programming relaxation}  of \eqref{eq:opti}. 

LP relaxations together with mathematical programming duality provide a clean framework for obtaining bounds on optimization problems. 
Corresponding to any minimization problem  there exists a related \textit{maximization} problem called the \textit{dual problem} whose optimal value is a \textit{lower bound} on the optimal value of the original problem (referred to as the \textit{primal}). 
LPs are particularly attractive because they satisfy \textit{strong duality}, \ie, 
\begin{theorem}[Strong Duality] \label{thm:strongdual} 
If either the primal LP or its dual problem has a finite optimal value, then so does the other and their optimal values are equal.
\end{theorem}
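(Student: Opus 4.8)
The plan is to reduce Theorem~\ref{thm:strongdual} to a theorem of the alternative — Farkas' lemma — which itself rests only on the separating hyperplane theorem applied to the closed convex cone generated by finitely many vectors \cite{rockafellar97convex}. First I would bring the primal LP to a standard form, say $\OPT(P) = \min\{c\t x : Ax \ge b,\ x \ge 0\}$, with dual $\OPT(D) = \max\{b\t y : A\t y \le c,\ y \ge 0\}$; every LP can be put in this shape by splitting free variables, adding slacks, and negating rows, and under these transformations the dual corresponds as above, so that the dual of the dual is again the primal. The easy half is weak duality: for any primal feasible $x$ and dual feasible $y$ one has $c\t x \ge (A\t y)\t x = y\t (Ax) \ge y\t b = b\t y$, so every dual feasible objective value is a lower bound on every primal feasible objective value. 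Because the dual of the dual is the primal, it then suffices to prove a single implication: if the primal has a finite optimal value $p^*$ (so in particular the primal is feasible), then the dual is feasible and $\OPT(D) = p^*$.

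For that implication I would argue as follows. Fix $\epsilon > 0$. By definition of $p^*$, the system consisting of $Ax \ge b$, $c\t x \le p^* - \epsilon$ and $x \ge 0$ has no solution. Applying Farkas' lemma to this system produces $y \ge 0$ and a scalar $\mu \ge 0$ — the multipliers attached to $Ax \ge b$ and to $-c\t x \ge -(p^*-\epsilon)$, respectively — satisfying $A\t y - \mu c \le 0$ and $b\t y - \mu(p^*-\epsilon) > 0$. If $\mu = 0$, then $y \ge 0$, $A\t y \le 0$, $b\t y > 0$, and Farkas' lemma applied once more certifies that $\{x \ge 0 : Ax \ge b\}$ is empty, contradicting feasibility of the primal; hence $\mu > 0$. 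Setting $\ybar := y/\mu$ then gives $A\t \ybar \le c$ and $\ybar \ge 0$ — so $\ybar$ is dual feasible — together with $b\t \ybar > p^* - \epsilon$. Since $\epsilon>0$ was arbitrary, $\OPT(D) \ge p^*$, and combined with weak duality, $\OPT(D) = p^*$, which is finite. Running the same argument with the roles of primal and dual interchanged (using that the dual of the dual is the primal) gives the converse implication, completing the proof of the stated equality.

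Finally, although the theorem as stated asserts only equality of optimal \emph{values}, I would append the standard strengthening that both optima are attained: the dual feasible region is a nonempty polyhedron on which $b\t y$ is bounded above (by weak duality), and a linear functional bounded above on a nonempty polyhedron attains its maximum — by the Minkowski--Weyl decomposition \cite{rockafellar97convex} such a polyhedron equals $\conv(V) + \mathrm{cone}(R)$ for finite $V,R$, boundedness forces $b\t r \le 0$ for every generator $r\in R$, and hence the supremum is the maximum over the finite set $V$; the symmetric statement holds for the primal. The part of the argument I expect to be the real obstacle is bundled inside Farkas' lemma, namely showing that the cone $\{Mx : x \ge 0\}$ is closed so that the separating hyperplane theorem is applicable, with the secondary bookkeeping nuisance of reducing an arbitrary primal/dual pair to the standard form above. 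An alternative, fully constructive route I could take instead is via the simplex method: when the primal is feasible and bounded, a pivoting rule that prevents cycling (e.g.\ Bland's rule) terminates at an optimal basic feasible solution whose vector of simplex multipliers is dual feasible and complementary to it, which yields attainment on both sides and equality of the optimal values at once.
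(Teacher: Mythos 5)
Your proposal is correct, but there is nothing in the paper to compare it against: Theorem~\ref{thm:strongdual} is stated in the preliminaries as a standard background fact and invoked later (e.g.\ in Lemma~\ref{lem:dualfeasiblebound}) without proof, with the optimization literature cited for it. What you supply is the classical textbook proof, and it is sound: weak duality via $c\t x \ge y\t A x \ge b\t y$, then the nontrivial direction by applying the inequality form of Farkas' lemma to the infeasible system $Ax\ge b$, $c\t x\le p^*-\epsilon$, $x\ge 0$, correctly disposing of the $\mu=0$ case by a second appeal to Farkas (which would contradict primal feasibility), rescaling to get a dual feasible $\ybar$ with $b\t\ybar > p^*-\epsilon$, and letting $\epsilon\downarrow 0$; the symmetric direction follows since the dual of the dual is the primal. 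Note that your argument only needs $p^*$ to be the (finite) primal infimum, so no primal attainment is presupposed, which is the right way to set it up. The attainment addendum via the Minkowski--Weyl decomposition is also correct, though it is more than the paper's statement requires -- the paper only ever uses equality of optimal values together with explicitly constructed dual feasible points. The two genuine technical debts you acknowledge (closedness of finitely generated cones inside Farkas, and the standard-form bookkeeping under which primal and dual transform consistently) are exactly the standard ones, and flagging rather than reproving them is appropriate at this level; the simplex/Bland's-rule alternative you mention would equally suffice.
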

And, moreover, the dual of an LP is itself an LP and it  is known in an explicit form. For example, if the primal has the following form, 
\begin{align*} (\textbf{P}) \quad \quad \qquad  \min_x \quad &c\t x \\
\mbox{subject to} \quad \quad   Ax &= b, \\ 
x &\geq 0,
\end{align*}where $c \in \mathbb{R}^n, b \in \mathbb{R}^m,A \in \mathbb{R}^{m \times n}$, 
its dual problem is,
\begin{align*} ( \textbf{D}) \quad \quad \qquad   \max_y \quad &b\t y\\
\mbox{subject to} \quad \quad    A\t y &\leq c.
\end{align*} 
 Consequently, if an LP relaxation of \eqref{eq:opti} is found, a systematic way to obtain a lower bound on \eqref{eq:opti} is to find a point $y$ that is feasible for the dual of this LP. 
Specifically, if $(\Pbf)$ is an LP relaxation of a nonconvex problem \eqref{eq:opti} and $\OPT(\Pbf)$ is finite, then  Theorem~\ref{thm:strongdual} gives,
\[\OPT\eqref{eq:opti} \geq \OPT(\Pbf) =\OPT(\Dbf) \geq b\t y,\]
for any $y \in \FEA(\Dbf)$ (\ie, $y$ such that $A\t y\leq c $).

While the gap in the second inequality above can be made to vanish via the right choice of $y$ (thanks to Theorem~\ref{thm:strongdual}), the gap in the first inequality is fundamental and can only be improved by obtaining a tighter relaxation $(\Pbf)$. This puts the onus on discovering an LP relaxation that closely approximates \eqref{eq:opti}. 

If $K$ is compact and $\conv(K)$ happens to be a polyhedron, the relaxation \eqref{eq:relax} with $K'=\conv(K)$ is guaranteed to be exact. But finding the halfspace representation of the convex hull is in general hard\footnote{For binary integer programs with $n$ variables, the number of halfspaces required seem to be to the tune of $n^n$. The reader may see~\cite{conforti2014integer} for more.} which makes it hard to express \eqref{eq:relax} as a LP. An alternative to this situation is to seek \textit{valid inequalities}.   
An inequality ``$a\t x \leq b$", where $0\neq a\in \Real^n,b\in \Real$ is said to be \textit{valid} for $K$ if $K \subseteq \{x | a\t x \leq b\}.$ Although finding nontrivial valid inequalities for an arbitrary set is also not straightforward, it is often possible to exploit the algebraic nature of the constraints of $K$ to infer valid inequalities. For example, consider the nonconvex set,
\[K=\{(w,x_1,x_2) \in \Real^3| w=x_1x_2, x_1 \in [l_1 ,u_1], x_2 \in [l_2 ,u_2]\}. \]
Then one can show that the following inequalities are valid for $K$,
  \begin{alignat}{2}
w &\leq u_2x_1+l_1x_2-l_1u_2, \quad& &w \leq l_2x_1+u_1y_1-u_1l_2, \label{eq:mccormick1} \\
 w &\geq u_2x_1+u_1x_2-u_1u_2, \quad& &w \geq l_2x_1+l_1x_2-l_1l_2. \label{eq:mccormick2} 
\end{alignat} 
To see \eqref{eq:mccormick1}, observe that if $(w,x_1,x_2) \in K$ then, $(x_1-l_1)(u_2-x_2) \geq 0.$ Likewise, 
$(x_2-l_2)(u_1-x_1) \geq 0$. Expanding and substituting $w=x_1x_2$ shows \eqref{eq:mccormick1} are valid. Similarly, one can use that $(u_1-x_1)(u_2-x_2) \geq 0$ and $(x_1-l_1)(x_2-l_2)\geq 0$ to check the validity of \eqref{eq:mccormick2}. It follows that  
$K'=\{(w,x_1,x_2) \in \Real^3 |\ \eqref{eq:mccormick1} \aur \eqref{eq:mccormick2} \ {\rm hold}\},$
is a polyhedron containing $K$. These simple observations are in fact quite powerful. It is known that $K'$ is in fact equal to $\conv(K)$ \cite{al1983jointly}. Inequalities \eqref{eq:mccormick1}-\eqref{eq:mccormick2} are called the \textit{McCormick inequalities} --  \eqref{eq:mccormick2} are the \textit{convex under-estimating inequalities} and \eqref{eq:mccormick1} are the \textit{concave over-estimating inequalities}. The bilinear product $w$ is sandwiched between the two sets of inequalities.  These arguments when systematically generalized lead to the so-called \textit{lift and project} method~\cite{conforti2014integer} or \textit{reformulation linearization technique}~\cite{sherali91bilinear}, \cite{sherali1992global} which discover valid inequalities for polynomial optimization problems by multiplying constraints.


Above story is quick account of the challenges encountered in SC. The continuous optimization formulation of SC has a linear objective and a feasible region $\FEA(\SC)$ that is nonconvex but with finitely many extreme points. Thus $\conv (\FEA(\SC))$ is a polyhedron. However, we know its convex hull only in an abstract form -- specifically, we only know its vertex representation. We derive valid inequalities for the feasible region by exploiting the structure of the problem via a lift-and-project like argument. The resulting LP relaxation of SC has the property that all extreme points of $\FEA(\SC)$ are extreme points of the LP relaxation. This is indicative of the relaxation being a close approximation of SC. We find this is indeed the case -- the LP relaxation implies several known converses. Moreover, it leads to new  converses.

\section{Optimization-based Formulation and LP relaxation}\label{sec:optimisiation}
This section presents the optimization based formulation of SC. We then derive the LP relaxation, discuss its properties and formulate and discuss the dual of the LP relaxation.
\subsection{A continuous optimization formulation for SC}
Consider a joint probability distribution $Q:\Zscr \rightarrow [0,1]$ factored as:
\begin{align}
Q(s,x,y,\shat)  \equiv P_S(s)Q_{X|S}(x|s)P_{Y|X}(y|x)Q_{\widehat{S}|Y}(\shat|y) \label{eq:Q} 
\end{align}where recall that $\Zscr:=\Sscr \times \Xscr \times \Yscr \times \Sscrhat$.
Any $Q$ that satisfies \eqref{eq:Q} is defined by $Q_{X|S}$ and $Q_{\widehat{S}|Y}$ lying in the space of conditional probability distributions $\mathcal{P}(\mathcal{X}|\mathcal{S})$ and $\mathcal{P}(\mathcal{\widehat{S}}|\mathcal{Y})$ respectively. The kernels $Q_{X|S}$ and $Q_{\Shat|Y}$ represent a \textit{randomized encoder} and \textit{randomized decoder} respectively and 
 together they constitute a `randomized code'. A randomized encoder (resp., a randomized decoder) is said to be \textit{deterministic} if there exists a function $f$ (resp., $g$) such that $Q_{X|S}(x|s)=\I{x=f(s)},$ for all $x,s$  (resp., $Q_{\Shat|Y}(\shat|y)=\I{\shat=g(y)},$ for all $\shat,y$). A deterministic encoder-decoder pair constitute a deterministic code. Recall that SC as posed in Section I is  an optimization problem over deterministic codes.

Now, consider the following optimization problem over joint probability distributions, 
 $$
\problemsmalla{$\SC{'}$}
	{Q,Q_{X|S},Q_{\widehat{S}|Y}}
	{\displaystyle \sum_{z}\kappa(z) Q(z)}
				 {\begin{array}{r@{\ }c@{\ }l}
				 Q(z)&\equiv &P_{S} Q_{X|S} P_{Y|X}  Q_{\widehat{S}|Y}(z),\\
				 				 \sum_{x} Q_{X|S}(x|s)&=& 1 \quad \forall s \in \mathcal{S},\\
\sum_{\shat} Q_{\widehat{S}|Y}(\shat|y)&=&1 \quad\forall y \in \mathcal{Y},\\
 Q_{X|S}(x|s) &\geq& 0\quad \forall s \in \mathcal{S},x \in \mathcal{X},\\
  Q_{\widehat{S}|Y}(\shat|y) &\geq& 0 \quad \forall \shat \in \mathcal{\widehat{S}},y \in \mathcal{Y},
	\end{array}}
	$$  which amounts to minimizing the same objective over randomized codes. We first note the set of extreme points of the feasible region of $\SC{'}$.\begin{theorem}\label{thm:extremepointsSC}
 The extreme points of the feasible region of $\SC{'}$ are given as,  \begin{align*}
&\ext(\FEA({\SC{'}}))=\left\{ (Q,Q_{X|S},Q_{\widehat{S}|Y}) \mid \exists \hspace{0.05cm}(f,g) \hspace{0.1cm} \mbox{such that} \hspace{0.1cm} \right. \\& \left.   Q_{X|S} \equiv \I{x=f(s)},Q_{\widehat{S}|Y} \equiv \I{\shat=g(y)}
  , Q \in \mathcal{P}(\Zscr) \right.\\&\left.\mbox{satisfies} \hspace{0.1cm}Q(z)\equiv P_S(s)Q_{X|S}(x|s)P_{Y|X}(y|x)Q_{\Shat|Y}(\shat|y)\right\}. \end{align*} 
 \end{theorem}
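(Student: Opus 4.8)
The plan is to characterize the extreme points of $\FEA(\SC{'})$ by showing a double inclusion. The feasible region is the set of triples $(Q, Q_{X|S}, Q_{\widehat{S}|Y})$ where $Q_{X|S} \in \Pscr(\Xscr|\Sscr)$, $Q_{\widehat{S}|Y} \in \Pscr(\Shat|\Yscr)$, and $Q$ is determined by the factorization \eqref{eq:Q}. The key observation is that since $Q$ is an affine (in fact multilinear, but with $P_S$ and $P_{Y|X}$ fixed, bilinear) function of the free variables $(Q_{X|S}, Q_{\widehat{S}|Y})$, and in particular $Q$ is \emph{uniquely determined} by the pair $(Q_{X|S}, Q_{\widehat{S}|Y})$, the extreme-point structure of $\FEA(\SC{'})$ is governed by the extreme-point structure of the product set $\Pscr(\Xscr|\Sscr) \times \Pscr(\Shat|\Yscr)$. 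I would first establish this reduction carefully: if $(Q, Q_{X|S}, Q_{\widehat{S}|Y})$ is written as a convex combination $\alpha(Q^1, Q_{X|S}^1, Q_{\widehat{S}|Y}^1) + (1-\alpha)(Q^2, Q_{X|S}^2, Q_{\widehat{S}|Y}^2)$ of two feasible points, then in particular $Q_{X|S} = \alpha Q_{X|S}^1 + (1-\alpha) Q_{X|S}^2$ and similarly for $Q_{\widehat{S}|Y}$; conversely, given any convex decomposition of the pair $(Q_{X|S}, Q_{\widehat{S}|Y})$ into feasible pairs, the induced $Q$'s need not combine convexly to give $Q$ (because of the bilinearity), so one must be slightly careful — this is exactly where the structure matters.

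For the inclusion ``$\supseteq$'' (deterministic codes give extreme points): suppose $Q_{X|S} \equiv \I{x = f(s)}$ and $Q_{\widehat{S}|Y} \equiv \I{\shat = g(y)}$ for some functions $f, g$, with $Q$ the induced distribution. Suppose this triple equals $\alpha(\cdot)^1 + (1-\alpha)(\cdot)^2$ with both summands feasible and $\alpha \in (0,1)$. Looking at the $Q_{X|S}$ component: a $0/1$-valued vector that is a convex combination of two stochastic-matrix vectors forces both to equal it entrywise, so $Q_{X|S}^1 = Q_{X|S}^2 = Q_{X|S}$; likewise $Q_{\widehat{S}|Y}^1 = Q_{\widehat{S}|Y}^2 = Q_{\widehat{S}|Y}$. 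Then by \eqref{eq:Q} both induced distributions $Q^1, Q^2$ equal the same product $P_S Q_{X|S} P_{Y|X} Q_{\widehat{S}|Y}$, hence $Q^1 = Q^2 = Q$. So the triple cannot be written nontrivially as such a combination, i.e.\ it is an extreme point.

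For the inclusion ``$\subseteq$'' (extreme points must be deterministic): suppose $(Q, Q_{X|S}, Q_{\widehat{S}|Y}) \in \FEA(\SC{'})$ with $Q_{X|S}$ not deterministic — say for some $s_0$, the row $Q_{X|S}(\cdot|s_0)$ has at least two positive entries, at $x_1 \neq x_2$. Then I would perturb: define $Q_{X|S}^{\pm}$ by moving mass $\pm\varepsilon$ between $(s_0,x_1)$ and $(s_0,x_2)$ and leaving all other rows untouched; for small $\varepsilon > 0$ both remain valid conditional distributions, they are distinct, and $Q_{X|S} = \tfrac12 Q_{X|S}^+ + \tfrac12 Q_{X|S}^-$. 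Crucially, since the perturbation is \emph{affine in the $Q_{X|S}$ slot while holding $Q_{\widehat{S}|Y}$ fixed}, the induced $Q^{\pm}$ (via \eqref{eq:Q}) satisfy $Q = \tfrac12 Q^+ + \tfrac12 Q^-$ — here the bilinearity is harmless because we vary only one factor. Thus $(Q, Q_{X|S}, Q_{\widehat{S}|Y})$ is a proper convex combination of two distinct feasible triples, contradicting extremality. The symmetric argument rules out a non-deterministic $Q_{\widehat{S}|Y}$. Hence every extreme point has both kernels deterministic, which is the claimed form.

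\textbf{Main obstacle.} The one place requiring genuine care is the interaction of the bilinear constraint \eqref{eq:Q} with convex combinations: a convex combination of feasible \emph{pairs} $(Q_{X|S}, Q_{\widehat{S}|Y})$ does not in general induce the convex combination of the corresponding $Q$'s (the map is bilinear, not affine, when both factors move). The argument sidesteps this by only ever perturbing \emph{one} kernel at a time — along such one-factor directions \eqref{eq:Q} \emph{is} affine, so the induced $Q$ combines correctly — and in the $\supseteq$ direction by using that a $0/1$ vector has no nontrivial convex decomposition at all. I would make sure the write-up isolates this point rather than glossing it, since it is the only non-routine step; everything else (valid perturbations staying in the simplex, distinctness) is elementary.
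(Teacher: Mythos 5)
Your proposal is correct and follows essentially the same route as the paper's proof: the "$\supseteq$" direction uses that deterministic (0/1-valued) kernels are extreme in the product of conditional-probability simplices, forcing the kernel components and hence the induced $Q$'s to coincide, and the "$\subseteq$" direction decomposes a non-deterministic row of one kernel while holding the other kernel fixed, exactly as the paper does (your explicit $\pm\varepsilon$ perturbation versus the paper's abstract convex decomposition of a non-extreme simplex point is only a cosmetic difference). Your remark that the bilinear map \eqref{eq:Q} is affine along one-factor directions is precisely the implicit step the paper exploits when multiplying the decomposition through by $P_S(s^{*})P_{Y|X}(y|x)Q_{\Shat|Y}(\shat|y)$.
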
The proof is included in Appendix~\ref{app:sec3}.
 
	By replacing $Q$ in the objective function of $\SC{'}$ with the RHS of first constraint, the resulting $\SC{'}$ has a bilinear objective function due to the presence of product terms $Q_{X|S}Q_{\Shat|Y}$ and a seperable feasible region given as $ \Pscr(\Xscr|\Sscr)\times \Pscr(\Sscrhat|\Yscr)$. Hence, $\SC{'}$ becomes equivalent to a seperably constrained bilinear programming problem. It is well known \cite[Exercise 4.25]{bazaraa06nonlinear} that such a problem admits an optimal solution at an extreme point of the feasible region. This implies that there exists an optimal solution of $\SC{'}$ at the extreme point of $ \Pscr(\Xscr|\Sscr)\times \Pscr(\Sscrhat|\Yscr)$, which in turn corresponds to a deterministic code. 
 Hence, the above optimization formulation $\SC{'}$  is equivalent to SC in the sense that they have the same optimal value.

We also note that in the setting of SC where $S,X,Y$ and $\Shat$ are discrete random variables taking values in finite spaces, 
there exist finitely many choices for functions $f: \Sscr \rightarrow \Xscr$ and $g:\Yscr \rightarrow \Sscrhat$. 
Consequently, SC is a discrete optimization problem. 
However, within the framework of $\SC{'}$, where optimization is done over probability distributions ($Q,Q_{X|S},Q_{\Shat|Y}$), we obtain a \textit{continuous optimization formulation} of SC. It is this continuous formulation which further along the way aids in the derivation of an LP relaxation. Since $\SC{'}$
is equivalent to SC, henceforth, we use SC to denote $\SC{'}$.

An important characteristic of $\SC$ is that $\FEA(\SC)$ is in fact nonconvex (see~\cite{kulkarni2014optimizer}). Our approach to lower-bounding $\SC$ will be via LP relaxation, which will be introduced in the following section. 
Presently, we first motivate the properties we desire from a relaxation. Consider the problem $\SC$ of lower bounding $\Ebb[d(S,\Shat)]$ where $d:\Sscr \times \Sscrhat \rightarrow \mathbb{R}$ is a distortion function. A classical approach \cite{ziv73functionals} to derive this lower bound is to employ the concept of $\fbf$-mutual information. Using the data processing inequality, this argument results in the following inequality, 
\begin{align}
R_{\fbf}(\dtilde) \leq C_{\fbf},\label{eq:fratedistoriton}
\end{align} 
where recall that 
 \begin{align}R_{\fbf}(\dtilde)=\min_{P_{\Shat|S}: \Ebb[d(S,
 \Shat)]\leq \dtilde}I_{\fbf}(S;\Shat), \label{eq:rf} 
 \end{align} where the minimum is over $P_{\Shat|S} \in \Pscr(\Sscrhat|\Sscr)$ and 
 \begin{align}C_{\fbf}=\max_{P_X}I_{\fbf}(X;Y), \label{eq:cf} 
 \end{align} where the maximum is over $P_X \in \Pscr(\Xscr)$. The $\mathbf{f}$-mutual information between discrete random variables $A \in \Ascr$, $B \in \mathcal{B}$ is defined as $$I_{\mathbf{f}}(A;B)=\sum_{a \in \Ascr}\sum_{b \in \mathcal{B}}P_{A,B}(a,b) \mathbf{f} \left( \frac{P_B(b)P_A(a)}{P_{A,B}(a,b)} \right),$$ where $\mathbf{f}: \mathbb{R} \rightarrow \mathbb{R}$ is any convex function such that $\mathbf{f}(1)=0$.
When $\fbf(x) \equiv -\log(x)$, $I_{\mathbf{f}}(A;B)=I(A;B),$ the mutual information between random variables $A$ and $B$. Since $R_{\fbf}(\dtilde)$ is a non-increasing function of $\dtilde$, a lower bound on $\dtilde$ follows from \eqref{eq:fratedistoriton}.

 
Observe that the above approach is equivalent to considering the following 
convex relaxation of SC with $\kappa(s,x,y,\shat)\equiv d(s,\shat)$,
$$
\problemsmalla{$ \rm{DPI_{\fbf}}$}
	{Q \in \Pscr(\Zscr)}
	{\displaystyle \Ebb_{Q} [ \kappa(S,X,Y\Shat)]}
				 {\begin{array}{r@{\ }c@{\ }l}
				 Q_S(s)&\equiv& P_S(s),\\
				 Q_{Y|X}(y|x) &\equiv & P_{Y|X}(y|x),\\
				 I_{\fbf}(Q_{S,\Shat}) &\leq & C_{\fbf},\\
	\end{array}}
	$$
	 where $I_{\fbf}(Q_{S,\Shat})$ is the $\fbf$-mutual information of $S,\Shat$ under the distribution $Q_{S,\Shat}$.
We explain the equivalence and convexity of ${\rm DPI_{\fbf}}$ in detail in Theorem~\ref{thm:fDPI} in Appendix~\ref{app:sec3}.
Convex analytically speaking,  this relaxation has a crucial shortcoming.
There are extreme points of $\FEA(\SC)$ which are not on the boundary of the relaxation $\FEA(\DPI_{\fbf})$, and thereby are not extreme points of $\FEA(\DPI_{\fbf})$. One can verify this through the following example.
\begin{examplec} Employing Theorem~\ref{thm:extremepointsSC}, consider an extreme point of $\FEA(\SC)$ given by the deterministic code $Q_{X|S}(x|s) \equiv \I{x=f(s)}$, where $f$ is an invertible function, and $Q_{\widehat{S}|Y}(\shat|y) \equiv \I{\shat= \shatbf}$ for some $\shatbf \in \mathcal{\widehat{S}}$ and where $Q(s,x,y,\shat)$ satisfies \eqref{eq:Q}. We see that $C_{\fbf} \geq I_{\mathbf{f}}(Q_{X,Y})= \sum_{x,y}P_{Y|X}(y|x)P_S(f^{-1}(x)) \mathbf{f} \left( \frac{\sum_x P_{Y|X}(y|x)P_S(f^{-1}(x))}{P_{Y|X}(y|x)}\right) > I_{\mathbf{f}}(Q_{S,\widehat{S}})=0$.  Thus, this point lies in the (relative) interior of $\FEA(\DPI_{\fbf})$ and cannot be an extreme point $\FEA(\DPI_{\fbf}).$ \end{examplec} 

Consequently, for problem SC, there exist loss functions $\kappa$ for which the convex relaxation ${\rm DPI}_{\fbf}$ is not tight. When $\kappa$ takes the form, $\kappa(s,x,y,\shat) \equiv d(s,\shat)$, the relaxation is tight only if $R_\fbf(\OPT(\SC)) =C_\fbf.$ When $\fbf(x) \equiv -\log(x)$, this corresponds to the rare, probablistically matched case~\cite{gastpar03tocode} wherein single-letter codes are optimal over arbitrary blocklengths. 

The above example highlights what one may ask for from a good relaxation. Recall from Section~\ref{sec:prelims} that for a problem like SC with a linear objective, the ideal relaxation of the nonconvex set $\FEA(\SC)$ is its convex hull. However, obtaining a half-space representation of the convex hull of nonconvex sets of the form of $\FEA(\SC)$ is still an open problem. A desirable property of a relaxation is that all extreme points of $\FEA(\SC)$ be retained as extreme points of the relaxation. 
In the next section we present an LP relaxation of SC with this property.

\subsection{Linear programming relaxation}
We apply the following lift-and-project-like idea (see Section~\ref{sec:prelims}) to derive the relaxation. We lift SC to a higher dimensional space by introducing new variables $W(s,x,y,\shat)$ to represent the product $Q_{X|S}(x|s)Q_{\Shat|Y}(\shat|y),$ for all $s,x,y,\shat$. Then, we obtain valid inequalities using these newly introduced variables $W(s,x,y,\shat)$ in this enlarged space.
To do so, for each $s \in \Sscr$, we multiply the constraint $\sum_x Q_{X|S}(x|s)=1,$ with the variables $Q_{\Shat|Y}(\shat|y)$ for all $\shat \in \Sscrhat,y\in \Yscr$, and likewise for each $y\in \Yscr$ we multiply the constraint $\sum_{\shat}Q_{\Shat|Y}(\shat|y)=1,$ by $Q_{X|S}(x|s)$, for all $x\in \Xscr,s\in \Sscr$. We further obtain additional constraints by multiplying the variable bounds with each other, \ie $(1-Q_{X|S}(x|s))(1-Q_{\Shat|Y}(\shat|y))\geq 0,$ for all $s,x,y,\shat$ and $Q_{X|S}(x|s)Q_{\Shat|Y}(\shat|y) \geq 0,$ for all $s,x,y,\shat$. Subsequently, we replace the bilinear product terms $Q_{X|S}(x|s)Q_{\Shat|Y}(\shat|y)$ in the constraints with $W(s,x,y,\shat)$ to obtain valid inequalities in the space of $(Q_{X|S},Q_{\Shat|Y},W)$. 
Clearly, these constraints are implied by the constraints of SC.  To obtain the LP relaxation, the constraint $W(s,x,y,\shat)=Q_{X|S}(x|s)Q_{\Shat|Y}(\shat|y)$ for all $s,x,y,\shat$ is dropped.
 


Thus, following is our LP relaxation.
$$ \begin{small}\problemsmalla{LP}
	{ Q_{X|S},Q_{\widehat{S}|Y},W}
	{\displaystyle \sum_{z}\kappa(z)P_{S}(s)P_{Y|X}(y|x)W(z)}
				 {\begin{array}{r@{\ }c@{\ }l}
				 				 \sum_{x} Q_{X|S}(x|s)&=& 1 \hspace{0.05cm} : \gamma^a(s) \qquad  \hspace{0.08cm} \forall s\\
\sum_{\shat} Q_{\Shat|Y}(\shat|y)&=&1 \hspace{0.05cm} :\gamma^b(y) \qquad \hspace{0.08cm}\forall y\\
 \sum_{x} W(z)-Q_{\widehat{S}|Y}(\shat|y)&=&0  \hspace{0.05cm} :\lambda^a(s,\shat,y)
  \hspace{0.1cm}  \forall s,\shat,y\\ 
 \sum_{\shat}W(z)-Q_{X|S}(x|s)&=&0  \hspace{0.05cm} :\lambda^b(x,s,y)\hspace{0.1cm} \forall x,s,y\\
Q_{X|S}(x|s)+Q_{\widehat{S}|Y}(\shat|y)-W(z)&\leq &1  \hspace{0.05cm} :\mu(z)\qquad \hspace{0.25cm} \forall z\\ Q_{X|S}(x|s) &\geq& 0\hspace{0.05cm} : \phi^a(x|s) \quad \hspace{0.15cm} \forall s,x\\
 Q_{\widehat{S}|Y}(\shat|y) &\geq& 0 \hspace{0.05cm} : \phi^b(\shat|y) \quad \hspace{0.2cm} \forall \shat,y\\
 W(z)&\geq& 0 \hspace{0.05cm}: \nu(z) \qquad  \hspace{0.3cm} \forall z.
	\end{array}} \end{small} $$ 
	Here $\gamma^a,\gamma^b,\lambda^a,\lambda^b,\mu,\phi^a,\phi^b$ and $\nu$ are Lagrange multipliers corresponding to the respective constraints. The following theorem proves that the feasible region of LP contains the feasible region of SC. Let the product $Q_{X|S}(x|s)Q_{\widehat{S}|Y}(\shat|y)$ be represented as $Q_{X|S} Q_{\widehat{S}|Y}(z)$ for all $z$. Similarly, let $P_SP_{Y|X}W(z)$ represent the product $P_S(s)P_{Y|X}(y|x)W(s,x,y,\shat) $ for all $z$.
\begin{figure}
\begin{center}
\includegraphics[scale=0.40, clip=true, trim = 0.55in 4.5in 0in 2.75in]{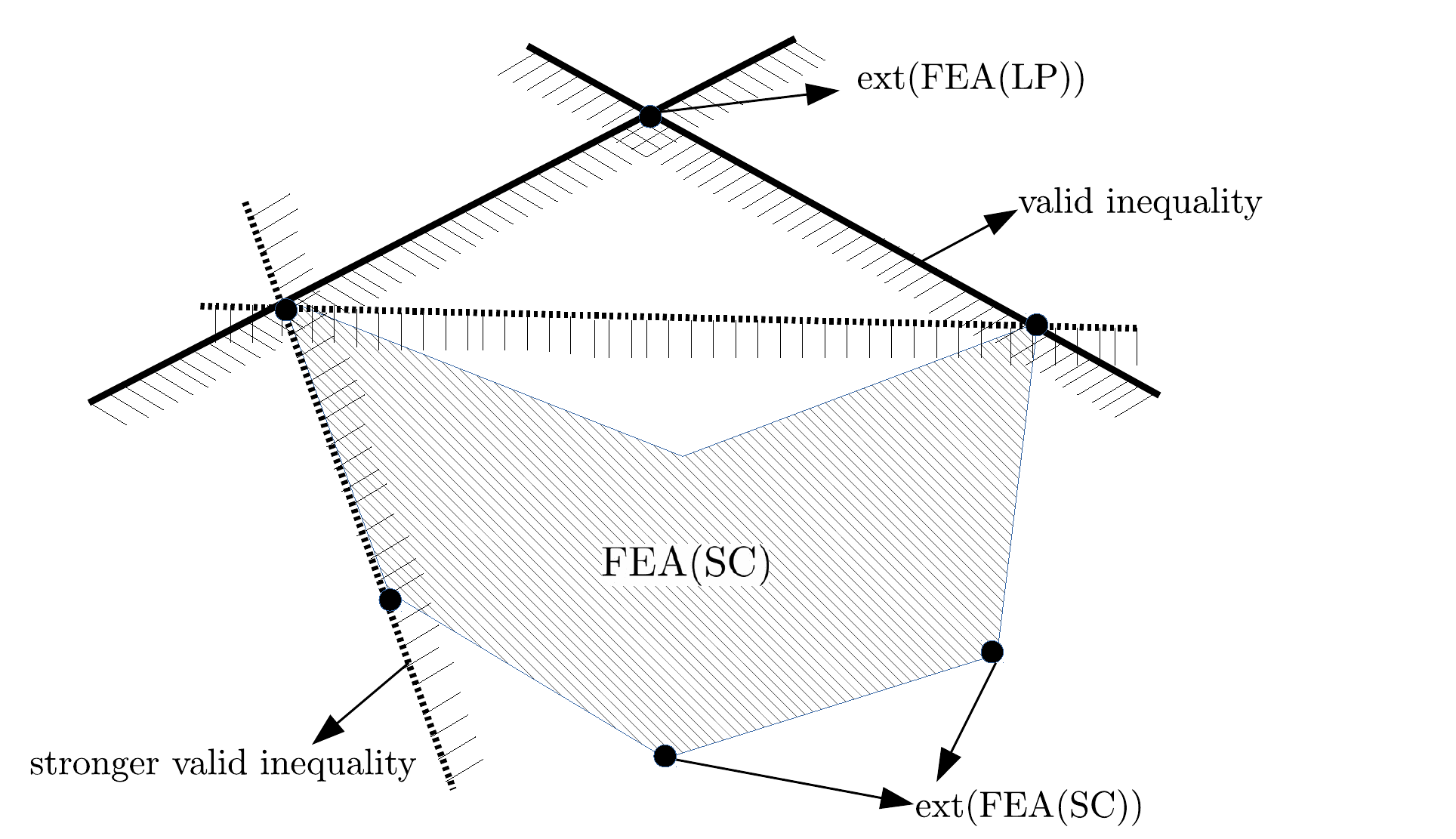} 
\caption{$\FEA(\SC)$ is depicted as a nonconvex set with finitely many extreme points. Also shown are valid inequalities that pass through some of these extreme points. \vspace{-.4cm}} 
\label{Fig3}
\end{center} 
\end{figure}

\begin{theorem} \label{thm:lpisrelaxation} 
LP is a convex relaxation of SC. \ie,
\begin{align*}
\FEA({\rm LP})&\supseteq  \biggl \lbrace (Q_{X|S},Q_{\widehat{S}|Y},W)\mid
(Q_{X|S},Q_{\widehat{S}|Y},Q) \in\\& \FEA({\SC}), W(z)\equiv Q_{X|S} Q_{\widehat{S}|Y}(z) \biggr \rbrace.
\end{align*}
\end{theorem}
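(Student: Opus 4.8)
The plan is to take an arbitrary element $(Q_{X|S},Q_{\widehat{S}|Y},Q)\in\FEA(\SC)$, set $W(z)\equiv Q_{X|S}(x|s)Q_{\widehat{S}|Y}(\shat|y)$ for all $z$, and verify directly that the triple $(Q_{X|S},Q_{\widehat{S}|Y},W)$ satisfies every constraint of LP. Since nothing about LP other than its feasible region matters here, the entire argument is a constraint-by-constraint check.

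First, the two normalization constraints $\sum_x Q_{X|S}(x|s)=1$ and $\sum_{\shat}Q_{\widehat{S}|Y}(\shat|y)=1$, together with the sign constraints $Q_{X|S}(x|s)\ge0$ and $Q_{\widehat{S}|Y}(\shat|y)\ge0$, are inherited verbatim from membership of $(Q_{X|S},Q_{\widehat{S}|Y},Q)$ in $\FEA(\SC)$. Next, for the lifted equalities: fixing $s,\shat,y$ and summing $W(s,x,y,\shat)=Q_{X|S}(x|s)Q_{\widehat{S}|Y}(\shat|y)$ over $x$ gives $\sum_x W(z)=Q_{\widehat{S}|Y}(\shat|y)\sum_x Q_{X|S}(x|s)=Q_{\widehat{S}|Y}(\shat|y)$, which is exactly $\lambda^a$'s constraint; symmetrically, summing over $\shat$ for fixed $x,s,y$ yields $\sum_{\shat}W(z)=Q_{X|S}(x|s)$, which is $\lambda^b$'s constraint. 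For the $\mu$ inequality, observe that for each $z$,
\[
1-Q_{X|S}(x|s)-Q_{\widehat{S}|Y}(\shat|y)+W(z)=\bigl(1-Q_{X|S}(x|s)\bigr)\bigl(1-Q_{\widehat{S}|Y}(\shat|y)\bigr)\ge0
\]
since both factors are in $[0,1]$ (using $0\le Q_{X|S}(x|s)\le\sum_{x'}Q_{X|S}(x'|s)=1$ and similarly for $Q_{\widehat{S}|Y}$); rearranging gives $Q_{X|S}(x|s)+Q_{\widehat{S}|Y}(\shat|y)-W(z)\le1$. Finally $W(z)\ge0$ is immediate as a product of two nonnegative numbers. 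This exhausts all constraints of LP, so the triple lies in $\FEA(\mathrm{LP})$, establishing the claimed inclusion; convexity of $\FEA(\mathrm{LP})$ is automatic since it is defined by finitely many affine equalities and inequalities, and the word ``relaxation'' is then justified because the LP objective $\sum_z\kappa(z)P_S(s)P_{Y|X}(y|x)W(z)$ evaluated at this $W$ equals $\sum_z\kappa(z)P_S(s)Q_{X|S}(x|s)P_{Y|X}(y|x)Q_{\widehat{S}|Y}(\shat|y)=\sum_z\kappa(z)Q(z)$, the SC objective.

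There is no serious obstacle here; the only point requiring a moment's care is noticing that the $\mu$-constraint is precisely the linearization of the valid inequality $(1-Q_{X|S})(1-Q_{\widehat{S}|Y})\ge0$ and that the bound $Q_{X|S}(x|s)\le1$ (needed to make both factors lie in $[0,1]$) follows from normalization plus nonnegativity rather than being a stated constraint on its own. Everything else is bookkeeping, and one should also remark in passing that the stated inclusion is genuinely one-directional: a feasible $(Q_{X|S},Q_{\widehat{S}|Y},W)$ of LP need not have $W$ equal to the bilinear product, which is exactly why LP is a relaxation and not a reformulation.
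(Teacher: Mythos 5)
Your proposal is correct and follows essentially the same route as the paper's own proof: construct $W(z)\equiv Q_{X|S}Q_{\widehat{S}|Y}(z)$, verify the normalization, nonnegativity, and marginalization constraints directly, and handle the $\mu$-inequality by factoring $-1+Q_{X|S}(x|s)+Q_{\widehat{S}|Y}(\shat|y)-W(z)$ as a product of two terms of known sign. Your additional remarks on the objective matching and the one-directional nature of the inclusion are consistent with the paper and do not change the argument.
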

The formal proof is in Appendix~\ref{app:sec3}.

\subsection{Extreme points of the LP relaxation}

We now discuss an 
%
  important property of the LP relaxation.
\begin{lemma}\label{lem:extremepointinclusion}
The extreme points of the feasible region of SC 
are included in the extreme points of the feasible region of LP. \ie,  
\begin{align*}
\ext(\FEA({\rm LP}))\supseteq \left\{ (Q_{X|S}^{*},Q_{\widehat{S}|Y}^{*},W^{*}) \mid (Q_{X|S}^{*},Q_{\widehat{S}|Y}^{*},Q^{*})  \right.\\  \in \ext(\FEA({\SC})), \left.  W^{*}(z)\equiv Q_{X|S}^{*} Q_{\widehat{S}|Y}^{*} (z)
 \right\}.
\end{align*}
\end{lemma}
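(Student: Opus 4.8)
The plan is to show that each candidate point $(Q_{X|S}^*,Q_{\widehat S|Y}^*,W^*)$, where $W^*(z)\equiv Q_{X|S}^*Q_{\widehat S|Y}^*(z)$ and $(Q_{X|S}^*,Q_{\widehat S|Y}^*,Q^*)\in\ext(\FEA(\SC))$, is an extreme point of $\FEA(\mathrm{LP})$. By Theorem~\ref{thm:lpisrelaxation} such a point is feasible for LP, so only extremality must be argued. First I would recall from Theorem~\ref{thm:extremepointsSC} that $(Q_{X|S}^*,Q_{\widehat S|Y}^*,Q^*)\in\ext(\FEA(\SC))$ forces $Q_{X|S}^*(x|s)\equiv\I{x=f(s)}$ and $Q_{\widehat S|Y}^*(\shat|y)\equiv\I{\shat=g(y)}$ for some deterministic pair $(f,g)$; consequently $W^*(z)=\I{x=f(s)}\I{\shat=g(y)}\in\{0,1\}$ for every $z$. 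So the candidate point is $0$--$1$ valued in all of its coordinates.

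The core of the argument is then a direct application of the definition of an extreme point. Suppose
\begin{align*}
(Q_{X|S}^*,Q_{\widehat S|Y}^*,W^*)=\alpha\,(Q_{X|S}^1,Q_{\widehat S|Y}^1,W^1)+(1-\alpha)\,(Q_{X|S}^2,Q_{\widehat S|Y}^2,W^2)
\end{align*}
with $\alpha\in(0,1)$ and both points in $\FEA(\mathrm{LP})$. Since all three blocks of variables are constrained to be nonnegative in $\FEA(\mathrm{LP})$ (the multipliers $\phi^a,\phi^b,\nu$), every coordinate of $(Q_{X|S}^i,Q_{\widehat S|Y}^i,W^i)$ is $\ge 0$. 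A convex combination of nonnegative numbers equals $0$ only if each summand is $0$; hence wherever a coordinate of the starred point is $0$, the corresponding coordinates of both points $1$ and $2$ vanish. It remains to control the coordinates where the starred point equals $1$. For $Q_{X|S}$: fix $s$ and let $x^*=f(s)$, so $Q_{X|S}^*(x^*|s)=1$ and $Q_{X|S}^*(x|s)=0$ for $x\neq x^*$. The previous sentence already gives $Q_{X|S}^i(x|s)=0$ for $x\neq x^*$, and the normalization constraint $\sum_x Q_{X|S}^i(x|s)=1$ then forces $Q_{X|S}^i(x^*|s)=1$. Thus $Q_{X|S}^1=Q_{X|S}^2=Q_{X|S}^*$, and symmetrically, using $\sum_{\shat}Q_{\widehat S|Y}^i(\shat|y)=1$, one gets $Q_{\widehat S|Y}^1=Q_{\widehat S|Y}^2=Q_{\widehat S|Y}^*$. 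Finally the $W$ coordinates: where $W^*(z)=0$ we already have $W^1(z)=W^2(z)=0$; where $W^*(z)=1$, i.e. $x=f(s)$ and $\shat=g(y)$, use the equality constraint $\sum_{\shat}W^i(z)-Q_{X|S}^i(x|s)=0$. With $Q_{X|S}^i(x|s)=Q_{X|S}^i(f(s)|s)=1$ this reads $\sum_{\shat}W^i(s,f(s),y,\shat)=1$, and since each summand is nonnegative and all but the $\shat=g(y)$ term is already known to be $0$, we conclude $W^i(s,f(s),y,g(y))=1=W^*(z)$. Hence both points coincide with the starred point, proving extremality.

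The step I expect to be the only mild subtlety — and the one worth writing out carefully — is the bookkeeping that pins down the ``$=1$'' coordinates from the equality constraints $\sum_x W(z)-Q_{\widehat S|Y}(\shat|y)=0$ and $\sum_{\shat}W(z)-Q_{X|S}(x|s)=0$ together with nonnegativity; everything else is immediate from ``a convex combination of nonnegatives is zero iff all terms are zero.'' I would also remark that this lemma does \emph{not} claim the reverse inclusion: LP may (and generically will) have additional fractional extreme points, which is precisely why the relaxation is not exact in general. The inclusion proved here is what is meant by the relaxation ``retaining'' the extreme points of $\FEA(\SC)$, the desirable property discussed after Example~2.1.
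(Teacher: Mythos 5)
Your proposal is correct and follows essentially the same route as the paper: both invoke Theorem~\ref{thm:extremepointsSC} to conclude the candidate point is deterministic (hence $0$--$1$ valued) and then show that no strict convex combination of LP-feasible points can produce it, the paper asserting this directly from $0\leq W^{1},W^{2}\leq 1$ while you spell out the coordinate bookkeeping via nonnegativity and the normalization/marginal equality constraints. The extra detail you supply is a harmless elaboration of the paper's ``it is clear they cannot be written as such a convex combination'' step, not a different argument.
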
 The proof is included in Appendix~\ref{app:sec3}.
\begin{figure}
\begin{center}
\includegraphics[scale=0.33,clip=true, trim = 1in 3in 1.7in 1.3in]{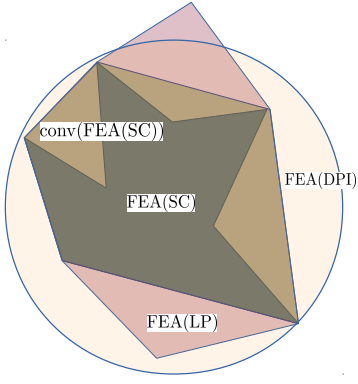} 
\caption{Comparison of $\rm{DPI}$  relaxation and LP relaxation: The 8-sided polygonal set represents the nonconvex $\FEA({\SC})$. This is contained in the convex hull of $\FEA({\SC})$. The convex hull is further contained in a polytope which represents the $\FEA({\rm LP})$. The outer elliptical set represents $\FEA({\rm DPI})$. 
}
\label{Fig2}
\end{center} 
\vspace{-.8cm}
\end{figure}

Valid inequalites and extreme points of $\FEA(\SC)$ are depicted in Fig~\ref{Fig3}. Fig \ref{Fig2} illustrates the LP relaxation in comparison with the convex hull of $\FEA({\SC})$.  As shown in the figure, the LP relaxation retains all the extreme points of $\FEA({\SC})$ in its set of extreme points. However, there may be additional extreme points of $\FEA(\rm LP)$ that are outside the convex hull of $\FEA(\SC)$. Also shown in the figure is the set $\FEA({\rm DPI}) :=\{Q \in \Pscr(\Zscr) | Q_S\equiv P_S, Q_{Y|X} \equiv P_{Y|X},I(Q_{S,\Shat}) \leq I(Q_{X,Y})\}$, where recall that $I(Q_{S,\Shat})$ is the mutual information of random variables $S$ and $\Shat$ under the distribution $Q_{S,\Shat}$ (and similarly $I(Q_{X,Y})$). This is clearly a convex set that contains $\FEA(\SC).$ However, as shown in the figure, there may be extreme points of $\FEA(\SC)$ that are not extreme points of $\FEA({\rm DPI})$. 

A natural question is whether $ \FEA({\rm LP}) \subseteq \FEA({\rm DPI})$. 
We do not have precise answers to this as yet.
However notice that the LP relaxation obtained could be further tightened by incorporating the data processing inequality into the LP relaxation. This yields a relaxation which is strictly tighter than $\FEA(\rm{DPI})$ (though it is no more a LP relaxation).

Consider the following reduced version of LP obtained by removing the inequality, $-1+Q_{X|S}(x|s)+Q_{\Shat|Y}(\shat|y)-W(x,s,\shat,y) \leq 0$, for all $s,x,y,\shat$ from the problem LP. 
$$ \problemsmalla{$\LP'$}
	{ Q_{X|S},Q_{\Shat|Y},W}
	{\displaystyle \sum_{z}\kappa(z)P_{S}(s)P_{Y|X}(y|x)W(z)}
				 {\begin{array}{r@{\ }c@{\ }l}
				 				 \sum_{x} Q_{X|S}(x|s)&=& 1  \qquad   \forall s\\
\sum_{\shat} Q_{\Shat|Y}(\shat|y)&=&1  \qquad \forall y\\
 \sum_{x} W(z)-Q_{\Shat|Y}(\shat|y)&=&0 
  \qquad  \forall s,\shat,y\\ 
 \sum_{\shat}W(z)-Q_{X|S}(x|s)&=&0\qquad \forall x,s,y\\
 Q_{X|S}(x|s) &\geq& 0 \qquad  \forall s,x\\
 Q_{\Shat|Y}(\shat|y) &\geq& 0  \qquad  \forall \shat,y\\
 W(z)&\geq& 0  \qquad   \forall z.
	\end{array}} $$
$\LP'$ is simpler than $\LP$, but also enjoys the property that all extreme points of $\FEA(\SC)$ are extreme points of $\FEA(\LP')$.
\begin{lemma}
Consider the linear program $\LP'$. $\LP'$ is a relaxation of SC such that $\FEA(\LP) \subseteq \FEA(\LP')$ and the extreme points of the feasible region of SC are included in the set of extreme points of $\LP'$, \ie, \begin{align*}
\ext(\FEA(\LP'))\supseteq \left\{ (Q_{X|S}^{*},Q_{\Shat|Y}^{*},W^{*}) \mid (Q_{X|S}^{*},Q_{\Shat|Y}^{*},Q^{*})  \right.\\  \in \ext(\FEA(\SC)), \left.  W^{*}(z)\equiv Q_{X|S}^{*} Q_{\Shat|Y}^{*} (z)
 \right\}.
\end{align*}
\end{lemma}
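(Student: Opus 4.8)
The plan is to verify the three assertions of the lemma in turn: that $\LP'$ is a relaxation of SC, that $\FEA(\LP)\subseteq\FEA(\LP')$, and that the lift of every extreme point of $\FEA(\SC)$ is an extreme point of $\FEA(\LP')$. The first two are essentially bookkeeping. Since $\LP'$ is obtained from $\LP$ by deleting the inequalities $Q_{X|S}(x|s)+Q_{\Shat|Y}(\shat|y)-W(z)\le 1$, every point feasible for $\LP$ remains feasible for $\LP'$, giving $\FEA(\LP)\subseteq\FEA(\LP')$ at once. Combining this with Theorem~\ref{thm:lpisrelaxation}, which already places the lifted image $\{(Q_{X|S},Q_{\Shat|Y},W): (Q_{X|S},Q_{\Shat|Y},Q)\in\FEA(\SC),\ W(z)\equiv Q_{X|S}Q_{\Shat|Y}(z)\}$ inside $\FEA(\LP)$, shows that this image also lies in $\FEA(\LP')$; and because the objective of $\LP'$ is the same functional $\sum_z\kappa(z)P_S(s)P_{Y|X}(y|x)W(z)$, which equals $\sum_z\kappa(z)Q(z)=\Ebb_Q[\kappa]$ on each lifted point, $\LP'$ is a relaxation of SC.

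For the extreme-point inclusion I would argue directly. Fix $(Q_{X|S}^*,Q_{\Shat|Y}^*,Q^*)\in\ext(\FEA(\SC))$; by Theorem~\ref{thm:extremepointsSC} there are functions $f,g$ with $Q_{X|S}^*(x|s)=\I{x=f(s)}$, $Q_{\Shat|Y}^*(\shat|y)=\I{\shat=g(y)}$, and put $W^*(z)=\I{x=f(s)}\I{\shat=g(y)}$. A direct check of the equality and nonnegativity constraints (the only constraints of $\LP'$) shows $(Q_{X|S}^*,Q_{\Shat|Y}^*,W^*)\in\FEA(\LP')$. Suppose it equals $\alpha(Q_{X|S}^1,Q_{\Shat|Y}^1,W^1)+(1-\alpha)(Q_{X|S}^2,Q_{\Shat|Y}^2,W^2)$ for some $\alpha\in(0,1)$ and two feasible points of $\LP'$. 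In any $\LP'$-feasible point the entries of $Q_{X|S}$ and $Q_{\Shat|Y}$ lie in $[0,1]$ (nonnegative and summing to one), so since $Q_{X|S}^*$ and $Q_{\Shat|Y}^*$ are $0/1$-valued and $0,1$ are the extreme points of $[0,1]$, we get $Q_{X|S}^1=Q_{X|S}^2=Q_{X|S}^*$ and $Q_{\Shat|Y}^1=Q_{\Shat|Y}^2=Q_{\Shat|Y}^*$.

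It then remains to show that the equalities of $\LP'$ force $W^1=W^2=W^*$ as well, which is the crux of the argument. Fix $s,y$ and consider the nonnegative matrix $M^i_{x,\shat}:=W^i(s,x,y,\shat)$. The constraint $\sum_{\shat}W^i(z)=Q_{X|S}^*(x|s)=\I{x=f(s)}$ forces every row $x\ne f(s)$ to have zero sum and hence, by nonnegativity, to vanish identically, and forces the row $x=f(s)$ to sum to one; symmetrically, $\sum_x W^i(z)=Q_{\Shat|Y}^*(\shat|y)=\I{\shat=g(y)}$ kills every column $\shat\ne g(y)$. Hence $M^i$ has at most one nonzero entry, at $(f(s),g(y))$, and that entry equals one by the surviving row-sum constraint; that is, $W^i(s,x,y,\shat)=\I{x=f(s)}\I{\shat=g(y)}=W^*(s,x,y,\shat)$ for all $x,\shat$. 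Letting $s,y$ vary gives $W^1=W^2=W^*$, so the convex combination was trivial and $(Q_{X|S}^*,Q_{\Shat|Y}^*,W^*)\in\ext(\FEA(\LP'))$, as claimed. I do not anticipate a genuine obstacle: the reasoning mirrors that of Lemma~\ref{lem:extremepointinclusion}, and the only point worth emphasizing is that the deleted inequalities are never used in the matrix step, so their removal does not cost any extreme points — even though, as $\FEA(\LP)\subseteq\FEA(\LP')$ suggests, $\LP'$ may acquire new extreme points outside $\conv(\FEA(\SC))$.
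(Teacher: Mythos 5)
Your proposal is correct, and its overall skeleton matches what the paper intends (the paper omits this proof, saying it is "similar to the proof of Lemma~\ref{lem:extremepointinclusion}"): invoke Theorem~\ref{thm:extremepointsSC} to see that extreme points of $\FEA(\SC)$ are deterministic, lift them to $0/1$-valued triples $(Q_{X|S}^*,Q_{\Shat|Y}^*,W^*)$, and rule out any nontrivial convex decomposition inside the relaxed feasible set. Where you genuinely differ is in how you force the $W$-block to be rigid. The paper's argument for Lemma~\ref{lem:extremepointinclusion} first observes that the constraints imply $0\leq W^1,W^2\leq 1$ and then concludes, componentwise, that a $0/1$-valued vector cannot be a nontrivial convex combination of points of the box $[0,1]^{|\Zscr|}$; the same bound $W\leq 1$ survives in $\LP'$ (from $\sum_x W(z)=Q_{\Shat|Y}(\shat|y)\leq 1$ and $W\geq 0$), so that one-line argument transfers verbatim. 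You instead never use an upper bound on $W$: after pinning down $Q_{X|S}^i=Q_{X|S}^*$ and $Q_{\Shat|Y}^i=Q_{\Shat|Y}^*$ by the box argument on the kernels, you show via the marginal equality constraints (your matrix/slice argument for each fixed $(s,y)$) that $W$ is then \emph{uniquely determined} and equals $\I{x=f(s)}\I{\shat=g(y)}$. This is slightly longer than the paper's route but buys a mildly stronger fact — within $\FEA(\LP')$, deterministic kernels force the lift $W$ — and it makes transparent your closing observation that the inequalities dropped in passing from $\LP$ to $\LP'$ play no role, which is exactly why the lemma persists for the reduced relaxation. Your treatment of the first two claims (monotonicity of feasible regions under constraint deletion, plus Theorem~\ref{thm:lpisrelaxation} and equality of objectives on lifted points) is also correct.
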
The proof is similar to the proof of Lemma~\ref{lem:extremepointinclusion} and we skip the proof here.
\begin{figure*}[!t]
\normalsize
\begin{equation}
{\rm DP}\quad \qquad \max_{\gamma^a,\gamma^b,\lambda^a,\lambda^b,\mu} \quad \qquad \sum_{s \in \Sscr}\gamma^a(s)+\sum_{y \in \Yscr} \gamma^b(y)-\sum_{s,x,y,\shat}\mu(s,x,y,\shat)\nonumber
\end{equation}
\begin{align}\mbox{subject to}  \qquad  \gamma^a(s)- \sum_y \lambda^b(x,s,y)-\sum_{\shat,y}\mu(s,x,y,\shat) &\leq 0 \qquad \quad \quad \hspace{0.3cm} \qquad \quad \qquad \quad \quad \hspace{0.1cm} \forall x,s \tag{D1} \\
\gamma^b(y)- \sum_s \lambda^a(s,\shat,y)-\sum_{x,s}\mu(s,x,y,\shat) &\leq 0 \qquad \quad \quad \quad \hspace{0.3cm} \hspace{0.1cm} \qquad \quad \qquad \quad \forall \shat,y \tag{D2} \\
\lambda^a(s,\shat,y)+\lambda^b(x,s,y)+ \mu(s,x,y,\shat) &\leq \kappa(s,x,y,\shat) P_S(s) P_{Y|X}(y|x)\hspace{0.3cm}\forall s,x,y,\shat \tag{D3} \\
\mu(s,x,y,\shat) &\geq 0 \qquad \qquad \quad \qquad \quad \qquad \quad \quad  \forall s,x,y,\shat \nonumber
\end{align}
\hrulefill
\vspace*{4pt}
\end{figure*}
\subsection{Relation to existing approaches to obtain relaxations}
We now discuss the relation of our LP relaxation to  already known approaches for obtaining an LP relaxation.
\subsubsection{McCormick inequalities}
Recall from Section~\ref{sec:prelims} that McCormick inequalities provide the convex hull of a set  $K \triangleq \{ (w,x_1,x_2) \in \Real \times [l_1 ,u_1]\times [l_2 ,u_2]  \mid w=x_1x_2\}$ by means of convex under-estimating and concave over-estimating inequalities. 
  However, note that the constraints on $x_1$ and $x_2$ (\ie the bounds $[l_1,u_1],[l_2,u_2]$) are not coupled. As such, McCormick inequalities provide the convex hull of simple bilinear product terms which are uncoupled in their constraints. However, the bilinear products in SC are coupled in their constraints. For each $s$, $Q_{X|S}(x|s)$ for all $x$ are coupled through the constraint $\sum_{x}Q_{X|S}(x|s)=1$. Similarly, for each $y$,  $Q_{\Shat|Y}(\shat|y)$ for all $\shat$ are linked through the constraint $\sum_{\shat}Q_{\Shat|Y}(\shat|y)=1,$ for all $y$. 

One could also arrive at the LP relaxation using McCormick inequalities as follows. Employing the convex underestimating inequalities in SC with $W(z) \equiv Q_{X|S}Q_{\widehat{S}|Y}(z)$, $0 \leq Q_{X|S}(x|s) \leq 1,$ for all $x,s$ and $0 \leq Q_{\Shat|Y}(\shat|y) \leq  1,$ for all $\shat,y$, we get the constraints $
Q_{X|S}(x|s)+Q_{\widehat{S}|Y}(\shat|y)-W(z) \leq 1, $ and  $W(z) \geq 0,$ for all $z$.
 However, this leaves an obvious gap: the equations $\sum_x Q_{X|S}(x|s) \equiv 1$, $\sum_{\shat}Q_{\widehat{S}|Y}(\shat|y) \equiv 1,\sum_s W(s,x,y,\shat) \equiv Q_{\Shat|Y}(\shat|y),\sum_{\shat}W(s,x,y,\shat) \equiv Q_{X|S}(x|s)$ (the last two equations in fact imply the concave overestimating McCormick inequalities) must hold for LP but are not implied by McCormick inequalities. Hence, these additional constraints have to be included along with the the convex under-estimating inequalities to arrive at our relaxation. 

\subsubsection{The Reformulation-linearization technique (RLT) \cite{sherali91bilinear}}
The reformulation-linearization technique proposed by Sherali and Alameddine can also be employed to arrive at our LP relaxation.
RLT is a linear programming relaxation approach for bilinear programming problems with a bilinear objective function subjected to linear constraints. However, applying RLT to SC results in a linear program with more number of variables and  constraints than our LP relaxation. For example, in the reformulation phase, each equality constraint is multiplied with each of the variables, resulting in products of the form of $Q_{X|S}(x|s)Q_{X|S}(\bar{x}|\bar{s})$, $x,\bar{x} \in \Xscr, s,\bar{s} \in \Sscr$ and $Q_{\Shat|Y}(\shat|y)Q_{\Shat|Y}(\shat'|y')$, $\shat,\shat' \in \Sscrhat, y,y' \in \Yscr$. Similar products also arise when inequalities are multiplied with each other, say, $(1-Q_{X|S}(x|s))(1-Q_{X|S}(\bar{x}|\bar{s}))$. In the linearization phase, products of this form are replaced by new variables, which lead to additional variables.  Dropping these additional constraints leads to our LP relaxation. By selectively multiplying the constraints with certain variables, we limit the number of newly introduced variables, thereby making the LP relaxation more tractable.

\subsection{Dual program and lower bounds}\label{subsec:DP}
It is easy to see that the dual of problem LP  is the problem DP, given via Lagrange multipliers of LP. Henceforth, we call these Lagrange multipliers as the `dual variables'. Notice that these variables are restricted in their domain and in the case of $\mu$, also their range. Specifically,  $\gamma^a:\Sscr \rightarrow \mathbb{R}$, $\gamma^b:\Yscr \rightarrow \mathbb{R}$, $\lambda^b: \Sscr \times \Xscr \times \Yscr\rightarrow \mathbb{R}$, $\lambda^a:\Sscr \times \Sscrhat \times \Yscr \rightarrow \mathbb{R}$ and $\mu: \Zscr \rightarrow \mathbb{R}_{+}$, whereby these variables are functions on \textit{subspaces} of $\Zscr.$    Notice that the dual of $\LP'$, denoted $\DP'$, is same as DP with the dual variable $\mu(s,x,y,\shat) $ set identically to be $0$.

From the discussion in Section~\ref{sec:prelims}, the following lemma formalizes our framework for obtaining lower bounds on SC.
\begin{lemma}\label{lem:dualfeasiblebound}
The objective value of any feasible point of $\DP$ is a lower bound on the optimal value of SC, \ie, if $\left(\gamma^a(s),\gamma^b(y),\lambda^a(s,\shat,y),\lambda^b(x,s,y),\mu(s,x,y,\shat)\right)_{z\in \Zscr}$ is feasible for DP, then
\begin{align}
\hspace{-.3cm}\OPT({\SC})& \geq \OPT({\rm LP})= \OPT({\rm DP})\nonumber\\
&\geq \sum_s\gamma^a(s)+\sum_y \gamma^b(y)-\sum_{s,x,y,\shat}\mu(s,x,y,\shat).\label{eq:dualfeasiblebound}
\end{align} 
\end{lemma}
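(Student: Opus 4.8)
The plan is to invoke the chain of relations already assembled in the excerpt. First I would recall that, by Theorem~\ref{thm:lpisrelaxation}, $\mathrm{LP}$ is a relaxation of $\SC$ in the sense that every feasible point $(Q_{X|S},Q_{\widehat S|Y},Q)$ of $\SC$ gives rise to a feasible point $(Q_{X|S},Q_{\widehat S|Y},W)$ of $\mathrm{LP}$ with $W(z)\equiv Q_{X|S}Q_{\widehat S|Y}(z)$, and — crucially — with the \emph{same} objective value, since the objective of $\SC$ is $\sum_z \kappa(z)P_S(s)Q_{X|S}(x|s)P_{Y|X}(y|x)Q_{\widehat S|Y}(\shat|y) = \sum_z \kappa(z)P_S(s)P_{Y|X}(y|x)W(z)$. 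Hence $\OPT(\SC)\ge \OPT(\mathrm{LP})$.

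Next I would apply strong duality, Theorem~\ref{thm:strongdual}: since $\OPT(\SC)$ is finite (it is an expectation of a real-valued loss over a finite set of codes) and $\OPT(\mathrm{LP})\le\OPT(\SC)$ while $\mathrm{LP}$ is feasible (again by Theorem~\ref{thm:lpisrelaxation}, its feasible region contains the image of $\FEA(\SC)$, which is nonempty), the LP has a finite optimal value, so its dual $\DP$ also has a finite optimal value and $\OPT(\mathrm{LP})=\OPT(\DP)$. Finally, for any $\bigl(\gamma^a(s),\gamma^b(y),\lambda^a(s,\shat,y),\lambda^b(x,s,y),\mu(s,x,y,\shat)\bigr)$ feasible for $\DP$, weak duality (or simply the definition of $\DP$ as a maximization whose optimum is $\OPT(\DP)$) gives that the $\DP$ objective evaluated at this point, namely $\sum_s\gamma^a(s)+\sum_y\gamma^b(y)-\sum_{s,x,y,\shat}\mu(s,x,y,\shat)$, is at most $\OPT(\DP)$. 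Chaining these inequalities yields \eqref{eq:dualfeasiblebound}.

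There is essentially no hard part here: the lemma is a bookkeeping corollary of Theorem~\ref{thm:lpisrelaxation} and Theorem~\ref{thm:strongdual}, both already established. The one point that deserves a sentence of care is the claim $\OPT(\SC)\ge\OPT(\mathrm{LP})$: it follows not merely from $\FEA(\mathrm{LP})\supseteq$ (image of) $\FEA(\SC)$ but from the fact that the lifted objective agrees with the original objective on that image, so the minimization over the larger feasible set cannot exceed the minimization over the image. The only genuine subtlety, which I would flag but not belabor, is ensuring the finiteness hypothesis of Theorem~\ref{thm:strongdual} is met so that $\OPT(\mathrm{LP})=\OPT(\DP)$ rather than merely $\OPT(\mathrm{LP})\ge\OPT(\DP)$; since $\SC$ has a finite optimal value and $\mathrm{LP}$ is a relaxation with nonempty feasible region, $\OPT(\mathrm{LP})$ is finite (bounded above by $\OPT(\SC)$ and, being a minimization of a linear function over a nonempty polyhedron, either finite or $-\infty$; one checks it is bounded below, e.g.\ because $W\ge 0$ and $\kappa$ is real-valued on a finite set, or simply by exhibiting any dual feasible point). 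With that in hand the equality in \eqref{eq:dualfeasiblebound} is immediate, and the proof is complete.
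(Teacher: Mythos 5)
Your proposal is correct and follows essentially the same route as the paper: identify LP as a relaxation of SC whose objective agrees on the embedded feasible points, invoke strong duality (Theorem~\ref{thm:strongdual}) to get $\OPT(\LP)=\OPT(\DP)$, and use that any dual feasible point's objective is at most $\OPT(\DP)$. The paper settles the finiteness hypothesis slightly more directly by noting that $\FEA(\LP)$ is a nonempty \emph{bounded} polyhedron (boundedness of $W$ follows from the constraints $\sum_x W(z)=Q_{\widehat S|Y}(\shat|y)\le 1$ and $W\ge 0$), which is the cleanest way to close the small gap you flag about lower-boundedness of the LP objective when $\kappa$ may take negative values.
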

\begin{proof} It follows from the constraints of LP that $\rm{FEA}(\rm LP)$ is a bounded nonempty polyhedron, and hence $\OPT(\LP)$ is finite. The lower bound in the RHS of \eqref{eq:dualfeasiblebound} then follows from the strong duality of linear programming (Theorem~\ref{thm:strongdual}) and the fact that LP is a convex relaxation of SC (Theorem~\ref{thm:lpisrelaxation}).
\end{proof}
Consequently, DP may itself be considered as an abstract converse.

\begin{remarkc}[Interpretation of DP:] LP relaxations for combinatorial channel coding problems, such as those in~\cite{kulkarni12nonasymptotic} and \cite{fazeli2015generalized}, can be interpreted as \textit{generalized} sphere-packing. The duals of these relaxations represent  \textit{generalized covering}. Since our LP relaxation has not made any use of the structure of the loss function $\kappa,$ and it was derived from purely algebraic arguments, there does not seem to be any elegant interpretation for LP or DP. However, note that when $\mu\equiv 0$, the objective function of DP seems to reflect a tension between $\gamma^a$, which is a function of $S$, and $\gamma^b$, which is a function of $Y$, and such that sum of $\sum_s \gamma^a(s)$ and $\sum_y \gamma^b(y)$ is restricted via (D3). This suggests that the rate $|\Sscr|/|\Yscr|$ will play a role in determining $\OPT(\DP)$. For channel coding, this is indeed the case, as we shall see in Section~\ref{lowerbounds}, where DP will be shown to imply the channel coding strong converse.
\end{remarkc}

We end by noting that $\DP$ always yields a nontrivial bound on $\SC$ when considering natural problems from joint source-channel coding.
\begin{proposition}\label{prop:positiveoptimal}
Consider problem SC with $\Sscr=\Sscrhat$, $P_S(\cdot)>0$, $P_{Y|X}(\cdot|\cdot)>0$ and $\kappa(s,x,y,\shat) \equiv d(s,\shat)$ where $d:\Sscr \times \Sscrhat \rightarrow \Real$ is such that $d(s,\shat) \geq 0$ for all $s \in \Sscr$, $\shat \in \Sscrhat$ and $d(s,\shat)=0$ if and only if $s=\shat$. Then, the optimal value of DP is strictly positive, \ie, 
 \[\OPT(\DP) >0.\]
\end{proposition}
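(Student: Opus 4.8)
The plan is to exhibit an explicit dual feasible point for $\DP$ whose objective value is strictly positive, and then invoke Lemma~\ref{lem:dualfeasiblebound}. Since $\OPT(\DP)=\OPT(\LP)\ge 0$ trivially (the objective of LP is a nonnegative combination of nonnegative terms because $\kappa=d\ge 0$, $P_S>0$, $P_{Y|X}>0$ and $W\ge 0$), the only real content is to rule out $\OPT(\DP)=0$. I would argue this by contradiction: suppose $\OPT(\LP)=0$. Because LP is a relaxation of SC and, more usefully, because of the extreme-point structure, I would instead work directly with the relaxed feasible region and show that the objective $\sum_z d(s,\shat)P_S(s)P_{Y|X}(y|x)W(z)$ cannot vanish on $\FEA(\LP)$.

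The key steps, in order. First, take any $(Q_{X|S},Q_{\Shat|Y},W)\in\FEA(\LP)$ and observe from constraints $\sum_x W(z)=Q_{\Shat|Y}(\shat|y)$ and $\sum_{\shat}W(z)=Q_{X|S}(x|s)$ together with $W\ge 0$, $\sum_x Q_{X|S}(x|s)=1$, $\sum_{\shat}Q_{\Shat|Y}(\shat|y)=1$ that $W$ behaves like a joint sub-distribution; in particular $\sum_{x,\shat}W(s,x,y,\shat)=1$ for each fixed $(s,y)$. Second, for the objective to be zero, since $P_S(s)>0$ and $P_{Y|X}(y|x)>0$ for all arguments, we need $d(s,\shat)W(s,x,y,\shat)=0$ for every $z$, i.e. $W(s,x,y,\shat)=0$ whenever $s\ne\shat$ (using $d(s,\shat)>0$ for $s\ne\shat$, which holds since $d(s,\shat)=0$ iff $s=\shat$ and $\Sscr=\Sscrhat$). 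Third, fix $s$. Summing $W(s,x,y,\shat)$ over $x$ gives $Q_{\Shat|Y}(\shat|y)$, and this is forced to be $0$ for all $\shat\ne s$ and all $y$; but then $\sum_{\shat}Q_{\Shat|Y}(\shat|y)=Q_{\Shat|Y}(s|y)=1$ for every $y$, simultaneously for every $s\in\Sscr$. If $|\Sscr|\ge 2$ this is a contradiction, so $\OPT(\LP)>0$.

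The main obstacle — really the only subtlety — is the degenerate case $|\Sscr|=|\Sscrhat|=1$, where $d\equiv 0$ and the claim is false as stated; I expect the intended hypothesis implicitly excludes this (or it should be flagged), and I would state that assumption explicitly. A secondary point requiring a little care is making the argument quantitative enough to conclude strict positivity of the infimum rather than merely positivity at each point: since $\FEA(\LP)$ is a nonempty bounded polyhedron (as noted in the proof of Lemma~\ref{lem:dualfeasiblebound}) and the objective is linear, the infimum is attained at an extreme point, so ``positive at every feasible point'' does upgrade to $\OPT(\LP)>0$. Finally, chaining $\OPT(\SC)\ge\OPT(\LP)=\OPT(\DP)>0$ via Theorem~\ref{thm:strongdual} and Theorem~\ref{thm:lpisrelaxation} closes the argument; alternatively, one could dualize the above feasibility obstruction to directly construct nonzero $\gamma^a,\gamma^b$ satisfying (D1)--(D3) with $\mu\equiv 0$, but the primal argument is cleaner.
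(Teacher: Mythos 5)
Your proof is correct and follows essentially the same route as the paper's: assume $\OPT(\LP)=0$, use $P_S>0$, $P_{Y|X}>0$ and $d(s,\shat)>0$ for $s\neq\shat$ to force $W(s,x,y,\shat)=0$ off the diagonal, and then contradict the constraints $\sum_x W(z)=Q_{\Shat|Y}(\shat|y)$ and $\sum_{\shat}Q_{\Shat|Y}(\shat|y)=1$. Your caveat about the degenerate case $|\Sscr|=1$ is a fair observation (the paper's argument also implicitly requires $|\Sscr|\geq 2$), but otherwise the two arguments coincide.
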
The proof is included in Appendix~\ref{app:sec3}.
\section{Linear Programming Based Finite Blocklength Converses: An Illustrative Example}\label{sec:channelBSC}
In this section, we illustrate the selection of variables of the DP to derive converses through the example of finite blocklength channel coding of a Binary Symmetric Channel (BSC). 
  Through a simple selection of values of dual variables, we first obtain a `naive converse' which gives a lower bound of $-\infty$ in the limit of large blocklengths for rates slightly greater than the capacity of the BSC. We then illustrate how these dual variables are appropriately selected such that they yield a lower bound which implies the strong converse. In fact, a similar line of construction of feasible points of DP results in new and improved converses for lossy joint source-channel coding problem as explained in Section~\ref{lowerbounds}.

Consider a discrete memoryless BSC $(\Xscr,P_{Y|X},\Yscr)$ with $\Xscr=\Yscr=\{0,1\}^n, n \in \Nbb$ and channel conditional probability given as,
 \begin{align}P_{Y|X}(y|x)&=\prod_{i=1}^nP_{Y_i|X_i}(y_i|x_i),\quad \mbox{where}\non\\ 
P_{Y_i|X_i}(y_i|x_i)&=\epsilon \Ibb\{x_i\neq y_i\}+(1-\epsilon)\Ibb\{x_i=y_i\} \label{eq:BSC},
\end{align} where $x=(x_1,\hdots,x_n) \in \Xscr, y=(y_1,\hdots,y_n)\in \Yscr$ and $\epsilon <0.5$. If $d_{x,y}$ represents the Hamming distance between $x \in \Xscr$ and $y \in \Yscr$, then the channel conditional probability can be equivalently expressed as, \begin{align}P_{Y|X}(y|x)\equiv \epsilon^{d_{x,y}}(1-\epsilon)^{n-d_{x,y}}.\label{eq:BSC1} \end{align} We show that there exists a feasible solution of DP which implies the strong converse for BSC. To motivate the construction of such a feasible solution of DP, we first consider the following ``naive'' converse.
\begin{lemma}[A naive converse for the BSC]\label{thm:BSCconverse1}
Consider problem SC with $\Sscr=\Sscrhat=\{1,\hdots,M\}$, $P_S(s) \equiv \frac{1}{M}$ and $\Xscr=\Yscr=\{0,1\}^n$. Let the channel be the discrete memoryless BSC as given in \eqref{eq:BSC} with $\epsilon<0.5$. Then, for any code, the following is a lower bound on the probability of error,
\begin{align}
\Ebb[\Ibb\{S\neq \Shat\}]& \geq \OPT(\SC)\geq \OPT(\LP)=\OPT(\DP)\non\\& \geq 1-(1-\epsilon)^n\frac{2^n}{M}.\label{eq:BSCbound1}\end{align}
\end{lemma}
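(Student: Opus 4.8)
The plan is to exhibit an explicit feasible point of $\DP$ whose objective value equals the claimed bound $1-(1-\epsilon)^n 2^n/M$, and then invoke Lemma~\ref{lem:dualfeasiblebound}. Since we are in the channel coding setting with $\Sscr=\Sscrhat=\{1,\dots,M\}$, $P_S\equiv 1/M$, and $\kappa(s,x,y,\shat)=\Ibb\{s\neq\shat\}$, constraint (D3) reads $\lambda^a(s,\shat,y)+\lambda^b(x,s,y)+\mu(s,x,y,\shat)\leq \frac{1}{M}\Ibb\{s\neq\shat\}\,P_{Y|X}(y|x)$. The simplest ``naive'' choice is to set $\mu\equiv 0$ and $\lambda^b\equiv 0$, so that the objective collapses to $\sum_s\gamma^a(s)+\sum_y\gamma^b(y)$ and (D1), (D2) become $\gamma^a(s)\leq 0$ for all $s$ and $\gamma^b(y)\leq \sum_s\lambda^a(s,\shat,y)$ for all $\shat,y$. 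Then (D3) requires $\lambda^a(s,\shat,y)\leq \frac1M\Ibb\{s\neq\shat\}P_{Y|X}(y|x)$ \emph{for all} $x$, hence $\lambda^a(s,\shat,y)\leq \frac1M\Ibb\{s\neq\shat\}\min_x P_{Y|X}(y|x)=\frac1M\Ibb\{s\neq\shat\}\,\epsilon^{n}$ using \eqref{eq:BSC1} (the minimum over $x$ of $\epsilon^{d_{x,y}}(1-\epsilon)^{n-d_{x,y}}$ is attained at $d_{x,y}=n$ since $\epsilon<1/2$). That choice gives too weak a bound; instead the right move is to keep $\gamma^a(s)=0$, take $\lambda^a(s,\shat,y)=\frac1M\Ibb\{s\neq\shat\}\,\rho(y)$ for a function $\rho$ to be chosen, and push as much as possible into $\gamma^b$.

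Concretely, I would set $\mu\equiv0$, $\lambda^b\equiv 0$, $\gamma^a\equiv 0$, and $\lambda^a(s,\shat,y)=c(y)\,\Ibb\{s\neq\shat\}$ where $c(y)\geq 0$. Then (D3) becomes $c(y)\,\Ibb\{s\neq\shat\}\leq \frac1M\Ibb\{s\neq\shat\}P_{Y|X}(y|x)$ for all $x,s,\shat$, i.e.\ $c(y)\leq \frac1M\min_x P_{Y|X}(y|x)=\frac1M\epsilon^n$; take $c(y)\equiv \frac1M\epsilon^n$ is again weak. The key realization is that we should \emph{not} use a $y$-independent $c$; rather, since the objective is $\sum_y\gamma^b(y)$ with $\gamma^b(y)\leq \sum_s\lambda^a(s,\shat,y)=(M-1)c(y)$ and we want $\sum_y \gamma^b(y)$ large, we are free to let $\lambda^a$ depend on $x$ through nothing — but (D3) must hold for every $x$, forcing the worst case. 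The honest way to get the stated bound is: set $\gamma^b(y)=(M-1)c(y)$ with $c(y)=\frac1M\epsilon^{n}$, giving $\sum_y\gamma^b(y)=2^n(M-1)\epsilon^n/M$ — still not matching. I therefore expect the actual construction uses $\mu\not\equiv 0$: choose $\mu(s,x,y,\shat)$ supported where $s\neq\shat$ to absorb slack, with $\gamma^a(s)=\frac1M - (\text{something})$, engineering $\sum_s\gamma^a(s)=1$ and $\sum_y\gamma^b(y)-\sum\mu = -(1-\epsilon)^n2^n/M$ overall. I would work backwards from the target: I want $\sum_s\gamma^a(s)+\sum_y\gamma^b(y)-\sum_{s,x,y,\shat}\mu(s,x,y,\shat)=1-(1-\epsilon)^n2^n/M$.

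A cleaner route, which I would pursue first, is to guess the bound comes from a \emph{union/sphere-packing} heuristic and reverse-engineer: the quantity $(1-\epsilon)^n 2^n/M$ is $2^n/M$ times the max probability mass $P_{Y|X}(y|x)$ puts on its most likely output, so the bound is $1 - |\Yscr|\cdot(\max_y \text{per-codeword mass})/M$, the classical ``volume'' bound. I would set $\gamma^a(s)\equiv \frac1M$ so $\sum_s\gamma^a(s)=1$; set $\lambda^b(x,s,y)\equiv 0$; then (D1) forces $\frac1M\leq \sum_{\shat,y}\mu(s,x,y,\shat)$, i.e.\ $\sum_{\shat,y}\mu(s,x,y,\shat)\geq \frac1M$ for all $x,s$. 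Choose $\mu(s,x,y,\shat)=\Ibb\{\shat=s\}\,\beta(x,y)$ with $\sum_y\beta(x,y)=\frac1M$; then (D3) for $\shat=s$ reads $\lambda^a(s,s,y)+\beta(x,y)\leq 0$, and for $\shat\neq s$ reads $\lambda^a(s,\shat,y)\leq \frac1M P_{Y|X}(y|x)$ for all $x$. Take $\lambda^a(s,s,y)=-\max_x\beta(x,y)$ and $\lambda^a(s,\shat,y)=0$ for $\shat\neq s$; (D2) needs $\gamma^b(y)\leq \sum_s\lambda^a(s,\shat,y)-\sum_{x,s}\mu(s,x,y,\shat)$. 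Picking $\beta(x,y)=\frac1{M 2^n}$ (uniform) makes $\max_x\beta(x,y)=\frac1{M2^n}$, $\sum_y\beta(x,y)=\frac1M$ as required, and $\lambda^a(s,s,y)=-\frac1{M2^n}$; then for $\shat\neq s$, $\sum_s\lambda^a(s,\shat,y)=-\frac1{M2^n}$ (only the $s=\shat$ term is nonzero — wait, careful with which index). The bookkeeping here is fiddly and the \textbf{main obstacle} is exactly this: getting the index structure of $\lambda^a(s,\shat,y)$ right so that simultaneously (D2) is satisfiable with $\gamma^b$ large enough in absolute value and (D3) holds for every $x$ including the worst-case Hamming distance $d_{x,y}=n$ where $P_{Y|X}(y|x)=\epsilon^n$ is smallest. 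I would resolve it by first writing the LP weak-duality chain $\OPT(\SC)\geq\OPT(\DP)\geq$ (objective at my point), then verifying each of (D1)–(D3) and nonnegativity of $\mu$ by direct substitution, and finally computing the objective $\sum_s\gamma^a(s)+\sum_y\gamma^b(y)-\sum_{s,x,y,\shat}\mu(s,x,y,\shat)$ and simplifying it with \eqref{eq:BSC1} and $\min_x P_{Y|X}(y|x)=\epsilon^n$ — no, $(1-\epsilon)^n$ appears, so in fact I should use that for fixed $y$, $\sum_x$-type quantities or a \emph{single} cleverly chosen $x=y$ giving $P_{Y|X}(y|y)=(1-\epsilon)^n$ enters; the correct feasible point likely has $\lambda^a$ tied to the specific value $P_{Y|X}(y|x)$ at $x=y$, which is valid in (D3) only because (D3) must hold for all $x$ and is tightest at $d=n$ — so the $(1-\epsilon)^n$ must emerge from the \emph{objective} arithmetic, not from a min over $x$ in a constraint. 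I would therefore finalize by choosing $\gamma^a\equiv 0$, $\lambda^b\equiv0$, $\mu(s,x,y,\shat)=\Ibb\{s=\shat\}\,\frac{P_{Y|X}(y|x)}{M\,2^n}\cdot(\text{normalizer})$, $\lambda^a(s,\shat,y)=\Ibb\{s\neq\shat\}\cdot 0$, and $\gamma^b(y)$ as forced, then simplify using $\sum_{x}P_{Y|X}(y|x)=\sum_x \epsilon^{d_{x,y}}(1-\epsilon)^{n-d_{x,y}}=1$ — this last identity is what converts the sum into a clean closed form, and I expect the $(1-\epsilon)^n$ to come from evaluating $P_{Y|X}$ at $x=y$ as the representative term.

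In summary: the proof is a one-shot construction of a dual-feasible point plus arithmetic; the plan is (1) recall $\OPT(\SC)\geq\OPT(\LP)=\OPT(\DP)\geq$ (dual objective at any feasible point), via Lemma~\ref{lem:dualfeasiblebound}; (2) propose explicit closed-form values for $\gamma^a,\gamma^b,\lambda^a,\lambda^b,\mu$ exploiting $P_S\equiv1/M$, $|\Xscr|=|\Yscr|=2^n$, and the product form \eqref{eq:BSC1}; (3) verify (D1), (D2), (D3), $\mu\geq0$ by substitution, using $\epsilon<1/2$ so that $P_{Y|X}(y|x)\geq\epsilon^n$ with equality iff $d_{x,y}=n$ and $P_{Y|X}(y|y)=(1-\epsilon)^n$; (4) compute the objective and simplify to $1-(1-\epsilon)^n2^n/M$. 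The hard part is step (3)–(4): pinning down the index pattern of $\lambda^a(s,\shat,y)$ and $\mu(s,x,y,\shat)$ so that (D3) survives its worst case over $x$ while the objective telescopes to the stated value.
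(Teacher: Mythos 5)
Your proposal never arrives at a feasible dual point achieving the stated value, and the two structural guesses you lean on are both wrong. The paper's construction keeps $\mu\equiv 0$ throughout and puts the entire channel weight into $\lambda^b$, namely $\lambda^b(x,s,y)\equiv P_S(s)P_{Y|X}(y|x)=\frac{1}{M}\epsilon^{d_{x,y}}(1-\epsilon)^{n-d_{x,y}}$; this is precisely the move you discard by insisting on $\lambda^b\equiv 0$ in all of your candidate constructions. With that choice, (D3) becomes $\lambda^a(s,\shat,y)\leq \frac{1}{M}P_{Y|X}(y|x)\bigl[\Ibb\{s\neq\shat\}-1\bigr]$ for every $x$, which is vacuous off the diagonal and, on the diagonal $s=\shat$, forces $\lambda^a(s,s,y)\leq -\frac{1}{M}\max_x P_{Y|X}(y|x)=-\frac{(1-\epsilon)^n}{M}$ (the binding $x$ is $x=y$, i.e.\ $d_{x,y}=0$, since $\epsilon<1/2$). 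So the factor $(1-\epsilon)^n$ does emerge from a worst-case-over-$x$ constraint in (D3), contrary to your claim that it ``must emerge from the objective arithmetic, not from a min over $x$ in a constraint.'' Your other conjecture, that $\mu\not\equiv 0$ is needed to absorb slack, is also off the mark: no use of $\mu$ is made.

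Once $\lambda^a(s,\shat,y)=-\frac{(1-\epsilon)^n}{M}\Ibb\{s=\shat\}$ and $\lambda^b$ are fixed as above, the remaining variables are forced in the natural way: $\gamma^a(s)=\min_x\sum_y\lambda^b(x,s,y)=\frac{1}{M}$ because $\sum_y P_{Y|X}(y|x)=1$ for every $x$ (this is where your observation $\sum_x\epsilon^{d_{x,y}}(1-\epsilon)^{n-d_{x,y}}=1$ should have been applied, but summed over $y$ inside $\gamma^a$, not inside $\gamma^b$ or $\mu$), and $\gamma^b(y)=\min_{\shat}\sum_s\lambda^a(s,\shat,y)=-\frac{(1-\epsilon)^n}{M}$ since only the term $s=\shat$ survives. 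The dual objective is then $\sum_s\gamma^a(s)+\sum_y\gamma^b(y)=1-\frac{2^n(1-\epsilon)^n}{M}$, and Lemma~\ref{lem:dualfeasiblebound} finishes the argument. Your exploratory attempts (the $\lambda^a=c(y)\Ibb\{s\neq\shat\}$ route with $c(y)\leq\epsilon^n/M$, and the $\mu$-supported-on-the-diagonal route) correctly diagnose that they fall short, but the proposal stops at that diagnosis without supplying the construction that works, so as written it does not prove the lemma.
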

\begin{proof}
We construct a feasible solution of DP that gives the required bound. To do so, consider $\mu(s,x,y,\shat)\equiv 0$. Since $\lambda^b$ is a function of $x,y$ and $s$, we take $$\lambda^b(x,s,y)\equiv P_S(s)P_{Y|X}(y|x) \equiv \frac{1}{M} \epsilon^{d_{x,y}}(1-\epsilon)^{n-d_{x,y}},$$ such that $\lambda^b$ is a function of $d_{x,y}$.
We now choose $\lambda^a(s,\shat,y)$ such that (D3) is satisfied. The above choice of $\lambda^b$ implies that $$\lambda^a(s,\shat,y)\hspace{-0.1cm} \leq \frac{1}{M}\epsilon^{d_{x,y}}(1-\epsilon)^{n-d_{x,y}}\biggl[\Ibb\{s \neq \shat\}-1\biggr],$$ for all $s,x,y,\shat$. Thus, we choose $\lambda^a(s,\shat,y) =$
 \begin{align*}\hspace{-0.05cm}\min_x \frac{-1}{M}\epsilon^{d_{x,y}}(1-\epsilon)^{n-d_{x,y}}\Ibb\{s =\shat\}\hspace{-0.1cm}=\frac{-(1-\epsilon)^n\Ibb\{s =\shat\}}{M},\end{align*} for all $s,\shat,y$.
   We consider the following values for the remaining dual variables,
\begin{align*}
\gamma^a(s) &\equiv \min_x \sum_y \hspace{-0.05cm}\lambda^b(x,s,y)\hspace{-0.1cm}\stackrel{(a)}{=}\min_x \sum_{k=0}^n \hspace{-0.095cm}\Comb{n}{k}\frac{\epsilon^{k}(1-\epsilon)^{n-k}}{M} \hspace{-0.07cm}
=\hspace{-0.05cm}\frac{1}{M},\non\\
\gamma^b(y) &\equiv \min_{\shat}\sum_s \lambda^a(s,\shat,y)\equiv \frac{-(1-\epsilon)^n}{M},
\end{align*} which ensures that (D1) and (D2) are satisfied. The equality in (a) results since for any $x\in \Xscr$, the number of $y$'s in $\Yscr$ such that $d_{x,y}=k$ is equal to $\Comb{n}{k}$.  Thus, we have a feasible set of variables of DP. 

With the considered choice of dual variables, the dual cost evaluates to\begin{align*}
\sum_s \gamma^a(s)+\sum_y \gamma^b(y)
&=1-(1-\epsilon)^n \frac{2^n}{M}.
\end{align*} This dual cost together with \eqref{eq:dualfeasiblebound} gives the required bound.
\end{proof}
We now analyze if the lower bound in \eqref{eq:BSCbound1} implies the channel coding converse.
 Denote the capacity of the BSC as $\mathcal{C}_{BSC}=1-H_2(\epsilon)$, where $H_2(\epsilon)=-\epsilon \log_2(\epsilon)-(1-\epsilon)\log_2(1-\epsilon)$.
Notice that if $M=2^{nR}$, where $R=\mathcal{C}_{BSC}+\alpha \epsilon$, for $\alpha>0$  bound \eqref{eq:BSCbound1} becomes, 
\begin{align}
1-(1-\epsilon)^n \frac{2^n}{M}=1-\frac{2^n(1-\epsilon)^n}{ 2^{n\mathcal{C}_{BSC}+n\alpha \epsilon}}=1-\biggl(\frac{1-\epsilon}{2^{\alpha}\epsilon}\biggr)^{n\epsilon},\hspace{-0.2cm}\label{eq:BSCbound3}
\end{align}which goes to $-\infty$ as $n$ increases to $\infty$ if $\alpha$ is small (justifying why we call it a naive converse). Consequently, the lower bound in \eqref{eq:BSCbound1} does not imply the converse of channel coding.
To get a lower bound which also implies the strong converse from \eqref{eq:BSCbound3}, we modify the construction of dual variables. One way to accomplish this is by incorporating the additional term $\bigl(\frac{\epsilon}{1-\epsilon}\bigr)^{n\epsilon-n\delta}$ in $\lambda^a(s,\shat,y)$, \ie
   \begin{align}\lambda^a(s,\shat,y)\equiv -\biggl(\frac{\epsilon}{1-\epsilon}\biggr)^{n\epsilon-n\delta} \frac{(1-\epsilon)^n}{M}\Ibb\{s=\shat\}, \label{eq:newlambdaa}\end{align} where $\delta\in (0,\epsilon)$ is chosen suitably later.
    Modification in $\lambda^a(s,\shat,y)$ implies $\lambda^b(x,s,y)$ has to be modified such that (D3) is satisfied. For (D3) to hold, we need,
$$\lambda^b(x,s,y)\leq \frac{(1-\epsilon)^n}{M}\min \biggl[\frac{\epsilon^{d_{x,y}}}{(1-\epsilon)^{d_{x,y}}} ,\frac{\epsilon^{n\epsilon-n\delta}}{(1-\epsilon)^{n\epsilon-n\delta}} \biggr],$$  for all $s,x,y$.
Thus, we choose $\lambda^b(x,s,y)$ such that the above inequality holds with equality. The RHS of the above inequality can be expressed as,
\begin{align}&\lambda^b(x,s,y)\equiv\frac{(1-\epsilon)^n}{M}\Bigg[\left (\frac{\epsilon}{1-\epsilon}\right )^{d_{x,y}}\Ibb\{d_{x,y} > n\epsilon-n\delta\}\non\\&\qquad+\left (\frac{\epsilon}{1-\epsilon}\right )^{n\epsilon-n\delta}\Ibb\{d_{x,y} \leq n\epsilon-n\delta\}\Bigg]. \label{eq:newlambdab}
\end{align}

In the following theorem, we employ the above values of $\lambda^a$  and $\lambda^b$  (in \eqref{eq:newlambdaa} and \eqref{eq:newlambdab}) to obtain a lower bound which implies the strong converse for BSC.
\begin{theorem}\label{thm:BSCstrongconverse}
Consider problem SC with $\Sscr=\Sscrhat=\{1,\hdots,M\}$, $P_S(s) \equiv \frac{1}{M}$ and $\Xscr=\Yscr=\{0,1\}^n$. Let the channel be the discrete memoryless BSC as given in \eqref{eq:BSC} with $\epsilon<0.5$. Then, for any code, the following is a lower bound on the probability of error,
\begin{align}
&\Ebb[\Ibb\{S\neq \Shat\}]\geq \OPT(\SC)\geq \OPT(\LP)=\OPT(\DP)\non\\&\geq \sup_{\epsilon > \delta>0} 
\biggl[1+2^{-n\bigl(H_2(\epsilon)-H_2(\epsilon-\delta)-\delta \log_2 \frac{1-\epsilon}{\epsilon}+\frac{1}{n}\log_2 l(n,\epsilon-\delta)\bigr)}\non \\& -\frac{1}{M}2^{n\bigl(1-H_2(\epsilon)+\delta \log_2 \frac{1-\epsilon}{\epsilon}\bigr)}-(1-\epsilon)^{n-nH_{\frac{1}{1-\epsilon}}(\epsilon-\delta)}\biggr],\hspace{-0.2cm}\label{eq:BSCbound2}
\end{align}
\begin{align} &\mbox{where}\hspace{0.2cm}l(n,\alpha)=\frac{\exp\bigl(\lambda_1(n)-\lambda_2(n\alpha)-\lambda_2(n(1-\alpha))\bigr)}{\sqrt{2\pi \alpha(1-\alpha)n}}\non\\& \mbox{with}\quad\lambda_1(x)=\frac{1}{12x+1}, \quad \lambda_2(x)=\frac{1}{12x}.\label{eq:ln}\end{align}
\end{theorem}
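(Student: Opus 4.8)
The plan is to exhibit, for each fixed $\delta\in(0,\epsilon)$, a single feasible point of $\DP$ whose objective value equals the bracketed expression, and then to apply Lemma~\ref{lem:dualfeasiblebound} together with the trivial bound $\Ebb[\Ibb\{S\neq\Shat\}]\ge\OPT(\SC)$ (which holds because $\OPT(\SC)$ is the minimum of $\Ebb[\Ibb\{S\neq\Shat\}]$ over codes); taking the supremum over $\delta$ then yields the claim. The candidate dual point is precisely the one assembled in the paragraphs preceding the theorem: set $\mu\equiv 0$, take $\lambda^a,\lambda^b$ as in \eqref{eq:newlambdaa}--\eqref{eq:newlambdab}, and define $\gamma^a(s):=\min_x\sum_y\lambda^b(x,s,y)$ and $\gamma^b(y):=\min_{\shat}\sum_s\lambda^a(s,\shat,y)$. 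This refines the construction of Lemma~\ref{thm:BSCconverse1}; the additional factor $(\tfrac{\epsilon}{1-\epsilon})^{n\epsilon-n\delta}$ in $\lambda^a$ is exactly what keeps the resulting bound meaningful for rates above capacity.

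First I would check feasibility for $\DP$. With $\mu\equiv 0$, constraints (D1) and (D2) amount to $\gamma^a(s)\le\sum_y\lambda^b(x,s,y)$ for every $x$ and $\gamma^b(y)\le\sum_s\lambda^a(s,\shat,y)$ for every $\shat$, which hold by the definitions of $\gamma^a,\gamma^b$ as minima, while $\mu\ge 0$ is immediate. The only substantive constraint is (D3), i.e.\ $\lambda^a(s,\shat,y)+\lambda^b(x,s,y)\le\Ibb\{s\neq\shat\}\tfrac1M\epsilon^{d_{x,y}}(1-\epsilon)^{n-d_{x,y}}$ for all $z$; but $\lambda^b$ in \eqref{eq:newlambdab} was built as exactly the pointwise cap that makes this hold, so one simply checks the two cases $s=\shat$ and $s\neq\shat$ against the two regimes $d_{x,y}>n\epsilon-n\delta$ and $d_{x,y}\le n\epsilon-n\delta$, using $\tfrac{\epsilon}{1-\epsilon}<1$.

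Next I would evaluate the dual objective $\sum_s\gamma^a(s)+\sum_y\gamma^b(y)$ (recall $\sum_z\mu(z)=0$). The sum over $y$ is exact: $\gamma^b(y)=-(\tfrac{\epsilon}{1-\epsilon})^{n\epsilon-n\delta}\tfrac{(1-\epsilon)^n}{M}$ independently of $y$, and summing over the $2^n$ strings $y$ and rewriting the exponent in terms of $H_2$ gives exactly $-\tfrac1M2^{n(1-H_2(\epsilon)+\delta\log_2\frac{1-\epsilon}{\epsilon})}$. For the sum over $s$, I would use that $\lambda^b(x,s,y)$ depends on $(x,y)$ only through $d_{x,y}$ and that for each $x$ exactly $\Comb{n}{k}$ strings $y$ satisfy $d_{x,y}=k$, turning $M\gamma^a(s)$ into a sum over $k$ of the two pieces in \eqref{eq:newlambdab}. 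Applying $\sum_{k=0}^n\Comb{n}{k}(\tfrac{\epsilon}{1-\epsilon})^k=(1-\epsilon)^{-n}$ peels off a clean constant $1$ and leaves $M\gamma^a(s)=1-\Pr[\mathrm{Bin}(n,\epsilon)\le n(\epsilon-\delta)]+(1-\epsilon)^n(\tfrac{\epsilon}{1-\epsilon})^{n(\epsilon-\delta)}\sum_{k\le n(\epsilon-\delta)}\Comb{n}{k}$. Bounding the binomial lower tail $\Pr[\mathrm{Bin}(n,\epsilon)\le n(\epsilon-\delta)]$ from above by a Chernoff/large-deviations estimate produces the $-(1-\epsilon)^{n-nH_{\frac{1}{1-\epsilon}}(\epsilon-\delta)}$ term, and bounding the residual sum from below by its dominant term $\Comb{n}{\lfloor n(\epsilon-\delta)\rfloor}$ via the explicit finite-$n$ Stirling bounds (with error terms $\lambda_1,\lambda_2$, packaged into $l(n,\cdot)$) produces the $+2^{-n(\cdots)}$ term. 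Assembling these and invoking Lemma~\ref{lem:dualfeasiblebound}, then taking $\sup_{0<\delta<\epsilon}$, gives the displayed inequality; the strong converse follows by noting that for $M=2^{nR}$ with $R>\mathcal{C}_{BSC}$ one can pick $\delta$ small so that this term and the two exponentially small terms all vanish as $n\to\infty$, driving the bound to $1$.

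The main obstacle is the bookkeeping in that last step: converting the binomial-tail Chernoff bound and the finite-$n$ Stirling estimates on $\Comb{n}{\lfloor n(\epsilon-\delta)\rfloor}$ into exactly the stated closed forms --- carrying the non-integrality of $n(\epsilon-\delta)$, expressing the tail through the quantity $H_{\frac{1}{1-\epsilon}}$, and choosing which side of Stirling's formula to use so that $l(n,\cdot)$ enters with the sign appearing in the statement. Verifying (D3) and computing $\sum_y\gamma^b$ are routine; the delicate part is making these two estimates simultaneously tight enough and clean enough to recover the explicit bound.
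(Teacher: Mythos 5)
Your proposal is correct and follows essentially the same route as the paper: the same dual feasible point ($\mu\equiv 0$, $\lambda^a,\lambda^b$ from \eqref{eq:newlambdaa}--\eqref{eq:newlambdab}, and $\gamma^a,\gamma^b$ defined as the respective minima), the same exact evaluation of $\sum_y\gamma^b(y)$, and the same two estimates on $\gamma^a(s)$ — the $q$-ary entropy upper bound \eqref{eq:upperbound} on the binomial tail giving the $(1-\epsilon)^{n-nH_{\frac{1}{1-\epsilon}}(\epsilon-\delta)}$ term and the Stirling-type lower bound \eqref{eq:lowerbound} giving the term involving $l(n,\epsilon-\delta)$ — before invoking Lemma~\ref{lem:dualfeasiblebound} and taking the supremum over $\delta$. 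The paper's proof is exactly this argument, so no gap remains beyond the bookkeeping you already identify.
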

\begin{proof}
Let $0<\delta<\epsilon$. To get the required bound, we take $\mu(s,x,y,\shat)\equiv 0$, $\lambda^b(x,s,y)$ as in \eqref{eq:newlambdab} and $\lambda^a(s,\shat,y)$ as in \eqref{eq:newlambdaa}. It is clear from our discussion above that the choice of values of $\lambda^a$ and $\lambda^b$ are feasible with respect to (D3). By setting $\gamma^a$ and $\gamma^b$ as,
\begin{align*}
\gamma^b(y)\equiv \min_{\shat} \sum_s \lambda^a(s,\shat,y), \quad 
\gamma^a(s) \equiv \min_x \sum_y \lambda^b(x,s,y),
\end{align*} constraints (D1) and (D2) of DP are also satisfied.

Now, for any $s \in \Sscr$,
\begin{align}
\gamma^a(s)&\stackrel{(a)}{=}\frac{1}{M}\biggl[\sum_{k=n\epsilon-n\delta+1}^n\Comb{n}{k}\epsilon^k(1-\epsilon)^{n-k} \non\\&\qquad +\epsilon^{n\epsilon-n\delta}(1-\epsilon)^{n-n\epsilon+n\delta}\sum_{k=0}^{n\epsilon-n\delta}\Comb{n}{k}\biggr]\label{eq:gammaforBSC} \\
&\stackrel{(b)}{\geq}\frac{1}{M}\biggl[1-(1-\epsilon)^{n-nH_{\frac{1}{1-\epsilon}}(\epsilon-\delta)}\non\\
+2&^{-n\bigl(H_2(\epsilon)-H_2(\epsilon-\delta)-\delta \log_2 \frac{1-\epsilon}{\epsilon}+\frac{1}{n}\log_2 l(n,\epsilon-\delta)\bigr)}\biggr],
\label{eq:BSCgamma}
\end{align}
where $H_q(x)=x\log_q(q-1)-x\log_q(x)-(1-x)\log_q(1-x)$. The equality in ($a$) arises as for any $x \in \Xscr$, the number of $y$'s in $\Yscr$ such that $d_{x,y}=k$ is equal to $\Comb{n}{k}$. 
To get to inequality ($b$), we upper bound $\sum_{k=0}^{n\epsilon-n\delta} \Comb{n}{k}\epsilon^k(1-\epsilon)^{n-k}$ using that
 \begin{align}\sum_{i=0}^{n\alpha}\Comb{n}{i}(q-1)^i\leq q^{nH_q(\alpha)}\hspace{0.1cm} \mbox{for} \hspace{0.1cm}q>\frac{1}{1-\alpha}, \alpha \in (0,1),\label{eq:upperbound}
 \end{align} 
and we lower bound $\sum_{k=0}^{n\epsilon-n\delta}\Comb{n}{k}$ using \begin{align}\sum_{i=0}^{n\alpha}\Comb{n}{i}(q-1)^i \geq q^{nH_q(\alpha)}l(n,\alpha),
\label{eq:lowerbound}\end{align} where $l(n,\alpha)$ is as defined in \eqref{eq:ln}.
Moreover, for any $y \in \Yscr$,
\begin{align}
\gamma^b(y)& =\frac{-1}{M}\epsilon^{n\epsilon-n\delta}(1-\epsilon)^{n-n\epsilon+n\delta}=\frac {-1}{M}2^{-nH_2(\epsilon)+n\delta \log_2 \frac{1-\epsilon}{\epsilon}}. \label{eq:gammabforBSC}
\end{align} 
Consequently, the dual cost $\sum_s \gamma^a(s)+\sum_y \gamma^b(y)$ evaluates to the term in the bracket in \eqref{eq:BSCbound2}.
Taking the supremum over $\delta \in (0,\epsilon)$ and then applying 
\eqref{eq:dualfeasiblebound} gives the required bound.
\end{proof}

We now show that \eqref{eq:BSCbound2} implies the strong converse for the BSC.
\begin{corollary}[Strong Converse for the BSC]
Consider the problem SC with $\Sscr=\Sscrhat=\{1,\hdots,M\}$, $P_S(s) \equiv \frac{1}{M}$ and $\Xscr=\Yscr=\{0,1\}^n$. Let the channel be the discrete memoryless BSC as given in \eqref{eq:BSC} with $\epsilon<0.5$. If $M=2^{nR}$,  where $R>\mathcal{C}_{BSC}=1-H_2(\epsilon)$, then the lower bound in \eqref{eq:BSCbound2} implies that $$\lim_{n \rightarrow \infty} \Ebb[\Ibb\{S\neq \Shat\}]=1.$$
\end{corollary}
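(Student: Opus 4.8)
The plan is a squeeze argument built on the converse bound \eqref{eq:BSCbound2}. For every code the quantity $\Ebb[\Ibb\{S\neq\Shat\}]$ is a probability, hence at most $1$, and by Lemma~\ref{lem:dualfeasiblebound} it is bounded below by $\OPT(\DP)$, which is in turn at least the right-hand side of \eqref{eq:BSCbound2}. So it suffices to show that this right-hand side tends to $1$ as $n\to\infty$; and since the expression inside is a supremum over $\delta$, it is enough to produce a single admissible $\delta\in(0,\epsilon)$ for which the bracketed quantity converges to $1$.

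First I would throw away the favourable term. Write the bracket in \eqref{eq:BSCbound2} as $1 + A_n(\delta) - B_n(\delta) - C_n(\delta)$, where $A_n(\delta) = 2^{-n(\cdots)}$ is a power of two and hence nonnegative, $B_n(\delta) = \tfrac{1}{M}\,2^{n(1 - H_2(\epsilon) + \delta\log_2\frac{1-\epsilon}{\epsilon})}$, and $C_n(\delta) = (1-\epsilon)^{n(1 - H_{1/(1-\epsilon)}(\epsilon-\delta))}$. Dropping $A_n(\delta)\geq 0$, the bracket is at least $1 - B_n(\delta) - C_n(\delta)$, so it is enough to make both remaining terms vanish. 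Substituting $M = 2^{nR}$ and $1 - H_2(\epsilon) = \mathcal{C}_{BSC}$ gives $B_n(\delta) = 2^{-n(R - \mathcal{C}_{BSC} - \delta\log_2\frac{1-\epsilon}{\epsilon})}$, which decays to $0$ provided $\delta < (R-\mathcal{C}_{BSC})/\log_2\frac{1-\epsilon}{\epsilon}$; this threshold is strictly positive because $R > \mathcal{C}_{BSC}$ and $\log_2\frac{1-\epsilon}{\epsilon} > 0$ (as $\epsilon < 1/2$).

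For $C_n(\delta)$, the key computation is about the $q$-ary entropy $H_q$ at $q = \tfrac{1}{1-\epsilon}$: it is strictly concave on $(0,1)$, with derivative $\log_q\frac{(q-1)(1-x)}{x}$, so it attains its maximum at $x = \tfrac{q-1}{q} = \epsilon$, and a short computation gives $H_{1/(1-\epsilon)}(\epsilon) = -\log_{1/(1-\epsilon)}(1-\epsilon) = 1$. Consequently $1 - H_{1/(1-\epsilon)}(\epsilon-\delta)$ is a strictly positive constant for every $\delta\in(0,\epsilon)$, and since $1-\epsilon\in(\tfrac{1}{2},1)$ we get $C_n(\delta)\to 0$. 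Picking $\delta$ in the nonempty interval $\bigl(0,\ \min\{\epsilon,\ (R-\mathcal{C}_{BSC})/\log_2\tfrac{1-\epsilon}{\epsilon}\}\bigr)$ makes both $B_n(\delta)$ and $C_n(\delta)$ go to $0$, so the right-hand side of \eqref{eq:BSCbound2} is at least $1 - B_n(\delta) - C_n(\delta)$, which converges to $1$. Combined with $\Ebb[\Ibb\{S\neq\Shat\}]\leq 1$, the squeeze theorem yields $\lim_{n\to\infty}\Ebb[\Ibb\{S\neq\Shat\}] = 1$.

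The computation itself is routine; the one place that deserves attention is the identity $H_{1/(1-\epsilon)}(\epsilon) = 1$ together with the strict concavity of $H_{1/(1-\epsilon)}$. This is exactly what forces $\delta$ to be taken strictly positive --- at $\delta = 0$ the term $C_n$ would remain equal to $1$ and the bound would be vacuous --- and it is also what pins the sharp rate threshold at $\mathcal{C}_{BSC}$: the two conditions on $\delta$, namely $\delta < \epsilon$ and $\delta < (R-\mathcal{C}_{BSC})/\log_2\tfrac{1-\epsilon}{\epsilon}$, can be met simultaneously if and only if $R > \mathcal{C}_{BSC}$.
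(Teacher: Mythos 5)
Your proof is correct and follows essentially the same route as the paper: fix $\delta$ in $\bigl(0,\min\{\epsilon,(R-\mathcal{C}_{BSC})/\log_2\tfrac{1-\epsilon}{\epsilon}\}\bigr)$, observe that the $\tfrac{1}{M}2^{n(\cdot)}$ and $(1-\epsilon)^{n(1-H_{1/(1-\epsilon)}(\epsilon-\delta))}$ terms vanish, and squeeze against $\Ebb[\Ibb\{S\neq\Shat\}]\le 1$. The only cosmetic differences are that you discard the nonnegative term $2^{-n(\cdot)}$ outright (the paper instead notes via concavity of $H_2$ that its exponent is nonnegative) and you spell out why $H_{1/(1-\epsilon)}(\epsilon-\delta)<1$, which the paper merely asserts; both are fine.
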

\begin{proof}
In \eqref{eq:BSCbound2}, we fix $\delta$ such that $0<\delta<\min\biggl(\epsilon,\frac{R-\mathcal{C}_{BSC}}{\log_2 \bigl(\frac{1-\epsilon}{\epsilon}\bigr)}\biggr)$.
Further, $H_2(x)$ being concave, $H_2(\epsilon)-H_2(\epsilon-\delta)-\delta \log_2 \frac{1-\epsilon}{\epsilon}\geq 0$. Also, $H_{\frac{1}{1-\epsilon}}(\epsilon-\delta)<1$  and $1-H_2(\epsilon)-R+\delta \log_2 \bigl(\frac{1-\epsilon}{\epsilon}\bigr)<0$.
Consequently, as $n \rightarrow \infty$, RHS of \eqref{eq:BSCbound2} goes to 1, thereby implying the strong converse. 
\end{proof}
In a similar line of construction of dual variables, a strong converse for the finite blocklength channel coding of a discrete memoryless binary erasure channel is derived in \cite{jose2017linear}.\\
\begin{remarkc}[Selection of Dual Variables:]
It now becomes evident that the selection of values of dual variables $\lambda^a(s,\shat,y)$ and $\lambda^b(x,s,y)$ plays a crucial role in the quality of the converse. Notice that in \eqref{eq:newlambdab}, $\lambda^b(x,s,y)$ has been modified such that it takes $P_S(s)P_{Y|X}(y|x)$ when $d_{x,y}>n\epsilon-n\delta$ and when $d_{x,y}\leq n\epsilon-n\delta$, it takes $P_S(s)P_{Y|X}(y|x)$ where $P_{Y|X}(y|x)$ is evaluated at ${d_{x,y}=n\epsilon-n\delta}$.
 Further, in \eqref{eq:newlambdaa}, $\lambda^a(s,\shat,y)$ is chosen such that it takes a non-zero value when the compliment of the cost function (here, cost function $\kappa
 (z)\equiv \Ibb\{s\neq\shat\}$) is true and the corresponding value is in fact the negative of the term in $\lambda^b(x,s,y)$ corresponding to $\Ibb\{d_{x,y}\leq n\epsilon-n\delta\}$. Also, notice that $\Ibb\{d_{x,y}>n\epsilon-n\delta\}=\Ibb\{P_{Y|X}(y|x)< \epsilon^{n\epsilon-n\delta}(1-\epsilon)^{n-n\epsilon+n\delta}\}$. These observations can be  extended to derive new converses for finite blocklength joint source-channel coding problems as explained in the next section. 
\end{remarkc}\\
\section{Lower Bounds on Finite Blocklength Joint Source - Channel Coding Problems}\label{lowerbounds}
In this section, by a logical extension of the construction of dual variables from the previous section, we derive lower bounds for various instances of the finite blocklength lossy joint source-channel coding problem, thereby making the case that LP relaxation and duality serve as a common framework from which converses for various cases of joint source-channel coding can be derived. 
We first consider the problem of obtaining a lower bound on the \textit{minimum excess distortion probability} of a finite blocklength lossy joint source-channel code and then take up the lossy source coding and channel coding problems.
 
For the lossy joint source-channel coding problem, we consider problem SC with $S$ having the distribution $P_S$ and channel conditional probability distribution $P_{Y|X}$. The cost function is given as $\kappa(s,x,y,\shat) \equiv \I{d(s,\shat)>\dbf}$, where $d: \Sscr \times \Sscrhat \rightarrow [0,+\infty]$ represents the distortion function and $\dbf \in [0, \infty)$ is the distortion level. The objective is to obtain a lower bound on the minimum value of $\Ebb[\I{d(S,\Shat)>\dbf}]=\Pbb[d(S,\Shat) >\dbf]$ (which is called the excess distortion probability) achieved by a joint source-channel code $(f,g)$. We will use DP to derive a lower bound on this problem.

 Kostina and Verd\'{u} in \cite{kostina2013lossy} obtained general converses on the minimum excess distortion probability achieved by a finite blocklength joint source-channel code. The converse for lossy source coding \cite{kostina2012fixed} and the converse for channel coding without cost constraints proposed by Wolfowitz \cite{wolfowitz1968notes} follow as a particular case of the converse for joint source-channel coding. Further, it has been shown that these finite blocklength converses for channel coding, lossy source coding and joint source-channel coding with excess distortion probability as the loss criterion are asymptotically tight.
 
In this section, by constructing a feasible point of the dual program DP, we recover the converse of Kostina and Verd\'{u} \cite[Theorem 3]{kostina2013lossy} on the minimum excess distortion probability achieved by a finite blocklength lossy joint source-channel code. In fact, by tweaking this construction of feasible point, we derive a new converse which \textit{improves} on the Kostina-Verd\'{u} converse. 
We then derive another converse which \textit{further improves} on the Kostina-Verd\'{u} converse. 
%
For lossy source coding and channel coding without cost constraints, new converses which improve on the converses of Kostina and Verd\'{u} \cite[Theorem 7]{kostina2012fixed} and  Wolfowitz \cite{wolfowitz1968notes} respectively, follow from these new results. It thus follows that our LP relaxation is asymptotically tight for channel coding, lossy source coding and lossy joint source-channel coding with probability of excess distortion as the loss criterion.

Kostina and Verd\'{u} leverage the concept of $\dbf$-tilted information for  deriving the converse for joint source-channel coding \cite{kostina2013lossy}. 
For a source $S$ with distribution $P_S$, distortion function $d:\Sscr \times \Sscrhat \rightarrow [0,+\infty]$ and distortion level $\dbf$, the rate-distortion function is defined as \begin{align}R_S(\dbf)=\inf_{P_{\Shat|S}:\Ebb[d(S,\Shat)]\leq \dbf}I(S;\Shat),\label{eq:ratedistortion}
\end{align}where the infimum is over $P_{\Shat|S}\in \Pscr(\Sscrhat|\Sscr)$. Assume that the infimum in \eqref{eq:ratedistortion} is achieved by a unique $P_{\Shat^{*}|S}$ and $\dbf_{\rm min}=\inf\{\dbf:R_S(\dbf)< \infty\}$. 
\begin{definition}[$\dbf$-tilted information \cite{kostina2013lossy}]
For $\dbf>\dbf_{\rm min}$,
  the $\dbf$-tilted information in $S$ is defined as \begin{align}
j_S(s,\dbf)=\log \frac{1}{\Ebb[\exp(\lambda^{*}\dbf-\lambda^{*}d(s,\Shat^{*}))]},\label{eq:dtilted}
\end{align}where the expectation is with respect to the unconditional probability distribution $P_{\Shat^{*}}$ on $\Sscrhat$ which achieves the infimum in \eqref{eq:ratedistortion}. When $\dbf=0$, the $0$-tilted information is defined as
\[j_{S}(s,0)=i_S(s), \quad \mbox{ where} \quad i_S(s)=\log \frac{1}{P_S(s)}.\] \end{definition} (We refer the readers to \cite{kostina2013lossy} for more details). Following is an important property of the $\dbf$-tilted information, which comes to our aid in constructing a dual feasible point.
\begin{align}
\Ebb[\exp(j_S(S,\dbf)+\lambda^{*}\dbf-\lambda^{*}d(S,\shat))] \leq 1, \quad \forall \shat \in \Sscrhat, \label{eq:dtiltedproperty}
\end{align} where the expectation is with respect to $P_S$ and $\lambda^{*}=-R_S'(\dbf)>0$ is the negative of the slope of the rate-distortion function.

Following is the converse for joint source-channel coding
shown by Kostina and Verd\'{u} in \cite[Theorem 1]{kostina2013lossy}. \begin{theorem}[Kostina-Verd\'{u} bound]\label{thm:kostinaVerdu}
For any source-channel pair $(S,P_{Y|X})$, the existence of a finite blocklength joint source-channel code $(f,g)$ which satisfies  $\Pbb[d(S,\Shat)>\dbf]\leq \epsilon$, requires that
\begin{align}
&\epsilon \geq \inf_{P_{X|S}} \sup_{\gamma>0} \biggl \lbrace \sup_{P_{\bar{Y}}}\Pbb[j_S(S,\dbf)-i_{X;\bar{Y}}(X;Y) \geq \gamma]\nonumber \\&  \qquad \qquad -\exp(-\gamma) \biggr \rbrace\label{eq:kostinaVerdu1}\\
& \geq \sup_{\gamma>0} \biggl \lbrace \sup_{P_{\bar{Y}}}  \Ebb\left [ \inf_x \Pbb[j_S(S,\dbf)-i_{X;\bar{Y}}(x;Y)\geq \gamma \mid S \right] \nonumber\\&\qquad \qquad -\exp(-\gamma)\biggr \rbrace \label{eq:kostinaVerdu2},
\end{align} where, \begin{align}i_{X;\bar{Y}}(x;y)=\log \frac{dP_{Y|X=x}}{dP_{\bar{Y}}}(y)\label{eq:informationdensity},\end{align}
 $P_{\bar{Y}}$ is an arbitrary probability distribution on $\Yscr$ , $S \rightarrow X \rightarrow Y$ holds in \eqref{eq:kostinaVerdu1} and $\Pbb$ in \eqref{eq:kostinaVerdu2} is with respect to $Y$ distributed according to $P_{Y|X=x}$.
\end{theorem}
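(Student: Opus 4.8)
The plan is to recover the Kostina--Verd\'u bound \eqref{eq:kostinaVerdu1}--\eqref{eq:kostinaVerdu2} by exhibiting, for each choice of the auxiliary distribution $P_{\bar Y}$ and each $\gamma>0$, a feasible point of the dual program DP whose objective value equals (a quantity no smaller than) the right-hand side of \eqref{eq:kostinaVerdu2}, and then invoking Lemma~\ref{lem:dualfeasiblebound}. Concretely, I would set $\mu(s,x,y,\shat)\equiv 0$ throughout (so that only constraints (D1)--(D3) need to be checked) and then, guided by the BSC construction of Section~\ref{sec:channelBSC} and the Remark on the selection of dual variables, I would take $\lambda^b(x,s,y)$ to equal $P_S(s)\,P_{Y|X}(y|x)$ truncated from above at the ``threshold'' value dictated by the event $\{j_S(s,\dbf)-i_{X;\bar Y}(x;y)\geq\gamma\}$, i.e. something of the form
\[
\lambda^b(x,s,y)\equiv P_S(s)\min\Bigl\{P_{Y|X}(y|x),\ e^{-\gamma-j_S(s,\dbf)}\,\tfrac{d P_{\bar Y}}{d(\cdot)}(y)\cdot(\cdots)\Bigr\},
\]
arranged so that $\lambda^b$ picks up the full $P_SP_{Y|X}$ mass exactly on the set where $j_S(s,\dbf)-i_{X;\bar Y}(x;y)\geq\gamma$. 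Correspondingly $\lambda^a(s,\shat,y)$ is chosen as the most negative value compatible with (D3), namely $\lambda^a(s,\shat,y)=\min_x\bigl[\kappa(s,x,y,\shat)P_S(s)P_{Y|X}(y|x)-\lambda^b(x,s,y)\bigr]$, which, because $\kappa(z)=\I{d(s,\shat)>\dbf}$ is $0$ or $1$, reduces to a clean expression involving the truncation threshold. With these in hand (D3) holds by construction, and I would then define $\gamma^a(s)\equiv\min_x\sum_y\lambda^b(x,s,y)$ and $\gamma^b(y)\equiv\min_{\shat}\sum_s\lambda^a(s,\shat,y)$, which makes (D1) and (D2) hold automatically.

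The next step is to evaluate the dual objective $\sum_s\gamma^a(s)+\sum_y\gamma^b(y)$ for this feasible point. Summing $\lambda^b$ over $y$ and using the tilted-information property \eqref{eq:dtiltedproperty} (equivalently $\Ebb[\exp(j_S(S,\dbf)+\lambda^*\dbf-\lambda^*d(S,\shat))]\leq1$, or in the channel-coding reduction $i_S(s)=\log\frac1{P_S(s)}$) should produce $\sum_s\gamma^a(s)=\Ebb[\inf_x\Pbb[j_S(S,\dbf)-i_{X;\bar Y}(x;Y)\geq\gamma\mid S]]$ up to the normalization, while the $\sum_y\gamma^b(y)$ term contributes exactly the $-e^{-\gamma}$ penalty coming from the total mass of the auxiliary measure $P_{\bar Y}$ on the truncated region. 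Finally I would take the supremum over $\gamma>0$ and over $P_{\bar Y}$ (each choice being an independent feasible point, so the bound holds for the supremum), and apply \eqref{eq:dualfeasiblebound} to conclude $\OPT(\SC)\geq\OPT(\DP)\geq$ RHS of \eqref{eq:kostinaVerdu2}; \eqref{eq:kostinaVerdu1} then follows since $\eqref{eq:kostinaVerdu1}\geq\eqref{eq:kostinaVerdu2}$ by Jensen/interchange of $\inf$ and $\Ebb$.

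The main obstacle I anticipate is getting the truncation in $\lambda^b$ and the matching choice of $\lambda^a$ exactly right so that three things hold simultaneously: (i) (D3) is satisfied for \emph{all} $s,x,y,\shat$ including the cases $d(s,\shat)>\dbf$ and $d(s,\shat)\leq\dbf$; (ii) the $y$-sum of $\lambda^b$ telescopes, via \eqref{eq:dtiltedproperty}, into the probability expression with the $\inf_x$ in the right place; and (iii) the $\gamma^b$ term reproduces precisely $-e^{-\gamma}$ rather than some looser constant. This is the same delicate bookkeeping that appears in the BSC example, where $\lambda^b$ was split according to $\I{d_{x,y}>n\epsilon-n\delta\}$ versus its complement and $\lambda^a$ was the negative of the complementary piece; here the role of the Hamming-distance threshold is played by the level set $\{i_{X;\bar Y}(x;y)\leq j_S(s,\dbf)-\gamma\}$, and the role of the binomial-tail estimates is played by the tilted-information inequality. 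Handling the reduction to pure lossy source coding (channel trivial) and to pure channel coding (source trivial, $j_S(s,0)=i_S(s)$) as corollaries should then be a matter of specializing $P_{Y|X}$ or $d$, and recovering the Wolfowitz and Kostina--Verd\'u source-coding converses accordingly.
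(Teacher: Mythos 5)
Your construction is, in substance, the one the paper itself uses: it is the $T=1$ case of the dual feasible point in \eqref{eq:DPimpliesgeneraldual}. Your truncated $\lambda^b$ is exactly $P_S(s)\min\{P_{Y|X}(y|x),\,P_{\bar Y}(y)\exp(j_S(s,\dbf)-\gamma)\}$ (note the sign slip in your display: the threshold is $\exp(j_S(s,\dbf)-\gamma)P_{\bar Y}(y)$, not $\exp(-\gamma-j_S(s,\dbf))$, though your verbal description of the level set is the right one); your $\lambda^a=\min_x[\kappa P_SP_{Y|X}-\lambda^b]$ is the \emph{largest} (not ``most negative'') admissible choice and pointwise dominates the paper's explicit choice $-P_S(s)\exp(j_S(s,\dbf)-\gamma)P_{\bar Y}(y)\Ibb\{d(s,\shat)\le\dbf\}$; $\mu\equiv 0$, $\gamma^a(s)=\min_x\sum_y\lambda^b$, $\gamma^b(y)=\min_{\shat}\sum_s\lambda^a$, and the $-\exp(-\gamma)$ term is extracted exactly as in the paper, via $\Ibb\{d(s,\shat)\le\dbf\}\le\exp(\lambda^{*}(\dbf-d(s,\shat)))$ together with \eqref{eq:dtiltedproperty}. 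The bookkeeping you worry about does close, and with the $\min$ form of $\lambda^b$ your dual value in fact dominates the right-hand side of \eqref{eq:kostinaVerdu2} (it gives the improved bound \eqref{eq:DPIimplieskostina}); Lemma~\ref{lem:dualfeasiblebound} then yields $\epsilon\ge{}$RHS of \eqref{eq:kostinaVerdu2}, matching Theorem~\ref{thm:DPIimpliesgeneralkostina} (with $T=1$) and the remark that follows it.

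The gap is your last step. From $\epsilon\ge{}$\eqref{eq:kostinaVerdu2} together with \eqref{eq:kostinaVerdu1}$\ \ge\ $\eqref{eq:kostinaVerdu2} you cannot conclude $\epsilon\ge{}$\eqref{eq:kostinaVerdu1}; the implication runs the wrong way. Moreover this is not a repairable detail within your construction: the objective value of a dual feasible point is code-independent, and with $\gamma^a(s)=\min_x\sum_y\lambda^b(x,s,y)$ the $\inf_x$ necessarily sits inside the expectation, so the LP-duality route produces bounds of the form \eqref{eq:kostinaVerdu2} only. Establishing \eqref{eq:kostinaVerdu1}, where the infimum is over the code's induced $P_{X|S}$ and $X$ is jointly distributed with $S$, requires arguing for the specific code before minimizing over $x$, as in Kostina and Verd\'u's original proof. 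Note that the paper itself does not prove this theorem; it quotes it from \cite{kostina2013lossy}, and its own duality argument likewise recovers (and improves) only the second inequality — which is precisely what your proposal delivers.
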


 Kostina and Verd\'{u} further generalize the above lower bounds to take into account the type of the channel input block and the number of channel input types. For this, an auxiliary random variable $V$ that takes values on $1,\hdots ,T$ is introduced where $T$ is a positive integer that represents the number of channel input types and $V$ represents the type of the channel input block. The following theorem gives the generalized converse of Kostina and Verd\'{u} \cite[Theorem 3]{kostina2013lossy}.
\begin{theorem}[Generalized Kostina-Verd\'{u} bound]\label{thm:generalkostina}
For any source-channel pair $(S,P_{Y|X})$, the existence of a finite blocklength joint source-channel code $(f,g)$ which satisfies  $\Pbb[d(S,\Shat)>\dbf]\leq \epsilon$, requires that
\begin{align}
\epsilon &\geq \inf_{P_{X|S}} \max_{\gamma>0,T} \biggl \lbrace -T\exp(-\gamma)\nonumber\\&\qquad + \sup_{\substack{\bar{Y},V:\\S\rightarrow (X,V)\rightarrow Y}} \Pbb[j_S(S,\dbf)-i_{X;\bar{Y}|V}(X;Y|V) \geq \gamma]\biggr \rbrace \nonumber\\
&\geq \max_{\gamma>0,T}\biggl \lbrace-T\exp(-\gamma) \nonumber \\ & \qquad + \sup_{\bar{Y},V} \Ebb\left[\inf_x \Pbb[j_S(S,\dbf)-i_{X;\bar{Y}|V}(x;Y|V)\geq \gamma |S]\right] \biggr \rbrace, \label{eq:generalkostina}
\end{align} where $T$ is a positive integer, $V$ is a random variable that takes values in $\{1,2, \hdots, T\}$, \begin{align}i_{X;\bar{Y}|V}(x;y|t)=\log \frac{P_{Y|X=x,V=t}}{P_{\bar{Y}|V=t}}(y)\label{eq:generalinfdensity}\end{align} and the probability measure $\Pbb$ in \eqref{eq:generalkostina} is generated by $P_SP_{V|X=x}P_{Y|X=x,V}$.
\end{theorem}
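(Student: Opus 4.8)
The plan is to exhibit an explicit feasible point of the dual program $\DP$ whose objective value, after optimizing over its free parameters, reproduces the right-hand side of~\eqref{eq:generalkostina}, and then to invoke Lemma~\ref{lem:dualfeasiblebound}: since $\OPT(\SC)=\min_{f,g}\Pbb[d(S,\Shat)>\dbf]$ for the cost $\kappa(s,x,y,\shat)\equiv\I{d(s,\shat)>\dbf}$, the existence of a code with excess distortion at most $\epsilon$ forces $\epsilon\ge\OPT(\SC)\ge\OPT(\DP)$, so any dual feasible objective is a valid lower bound on $\epsilon$. The construction follows the template of the BSC computation in Theorem~\ref{thm:BSCstrongconverse}: set $\mu\equiv0$, let $\lambda^{b}$ be a ``clipped'' version of $P_SP_{Y|X}$, and let $\lambda^{a}$ be supported on the complement of the cost, i.e.\ on $\{(s,\shat):d(s,\shat)\le\dbf\}$, and just large enough in magnitude there to keep constraint (D3) tight.

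Concretely, fix a positive integer $T$, a threshold $\gamma>0$, a conditional law $P_{V|X}$ with $V$ taking values in $\{1,\dots,T\}$, and reference distributions $\{P_{\bar{Y}|V=t}\}_{t=1}^{T}$ on $\Yscr$. Writing $\lambda^{*}=-R_S'(\dbf)>0$ and $c_s=\exp(j_S(s,\dbf)-\gamma)$, I would choose
\[
\lambda^{b}(x,s,y)=P_S(s)\sum_{t=1}^{T}P_{V|X}(t|x)\,\min\bigl\{P_{Y|X}(y|x),\ c_s\,P_{\bar{Y}|V=t}(y)\bigr\},
\]
\[
\lambda^{a}(s,\shat,y)=-\Bigl(\max_{x'}\lambda^{b}(x',s,y)\Bigr)\I{d(s,\shat)\le\dbf},
\]
and then $\gamma^{a}(s)=\min_{x}\sum_{y}\lambda^{b}(x,s,y)$ and $\gamma^{b}(y)=\min_{\shat}\sum_{s}\lambda^{a}(s,\shat,y)$. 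Feasibility is routine bookkeeping: $0\le\lambda^{b}(x,s,y)\le P_S(s)P_{Y|X}(y|x)$ and $\lambda^{a}\le0$ give (D3) on $\{d(s,\shat)>\dbf\}$, whereas $\lambda^{a}(s,\shat,y)\le-\lambda^{b}(x,s,y)$ for every $x$ gives (D3) on $\{d(s,\shat)\le\dbf\}$; (D1), (D2) and $\mu\ge0$ hold by the very choice of $\gamma^{a},\gamma^{b},\mu$.

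The content is the evaluation of the dual objective $\sum_{s}\gamma^{a}(s)+\sum_{y}\gamma^{b}(y)$. For the first term, one checks that $\min\{P_{Y|X}(y|x),c_sP_{\bar{Y}|V=t}(y)\}$ equals $P_{Y|X}(y|x)$ exactly on the event $\{j_S(s,\dbf)-i_{X;\bar{Y}|V}(x;y|t)\ge\gamma\}$ --- just unwinding~\eqref{eq:generalinfdensity} and the definition of $c_s$ --- so that $\sum_{y}\lambda^{b}(x,s,y)\ge P_S(s)\,\Pbb\bigl[j_S(S,\dbf)-i_{X;\bar{Y}|V}(x;Y|V)\ge\gamma\mid S=s,X=x\bigr]$, and hence $\sum_{s}\gamma^{a}(s)\ge\Ebb\bigl[\inf_{x}\Pbb[j_S(S,\dbf)-i_{X;\bar{Y}|V}(x;Y|V)\ge\gamma\mid S]\bigr]$. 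For the second term, $\sum_{y}\gamma^{b}(y)=-\sum_{y}\max_{\shat}\sum_{s:\,d(s,\shat)\le\dbf}\max_{x'}\lambda^{b}(x',s,y)$; one bounds $\max_{x'}\lambda^{b}(x',s,y)\le P_S(s)c_s\max_{t}P_{\bar{Y}|V=t}(y)$, pulls the nonnegative $\shat$-independent factor $\max_{t}P_{\bar{Y}|V=t}(y)$ out of $\max_{\shat}$, uses the $\dbf$-tilted information inequality~\eqref{eq:dtiltedproperty} (through $\exp(j_S(s,\dbf))\le\exp(j_S(s,\dbf)+\lambda^{*}(\dbf-d(s,\shat)))$ when $d(s,\shat)\le\dbf$) to get $\max_{\shat}\sum_{s:\,d(s,\shat)\le\dbf}P_S(s)c_s\le\exp(-\gamma)$, and finally $\sum_{y}\max_{t}P_{\bar{Y}|V=t}(y)\le\sum_{t}\sum_{y}P_{\bar{Y}|V=t}(y)=T$, to obtain $\sum_{y}\gamma^{b}(y)\ge-T\exp(-\gamma)$. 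Adding the two and then taking the supremum over $(\bar{Y},V)$ and the maximum over $\gamma>0$ and $T$ yields the converse bound~\eqref{eq:generalkostina} (the dual construction produces its second, inner form directly), and Theorem~\ref{thm:kostinaVerdu} is recovered by taking $T=1$ with $V$ trivial.

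The main obstacle is closing the feasibility of (D3) on $\{d(s,\shat)\le\dbf\}$ and the estimate $\sum_{y}\gamma^{b}(y)\ge-T\exp(-\gamma)$ with one and the same $\lambda^{b}$: the clipping level must be small enough that $\max_{x'}\lambda^{b}(x',s,y)$ is dominated by $c_s$ times the per-type reference $P_{\bar{Y}|V=t}(y)$ --- which is precisely what forces the $P_{V|X}$-mixture and the $\min$-form above --- while the residual mass $\sum_{s:\,d(s,\shat)\le\dbf}P_S(s)c_s$ must collapse to $\exp(-\gamma)$ uniformly in $\shat$ through the tilted-information inequality, and it is exactly in the step $\sum_{y}\max_{t}P_{\bar{Y}|V=t}(y)\le T$ that the auxiliary variable $V$ and the count $T$ enter. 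The lossless case $\dbf=0$ is covered by the elementary specialization $\sum_{s:\,s=\shat}P_S(s)e^{i_S(s)}=1$ in place of~\eqref{eq:dtiltedproperty}.
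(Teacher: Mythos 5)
Your overall route is the one the paper itself uses: \eqref{eq:generalkostina} is not proved there from first principles but recovered by exhibiting a feasible point of $\DP$ with $\mu\equiv 0$, a ``clipped'' $\lambda^b$, a $\lambda^a$ supported on $\{d(s,\shat)\le\dbf\}$, and $\gamma^a,\gamma^b$ set to the induced minima, followed by Lemma~\ref{lem:dualfeasiblebound}; this is exactly the construction of Theorem~\ref{thm:DPIimpliesgeneralkostina} and the remark after it. Your small variations are legitimate: taking $\gamma^b(y)=\min_{\shat}\sum_s\lambda^a(s,\shat,y)$ and bounding $\sum_y\max_t P_{\bar{Y}|V=t}(y)\le T$ is, if anything, slightly sharper than the paper's fixed choice $\gamma^b(y)\equiv-\exp(-\gamma)\sum_t P_{\bar{Y}|V=t}(y)$, and your use of the tilted-information inequality \eqref{eq:dtiltedproperty} to get $\max_{\shat}\sum_{s:d(s,\shat)\le\dbf}P_S(s)c_s\le\exp(-\gamma)$ is the same step the paper performs when verifying (D2).

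There is, however, a concrete flaw in your $\lambda^b$. Writing $c_s=\exp(j_S(s,\dbf)-\gamma)$ as you do, you clip $P_{Y|X}(y|x)$ against $c_sP_{\bar{Y}|V=t}(y)$, whereas the information density in \eqref{eq:generalinfdensity} is built from $P_{Y|X=x,V=t}$, and in this theorem $V$ is a general auxiliary variable with law $P_{V|X}$ (it need not be a deterministic function of $X$), so that $P_{Y|X}(y|x)=\sum_t P_{V|X}(t|x)P_{Y|X,V}(y|x,t)$ generally differs from $P_{Y|X,V}(y|x,t)$. Hence your key identification, that the minimum equals $P_{Y|X}(y|x)$ exactly on $\{j_S(s,\dbf)-i_{X;\bar{Y}|V}(x;y|t)\ge\gamma\}$, is false with the theorem's definition, and the ensuing claim $\sum_y\lambda^b(x,s,y)\ge P_S(s)\,\Pbb[j_S(S,\dbf)-i_{X;\bar{Y}|V}(x;Y|V)\ge\gamma\mid S=s]$ can fail: on the event $\{P_{Y|X,V}(y|x,t)\le c_sP_{\bar{Y}|V}(y|t)\}$ the Kostina--Verd\'u integrand is $P_{Y|X,V}(y|x,t)$, which may exceed $\min\{P_{Y|X}(y|x),c_sP_{\bar{Y}|V}(y|t)\}$ (take a component $t$ with small $P_{V|X}(t|x)$ and large $P_{Y|X,V}(y|x,t)$). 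As written, your dual point is feasible and yields a valid bound, but with an information density defined through $P_{Y|X}$ rather than $P_{Y|X,V}$, which coincides with the stated theorem only when $V$ is a function of $X$. The fix is one symbol: set $\lambda^b(x,s,y)=P_S(s)\sum_t P_{V|X}(t|x)\min\{P_{Y|X,V}(y|x,t),\,c_sP_{\bar{Y}|V}(y|t)\}$; feasibility of (D3) on $\{d(s,\shat)>\dbf\}$ still holds since $\sum_t P_{V|X}(t|x)P_{Y|X,V}(y|x,t)=P_{Y|X}(y|x)$, and then $\sum_y\lambda^b(x,s,y)\ge P_S(s)\,\Pbb[j_S(s,\dbf)-i_{X;\bar{Y}|V}(x;Y|V)\ge\gamma]$ holds exactly, which is the paper's construction and yields \eqref{eq:generalkostina}.
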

In particular, for a code that gives the minimum excess distortion probability of $\epsilon$, \eqref{eq:kostinaVerdu2}, \eqref{eq:generalkostina} give  lower bounds on this probability. 

\subsection{Recovery and Improvement of the Kostina-Verd\'{u} Converses}
Below we show that DP recovers and improves on the generalized Kostina-Verd\'{u} converse \eqref{eq:generalkostina}. Since the improvement on this converse is immediate, we directly present the proof of the improvement of this converse and remark on recovering the generalized Kostina-Verd\'{u} converse from this.
 Further, a lower bound  which improves on \eqref{eq:kostinaVerdu2} follows from the improvement on \eqref{eq:generalkostina}. Recall that we have assumed that all random variables are discrete.
\begin{theorem}\textit{(DP improves Generalized Kostina-Verd{\'u} bound)}\label{thm:DPIimpliesgeneralkostina}
Consider problem SC with $S$ having distribution $P_S$ and channel conditional probability distribution given by $P_{Y|X}$. Let $\kappa(s,x,y,\shat)\equiv \Ibb{\{ d(s,\shat) > \dbf\}}$ be the loss function where $d:\Sscr\times\Sscrhat\rightarrow[0,+\infty]$ is a distortion function and $\dbf\in [0,\infty)$ is the distortion level. Then, there exists a  feasible point of the problem DP with an objective value which improves on the RHS of \eqref{eq:generalkostina}. Specifically, for any code, we have the following lower bound on excess distortion probability,
\begin{align}
&\Ebb[\Ibb{\{d(S,\Shat)>\dbf\}}]\geq \OPT(\SC) \geq \OPT(\LP) =\OPT(\DP)\nonumber\\&\geq \max_{\gamma,T} \biggl \lbrace \sup_{\Ybar,V} \Ebb \biggl[\inf_x \biggl \lbrace \Pbb[j_S(S,\dbf)-i_{X;\bar{Y}|V}(x;Y|V)\geq \gamma |S]
\non\\&+\exp(j_S(S,\dbf)-\gamma)\sum_{y \in \Yscr}\sum_{t=1}^TP_{V|X}(t|x)P_{\Ybar|V}(y|t)\times \non \\&\Ibb{\{j_S(S,\dbf)-i_{X;\Ybar|V}(x;y|t)<\gamma\}}\biggr \rbrace \biggr] -T\exp(-\gamma) \biggr \rbrace, \label{eq:DPIimpliesgeneralkostina}
\end{align} where $T$ is a positive integer, $V$ is a random variable that takes values on $\{1,2, \hdots, T\}$, $i_{X;\Ybar|V}(x;y|t)$ is as defined in \eqref{eq:generalinfdensity}, $\Pbb$ is generated by $P_{V|X=x}P_{Y|X=x,V}$. Note that the expectation $\Ebb$ on the RHS of \eqref{eq:DPIimpliesgeneralkostina} is with respect to $P_S$.
\end{theorem}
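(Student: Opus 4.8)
The plan is to construct an explicit feasible point $(\gamma^a,\gamma^b,\lambda^a,\lambda^b,\mu)$ of $\DP$ whose objective value equals the right-hand side of \eqref{eq:DPIimpliesgeneralkostina}, then invoke Lemma~\ref{lem:dualfeasiblebound}. The construction mirrors the BSC computation of Theorem~\ref{thm:BSCstrongconverse}: fix a positive integer $T$, a threshold $\gamma>0$, an auxiliary distribution $P_{\Ybar|V}$ and a conditional $P_{V|X}$ (all of which are then optimized over at the end). Set $\mu\equiv 0$. Motivated by the BSC case, I would choose $\lambda^b(x,s,y)$ to equal $P_S(s)P_{Y|X}(y|x)$ on the ``high information density'' event $\{j_S(s,\dbf)-i_{X;\Ybar|V}(x;y|t)\ge\gamma\}$ and the ``clipped'' value $P_S(s)\,P_{\Ybar|V}(y|t)\exp(j_S(s,\dbf)-\gamma)$ on the complementary event, summed appropriately against $P_{V|X}(t|x)$; concretely,
\begin{align*}
\lambda^b(x,s,y) \;\equiv\; P_S(s)\sum_{t=1}^T P_{V|X}(t|x)\Big[&P_{Y|X,V}(y|x,t)\,\Ibb\{j_S(s,\dbf)-i_{X;\Ybar|V}(x;y|t)\ge\gamma\}\\
&+ e^{j_S(s,\dbf)-\gamma}\,P_{\Ybar|V}(y|t)\,\Ibb\{j_S(s,\dbf)-i_{X;\Ybar|V}(x;y|t)<\gamma\}\Big],
\end{align*}
and then choose $\lambda^a(s,\shat,y)$ to be the most negative value compatible with (D3), i.e.\ $\lambda^a(s,\shat,y)\equiv -\min_x\big[\text{the clipped part of }\lambda^b(x,s,y)\big]\,\Ibb\{d(s,\shat)\le\dbf\}$, so that (D3) holds with equality on one of the two regimes and with slack on the other, using the key tilted-information inequality \eqref{eq:dtiltedproperty} to guarantee feasibility.

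The next step is to verify (D3) in both cases: when $d(s,\shat)\le\dbf$ (so $\kappa(z)=0$) and when $d(s,\shat)>\dbf$ (so $\kappa(z)P_S(s)P_{Y|X}(y|x)$ is the bound). On the high-density event $\lambda^b$ already equals $\kappa\cdot P_S P_{Y|X}$ plus the contribution of $\lambda^a$, which is nonpositive, so (D3) holds there; on the low-density event the inequality $i_{X;\Ybar|V}(x;y|t) > j_S(s,\dbf)-\gamma$ gives $P_{Y|X,V}(y|x,t) > e^{j_S(s,\dbf)-\gamma}P_{\Ybar|V}(y|t)$, i.e.\ the clipped value is a genuine lower bound on $P_{Y|X}(y|x)$, which is what (D3) needs when $\kappa=0$; and the role of \eqref{eq:dtiltedproperty} is to control the cross terms when $d(s,\shat)\le\dbf$ does not hold. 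Then set $\gamma^a(s)\equiv\min_x\sum_y\lambda^b(x,s,y)$ and $\gamma^b(y)\equiv\min_{\shat}\sum_s\lambda^a(s,\shat,y)$, which makes (D1) and (D2) hold by construction. Computing $\sum_y\lambda^b(x,s,y)$ splits into the two indicator regimes: the high-density part integrates $P_{Y|X,V}$ over $y$ against $P_{V|X}(t|x)$ restricted to $\{j_S-i\ge\gamma\}$, giving $\Pbb[j_S(S,\dbf)-i_{X;\Ybar|V}(x;Y|V)\ge\gamma\mid S=s]$ (after summing over $t$), and the low-density part gives $e^{j_S(s,\dbf)-\gamma}\sum_{y,t}P_{V|X}(t|x)P_{\Ybar|V}(y|t)\Ibb\{\cdots<\gamma\}$; taking $\min_x$ and then $\sum_s P_S(s)$ produces exactly the $\Ebb[\inf_x\{\cdots\}]$ term. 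Similarly $\sum_s\lambda^a(s,\shat,y)$ evaluates to something whose minimum over $\shat$ and sum over $y$ contributes $-T\exp(-\gamma)$, again using \eqref{eq:dtiltedproperty} to see that $\sum_{s,y}P_S(s)P_{\Ybar|V}(y|t)e^{j_S(s,\dbf)-\gamma}\le \exp(-\gamma)$ for each $t$. Adding $\sum_s\gamma^a(s)+\sum_y\gamma^b(y)$ and then maximizing over $\gamma,T$ and the sup over $\Ybar,V$ yields the RHS of \eqref{eq:DPIimpliesgeneralkostina}.

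The main obstacle I anticipate is the careful bookkeeping in the evaluation of $\gamma^b(y)=\min_{\shat}\sum_s\lambda^a(s,\shat,y)$ and in showing that $\sum_y\gamma^b(y)$ collapses to exactly $-T\exp(-\gamma)$ rather than something larger in magnitude: this is where the tilted-information property \eqref{eq:dtiltedproperty} must be applied at just the right point, and where the introduction of the type variable $V$ (and the factor $T$) enters through the sum over $t\in\{1,\dots,T\}$. A secondary subtlety is verifying that the clipped choice of $\lambda^b$ is genuinely feasible for (D3) in the case $d(s,\shat)>\dbf$, which requires that $\lambda^a+\lambda^b\le \kappa P_S P_{Y|X}$ hold even on the low-density event; this follows because on that event $\lambda^a$ contributes nothing (the indicator $\Ibb\{d(s,\shat)\le\dbf\}$ vanishes) and the clipped $\lambda^b$ term is dominated by $P_S(s)P_{Y|X}(y|x)$ pointwise. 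Finally, recovering the generalized Kostina-Verd\'u bound \eqref{eq:generalkostina} is immediate by dropping the nonnegative extra term $\exp(j_S(S,\dbf)-\gamma)\sum_{y,t}P_{V|X}(t|x)P_{\Ybar|V}(y|t)\Ibb\{\cdots<\gamma\}$ inside the infimum, which I would note in a short remark after the proof.
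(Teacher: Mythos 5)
Your overall strategy is the paper's: set $\mu\equiv 0$, take $\lambda^b(x,s,y)\equiv P_S(s)\sum_t P_{V|X}(t|x)\min\{P_{Y|X,V}(y|x,t),\,P_{\Ybar|V}(y|t)\exp(j_S(s,\dbf)-\gamma)\}$ (your two-indicator form is exactly this), set $\gamma^a(s)=\min_x\sum_y\lambda^b(x,s,y)$, use the tilted-information property \eqref{eq:dtiltedproperty} to produce the $-T\exp(-\gamma)$ term, and invoke Lemma~\ref{lem:dualfeasiblebound}. The evaluation of $\sum_y\lambda^b$ and the recovery of \eqref{eq:generalkostina} by dropping the extra nonnegative term also match the paper.

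However, your specific choice of $\lambda^a$ is not dual feasible, and this is a genuine gap. You set $\lambda^a(s,\shat,y)\equiv-\Ibb\{d(s,\shat)\le\dbf\}\min_x[\text{clipped part of }\lambda^b(x,s,y)]$, where the clipped part is restricted to the $t$'s on the low-density event and carries the weights $P_{V|X}(t|x)$. When $d(s,\shat)\le\dbf$, constraint (D3) (with $\mu=0$) requires $\lambda^a(s,\shat,y)\le-\lambda^b(x,s,y)$ for \emph{every} $x$, i.e.\ $\min_x[\text{clipped part}]\ge\max_x\lambda^b$; but the clipped part is only one of the two summands of $\lambda^b$, so this fails whenever the actual-channel (high-density) portion contributes. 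Concretely, with $T=1$ and $V$ trivial, pick $y$ and some $x'$ with $0<P_{Y|X}(y|x')\le P_{\Ybar}(y)\exp(j_S(s,\dbf)-\gamma)$: then your clipped part at $x'$ is zero, so $\lambda^a(s,\shat,y)=0$ on $\{d\le\dbf\}$, and (D3) would force $\lambda^b(x,s,y)\le 0$ for all $x$, which is false. The paper's choice is $\lambda^a(s,\shat,y)\equiv-P_S(s)\exp(j_S(s,\dbf)-\gamma)\,\Ibb\{d(s,\shat)\le\dbf\}\sum_{t=1}^T P_{\Ybar|V}(y|t)$ — the \emph{full} clipped value per $t$, without the indicator restriction, without the $P_{V|X}$ weights, and without a minimum over $x$; then (D3) is verified pointwise per $t$ (using $P_{V|X}(t|x)\le 1$ and $\min\{\text{actual},\text{clipped}\}\le\text{clipped}$), with no appeal to \eqref{eq:dtiltedproperty}. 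The tilted-information inequality is needed only for (D2)/the objective, to show $\sum_s P_S(s)\exp(j_S(s,\dbf)-\gamma)\Ibb\{d(s,\shat)\le\dbf\}\le\exp(-\gamma)$ so that $\sum_y\gamma^b(y)\ge-T\exp(-\gamma)$ — your text misplaces its role in (D3). (Your choice $\gamma^b(y)=\min_{\shat}\sum_s\lambda^a(s,\shat,y)$ is fine and is in fact what the paper's sharpened Theorem~\ref{thm:DPimpliestighterbounds} does, but only after $\lambda^a$ is fixed to a feasible value such as the paper's, or $-\Ibb\{d(s,\shat)\le\dbf\}\max_x\lambda^b(x,s,y)$.)
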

\begin{proof}
Let $V$ be an auxiliary random variable that takes values on $\Vscr:=\{1,\hdots,, T\}$ such that
\begin{align}P_{Y|X}(y|x)=\sum_{t=1}^TP_{Y|X,V}(y|x,t)P_{V|X}(t|x)\label{eq:channelconditional}.
\end{align}
To obtain the required lower bound, consider the following values of dual variables,
\begin{align}
\lambda^b(x,s,y)&\equiv P_S(s)\sum_{t=1}^T\biggl[P_{Y|X,V}(y|x,t)P_{V|X}(t|x)\times\nonumber\\& 
\Ibb{\{P_{Y|X,V}(y|x,t) \leq P_{\Ybar|V}(y|t)\exp(j_S(s,\dbf)-\gamma)\}}\nonumber \\&\quad +P_{\Ybar|V}(y|t)\exp(j_S(s,\dbf)-\gamma)P_{V|X}(t|x)\times \non\\&\Ibb{\{P_{Y|X,V}(y|x,t)> P_{\Ybar|V}(y|t)\exp(j_S(s,\dbf)-\gamma)\}}\biggr] \nonumber,\\
\lambda^a(s,\shat,y)&\equiv\hspace{-0.1cm} -P_S(s)\exp(j_S(s,\dbf)-\gamma)\Ibb{\{d(s,\shat)\leq \dbf\}}\times\non\\&\hspace{1cm} \sum_{t=1}^TP_{\Ybar|V}(y|t), \non \\
\gamma^a(s)&\equiv \inf_{x \in \Xscr}\sum_y \lambda^b(x,s,y),\label{eq:DPimpliesgeneraldual}\\
\gamma^b(y) &\equiv -\exp(-\gamma)\sum_{t=1}^TP_{\Ybar|V}(y|t)\nonumber,
\end{align} and $\mu(s,x,y,\shat)\equiv 0$. Above, $P_{\Ybar|V=t}$ is any probability distribution on $\Yscr$ such that $i_{X;\Ybar|V}(x;y|t)=\log\frac{P_{Y|X=x,V=t}}{P_{\Ybar|V=t}}(y)$ where for any $y\in \Yscr$ and $t\in \Vscr$, $P_{\Ybar|V=t}(y)=0$ implies $P_{Y|X=x,V=t}(y)=0$. Consequently, for any $x \in \Xscr$ and $s \in \Sscr$, $\sum_{y\in \Yscr}\lambda^b(x,s,y)$ becomes
\begin{align}
&P_S(s)\sum_{y \in \Yscr,t \in \Vscr \mid P_{\Ybar|V}(y|t)>0}\biggl[P_{Y|X,V}(y|x,t)P_{V|X}(t|x)\times \nonumber\\&\hspace{-0.1cm}\Ibb\biggl \lbrace\hspace{-0.1cm}\frac{P_{Y|X,V}(y|x,t)}{P_{\Ybar|V}(y|t)}\hspace{-0.1cm} \leq \exp(j_S(s,\dbf)-\gamma)\hspace{-0.1cm}\biggr \rbrace\hspace{-0.1cm}+\hspace{-0.1cm}P_{\Ybar|V}(y|t)P_{V|X}(t|x)\times\nonumber\\&\exp(j_S(s,\dbf)-\gamma) \Ibb\biggl \lbrace\frac{P_{Y|X,V}(y|x,t)}{P_{\Ybar|V}(y|t)}> \exp(j_S(s,\dbf)-\gamma)\biggr \rbrace\biggr]\nonumber\\
&=P_S(s)\sum_{y \in \Yscr,t \in \Vscr \mid P_{\Ybar|V}(y|t)>0}P_{Y|X,V}(y|x,t)P_{V|X}(t|x)\times\nonumber \\&\Ibb\{i_{X;\Ybar|V}(x;y|t) \leq j_S(s,\dbf)-\gamma\}+P_{\Ybar|V}(y|t)P_{V|X}(t|x)\times\nonumber\\&\exp(j_S(s,\dbf)-\gamma) \Ibb\{i_{X;\Ybar|V}(x;y|t)> j_S(s,\dbf)-\gamma\}\biggr]\nonumber\\
&=P_S(s)\sum_{y \in \Yscr}\sum_{t=1}^T\biggl[P_{Y|X,V}(y|x,t)P_{V|X}(t|x)\times\nonumber \\&\Ibb{\{i_{X;\Ybar|V}(x;y|t) \leq j_S(s,\dbf)-\gamma\}}+P_{\Ybar|V}(y|t)P_{V|X}(t|x)\times \non\\& \exp(j_S(s,\dbf)-\gamma)\Ibb{\{i_{X;\Ybar|V}(x;y|t)>j_S(s,\dbf)-\gamma\}}\biggr],\label{eq:sumlambdabgeneral}
\end{align}where the last equality follows since for any $y\in \Yscr$ and $t\in N$, if $P_{\Ybar|V}(y|t)=0$, then $P_{Y|X,V}(y|x,t)=0$ for any $x \in \Xscr$.

It is now sufficient to show that the values of dual variables considered in \eqref{eq:DPimpliesgeneraldual} are in fact feasible for DP. We first check feasibility with respect to constraint (D1). For this, we need to show that
\begin{align*}
&\sum_{y \in \Yscr} \lambda^b(x,s,y)\geq \gamma^a(s), \quad \forall x,s,
\end{align*}which is trivially true by construction. 

To check for the feasibility of dual variables with respect to constraint (D2), we have to show that \begin{align}&-\sum_s P_S(s)\Ibb{\{d(s,\shat)\leq \dbf\}}\exp(j_S(s,\dbf)-\gamma) \sum_{t=1}^TP_{\Ybar|V}(y|t)\nonumber\\& \hspace{1.5cm} \geq  -\exp(-\gamma)\sum_{t=1}^TP_{\Ybar|V}(y|t)\quad \forall  \shat,y.\label{eq:modifiedD2general}
\end{align} To show this,
we start with the LHS of \eqref{eq:modifiedD2general}. For any $\shat \in \Sscrhat$ and $y \in \Yscr$,
\begin{align*}
&-\sum_s P_S(s)\Ibb{\{d(s,\shat)\leq \dbf\}}\exp(j_S(s,\dbf)-\gamma)\sum_{t=1}^TP_{\Ybar|V}(y|t)\\
&\stackrel{(a)}{\geq} -\exp(-\gamma)\hspace{-0.1cm}\sum_s \hspace{-0.1cm}P_S(s)\exp(j_S(s,\dbf))\exp \left(\lambda^{*} \left(\dbf-d(s,\shat)\right)\right)\times\\&\qquad \qquad \qquad \sum_{t=1}^TP_{\Ybar|V}(y|t)\\
&= -\exp(-\gamma)\Ebb\left[\exp \left( j_S(s,\dbf)+\lambda^{*}\dbf-\lambda^{*} d(S,\shat)\right)\right]\times\\&\qquad \qquad \qquad \sum_{t=1}^TP_{\Ybar|V}(y|t)\\
&\stackrel{(b)}{\geq} -\exp(-\gamma)\sum_{t=1}^TP_{\Ybar|V}(y|t),
\end{align*}
where the inequality in $(a)$ is due to 
$\Ibb{\{d(s,\shat)\leq \dbf\}} \leq \exp(\lambda^{*}(\dbf-d(s,\shat)))$ (recall that $\lambda^{*}=-R_S'(\dbf)>0$) and the inequality in $(b)$ follows from \eqref{eq:dtiltedproperty}. Thus (D2) holds.

 To check the feasibility of dual variables in \eqref{eq:DPimpliesgeneraldual} with respect to constraint (D3), we consider the following cases.\\
Case 1: $d(s,\shat)> \dbf$.\\
In this case, $\lambda^a(s,\shat,y) \equiv 0$ and the LHS of (D3) becomes $\lambda^b(x,s,y)$. Further, for any $t \in \Vscr$, two sub-cases arise.\\
Case 1a: \begin{small}$\Ibb{\{P_{Y|X,V}(y|x,t) \leq P_{\Ybar|V}(y|t)\exp(j_S(s,\dbf)-\gamma)\}}=1$\end{small}.\\
 In this case, the terms inside the square bracket in the definition of $\lambda^b(x,s,y)$ become $P_{Y|X,V}(y|x,t)P_{V|X}(t|x)$.\\
 Case 1b: \begin{small}$\Ibb{\{P_{Y|X,V}(y|x,t) \leq P_{\Ybar|V}(y|t)\exp(j_S(s,\dbf)-\gamma)\}}=0$\end{small}.\\
 In the considered case, $ P_{\Ybar|V}(y|t)\exp(j_S(s,\dbf)-\gamma)<P_{Y|X,V}(y|x,t)$ and consequently, the terms inside the square bracket in $\lambda^b(x,s,y)$ become less than $P_{Y|X,V}(y|x,t)P_{V|X}(t|x).$
 
  Thus, in each case, for any $t \in \Vscr$, the term inside the square bracket in $\lambda^b(x,s,y)$ is upper bounded by $P_{Y|X,V}(y|x,t)P_{V|X}(t|x)$. Consequently, $\lambda^b(x,s,y)$ is upper bounded as,
\begin{align*}
P_S(s)\sum_{t=1}^TP_{Y|X,V}(y|x,t)P_{V|X}(t|x)\stackrel{(a)}{=}P_S(s)P_{Y|X}(y|x),
\end{align*}which is the RHS of (D3). The equality in $(a)$ follows from \eqref{eq:channelconditional}.

Case 2: $d(s,\shat)\leq \dbf$\\ In this case, the RHS of (D3) is zero.
The LHS can
  be upper bounded employing $P_{V|X}(t|x) \leq 1$ for all $t,x$ in the expression for $\lambda^b(x,s,y)$ to get that the LHS is at most\begin{align}
&  P_S(s)\sum_{t=1}^T\biggl[P_{Y|X,V}(y|x,t) 
 \Ibb\{P_{Y|X,V}(y|x,t) \leq \nonumber\\&P_{\Ybar|V}(y|t)\exp(j_S(s,\dbf)-\gamma)\}+P_{\Ybar|V}(y|t)\exp(j_S(s,\dbf)-\gamma)\times \non\\& \qquad\Ibb{\{P_{Y|X,V}(y|x,t)> P_{\Ybar|V}(y|t)\exp(j_S(s,\dbf)-\gamma)\}}\nonumber\\&\qquad-P_{\Ybar|V}(y|t)\exp(j_S(s,\dbf)-\gamma)\biggr]
 \label{eq:LHSD3}.
\end{align}Just as in Case 1, for any $t \in \Vscr$, two sub-cases arise in \eqref{eq:LHSD3}.\\
Case 2a:\begin{small} $\Ibb{\{P_{Y|X,V}(y|x,t) \leq P_{\Ybar|V}(y|t)\exp(j_S(s,\dbf)-\gamma)\}}=1$\end{small}.\\
In this case, $P_{Y|X,V}(y|x,t)$ is upper bounded by $P_{\Ybar|V}(y|t)\exp(j_S(s,\dbf)-\gamma)$ showing that the term in the square bracket of \eqref{eq:LHSD3} is nonpositive. \\
Case 2b: \begin{small}$\Ibb{\{P_{Y|X,V}(y|x,t) \leq P_{\Ybar|V}(y|t)\exp(j_S(s,\dbf)-\gamma)\}}=0$\end{small}.\\ In this case, it can be  seen that the term inside the square bracket of \eqref{eq:LHSD3} becomes zero. 

Thus, for any $t \in N$, the terms inside the square bracket of \eqref{eq:LHSD3} evaluate to a non-positive quantity. Consequently, the LHS of (D3) is upper bounded by a nonpositive quantity,
thereby satisfying the constraint (D3).
 Thus, the dual variables considered in \eqref{eq:DPimpliesgeneraldual} are feasible for DP. 


Employing \eqref{eq:sumlambdabgeneral}, the dual cost is then obtained as,
\begin{align}
&\sum_{s \in \Sscr} \gamma^a(s)+\sum_{y \in \Yscr} \gamma^b(y)\nonumber\\&= \sum_sP_S(s)\inf_x \sum_{y \in \Yscr}\sum_{t=1}^T\biggl[P_{Y|X,V}(y|x,t)P_{V|X}(t|x)\times\nonumber \\
&\Ibb{\{i_{X;\Ybar|V}(x;y|t) \leq j_S(s,\dbf)-\gamma\}}+P_{\Ybar|V}(y|t)P_{V|X}(t|x)\times \non\\
& \exp(j_S(s,\dbf)-\gamma)\Ibb{\{i_{X;\Ybar|V}(x;y|t)>j_S(s,\dbf)-\gamma\}}\biggr]\non \\
&-\exp(-\gamma)\sum_{y \in \Yscr}\sum_{t=1}^TP_{\Ybar|V}(y|t).
\end{align} Since this lower bound is valid for any $\gamma$, $T$, $P_{\Ybar}$,
we get the rightmost inequality in \eqref{eq:DPIimpliesgeneralkostina}. Using Lemma~\ref{lem:dualfeasiblebound}, the proof is complete.
\end{proof}
\begin{remarkc}[Recovering the Kostina-Verd\'{u} Converse \eqref{eq:generalkostina}:]
To recover the Kostina-Verd\'{u} converse \eqref{eq:generalkostina} from our DP, consider the dual variables in \eqref{eq:DPimpliesgeneraldual} with $\lambda^b$ chosen as
\begin{align*}\lambda^b(x,s,y)&\equiv P_S(s)\sum_{t=1}^T\biggl[P_{Y|X,V}(y|x,t)P_{V|X}(t|x)\times\nonumber\\& 
\Ibb{\{P_{Y|X,V}(y|x,t) \leq P_{\Ybar|V}(y|t)\exp(j_S(s,\dbf)-\gamma)\}}\biggr] \nonumber.
\end{align*}It is easy to see that the dual cost of this feasible point gives the converse \eqref{eq:generalkostina}. 

The presence of an additional non-negative term corresponding to $\Ibb{\{P_{Y|X,V}(y|x,t)> P_{\Ybar|V}(y|t)\exp(j_S(s,\dbf)-\gamma)\}}$ in the value of $\lambda^b(x,s,y)$ in \eqref{eq:DPimpliesgeneraldual} results in the improvement on the Kostina-Verd\'{u} converse in \eqref{eq:generalkostina}. Numerical comparisons for the same are given in Section~\ref{sec:BMS-BSC}. 
%
\end{remarkc}\\
\begin{remarkc}[On the Construction of Feasible Points of DP:]\label{rem:constfeas}
Notice that the constructions of dual variables in \eqref{eq:DPimpliesgeneraldual} employ $\mu\equiv 0.$ Consequently, these variables are feasible for $\DP'$, the dual of the reduced LP relaxation $\LP'.$ Tighter bounds could be obtained by making a better use of $\mu.$ However, since $\mu$ is required to be nonnegative and it is present in all constraints (D1), (D2), (D3), and in the objective of $\DP$, using it in the construction of dual variables becomes challenging. Throughout this paper, we have not used a nonzero value for $\mu$. 

Note that of the three dual constraints in $\DP'$,  (D3) is the hardest. We have to find functions $\lambda^a$, $\lambda^b$ on strict subpsaces of $\Zscr$ such that they are pointwise dominated by the RHS of (D3), which is defined over the full space $\Zscr.$ Moreover, when $\kappa(z) \equiv \kappa(s,\shat)$, the RHS of (D3) becomes a \textit{product} of two terms, $P_S(s)\kappa(s,\shat)$, which is a source coding-like term and $P_{Y|X}(y|x)$, which is a channel coding-like term. On the other hand in the LHS, we have a \textit{sum} of $\lambda^a(s,\shat,y)$ and $\lambda^b(s,x,y)$. This is probably indicative that a logarithm would be involved in the construction of dual feasible points. 
Also, notice that terms in the LHS do not have a clean ``source-channel'' separation: $\lambda^a$ depends on $s,\shat$ as well as $y$. Of course,  one may choose to take $\lambda^a$ to be constant over $y$, but it is not clear that this is optimal. Indeed,  \eqref{eq:DPimpliesgeneraldual} sets $\lambda^a$ to be nontrivially dependent on $y.$
\end{remarkc}

\begin{remarkc}[Choice of $\gamma$:]
Notice that in the converse in Theorem~\ref{thm:DPIimpliesgeneralkostina}, the supremum is taken over any $\gamma$. To get the tightest bound, it is sufficient to take the supremum over $\gamma>0$. Although letting $\gamma$ take nonpositive values does not improve our bound, $\gamma<0$ comes handy in deriving  new converses as will be seen in Section~\ref{sec:BMS-BSC}.
\end{remarkc}\\


Finally, if $T=1$ in the converse in Theorem~\ref{thm:DPIimpliesgeneralkostina}, we get the following improvement on converse \eqref{eq:kostinaVerdu2}.
\begin{corollary}[DP improves Kostina-Verd{\'u} bound]\label{thm:DPimplieskostina}
Consider the lossy joint source-channel coding setting of Theorem~\ref{thm:DPIimpliesgeneralkostina}.
Then for any code, we have
\begin{align}
&\Ebb[\Ibb{\{d(S,\Shat)>\dbf\}}]\geq \OPT(\SC)\geq \OPT(\DP)\nonumber\\ & \hspace{-0.1cm}\geq  \sup_{\gamma} \biggl \lbrace \sup_{P_{\Ybar}}\Ebb \biggl[\inf_x\biggl\lbrace
\Pbb[j_S(S,\dbf)-i_{X;\bar{Y}}(x;Y)\geq \gamma |S]+\non\\&
\hspace{-0.1cm}\sum_{y\in \Yscr}P_{\Ybar}(y)\exp(j_S(S,\dbf)-\gamma)\Ibb\{j_S(S,\dbf)-i_{X;\Ybar}(x;y) < \gamma\}\biggr \rbrace \biggr]\non\\&\qquad \qquad -\exp(-\gamma)\biggr \rbrace \label{eq:DPIimplieskostina},
\end{align}where $P_{\Ybar}$ is any probability measure on $\Yscr$ such that $i_{X,\Ybar}(x;y)$ is as defined in \eqref{eq:informationdensity}, $\Pbb$ is with respect to $Y$ distributed acording to $P_{Y|X=x}$. Note that the expectation $\Ebb$ in the RHS of \eqref{eq:DPIimplieskostina} is with respect to $P_S$.
\end{corollary}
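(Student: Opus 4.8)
The plan is to obtain this corollary as the specialization $T=1$ of Theorem~\ref{thm:DPIimpliesgeneralkostina}. Since the maximization over $T$ in \eqref{eq:DPIimpliesgeneralkostina} ranges over all positive integers, fixing $T=1$ yields a valid lower bound on $\OPT(\DP)$, and hence on $\OPT(\SC)$ via Lemma~\ref{lem:dualfeasiblebound}. It then remains only to check that the right-hand side of \eqref{eq:DPIimpliesgeneralkostina} collapses to the right-hand side of \eqref{eq:DPIimplieskostina} when $T=1$.

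First I would observe that when $T=1$ the auxiliary variable $V$ is deterministic: $\Vscr=\{1\}$, so $P_{V|X}(1|x)=1$ for every $x\in\Xscr$ and conditioning on $V$ is vacuous. Consequently $P_{\bar Y|V=1}=P_{\bar Y}$, the conditional information density $i_{X;\bar Y|V}(x;y|1)$ of \eqref{eq:generalinfdensity} reduces to $i_{X;\bar Y}(x;y)$ of \eqref{eq:informationdensity}, and $\sum_{t=1}^{T}P_{\bar Y|V}(y|t)=P_{\bar Y}(y)$; the supremum over $V$ in \eqref{eq:DPIimpliesgeneralkostina} is over a singleton and is dropped. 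I would then substitute these identifications term by term: $-T\exp(-\gamma)$ becomes $-\exp(-\gamma)$; the probability $\Pbb[j_S(S,\dbf)-i_{X;\bar Y|V}(x;Y|V)\geq\gamma\mid S]$ becomes $\Pbb[j_S(S,\dbf)-i_{X;\bar Y}(x;Y)\geq\gamma\mid S]$ with $\Pbb$ now with respect to $Y\sim P_{Y|X=x}$; and the double sum $\exp(j_S(S,\dbf)-\gamma)\sum_{y}\sum_{t=1}^{T}P_{V|X}(t|x)P_{\bar Y|V}(y|t)\Ibb\{j_S(S,\dbf)-i_{X;\bar Y|V}(x;y|t)<\gamma\}$ collapses to $\sum_{y\in\Yscr}P_{\bar Y}(y)\exp(j_S(S,\dbf)-\gamma)\Ibb\{j_S(S,\dbf)-i_{X;\bar Y}(x;y)<\gamma\}$. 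After these substitutions the bracketed expression in \eqref{eq:DPIimpliesgeneralkostina} is exactly the bracketed expression in \eqref{eq:DPIimplieskostina}, and taking the supremum over $\gamma$ and over $P_{\bar Y}$ gives the claimed bound.

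There is essentially no obstacle here: the argument is pure bookkeeping once Theorem~\ref{thm:DPIimpliesgeneralkostina} is in hand. The only points requiring a little care are tracking which measures the inner probability and expectation are taken with respect to (both $\Pbb$ over $Y\sim P_{Y|X=x}$ and $\Ebb$ over $P_S$, as recorded in the statement), and noting explicitly that restricting the $\max_{T}$ in Theorem~\ref{thm:DPIimpliesgeneralkostina} to the single value $T=1$ can only decrease—or leave unchanged—its right-hand side, which is all that is needed for a valid converse. An alternative, slightly longer route would be to re-run the proof of Theorem~\ref{thm:DPIimpliesgeneralkostina} directly with the dual variables of \eqref{eq:DPimpliesgeneraldual} evaluated at $T=1$ and $V\equiv 1$; this reproves feasibility for $\DP'$ and recomputes the dual cost, but it duplicates work already done and I would not pursue it.
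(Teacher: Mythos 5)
Your proposal is correct and matches the paper's own route: the paper obtains this corollary precisely by setting $T=1$ in Theorem~\ref{thm:DPIimpliesgeneralkostina}, so that $V$ is trivial, $i_{X;\Ybar|V}$ reduces to $i_{X;\Ybar}$, $-T\exp(-\gamma)$ becomes $-\exp(-\gamma)$, and the double sum collapses as you describe. Your bookkeeping of the measures ($\Pbb$ over $Y\sim P_{Y|X=x}$, $\Ebb$ over $P_S$) and the remark that restricting the maximization over $T$ to a single value still yields a valid lower bound are exactly the points the paper relies on.
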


Coming to the asymptotics for the joint-source channel coding problem, lower bound the converse in Theorem~\ref{thm:DPIimpliesgeneralkostina} to get the Kostina-Verd\'{u} bound in \eqref{eq:generalkostina}. Let $\Mscr$, $\widehat{\Mscr}$, $\Ascr$ and $\Bscr$ represent the source alphabet, destination alphabet, channel input and output alphabets respectively. Then, we take $\Sscr=\prod_{i=1}^k\Sscr_i$, $\Sscrhat=\prod_{i=1}^k\Sscrhat_i$ where $\Sscr_i =\Mscr$ and $\Sscrhat_i =\widehat{\Mscr}$ for all $i$, $\Xscr=\prod_{i=1}^n\Xscr_i,$ where $\Xscr_i=\Ascr$ for all $i$ and $\Yscr=\prod_{i=1}^n\Yscr_i,$ where $\Yscr_i=\Bscr$ for all $i$, and take the source $S_i$ to be stationary and the channel $P_{Y_i|X_i}$ as stationary and memoryless, 
%
then under the assumptions of Theorem 10 in \cite{kostina2013lossy}, 
\begin{align*}
&\Ebb[\I{d(S,\Shat)>\dbf}]\geq \OPT({\SC})\geq \OPT(\rm DP)\\&\geq \Qbb\left( \frac{nC-kR(\dbf)+(|\Ascr |-\frac{1}{2})\log(n+1)}{\sqrt{n\Vbb+k\Vscr(\dbf)-L_2|\Ascr |}} \right)+o(n,k),
\end{align*}where $o(n,k)$ vanishes as $n,k$ go to infinity, $\Qbb$ is the standard Gaussian complimentary cumulative distribution function, $\Ascr$ is the channel input alphabet, 
$C$ is the channel capacity, 
$\Vbb={\rm Var}[i^{*}_{X;Y}(X^{*};Y^{*})]>0$, $i^{*}_{X;Y}(X^{*};Y^{*})=\log \frac{dP_{Y|X=x}}{dP_{Y^{*}}}(y)$, $X^{*} \in \Ascr$, $ Y^{*} \in \Bscr$ are the capacity achieving input and output random variables, respectively, $\Vscr(\dbf)={\rm Var}[j_S(s,\dbf)]$ and $L_2$ is a non-negative constant. This follows from employing the central limit theorem-based asymptotic analysis in~\cite{kostina2013lossy}. It is known from \cite{kostina2013lossy} that this converse is asymptotically tight.
\subsubsection{Lossy Source Coding}
For the finite blocklength lossy source coding problem, the following converse which improves on the Kostina-Verd\'{u} converse in \cite[Theorem 7]{kostina2012fixed}  follows from the converse in Corollary~\ref{thm:DPimplieskostina}.
\begin{corollary}\textit{(DP improves on Kostina-Verd\'{u} 
Lossy Source Coding Converse)} \label{thm:lossysourcecoding}
Consider the setting of Corollary~\ref{thm:DPimplieskostina} with $\Xscr=\Yscr=\{1,\hdots, M\}$, $M \in \Nbb$ and channel conditional distribution $P_{Y|X}(y|x)\equiv \I{x=y}$. 
Then, for any code, the following  lower bound follows from \eqref{eq:DPIimplieskostina}, 
\begin{align}
&\Ebb[\I{d(S,\Shat)>\dbf}]\geq \OPT({\SC})\geq \OPT(\rm DP) \nonumber \\& \quad\geq \sup_{\gamma} \biggl \lbrace -\exp(-\gamma)+ \Pbb[j_S(S,\dbf) \geq \gamma+ \log M]+\frac{1}{M}\times\nonumber\\&\hspace{-0.3cm}\sum_sP_S(s)\exp(j_S(s,\dbf)-\gamma)\Ibb{\{j_S(s,\dbf) < \log M+\gamma\}} \biggr \rbrace.\label{eq:lossysourcecoding}
\end{align} 
\end{corollary}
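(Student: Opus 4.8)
The plan is to obtain \eqref{eq:lossysourcecoding} as an immediate specialization of the bound \eqref{eq:DPIimplieskostina} in Corollary~\ref{thm:DPimplieskostina}. Since \eqref{eq:DPIimplieskostina} holds for \emph{every} admissible $P_{\Ybar}$ on $\Yscr$ and every $\gamma$, I would simply fix the free output distribution to be the uniform law $P_{\Ybar}(y)\equiv\frac{1}{M}$ on $\Yscr=\{1,\dots,M\}$; this only drops the $\sup_{P_{\Ybar}}$ and still yields a valid lower bound. The admissibility requirement on $P_{\Ybar}$ (that $P_{\Ybar}(y)=0$ implies $P_{Y|X=x}(y)=0$) is vacuous here because the uniform distribution is everywhere positive.

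Next I would evaluate the information density of \eqref{eq:informationdensity} for the noiseless channel $P_{Y|X}(y|x)\equiv\I{x=y}$: for $x=y$ one gets $i_{X;\Ybar}(x;x)=\log\frac{1}{1/M}=\log M$, while for $x\neq y$ the numerator $P_{Y|X=x}(y)=0$ forces $i_{X;\Ybar}(x;y)=-\infty$. Because under $P_{Y|X=x}$ the channel output equals $x$ almost surely, the conditional-probability term in \eqref{eq:DPIimplieskostina} collapses to $\Pbb[j_S(S,\dbf)-i_{X;\Ybar}(x;Y)\geq\gamma\mid S]=\I{j_S(S,\dbf)\geq\gamma+\log M}$, with no dependence on $x$. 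In the summation term $\sum_{y\in\Yscr}P_{\Ybar}(y)\exp(j_S(S,\dbf)-\gamma)\I{j_S(S,\dbf)-i_{X;\Ybar}(x;y)<\gamma}$, every $y\neq x$ contributes zero since $j_S(S,\dbf)-(-\infty)=+\infty\not<\gamma$, so only $y=x$ survives and the sum equals $\frac{1}{M}\exp(j_S(S,\dbf)-\gamma)\I{j_S(S,\dbf)<\gamma+\log M}$, again independent of $x$. Hence the $\inf_x$ is attained trivially, the outer $\Ebb$ over $P_S$ distributes across the two indicator terms, and restoring $\sup_\gamma$ reproduces exactly the right-hand side of \eqref{eq:lossysourcecoding}. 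An application of Lemma~\ref{lem:dualfeasiblebound} (already invoked through Corollary~\ref{thm:DPimplieskostina}) closes the chain $\Ebb[\I{d(S,\Shat)>\dbf}]\geq\OPT(\SC)\geq\OPT(\DP)$. If desired, one may additionally note that the extra non-negative summation term is precisely what produces the improvement over \cite[Theorem 7]{kostina2012fixed}, exactly as in the remark following Theorem~\ref{thm:DPIimpliesgeneralkostina}.

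The only delicate point — and the step I would write out most carefully — is the bookkeeping around the zero-probability output symbols and the resulting $-\infty$ values of $i_{X;\Ybar}$: one must check, consistently with the conventions already in force in Theorem~\ref{thm:DPIimpliesgeneralkostina}, that such $y$ never enter the probability term (automatic, as they carry zero mass under $P_{Y|X=x}$) and contribute $0$ to the summation term via $\I{+\infty<\gamma}=0$. Everything else is a routine substitution into an already-established bound, so I do not expect any genuine obstacle beyond this case analysis.
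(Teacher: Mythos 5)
Your proposal is correct and follows exactly the paper's route: the paper's entire proof is the one-line instruction to set $P_{\Ybar}(y)\equiv \frac{1}{M}$ in \eqref{eq:DPIimplieskostina}, and your evaluation of $i_{X;\Ybar}$ for the identity channel (including the $-\infty$ bookkeeping for $y\neq x$) is just the routine computation the paper leaves implicit.
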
 To get to the bound in \eqref{eq:lossysourcecoding}, set $P_{\Ybar}(y)\equiv \frac{1}{M}$ in \eqref{eq:DPIimplieskostina}.
It is easy to see that the converse in \eqref{eq:lossysourcecoding} implies the asymptotically tight lossy source coding converse of Kostina-Verd\'{u} \cite[Theorem 7]{kostina2012fixed}. Consequently,
Corollary~\ref{thm:lossysourcecoding} shows that the LP relaxation is asymptotically tight for lossy (and lossless) source coding with the probability of excess distortion as the loss criterion. 
\subsubsection{Channel Coding}
For channel coding problems, $d(s,\shat)\equiv \I{s \neq \shat}$, $\dbf=0$ and $S$ is uniformly distributed on $\Sscr$. 
 Kostina and Verd\'{u} in \cite{kostina2015channels} showed that the lower bound derived in Theorem~\ref{thm:kostinaVerdu} also implies the channel coding converse proposed by Wolfowitz  \cite{wolfowitz1968notes} in the absence of cost constraints on channel input,
\begin{align}
&\Ebb [\Ibb\{S \neq \Shat\}] \geq \sup_{\gamma>0}\biggl \lbrace\sup_{P_{\Ybar}}\inf_x \Pbb[i_{X;\Ybar}(x;Y)\leq \log M-\gamma]\non\\& \qquad \qquad -\exp(-\gamma)\biggr \rbrace. \label{eq:wolfowitz}
\end{align} 
The following corollary gives a new converse for channel coding problem in the absence of cost constraints derived from \eqref{eq:DPIimplieskostina} which improves on the above converse of Wolfowitz. 
\begin{corollary}[DP improves on Wolfowitz's Converse]\label{cor:channelcoding}
Consider the setting of Corollary~\ref{thm:DPimplieskostina} with $\Sscr=\Sscrhat=\{1,2, \hdots, M\}$, $M \in \Nbb$, $P_S(s)\equiv \frac{1}{M}$. Let $d(s,\shat) \equiv \I{s \neq \shat}$ and $\dbf=0$. Then, for any code, 
the following lower bound on the minimum error probability holds,
\begin{align}
&\Ebb[\I{S \neq \Shat}]\geq \OPT({\SC})\geq \OPT(\rm DP)\nonumber\\& \geq \sup_{\gamma} \biggl \lbrace \sup_{P_{\Ybar}}\inf_x\biggl[\Pbb[i_{X;\Ybar}(x;Y) \leq \log M-\gamma]+\hspace{-0.05cm}M\exp(-\gamma)\times\non\\&\sum_{y\in \Yscr}P_{\Ybar}(y)\Ibb{\{i_{X;\Ybar}(x;y) > \log M-\gamma\}}\biggr]-\exp(-\gamma)\biggr \rbrace. \label{eq:channelcoding}
\end{align} 
Further, let $\Xscr=\prod_{i=1}^n \Xscr_i$, where $\Xscr_i=\Ascr$ for all $i$ and $\Ascr$ represents the channel input alphabet and $\Yscr=\prod_{i=1}^n \Yscr_i$, where $\Yscr_i=\Bscr$ for all $i$ and $\Bscr$ represents the channel output alphabet. Let $y_i \in \Yscr_i$ represent an element of $\Yscr_i$ and $x_i \in \Xscr_i$ represent an element of $\Xscr_i$.
If the channel is stationary and memoryless, \ie, $P_{Y|X}(y|x)\equiv\prod_{i=1}^n P_{Y_i|X_i}(y_i|x_i),$ for $n\in \Nbb$  and $P_{Y_i|X_i} \in \Pscr(\Bscr|\Ascr)$ is independent of $i$, and  $M = \exp(nR)$ for any $R$ greater than the capacity of $P_{Y_i|X_i}$, then as $n \rightarrow \infty,$ the RHS of \eqref{eq:channelcoding} and hence \OPT(\DP) and \OPT(\SC) tend to unity.
\end{corollary}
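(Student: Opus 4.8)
The plan is to get the finite-blocklength bound \eqref{eq:channelcoding} as an immediate specialization of Corollary~\ref{thm:DPimplieskostina}, and then to prove the asymptotic claim by plugging a single well-chosen feasible pair $(\gamma,P_{\Ybar})$ into \eqref{eq:channelcoding}. For \eqref{eq:channelcoding} I would set $d(s,\shat)\equiv\I{s\ne\shat}$, $\dbf=0$, $P_S\equiv\frac1M$ in Corollary~\ref{thm:DPimplieskostina}. Then the $\dbf$-tilted information collapses to the $0$-tilted information $j_S(s,0)=i_S(s)=\log\frac1{P_S(s)}$, which equals $\log M$ for \emph{every} $s\in\Sscr$. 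Substituting $j_S(S,\dbf)\equiv\log M$ into \eqref{eq:DPIimplieskostina} turns the event $\{j_S(S,\dbf)-i_{X;\Ybar}(x;Y)\ge\gamma\}$ into $\{i_{X;\Ybar}(x;Y)\le\log M-\gamma\}$, the factor $\exp(j_S(S,\dbf)-\gamma)$ into $M\exp(-\gamma)$, and the indicator $\I{j_S(S,\dbf)-i_{X;\Ybar}(x;y)<\gamma}$ into $\I{i_{X;\Ybar}(x;y)>\log M-\gamma}$. Since the resulting integrand no longer depends on $s$, the outer expectation over $P_S$ is vacuous and \eqref{eq:DPIimplieskostina} collapses to exactly \eqref{eq:channelcoding}; the chain $\Ebb[\I{S\ne\Shat}]\ge\OPT(\SC)\ge\OPT(\DP)$ is inherited from Corollary~\ref{thm:DPimplieskostina} and Lemma~\ref{lem:dualfeasiblebound}.

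For the asymptotic statement, fix $\delta\in(0,R-C)$ (nonempty since $R>C$, with $C$ the capacity of $P_{Y_i|X_i}$), take $\gamma=n\delta$ (so that $\log M-\gamma=n(R-\delta)$, using $M=\exp(nR)$), and choose $P_{\Ybar}=P_{Y^*}^{\otimes n}$, the $n$-fold product of the (unique) capacity-achieving output distribution $P_{Y^*}$, so that $i_{X;\Ybar}(x;y)=\sum_{i=1}^n\log\frac{P_{Y_i|X_i=x_i}(y_i)}{P_{Y^*}(y_i)}$ (cf.\ \eqref{eq:informationdensity}). The term $M\exp(-\gamma)\sum_{y}P_{\Ybar}(y)\I{i_{X;\Ybar}(x;y)>\log M-\gamma}$ inside the infimum in \eqref{eq:channelcoding} is nonnegative and can be dropped, so it suffices to show $\inf_x\Pbb[\,i_{X;\Ybar}(x;Y)\le n(R-\delta)\,]\to1$ while $e^{-n\delta}\to0$. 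Here I would invoke the Kuhn--Tucker characterization of the capacity-achieving output distribution: $D(P_{Y|X=a}\,\|\,P_{Y^*})\le C$ for every input letter $a$, which also forces $\supp(P_{Y|X=a})\subseteq\supp(P_{Y^*})$, so the information densities remain finite on the events in question. Consequently, for any $x=(x_1,\dots,x_n)$ and $Y\sim P_{Y|X=x}$, the summands $Z_i:=\log\frac{P_{Y_i|X_i=x_i}(Y_i)}{P_{Y^*}(Y_i)}$ are independent with $\Ebb[Z_i]=D(P_{Y|X=x_i}\|P_{Y^*})\le C$ and $\mathrm{Var}(Z_i)\le\sigma_{\max}^2$, where $\sigma_{\max}^2:=\max_a\mathrm{Var}\big(\log\tfrac{P_{Y|X=a}(Y)}{P_{Y^*}(Y)}\big)$ is finite and depends only on the channel.

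A Chebyshev bound then gives, uniformly over all inputs $x$,
\[
\Pbb\Big[\textstyle\sum_i Z_i>n(R-\delta)\Big]\;\le\;\Pbb\Big[\textstyle\sum_i\big(Z_i-\Ebb[Z_i]\big)>n(R-\delta-C)\Big]\;\le\;\frac{\sigma_{\max}^2}{n(R-\delta-C)^2},
\]
so $\inf_x\Pbb[\,i_{X;\Ybar}(x;Y)\le n(R-\delta)\,]\ge1-\frac{\sigma_{\max}^2}{n(R-\delta-C)^2}$. Hence the RHS of \eqref{eq:channelcoding} is at least $1-\frac{\sigma_{\max}^2}{n(R-\delta-C)^2}-e^{-n\delta}$, which tends to $1$. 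Since $\OPT(\SC)\le1$ (it is a probability of error) and $\OPT(\DP)=\OPT(\LP)\le\OPT(\SC)$, a squeeze gives $\OPT(\DP)\to1$, $\OPT(\SC)\to1$, and the error probability of every code tends to $1$ — which is the channel-coding strong converse.

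The step I expect to be the main obstacle is the \emph{uniform-in-$x$} control of $\Pbb[\,i_{X;\Ybar}(x;Y)\le n(R-\delta)\,]$. Since the input $x$ ranges over all of $\Xscr$ — it need not be i.i.d.\ or of a prescribed type — one must keep the mean drift $\frac1n\Ebb[i_{X;\Ybar}(x;Y)]$ below $C$ for \emph{every} $x$, and this is exactly the content of the saddle-point inequality $D(P_{Y|X=a}\|P_{Y^*})\le C$. Once that is in place, the variance bound, the Chebyshev estimate, and the treatment of output letters $b$ with $P_{Y^*}(b)=0$ are routine consequences of the finiteness of the alphabets.
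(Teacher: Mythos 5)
Your proposal is correct and follows essentially the same route as the paper: the finite-blocklength bound \eqref{eq:channelcoding} is obtained exactly as you describe, by specializing Corollary~\ref{thm:DPimplieskostina} with uniform $P_S$, $d(s,\shat)\equiv\I{s\neq\shat}$ and $\dbf=0$ so that $j_S(s,0)=\log M$, and the strong-converse claim is what the paper also asserts, deferring the asymptotics to the law of large numbers for non-identically distributed variables as in \cite[Section IV]{kostina2015channels}. Your Chebyshev argument with $P_{\Ybar}=P_{Y^*}^{\otimes n}$, $\gamma=n\delta$, and the Kuhn--Tucker bound $D(P_{Y|X=a}\Vert P_{Y^*})\leq C$ (which also gives the needed absolute continuity and finite variances over finite alphabets) is a correct, self-contained instantiation of that same step, so it merely fills in details the paper cites rather than taking a genuinely different route.
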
 

In the above corollary, \eqref{eq:channelcoding} follows from Theorem~\ref{thm:DPimplieskostina}. Notice that the converse in \eqref{eq:channelcoding} has an additional non-negative term corresponding to $\Ibb\{i_{X;\Ybar}(x;y) > \log M-\gamma\}$ compared to Wolfowitz's converse in \eqref{eq:wolfowitz}. The second claim in the corollary is the channel coding strong converse, which follows from the lower bound in \eqref{eq:channelcoding} by applying the law of large numbers for non-identically distributed random variables as explained in \cite[Section IV]{kostina2015channels}. Also note that the `achievability' part  of Shannon's channel coding theorem directly shows that for $R$ strictly less than the channel capacity, $\OPT(\LP) \rightarrow 0$ as $n \rightarrow 0$ (this is because $0\leq \OPT(\LP) \leq \OPT(\SC)$ and $\OPT(\SC) \rightarrow 0$).
\subsubsection{Linear Programming Relaxation of Matthews using Non-Signaling Codes and the Polyanskiy-Poor-Verd\'{u} Converse} The authors were made aware of the work of Matthews  \cite{matthews2012linear} by an anonymous reviewer. Matthews in \cite{matthews2012linear} considers the finite blocklength channel coding  problem (with the same setting as in Corollary~\ref{cor:channelcoding}) and obtains a lower bound on the minimum probability of error, $\mathcal{E}(M)$, achieved by a channel code of size $M$, by relaxing the problem to an optimization problem over \textit{non-signaling codes}. The resulting relaxed problem is posed as the following linear program, 
 $$ \begin{small}\problemsmalla{NS}
	{ Q_{X|S},Q_{\widehat{S}|Y},W}
	{\displaystyle 1-\frac{1}{M}\sum_{s,x,y}P_{Y|X}(y|x)W(s,x,y,s)}
				 {\begin{array}{r@{\ }c@{\ }l}
 \sum_{x} W(z)-Q_{\widehat{S}|Y}(\shat|y)&=&0  
  \hspace{0.1cm}  \forall s,\shat,y\\ 
 \sum_{\shat}W(z)-Q_{X|S}(x|s)&=&0  \hspace{0.1cm} \forall x,s,y\\
\sum_{x,\shat}W(s,x,y,\shat)&=& 1 \hspace{0.15cm}\forall s,y ,\\
 Q_{X|S}(x|s) &\geq& 0  \hspace{0.15cm} \forall s,x\\
 Q_{\widehat{S}|Y}(\shat|y) &\geq& 0   \hspace{0.2cm} \forall \shat,y\\
 W(z)&\geq& 0  \hspace{0.2cm} \forall z.
	\end{array}} \end{small} $$ Matthews finds the optimal solution of NS by resorting to an equivalent linear program over symmetrized non-signaling codes and its dual program. Specifically, he shows,
	\begin{align*}\OPT({\rm NS})=\max_{z} \min_{x} \sum_{y}\biggl[\min\{z_y, P_{Y|X}(y|x)\}-\frac{z_y}{M}\biggr], 
	\end{align*} whereby,
\begin{align}
	\mathcal{E}(M)&\geq 
\max_{z} \min_{x} \sum_{y}\biggl[\min\{z_y, P_{Y|X}(y|x)\}-\frac{z_y}{M}\biggr].\label{eq:OPTNS}	\end{align}
	 Furthermore, he shows that the upper bound on $M$ that follows from \eqref{eq:OPTNS}, $M^{{\rm NS}}(\epsilon)$,  where $\mathcal{E}(M) \leq \epsilon$ is in fact the hypothesis testing based converse of Polyanskiy, Poor and Verd\'{u} \cite[Theorem 27]{polyanskiy2010channel}. Precisely,
	\begin{align}
	M(\epsilon)\leq M^{{\rm NS}}(\epsilon)=\lfloor M^{{\rm PPV}}(\epsilon)\rfloor, \label{eq:Mbound}
	\end{align}where $$M^{{\rm PPV}}(\epsilon)=\sup_{P_X \in \Pscr(\Xscr)}\inf_{Q_Y \in \Pscr(\Yscr)}\frac{1}{\beta_{1-\epsilon}(P_{XY},P_X \times Q_Y)},$$ with $P_{X,Y}(x,y)=P_{Y|X}(y|x)P_X(x)$ is the bound on $M$ obtained by Polyanskiy, Poor and Verd\'{u} in \cite[Theorem 27]{polyanskiy2010channel}. Note that if $P,Q \in \Pscr(\mathcal{R})$ where $\mathcal{R}$ is a finite set and identifying $P$ with the null hypothesis, $\beta_{1-\epsilon}(P,Q)$ represents the minimum type II error, $\sum_{r \in \Rscr} T(r)Q(r)$ achieved by statistical tests $T$ with the type I error not exceeding $\epsilon$, \ie, $\sum_{r \in \mathcal{R}}T(r)P(r)\geq 1-\epsilon$. Other works related to \cite{matthews2012linear} can be found in \cite{leung2015power}, \cite{matthews2014finite}.
	
It is easy to see that NS is in fact our simpler relaxation ${\rm LP'}$, thereby giving that $\OPT({\rm NS})=\OPT({\rm LP'})=\OPT({\rm DP'})$. 
Consequently, we have the following theorem.
\begin{theorem}[DP implies Polyanskiy-Poor-Verd\'{u} Converse]
Consider problem SC with the setting of Corollary~\ref{cor:channelcoding}. Consequently, for any code,
\begin{align*}
\Ebb[\Ibb\{S\neq \Shat\}]&\geq \OPT(\SC) \geq \OPT(\DP) \geq \OPT(\DP'),
\end{align*} and $$M(\epsilon)\leq M^{\DP'}(\epsilon) = \lfloor M^{\rm PPV}(\epsilon)\rfloor,$$
where $M^{\DP'}(\epsilon)$ is the upper bound on $M$ obtained from $\DP'$ by putting  $\OPT(\DP')\leq \OPT(\SC)\leq \epsilon.$ 
\end{theorem}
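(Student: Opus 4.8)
The plan is to exploit the two observations assembled immediately before the statement: that problem NS is literally the reduced relaxation $\LP'$, and that Matthews has already identified the cardinality bound coming from NS with $\lfloor M^{{\rm PPV}}(\epsilon)\rfloor$. Once the identification ``$\text{NS}=\LP'$'' is made precise, the displayed inequality chain and the cardinality statement become bookkeeping built on strong duality (Theorem~\ref{thm:strongdual}), Lemma~\ref{lem:dualfeasiblebound}, and the relaxation inclusions already established.

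First I would check carefully that, under the channel-coding specialization of Corollary~\ref{cor:channelcoding} ($\Sscr=\Sscrhat=\{1,\dots,M\}$, $P_S\equiv\tfrac1M$, $\kappa(s,x,y,\shat)=\I{s\neq\shat}$, $\dbf=0$), the feasible set and objective of $\LP'$ coincide with those of NS. For the feasible sets: the constraint $\sum_{x,\shat}W(s,x,y,\shat)=1$ of NS is implied in $\LP'$ by summing one of the two marginalization equalities and using $\sum_x Q_{X|S}(x|s)=1$ (resp.\ $\sum_{\shat}Q_{\Shat|Y}(\shat|y)=1$); conversely, summing $\sum_{\shat}W(z)=Q_{X|S}(x|s)$ over $x$ and invoking NS's constraint recovers $\sum_x Q_{X|S}(x|s)=1$, and symmetrically for $Q_{\Shat|Y}$, so the two polyhedra are identical. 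For the objective: expanding $\sum_z\kappa(z)P_S(s)P_{Y|X}(y|x)W(z)$ with $\kappa=\I{s\neq\shat}$, using $\sum_{\shat}\I{s\neq\shat}W(z)=Q_{X|S}(x|s)-W(s,x,y,s)$, then $\sum_y P_{Y|X}(y|x)=1$ and $\sum_x Q_{X|S}(x|s)=1$, collapses the $\LP'$ objective to $1-\tfrac1M\sum_{s,x,y}P_{Y|X}(y|x)W(s,x,y,s)$, which is exactly the NS objective. Hence $\OPT({\rm NS})=\OPT(\LP')$; and since $\DP'$ is, by construction, the LP dual of $\LP'$ (which is a bounded nonempty polyhedron), Theorem~\ref{thm:strongdual} gives $\OPT(\DP')=\OPT(\LP')=\OPT({\rm NS})$.

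Next I would assemble the inequality chain. Lemma~\ref{lem:dualfeasiblebound} supplies $\Ebb[\I{S\neq\Shat}]\ge\OPT(\SC)\ge\OPT(\LP)=\OPT(\DP)$. Since $\LP'$ is obtained from $\LP$ by deleting the $\mu$-inequalities, $\FEA(\LP)\subseteq\FEA(\LP')$, hence $\OPT(\LP')\le\OPT(\LP)$; together with $\OPT(\DP)=\OPT(\LP)$ and $\OPT(\DP')=\OPT(\LP')$ this yields $\OPT(\DP)\ge\OPT(\DP')$, completing the displayed chain. For the cardinality claim, $M^{\DP'}(\epsilon)$ is by definition the largest $M$ for which $\OPT(\DP')\le\epsilon$ is consistent with the existence of a size-$M$ code; since $\OPT(\DP')=\OPT({\rm NS})$ for every $M$, this largest value is precisely $M^{{\rm NS}}(\epsilon)$, and Matthews' identity \eqref{eq:Mbound} gives $M^{{\rm NS}}(\epsilon)=\lfloor M^{{\rm PPV}}(\epsilon)\rfloor$. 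Finally $M(\epsilon)\le M^{\DP'}(\epsilon)$ follows because $\OPT(\SC)\le\epsilon$ (whence $\OPT(\DP')\le\epsilon$ via the chain) is necessary for a size-$M$ code with error probability at most $\epsilon$ to exist.

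The hard part, to the extent there is one, will be this first step: being scrupulous that NS and $\LP'$ agree as optimization problems --- same variables, same affine constraints, and the same objective after the $\kappa=\I{s\neq\shat}$, $P_S\equiv\tfrac1M$ substitution --- since any mismatch there would break the equality of optimal values on which everything downstream rests. Everything else is a direct application of strong duality and the already-proven relaxation inclusions.
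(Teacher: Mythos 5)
Your proposal is correct and follows essentially the same route as the paper: the theorem there is obtained precisely by identifying NS with $\LP'$ (which the paper merely asserts and you verify by matching constraints and collapsing the objective under $\kappa\equiv\I{s\neq\shat}$, $P_S\equiv\frac1M$), invoking Matthews' identity $M^{\rm NS}(\epsilon)=\lfloor M^{\rm PPV}(\epsilon)\rfloor$, and chaining Lemma~\ref{lem:dualfeasiblebound}, strong duality, and $\FEA(\LP)\subseteq\FEA(\LP')$. Your added verification of the NS--$\LP'$ equivalence is accurate and simply makes explicit what the paper leaves as "easy to see."
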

Although NS is the same as our ${\rm LP'}$, we note that our original relaxation LP is tighter than NS due to the presence of additional inequality constraints, $Q_{X|S}(x|s)+Q_{\widehat{S}|Y}(\shat|y)-W(z)\leq 1 $ for all $s,x,y,\shat$ and hence,
 has the potential to obtain  better bounds.

Notice that our improved converse for channel coding in \eqref{eq:channelcoding} follows from \eqref{eq:OPTNS}. To see this, lower bound the RHS of \eqref{eq:OPTNS} by taking $z_y=P_{\Ybar}(y)M\exp(-\gamma)$ and replace maximum over $z$ in the bound in \eqref{eq:OPTNS} with supremum over $\gamma$ and $P_{\bar{Y}}$. 

While Matthews's relaxation is obtained as a linear program by appealing to the abstract idea of non-signaling codes, our LP relaxation is obtained mechanistically by appealing to principles and techniques in optimization. As such, our approach can generate additional constraints as seen in our relaxation LP, which further tightens the relaxation. Moreover, our relaxation approach is extendable to network settings. 


\subsection{Further strengthening of these bounds}\label{sec:strengtheningofbounds} Our main message from the results obtained in the first part of this section is that they serve to demonstrate the linear programming based framework. That the gap between $\OPT(\SC)$ and $\OPT(\DP)$ is small (and vanishes asymptotically) shows that better converses could be found by thinking within the general framework of the abstract converse given by $\DP$. 

Indeed some avenues for strengthening the above converses are already suggested by the proofs of Theorems~\ref{thm:DPIimpliesgeneralkostina} and \ref{thm:lossysourcecoding}. Notice from the dual constraint (D2), that when $\mu\equiv 0$, the optimal value of $\gamma^b(y) $ is $ \min_{\shat} \sum_s \lambda^a(s,\shat,y)$. On the other hand, the $\gamma^b(y)$ constructed in \eqref{eq:DPimpliesgeneraldual} is, in general, less than this value. Setting $\gamma^b(y) \equiv \min_{\shat} \sum_s \lambda^a(s,\shat,y) $ would lead to an improved bound. Note that $\gamma^a(s)$ has been set to its optimal value in \eqref{eq:DPimpliesgeneraldual}. Setting $\gamma^b$ as indicated above and all other variables as in \eqref{eq:DPimpliesgeneraldual}, we obtain the following tighter lower bound on the minimum excess distortion probability of a lossy joint source-channel code than the converse in Theorem~\ref{thm:DPIimpliesgeneralkostina}.
\begin{theorem}\textit{(Joint Source-Channel Coding -- Further Improvement)}\label{thm:DPimpliestighterbounds}
Consider the lossy joint source-channel coding setting of Theorem~\ref{thm:DPIimpliesgeneralkostina}.
 For any code, we have the following lower bound on the excess distortion probability that improves on the converse in Theorem~\ref{thm:DPIimpliesgeneralkostina},
\begin{align}
&\Ebb[\Ibb{\{d(S,\Shat)>\dbf\}}]\geq \OPT(\SC) \geq \OPT(\DP)\nonumber\\&\geq \max_{\gamma,T}\sup_{\Ybar,V} \biggl \lbrace  \Ebb \biggl[\inf_x \biggl \lbrace \Pbb[j_S(S,\dbf)-i_{X;\bar{Y}|V}(x;Y|V)\geq \gamma |S]
\non\\&+\exp(j_S(S,\dbf)-\gamma)\sum_{y \in \Yscr}\sum_{t=1}^TP_{V|X}(t|x)P_{\Ybar|V}(y|t)\times \non \\&\Ibb{\{i_{X;\Ybar|V}(x;y|t)>j_S(S,\dbf)-\gamma\}}\biggr\rbrace\biggr]\non \\&-T \sup_{\shat}\Ebb \biggl[\exp(j_S(S,\dbf)-\gamma)\Ibb\{d(S,\shat) \leq \dbf\}\biggr]\biggr \rbrace, \label{eq:DPIgivestighterbounds}
\end{align} where $T$ is a positive integer, $V$ is a random variable that takes values on $\{1,2, \hdots, T\}$, $i_{X;\Ybar|V}(x;y|t)$ is as defined in \eqref{eq:generalinfdensity}, $\Pbb$ is generated by $P_{V|X=x}P_{Y|X=x,V}$ and $\Ebb$ in the RHS of \eqref{eq:DPIgivestighterbounds} is with respect to $P_S$. 
\end{theorem}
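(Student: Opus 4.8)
The plan is to reuse, essentially verbatim, the dual-feasible point constructed in the proof of Theorem~\ref{thm:DPIimpliesgeneralkostina} and to strengthen it by changing only the value of $\gamma^b$. Concretely, I would keep $\mu\equiv 0$ and keep $\lambda^a$, $\lambda^b$, $\gamma^a$ exactly as in \eqref{eq:DPimpliesgeneraldual}, but now set
\[
\gamma^b(y)\;\equiv\;\min_{\shat}\sum_s\lambda^a(s,\shat,y)\;=\;-\Big(\max_{\shat}\sum_s P_S(s)\exp(j_S(s,\dbf)-\gamma)\Ibb\{d(s,\shat)\le\dbf\}\Big)\sum_{t=1}^T P_{\Ybar|V}(y|t),
\]
the last equality using that the factor $\sum_{t}P_{\Ybar|V}(y|t)$ is nonnegative and independent of $\shat$. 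This is precisely the value of $\gamma^b$ that makes constraint (D2) tight and, as noted just before the theorem, it is the ``optimal'' choice of $\gamma^b$ given the other variables.

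For feasibility, observe first that constraints (D1) and (D3) do not contain $\gamma^b$, so they remain satisfied by exactly the case analysis (the sub-cases ``Case 1a/1b'' and ``Case 2a/2b'' for (D3), and the trivial argument for (D1)) already carried out in the proof of Theorem~\ref{thm:DPIimpliesgeneralkostina}. Constraint (D2) with $\mu\equiv 0$ reads $\gamma^b(y)\le\sum_s\lambda^a(s,\shat,y)$ for all $\shat,y$, which holds by construction since $\gamma^b(y)$ is defined as the pointwise minimum over $\shat$. Hence the modified tuple is again feasible for DP (indeed for $\DP'$).

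Next I would evaluate the dual objective. The term $\sum_s\gamma^a(s)$ is unchanged, and by \eqref{eq:sumlambdabgeneral} it equals $\sum_s P_S(s)\inf_x\{\,\cdot\,\}$, with the inner bracket equal to the first two terms of the $\inf_x$-expression in \eqref{eq:DPIgivestighterbounds} once rewritten in probabilistic form (using $\Pbb[\,i_{X;\Ybar|V}(x;Y|V)\le j_S(S,\dbf)-\gamma\mid S\,]=\Pbb[\,j_S(S,\dbf)-i_{X;\Ybar|V}(x;Y|V)\ge\gamma\mid S\,]$, with $\Pbb$ generated by $P_{V|X=x}P_{Y|X=x,V}$). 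For the new term, summing over $y$ and using $\sum_y P_{\Ybar|V}(y|t)=1$ for each $t$ gives
\[
\sum_{y\in\Yscr}\gamma^b(y)\;=\;-T\,\sup_{\shat}\,\Ebb\big[\exp(j_S(S,\dbf)-\gamma)\,\Ibb\{d(S,\shat)\le\dbf\}\big],
\]
which is exactly the last term inside the bracket of \eqref{eq:DPIgivestighterbounds}. Adding the two contributions, invoking Lemma~\ref{lem:dualfeasiblebound}, and taking the supremum over the free parameters $\gamma$, $T$, $\Ybar$, $V$ then yields the stated lower bound.

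Finally, to justify that this \emph{improves} on Theorem~\ref{thm:DPIimpliesgeneralkostina}: the inequality $(a)$--$(b)$ used there to verify (D2) actually shows $\sum_s\lambda^a(s,\shat,y)\ge-\exp(-\gamma)\sum_{t}P_{\Ybar|V}(y|t)$ for every $\shat$, so the new $\gamma^b$ dominates pointwise the old value $-\exp(-\gamma)\sum_{t}P_{\Ybar|V}(y|t)$; since every other dual variable is unchanged and $\mu\equiv 0$, the dual objective can only increase. I expect the only mildly delicate point to be the bookkeeping: checking that the maximization over $\shat$ legitimately factors out of the sum over $y$ (it does, because the $\shat$-dependent factor $\sum_s P_S(s)\exp(j_S(s,\dbf)-\gamma)\Ibb\{d(s,\shat)\le\dbf\}$ is independent of $y$ and the cofactor $\sum_t P_{\Ybar|V}(y|t)$ is nonnegative), and confirming that nothing in the (D1)/(D3) verification from the earlier proof relied on the previous value of $\gamma^b$. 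Neither is a genuine obstacle, so the theorem should follow with only routine algebra layered on top of the proof of Theorem~\ref{thm:DPIimpliesgeneralkostina}.
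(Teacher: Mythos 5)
Your proposal is correct and is essentially the paper's own argument: the paper obtains Theorem~\ref{thm:DPimpliestighterbounds} by keeping all the dual variables from \eqref{eq:DPimpliesgeneraldual} and replacing $\gamma^b(y)$ with its optimal value $\min_{\shat}\sum_s\lambda^a(s,\shat,y)$, exactly as you do, with the same feasibility and improvement reasoning. Your evaluation of $\sum_y\gamma^b(y)=-T\sup_{\shat}\Ebb[\exp(j_S(S,\dbf)-\gamma)\Ibb\{d(S,\shat)\le\dbf\}]$ and the observation that (D1), (D3) are untouched match the intended proof.
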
 

The following further improvement on the converse in \eqref{eq:lossysourcecoding} for lossy source coding follows from \eqref{eq:DPIgivestighterbounds}.
\begin{corollary} \textit{(Lossy Source Coding -- Further Improvement)}\label{thm:lossysourcecodingimproved}
Consider the setting of Theorem~\ref{thm:DPimplieskostina} with $\Xscr=\Yscr=\{1,\hdots, M\}$, $M \in \Nbb$ and channel conditional distribution $P_{Y|X}(y|x)\equiv \I{x=y}$. 
Then, for any code, the following  lower bound follows from \eqref{eq:DPIgivestighterbounds},
\begin{align}
&\Ebb[\I{d(S,\Shat)>\dbf}]\geq \OPT({\SC})\geq \OPT(\rm DP) \nonumber \\& \quad\geq \sup_{\gamma} \biggl \lbrace \Pbb[j_S(S,\dbf) \geq \gamma+ \log M]+\frac{1}{M}\times\nonumber\\&\quad \sum_sP_S(s)\exp(j_S(s,\dbf)-\gamma)\Ibb{\{j_S(s,\dbf) < \log M+\gamma\}}\non\\&-\sup_{\shat}\frac{\exp(-\gamma)}{M}\sum_s P_S(s)\exp(j_S(s,\dbf))\Ibb\{d(s,\shat)\leq \dbf\} \biggr \rbrace.\label{eq:lossysourcecodingimproved}
\end{align}
\end{corollary}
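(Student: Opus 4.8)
The plan is to specialize Theorem~\ref{thm:DPimpliestighterbounds} to the degenerate "identity channel" setting $\Xscr=\Yscr=\{1,\dots,M\}$, $P_{Y|X}(y|x)\equiv\I{x=y}$, and choose all the free parameters appearing in \eqref{eq:DPIgivestighterbounds} as simply as possible. Concretely, I would take $T=1$ (so the auxiliary variable $V$ is trivial and can be dropped from the notation), and take $P_{\Ybar}$ to be the uniform distribution on $\{1,\dots,M\}$, i.e. $P_{\Ybar}(y)\equiv\frac1M$. With $T=1$ the information density \eqref{eq:generalinfdensity} reduces to $i_{X;\Ybar}(x;y)=\log\frac{P_{Y|X=x}(y)}{P_{\Ybar}(y)}=\log\frac{\I{x=y}}{1/M}$, which equals $\log M$ when $y=x$ and is $-\infty$ when $y\neq x$. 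This is exactly the computation already used to derive \eqref{eq:lossysourcecoding} from \eqref{eq:DPIimplieskostina}; I would simply reuse it.

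With these substitutions I would evaluate each of the three groups of terms on the RHS of \eqref{eq:DPIgivestighterbounds}. First, $\Pbb[j_S(S,\dbf)-i_{X;\Ybar}(x;Y)\geq\gamma\mid S]$: since $Y=x$ with probability one under $P_{Y|X=x}$, this probability is $\I{j_S(S,\dbf)-\log M\geq\gamma}=\I{j_S(S,\dbf)\geq\gamma+\log M}$, independent of $x$, so the $\inf_x$ is vacuous on this term and after taking $\Ebb$ over $P_S$ it becomes $\Pbb[j_S(S,\dbf)\geq\gamma+\log M]$. Second, the term $\exp(j_S(S,\dbf)-\gamma)\sum_{y}P_{\Ybar}(y)\I{i_{X;\Ybar}(x;y)>j_S(S,\dbf)-\gamma}$: the indicator is nonzero only for $y=x$ (where $i_{X;\Ybar}=\log M$; for $y\neq x$ the density is $-\infty$ and the indicator is $0$ unless $j_S(S,\dbf)-\gamma=-\infty$, which cannot happen), so the sum collapses to $\frac1M\,\I{\log M>j_S(S,\dbf)-\gamma}=\frac1M\,\I{j_S(S,\dbf)<\log M+\gamma}$; again independent of $x$, so the $\inf_x$ is again vacuous and taking $\Ebb$ over $P_S$ gives $\frac1M\sum_s P_S(s)\exp(j_S(s,\dbf)-\gamma)\I{j_S(s,\dbf)<\log M+\gamma}$. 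Third, the penalty term $-T\sup_{\shat}\Ebb[\exp(j_S(S,\dbf)-\gamma)\I{d(S,\shat)\leq\dbf}]$ with $T=1$ is exactly $-\sup_{\shat}\frac{\exp(-\gamma)}{M}\cdot M\,\Ebb[\exp(j_S(S,\dbf))\I{d(S,\shat)\leq\dbf}]$; writing the expectation as a sum over $s$ against $P_S$ and pulling out $\exp(-\gamma)/M$ reproduces the last line of \eqref{eq:lossysourcecodingimproved}. Combining the three pieces and taking $\sup_\gamma$ yields the stated bound, while the chain $\Ebb[\I{d(S,\Shat)>\dbf}]\geq\OPT(\SC)\geq\OPT(\DP)$ is inherited verbatim from Theorem~\ref{thm:DPimpliestighterbounds} (and ultimately from Lemma~\ref{lem:dualfeasiblebound}).

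There is essentially no hard obstacle here: the corollary is a mechanical specialization, and the only thing requiring a little care is the handling of the $-\infty$ values of $i_{X;\Ybar}(x;y)$ for $y\neq x$ (equivalently, the degenerate support of $P_{Y|X=x}$), which one dispatches by the same convention used already in deriving \eqref{eq:lossysourcecoding} — namely that $P_{\Ybar|V}(y|t)=0$-type issues do not arise because here $P_{\Ybar}$ is strictly positive, and terms multiplied by $P_{Y|X=x}(y)=0$ simply vanish. I would therefore state the proof tersely: "Set $T=1$, $P_{\Ybar}(y)\equiv\frac1M$ in Theorem~\ref{thm:DPimpliestighterbounds}, evaluate $i_{X;\Ybar}(x;y)$ as in the proof of Corollary~\ref{thm:lossysourcecoding}, and simplify," and then display the three simplified terms. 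The main thing to double-check is bookkeeping of the constant $\frac1M$ and the $\exp(-\gamma)$ factors across the penalty term so that the final expression matches \eqref{eq:lossysourcecodingimproved} exactly.
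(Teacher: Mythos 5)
Your route --- set $T=1$ and $P_{\Ybar}(y)\equiv\frac1M$ in \eqref{eq:DPIgivestighterbounds} and evaluate $i_{X;\Ybar}(x;y)$ for the identity channel exactly as in the derivation of \eqref{eq:lossysourcecoding} --- is the specialization the paper intends, and your evaluation of the first two terms is correct. The gap sits in the third term, i.e.\ precisely in the ``bookkeeping'' you deferred. With $T=1$ the penalty in \eqref{eq:DPIgivestighterbounds} is $-\sup_{\shat}\Ebb\left[\exp(j_S(S,\dbf)-\gamma)\,\I{d(S,\shat)\le\dbf}\right]=-\exp(-\gamma)\sup_{\shat}\sum_s P_S(s)\exp(j_S(s,\dbf))\,\I{d(s,\shat)\le\dbf}$: no factor $\frac1M$ survives, because the factor $P_{\Ybar}(y)=\frac1M$ carried by $\lambda^a$ gets summed over the $M$ values of $y$ when forming $\sum_y\gamma^b(y)$. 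Your rewriting of this as $-\sup_{\shat}\frac{\exp(-\gamma)}{M}\cdot M\,\Ebb[\cdots]$ is an identity, but ``pulling out $\exp(-\gamma)/M$'' then silently drops the compensating factor $M$; what your substitution actually yields is a penalty $M$ times larger than the one displayed in \eqref{eq:lossysourcecodingimproved}, so the proposal as written does not establish the stated inequality but only the weaker bound with penalty $-\sup_{\shat}\exp(-\gamma)\sum_s P_S(s)\exp(j_S(s,\dbf))\,\I{d(s,\shat)\le\dbf}$.

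This is not a discrepancy you can absorb by a cleverer choice of $\gamma$ or $P_{\Ybar}$: the version with the extra $\frac1M$ is strictly stronger and is in fact false as literally printed. Take $\dbf=0$, $d(s,\shat)=\I{s\neq\shat}$ and $S$ uniform on a set of size $M$; a bijective encoder--decoder attains zero excess-distortion probability, yet the right-hand side of \eqref{eq:lossysourcecodingimproved} evaluates, for small $\gamma>0$, to $\exp(-\gamma)\bigl(1-\frac1M\bigr)$, which approaches $1-\frac1M>0$. The paper's own downstream use confirms this: the BMS particularization \eqref{eq:BMShypoth} carries the penalty $p^{t+1}(1-p)^{k-t-1}\,M\sum_{j=0}^{\lfloor k\dbf\rfloor}\Comb{k}{j}$, which is what the no-$\frac1M$ version produces and what is needed to deduce \eqref{eq:BMShypothorig}. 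So the $\frac1M$ in the last line of \eqref{eq:lossysourcecodingimproved} is evidently a typo, and your computation in fact proves the intended (corrected) statement; but as a proof of the corollary as printed, the final step fails, and it fails exactly at the point you flagged but did not resolve.
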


For the channel coding problem a similar improvement is not possible. Indeed, with $P_S(s)\equiv \frac{1}{M}$, $d(s,\shat)=\Ibb\{s \neq \shat\}$ and $\dbf=0$, our improved construction $\gamma^b(y)=\inf_{\shat}\sum_s \lambda^a(s,\shat,y)=-\exp(-\gamma)P_{\bar{Y}}(y)$, which is as in \eqref{eq:DPIimpliesgeneralkostina}. Consequently, the tighter converse that follows from \eqref{eq:DPIgivestighterbounds} (with $T=1$) for channel coding coincides with \eqref{eq:channelcoding}.

Before we present further results, we numerically illustrate these bounds in the next section.
\section{Lossy Transmission of a Binary Memoryless Source (BMS) over a BSC}\label{sec:BMS-BSC}
In this section, we employ the converses in Theorem~\ref{thm:DPIimpliesgeneralkostina} and Theorem~\ref{thm:DPimpliestighterbounds} with $T=1$ to obtain lower bounds on the minimum excess distortion probability of transmitting a BMS over a BSC. We then consider the lossy source coding of a BMS with average bit-wise Hamming distance as the distortion measure, and apply the converses in Corollary~\ref{thm:lossysourcecoding}, Corollary~\ref{thm:lossysourcecodingimproved} and Kostina-Verd\'{u} converse in \cite[Theorem 7]{kostina2012fixed}. We numerically illustrate these bounds. 

We consider the following setting. Let $\Sscr=\Sscrhat=\{0,1\}^k$ and $\Xscr=\Yscr=\{0,1\}^n$. The probability distribution of the binary memoryless source is given as \begin{align}
P_S(s)&=\prod_{i=1}^k P_{S_i}(s_i),\quad \mbox{where}\non\\ P_{S_i}(s_i)&\equiv p \Ibb\{s_i=1\}+(1-p)\Ibb\{s_i=0\},\label{eq:BMS}\end{align} where $p\in [0,1]$. Let the distortion measure be $d(s,\shat)=\frac{1}{k}\sum_{i=1}^k \Ibb\{s_i \neq \shat_i\}$ and $\dbf\in [0,p)$ be the excess distortion level. The $d$-tilted information for this BMS evaluates to
\begin{align}
j_S(s,\dbf)&\equiv kH(p)-kH(\dbf)+(w_s-kp)\log_2\biggl(\frac{1-p}{p}\biggr),\label{eq:BMStilted}
\end{align}
where recall that $w_s$ represents the Hamming weight of $s \in \Sscr$. The binary memoryless symmetric channel is as given in \eqref{eq:BSC} for $\epsilon<\half$. The rate-distortion function of the source and the channel capacity are given as,
$$R_S(\dbf)=H(p)-H(\dbf),\quad C=1-H(\epsilon),$$ 
and the rate of transmission of the joint source-channel code is $r=\frac{k}{n}$.
\subsection{Joint Source-Channel Coding of a BMS over a BSC}
The following converse on the minimum excess distortion probability of a BMS over a BSC follows from \eqref{eq:DPIimplieskostina}. 

\begin{corollary}\textit{(Joint Source-Channel Coding Converse of \eqref{eq:DPIimplieskostina} for BMS-BSC)}\label{cor:BMSBSC1}
Consider problem SC with $\Sscr=\Sscrhat=\{0,1\}^k$, $\Xscr=\Yscr=\{0,1\}^n$, $P_{S}(s)$ is as given in \eqref{eq:BMS} and the channel is the BSC as given in \eqref{eq:BSC} with $\epsilon<0.5$.  Let the loss function be $\kappa(s,x,y,\shat)\equiv \Ibb \biggl\lbrace \frac{1}{k}\sum_{i=1}^k \Ibb\{s_i \neq \shat_i\}>\dbf \biggr \rbrace$, where $\dbf <p$. Then, for any code, the following lower bound follows from \eqref{eq:DPIimplieskostina},
\begin{align}
&\Ebb\biggl[ \Ibb \biggl\lbrace \frac{1}{k}\sum_{i=1}^k \Ibb\{s_i \neq \shat_i\}>\dbf \biggr \rbrace\biggr]\geq \OPT({\SC})\geq \OPT(\rm DP)\non\\&\hspace{-0.1cm}\geq \sup_{\gamma} \biggl \lbrace \sum_{b=0}^k \Comb{k}{b}p^b(1-p)^{k-b}\biggl[1-\hspace{-0.2cm}\sum_{a=0}^{ n\epsilon+\theta -1}\hspace{-0.3cm}\Comb{n}{a}\epsilon^a(1-\epsilon)^{n-a}  \non\\&\qquad +\epsilon^{n\epsilon+n\theta}(1-\epsilon)^{n-n\epsilon-\theta}\sum_{a=0}^{n\epsilon+\theta -1}\Comb{n}{a}\biggr]-2^{-\gamma}\biggr\rbrace,\label{eq:BUSBSC1} \\
&\mbox{where} \hspace{0.2cm}\theta=\frac{\gamma-kR_S(\dbf)+nC-(b-kp)\log_2\biggl(\frac{1-p}{p}\biggr)}{\log_2 \biggl(\frac{1-\epsilon}{\epsilon}\biggr)}. \label{eq:theta} \end{align}
\end{corollary}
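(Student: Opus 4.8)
The plan is to obtain \eqref{eq:BUSBSC1} as a direct specialization of the bound \eqref{eq:DPIimplieskostina} of Corollary~\ref{thm:DPimplieskostina}, making the choice $P_{\Ybar}(y)\equiv 2^{-n}$ (the uniform distribution on $\Yscr=\{0,1\}^n$, which is admissible since \eqref{eq:DPIimplieskostina} holds for every $P_{\Ybar}$) and then evaluating each term in closed form; throughout I use the base-$2$ convention $\exp(t)=2^{t}$, $\log=\log_2$, consistent with the statement, which writes $2^{-\gamma}$ for $\exp(-\gamma)$. First I would compute the information density: by the product form \eqref{eq:BSC1} of the BSC and $P_{\Ybar}(y)=2^{-n}$,
\[
i_{X;\Ybar}(x;y)=\log_2\frac{\epsilon^{d_{x,y}}(1-\epsilon)^{n-d_{x,y}}}{2^{-n}}=nC-(d_{x,y}-n\epsilon)\log_2\tfrac{1-\epsilon}{\epsilon},
\]
using $nC=n-nH(\epsilon)=n+n\epsilon\log_2\epsilon+n(1-\epsilon)\log_2(1-\epsilon)$. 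Writing the $\dbf$-tilted information \eqref{eq:BMStilted} as $j_S(s,\dbf)=kR_S(\dbf)+(w_s-kp)\log_2\tfrac{1-p}{p}$ (since $R_S(\dbf)=H(p)-H(\dbf)$), and dividing through by $\log_2\tfrac{1-\epsilon}{\epsilon}>0$ (here $\epsilon<\tfrac12$ is used), the event inside the probability in \eqref{eq:DPIimplieskostina} rearranges to
\[
\bigl\{\,j_S(s,\dbf)-i_{X;\Ybar}(x;y)\ge\gamma\,\bigr\}=\bigl\{\,d_{x,y}\ge n\epsilon+\theta\,\bigr\},
\]
with $\theta$ exactly as in \eqref{eq:theta} (with $b=w_s$).

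Next I would evaluate the two terms inside $\Ebb[\inf_x\{\cdot\}]$ in \eqref{eq:DPIimplieskostina}. For the first, under $Y\sim P_{Y|X=x}$ the distance $d_{x,Y}$ is $\mathrm{Binomial}(n,\epsilon)$ irrespective of $x$, so $\Pbb[j_S(S,\dbf)-i_{X;\Ybar}(x;Y)\ge\gamma\mid S=s]=1-\sum_{a=0}^{n\epsilon+\theta-1}\Comb{n}{a}\epsilon^a(1-\epsilon)^{n-a}$. For the second, $\sum_{y}P_{\Ybar}(y)\exp(j_S(s,\dbf)-\gamma)\Ibb\{d_{x,y}<n\epsilon+\theta\}=2^{\,j_S(s,\dbf)-\gamma-n}\bigl|\{y:d_{x,y}<n\epsilon+\theta\}\bigr|=2^{\,j_S(s,\dbf)-\gamma-n}\sum_{a=0}^{n\epsilon+\theta-1}\Comb{n}{a}$, since a Hamming ball of a given radius about any $x\in\{0,1\}^n$ has a cardinality independent of $x$. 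Because neither term depends on $x$, the inner infimum is vacuous. A short identity rewrites the prefactor: expanding $(n\epsilon+\theta)\log_2\epsilon+(n-n\epsilon-\theta)\log_2(1-\epsilon)$, using $n\epsilon\log_2\epsilon+n(1-\epsilon)\log_2(1-\epsilon)=nC-n$ and the definition of $\theta$, gives $2^{\,j_S(s,\dbf)-\gamma-n}=\epsilon^{\,n\epsilon+\theta}(1-\epsilon)^{\,n-n\epsilon-\theta}$, matching the middle summand in \eqref{eq:BUSBSC1}.

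Finally I would take the outer expectation over $S\sim P_S$: every bracketed quantity depends on $s$ only through the weight $w_s$, and $w_S$ is $\mathrm{Binomial}(k,p)$, so $\Ebb[\cdot]=\sum_{b=0}^{k}\Comb{k}{b}p^{b}(1-p)^{k-b}[\cdots]$ with $\theta$ now read as the function of $b$ in \eqref{eq:theta}. Subtracting $\exp(-\gamma)=2^{-\gamma}$ and taking the supremum over $\gamma$ reproduces \eqref{eq:BUSBSC1}, while the chain $\Ebb[\Ibb\{\cdots\}]\ge\OPT(\SC)\ge\OPT(\DP)$ is inherited verbatim from Corollary~\ref{thm:DPimplieskostina}. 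The \emph{only} non-mechanical point is confirming that the inner $\inf_x$ collapses; this hinges on the symmetry of the BSC together with the uniform choice of $P_{\Ybar}$, which makes both bracketed terms depend on $x$ solely through Hamming-distance statistics and Hamming-ball sizes. The remaining work is bookkeeping relating $j_S(s,\dbf)$, $nC$, $kR_S(\dbf)$ and $\theta$, together with care about the integer-part conventions in the binomial summation limits.
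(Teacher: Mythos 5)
Your proposal is correct and follows essentially the same route as the paper's proof, which likewise sets $P_{\Ybar}(y)\equiv 2^{-n}$, substitutes $j_S(s,\dbf)$ from \eqref{eq:BMStilted} and $i_{X;\Ybar}(x;y)= nC-(d_{x,y}-n\epsilon)\log_2\frac{1-\epsilon}{\epsilon}$ into \eqref{eq:DPIimplieskostina}, and leaves the rest as "a simple calculation." You have merely made that calculation explicit (binomial distribution of $d_{x,Y}$, Hamming-ball counting, the identity $2^{\,j_S(s,\dbf)-\gamma-n}=\epsilon^{\,n\epsilon+\theta}(1-\epsilon)^{\,n-n\epsilon-\theta}$, collapse of $\inf_x$ by symmetry, and averaging over $w_S\sim{\rm Binomial}(k,p)$), all of which checks out.
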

\begin{proof}
To get to this converse, set $P_{\Ybar}(y)\equiv \frac{1}{2^n}$ and  substitute $j_S(s,\dbf)$ from \eqref{eq:BMStilted} and \begin{align*}
i_{X;\Ybar}(x;y)&\equiv n(\log_2 2-H(\epsilon))-(d_{x,y}-n\epsilon)\log_2 \biggl( \frac{1-\epsilon}{\epsilon}\biggr),
\end{align*} in the bound in \eqref{eq:DPIimplieskostina}. Recall that $d_{x,y}$ represents the Hamming distance between $x$ and $y$. A simple calculation then results in the required bound.
\end{proof}
In particular, if $p=0.5$, a few easy calculations reveal that \eqref{eq:BUSBSC1} results in, \begin{align}
&\Ebb\biggl[ \Ibb \biggl\lbrace \frac{1}{k}\sum_{i=1}^k \Ibb\{s_i \neq \shat_i\}>\dbf \biggr \rbrace\biggr]\geq \OPT({\SC})\geq \OPT(\rm DP)\non\\& \geq \sup_{\gamma} \biggl \lbrace 1-\sum_{a=0}^{r}\Comb{n}{a}\epsilon^a(1-\epsilon)^{n-a}+\epsilon^{r+1}(1-\epsilon)^{n-r-1}\times \non\\&\qquad \qquad \biggl[\sum_{a=0}^r \Comb{n}{a}-2^{n-k+kH(\dbf)}\biggr]\biggr \rbrace,\label{eq:BUSBSCimprovedkv}
\end{align}where $r= n\epsilon+\theta-1$, $\theta$ is as defined in \eqref{eq:theta}.
\begin{figure}
\begin{center}
\includegraphics[scale=0.62,clip=true, trim = 1.1in 3.85in 0.9in 3.75in]{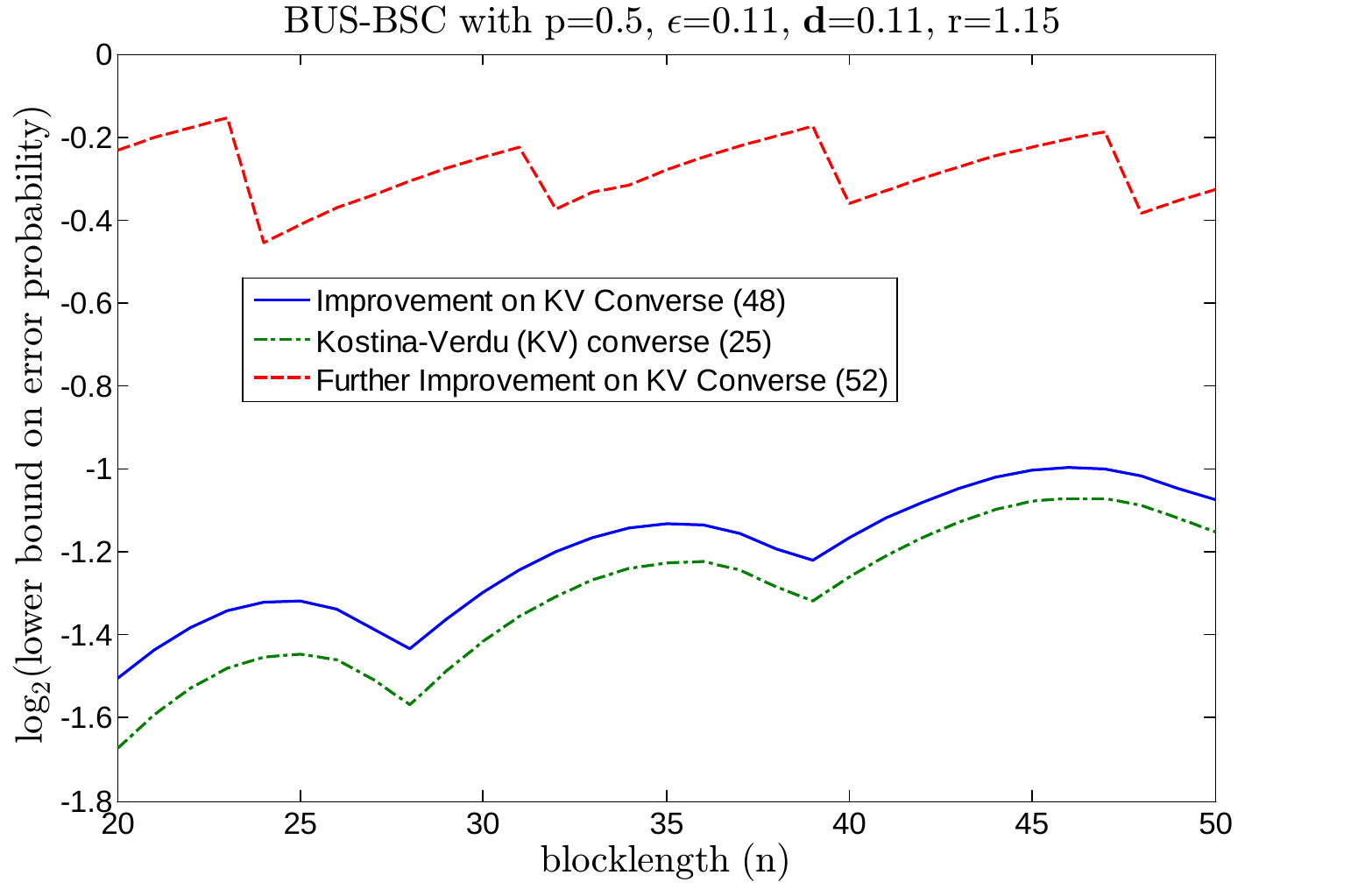} 
\caption{Comparison of the converses with blocklengths for transmitting a BUS ($p=0.5$) over a ${\rm BSC}(\epsilon=0.11)$,  $\dbf=0.11$ and rate, $r>\frac{\mathcal{C}_{BSC}}{R_S(\dbf)}$.
}\label{Fig:JSCCoutside}
\end{center} 
\vspace{-.8cm}
\end{figure}
\begin{figure}
\begin{center}
\includegraphics[scale=0.62,clip=true, trim = 1.35in 3.9in 0.5in 3.65in]{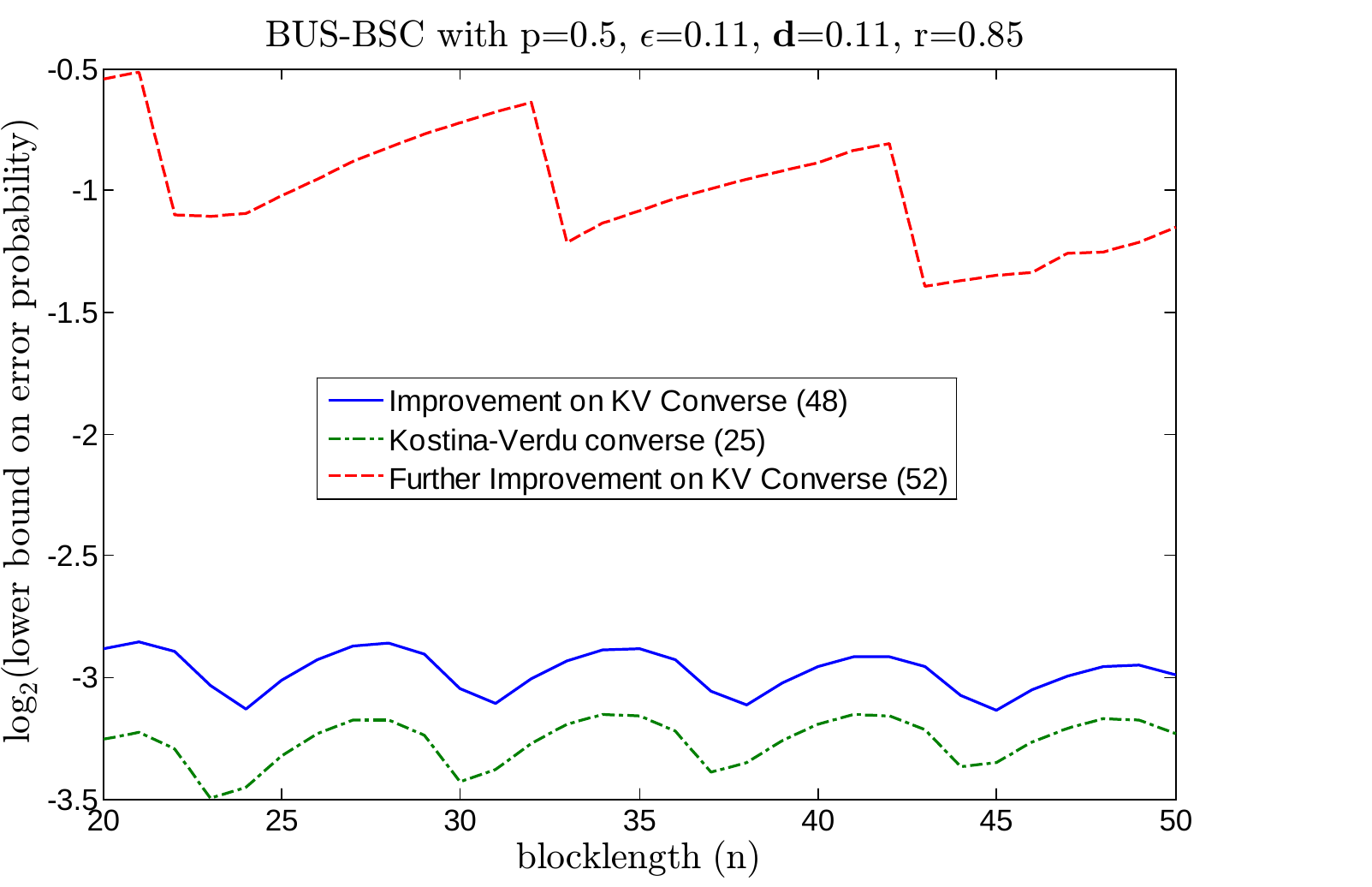} 
\caption{Comparison of the converses with blocklengths for transmitting a BUS ($p=0.5$) over a ${\rm BSC}(\epsilon=0.11)$,  $\dbf=0.11$ and rate, $r<\frac{\mathcal{C}_{BSC}}{R_S(\dbf)}$.
}\label{Fig:JSCCinside}
\end{center} 
\vspace{-.8cm}
\end{figure}
\begin{figure}
\begin{center}
\includegraphics[scale=0.62,clip=true, trim = 0.8in 4in 0.8in 3.6in]{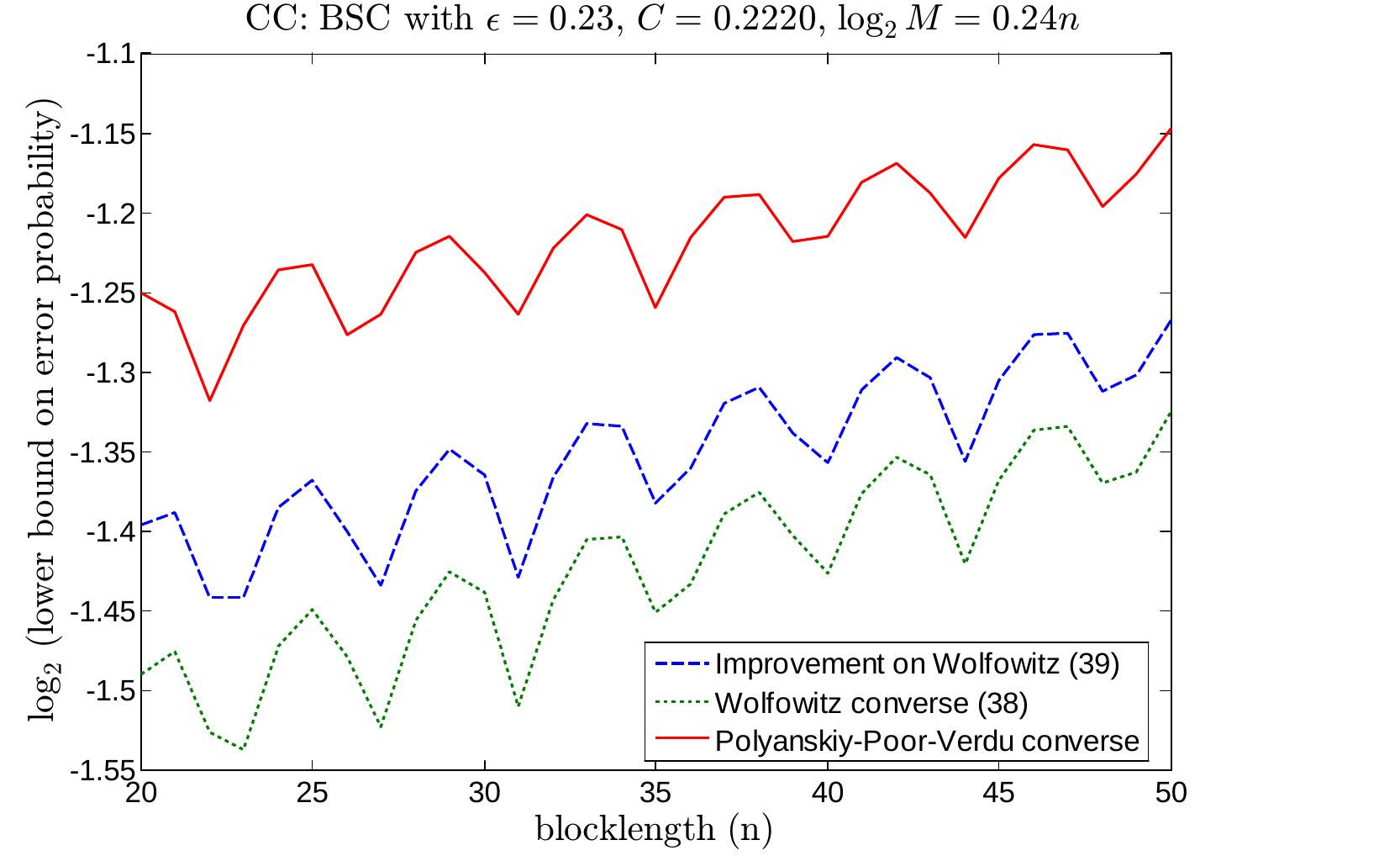} 
\caption{ Comparison of the converses for a ${\rm BSC}(\epsilon=0.23)$ with $M=2^{nR}$, where $R=0.24$ and capacity $C=0.2220$. 
}\label{Fig:channel}
\end{center} 
\vspace{-.8cm}
\end{figure}
\begin{figure}
\begin{center}
\includegraphics[scale=0.62,clip=true, trim = 1.35in 4.1in 1.0in 3.3in]{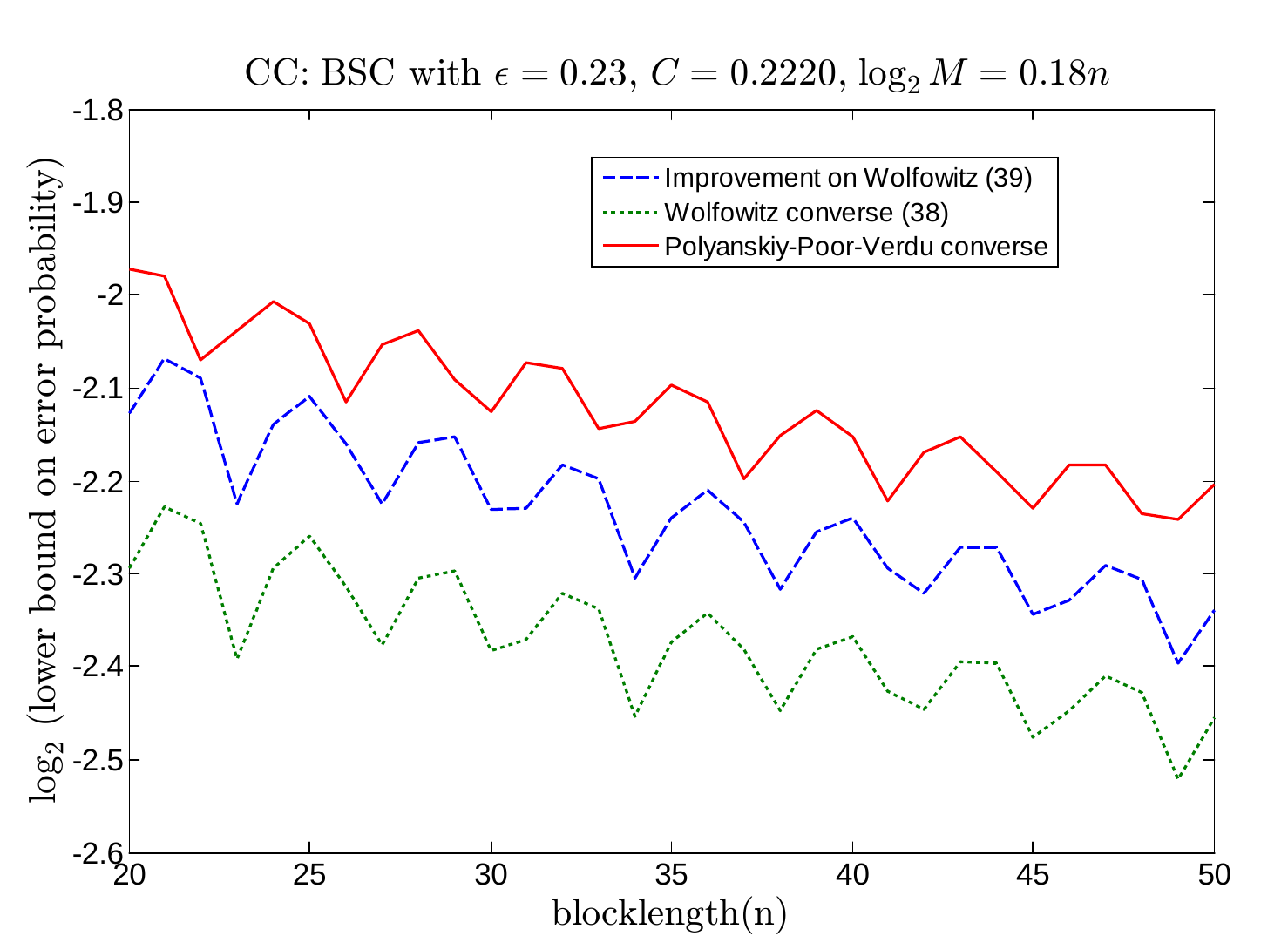} 
\caption{Comparison of the converses for a ${\rm BSC}(\epsilon=0.23)$ with $M=2^{nR}$, where $R=0.0.18$ and capacity $C=0.2220$.
}\label{Fig:channelinside}
\end{center} 
\vspace{-.8cm}
\end{figure}

We now particularize the bound in \eqref{eq:DPIgivestighterbounds} with $T=1$ to a binary memoryless source over a BSC.
\begin{corollary}\textit{(Joint Source-Channel Coding Converse of \eqref{eq:DPIgivestighterbounds} for BMS-BSC)}\label{cor:DPtightest}
Consider  the setting of Corollary~\ref{cor:BMSBSC1}. For any code, the following lower bound follows from \eqref{eq:DPIgivestighterbounds},
\begin{align}
&\Ebb\biggl[ \Ibb \biggl\lbrace \frac{1}{k}\sum_{i=1}^k \Ibb\{s_i \neq \shat_i\}>\dbf \biggr \rbrace\biggr]\geq \OPT({\SC})\geq \OPT(\rm DP)\non\\&\hspace{-0.1cm}\geq \sup_{\gamma} \biggl \lbrace \sum_{b=0}^k \Comb{k}{b}p^b(1-p)^{k-b}\biggl[1-\hspace{-0.2cm}\sum_{a=0}^{ n\epsilon+\theta -1}\hspace{-0.3cm}\Comb{n}{a}\epsilon^a(1-\epsilon)^{n-a}  \non\\&+\epsilon^{n\epsilon+n\theta}(1-\epsilon)^{n-n\epsilon-\theta}\sum_{a=0}^{n\epsilon+\theta -1}\Comb{n}{a}\biggr]-2^{-kH(\dbf)-\gamma}\sum_{a=0}^{k\dbf}\Comb{k}{a}\biggr\rbrace,\label{eq:BMSBSC2}
\end{align} where $\theta$ is as defined in Corollary~\ref{cor:BMSBSC1}.
\end{corollary}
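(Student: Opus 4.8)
The plan is to obtain \eqref{eq:BMSBSC2} by specializing the general lower bound of Theorem~\ref{thm:DPimpliestighterbounds} (inequality \eqref{eq:DPIgivestighterbounds}) to the BMS--BSC setting, in exactly the way Corollary~\ref{cor:BMSBSC1} specializes \eqref{eq:DPIimplieskostina}; the only genuinely new ingredient is the handling of the refined correction term $-T\sup_{\shat}\Ebb[\exp(j_S(S,\dbf)-\gamma)\Ibb\{d(S,\shat)\leq\dbf\}]$. First I would set $T=1$ (so the auxiliary variable $V$ is trivial and $P_{V|X}(1|x)\equiv 1$) and take $P_{\Ybar}$ to be uniform on $\{0,1\}^n$, so that the information density becomes $i_{X;\Ybar}(x;y)=n(1-H(\epsilon))-(d_{x,y}-n\epsilon)\log_2\frac{1-\epsilon}{\epsilon}=nC-(d_{x,y}-n\epsilon)\log_2\frac{1-\epsilon}{\epsilon}$. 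Substituting this together with $j_S(s,\dbf)=kR_S(\dbf)+(w_s-kp)\log_2\frac{1-p}{p}$ from \eqref{eq:BMStilted}, the event $\{\,j_S(s,\dbf)-i_{X;\Ybar}(x;y)\geq\gamma\,\}$ reduces to the Hamming-distance threshold $\{\,d_{x,y}\geq n\epsilon+\theta\,\}$, with $\theta$ as in \eqref{eq:theta} (and $b=w_s$).

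Since under $P_{Y|X=x}$ the distance $d_{x,Y}$ is $\mathrm{Binomial}(n,\epsilon)$ for every $x$, and $P_{\Ybar}$ is uniform (so that counting $y$ at a fixed distance from $x$ does not depend on $x$), both terms inside $\inf_x$ in \eqref{eq:DPIgivestighterbounds} are independent of $x$ and the infimum is vacuous. The probability term then contributes $1-\sum_{a=0}^{n\epsilon+\theta-1}\binom{n}{a}\epsilon^a(1-\epsilon)^{n-a}$, while the correction term $\exp(j_S(s,\dbf)-\gamma)\sum_y P_{\Ybar}(y)\Ibb\{i_{X;\Ybar}(x;y)>j_S(s,\dbf)-\gamma\}$ contributes $2^{j_S(s,\dbf)-\gamma}\,2^{-n}\sum_{a=0}^{n\epsilon+\theta-1}\binom{n}{a}$. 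I would then observe that, by the very definition of $\theta$, the factor $2^{j_S(s,\dbf)-\gamma}\,2^{-n}$ equals $P_{Y|X}(y|x)$ evaluated at $d_{x,y}=n\epsilon+\theta$, i.e. $\epsilon^{n\epsilon+\theta}(1-\epsilon)^{n-n\epsilon-\theta}$, which is precisely the last term inside the bracket of \eqref{eq:BMSBSC2}. Averaging over $S$, whose weight $w_S$ is $\mathrm{Binomial}(k,p)$, produces the outer sum $\sum_{b=0}^{k}\binom{k}{b}p^b(1-p)^{k-b}$.

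It then remains to evaluate $-\sup_{\shat}\Ebb[\exp(j_S(S,\dbf)-\gamma)\Ibb\{d(S,\shat)\leq\dbf\}]$. The key computation is the identity $P_S(s)\,2^{j_S(s,\dbf)}=2^{-kH(\dbf)}$, valid for \emph{every} $s\in\Sscr$: writing $P_S(s)=p^{w_s}(1-p)^{k-w_s}$ and $2^{j_S(s,\dbf)}=2^{kH(p)-kH(\dbf)}(\tfrac{1-p}{p})^{w_s-kp}$ and using $2^{kH(p)}=p^{-kp}(1-p)^{-k(1-p)}$, all powers of $p$ and of $1-p$ cancel. Consequently $\Ebb[\exp(j_S(S,\dbf)-\gamma)\Ibb\{d(S,\shat)\leq\dbf\}]=2^{-kH(\dbf)-\gamma}\,|\{s:d_{s,\shat}\leq k\dbf\}|=2^{-kH(\dbf)-\gamma}\sum_{a=0}^{k\dbf}\binom{k}{a}$, which is independent of $\shat$, so the supremum over $\shat$ is trivial and reproduces the last term of \eqref{eq:BMSBSC2}. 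Taking the supremum over $\gamma$ and invoking Lemma~\ref{lem:dualfeasiblebound} through Theorem~\ref{thm:DPimpliestighterbounds} finishes the proof.

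The main obstacle is essentially bookkeeping: correctly rewriting the two indicator events as Hamming-distance thresholds with the same $\theta$, and tracking which sums run up to $n\epsilon+\theta-1$ versus up to $k\dbf$. Beyond the cancellation identity $P_S(s)\,2^{j_S(s,\dbf)}=2^{-kH(\dbf)}$ — which is exactly what renders the $\shat$-supremum harmless — there is no conceptual difficulty.
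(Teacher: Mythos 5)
Your specialization---$T=1$, uniform $P_{\Ybar}(y)\equiv 2^{-n}$, rewriting both indicator events as the Hamming-distance threshold $d_{x,y}\geq n\epsilon+\theta$ with $\theta$ from \eqref{eq:theta}, and the cancellation $P_S(s)\,2^{j_S(s,\dbf)}=2^{-kH(\dbf)}$ that renders the supremum over $\shat$ trivial---is exactly the route the paper takes (its proof is the one-line instruction to set $T=1$, $P_{\Ybar}(y)\equiv\frac{1}{2^n}$ and substitute \eqref{eq:BMStilted} and the information density from \eqref{eq:theta}), so the proposal is correct and matches the paper's argument, only spelled out in more detail. Note also that your exponent $\epsilon^{n\epsilon+\theta}$ is the intended one: the $n\theta$ appearing in the displayed bound is a typo, as confirmed by the paper's own substitution $r=n\epsilon+\theta-1$ in the subsequent BUS--BSC corollary.
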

The bound in \eqref{eq:BMSBSC2} follows from \eqref{eq:DPIgivestighterbounds} by setting $T=1$, $P_{\Ybar}(y)\equiv \frac{1}{2^n}$ and using $j_S(s,\dbf)$ from \eqref{eq:BMStilted} and $i_{X;\Ybar}(x;y)$ as defined in \eqref{eq:theta}.
In particular, when $p=0.5$, \eqref{eq:BMSBSC2} results in the following converse 
 which implies a converse obtained by Kostina-Verd\'{u} \cite[Theorem 13]{kostina2013lossy} using hypothesis testing.
\begin{corollary}\textit{(Joint Source-Channel Coding Converse of \eqref{eq:DPIgivestighterbounds} for BUS-BSC)}
Consider the setting of Corollary~\ref{cor:BMSBSC1} with $P_{S}(s)\equiv \frac{1}{\mid \Sscr \mid}\equiv \frac{1}{2^k}$.
 For any code, the following lower bound on the excess distortion probability follows from \eqref{eq:BMSBSC2},
\begin{align}
&\Ebb\biggl[ \Ibb \biggl\lbrace \frac{1}{k}\sum_{i=1}^k \Ibb\{s_i \neq \shat_i\}>\dbf \biggr \rbrace\biggr]\geq \OPT({\SC})\geq \OPT(\rm DP)\non\\& \geq \sup_{0\leq r\leq n} \biggl \lbrace 1-\sum_{a=0}^{r}\Comb{n}{a}\epsilon^a(1-\epsilon)^{n-a}+\epsilon^{r+1}(1-\epsilon)^{n-r-1}\times \non\\&\qquad \qquad \biggl[\sum_{a=0}^r \Comb{n}{a}-2^{n-k}\sum_{a=0}^{\lfloor k\dbf \rfloor}\Comb{k}{a}\biggr]\biggr \rbrace.
\label{eq:DPBUStightest} 
\end{align}
Further, if $\Ebb\biggl[ \Ibb \biggl\lbrace \frac{1}{k}\sum_{i=1}^k \Ibb\{s_i \neq \shat_i\}>\dbf \biggr \rbrace\biggr]\leq \delta$, $\delta \in (0,1)$, \begin{align*}
r^{*}=\max\biggl \lbrace r: \sum_{t=0}^r \Comb{n}{t}\epsilon^t(1-\epsilon)^{n-t}\leq 1-\delta\biggr \rbrace,
\end{align*}and $\lambda\in [0,1)$ is the solution to
\begin{align}
\sum_{t=0}^{r^{*}} \Comb{n}{t}\epsilon^t(1-\epsilon)^{n-t}+\lambda\epsilon^{r^{*}+1}(1-\epsilon)^{n-r^{*}-1}\Comb{n}{r^{*}+1}=1-\delta,\label{eq:lambdaconverse}
\end{align} then the lower bound in \eqref{eq:DPBUStightest} implies the hypothesis testing based converse of Kostina-Verd\'{u} \cite[Theorem 13]{kostina2013lossy}, namely,\begin{align}
\lambda \Comb{n}{r^{*}+1}+\sum_{k=0}^{r^{*}}\Comb{n}{j}\leq \sum_{j=0}^{\lfloor k\dbf \rfloor}\Comb{k}{j}2^{n-k}. \label{eq:BUSBSC3}
\end{align}
\end{corollary}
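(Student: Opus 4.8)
The statement bundles two claims: the converse \eqref{eq:DPBUStightest}, which I would obtain as the $p=1/2$ specialization of the bound \eqref{eq:BMSBSC2} of Corollary~\ref{cor:DPtightest}, and the fact that, under the hypothesis $\Ebb\bigl[\Ibb\{\frac{1}{k}\sum_{i=1}^k\Ibb\{s_i\neq\shat_i\}>\dbf\}\bigr]\leq\delta$, the converse \eqref{eq:DPBUStightest} implies the hypothesis-testing bound \eqref{eq:BUSBSC3}. I would prove the two in order.

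For \eqref{eq:DPBUStightest}, set $p=1/2$ in \eqref{eq:BMSBSC2}. Then $\log_2\frac{1-p}{p}=0$, so by \eqref{eq:BMStilted} the $\dbf$-tilted information $j_S(s,\dbf)\equiv k-kH(\dbf)=kR_S(\dbf)$ is constant in $s$, and the quantity $\theta$ of \eqref{eq:theta} loses its dependence on the weight index $b$. Hence the bracketed term of \eqref{eq:BMSBSC2} is constant in $b$, and since $\sum_{b=0}^k\Comb{k}{b}p^b(1-p)^{k-b}=1$ the outer convex combination collapses, leaving
\[
\sup_{\gamma}\Big\{1-\sum_{a=0}^{n\epsilon+\theta-1}\Comb{n}{a}\epsilon^a(1-\epsilon)^{n-a}+\epsilon^{n\epsilon+\theta}(1-\epsilon)^{n-n\epsilon-\theta}\sum_{a=0}^{n\epsilon+\theta-1}\Comb{n}{a}-2^{-kH(\dbf)-\gamma}\sum_{a=0}^{\lfloor k\dbf\rfloor}\Comb{k}{a}\Big\}.
\]
I would then restrict this supremum to those $\gamma$ for which $n\epsilon+\theta$ is a positive integer, writing it as $r+1$ with $r\in\{0,\dots,n\}$; since $\theta$ is affine in $\gamma$ with nonzero slope $1/\log_2\frac{1-\epsilon}{\epsilon}$ (as $\epsilon<1/2$), such a $\gamma$ exists for each such $r$, and restricting the supremum only weakens it, so a valid lower bound remains. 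For such $\gamma$ the sums run to $a=r$, the factor $\epsilon^{n\epsilon+\theta}(1-\epsilon)^{n-n\epsilon-\theta}$ becomes $\epsilon^{r+1}(1-\epsilon)^{n-r-1}$, and, using $\gamma=(r+1-n\epsilon)\log_2\frac{1-\epsilon}{\epsilon}+k(1-H(\dbf))-n(1-H(\epsilon))$ together with $\epsilon^{n\epsilon}(1-\epsilon)^{n(1-\epsilon)}=2^{-nH(\epsilon)}$, a short computation gives $2^{-kH(\dbf)-\gamma}=\epsilon^{r+1}(1-\epsilon)^{n-r-1}2^{n-k}$. Substituting these identities turns the display above into the right-hand side of \eqref{eq:DPBUStightest}; the inequality chain stated in the corollary is inherited from that of \eqref{eq:DPIgivestighterbounds}.

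For \eqref{eq:BUSBSC3}, I would take $r=r^*$ in \eqref{eq:DPBUStightest} and invoke the hypothesis, so that $\delta\geq\Ebb[\,\cdot\,]\geq\OPT(\SC)$ forces
\[
\delta\ \geq\ 1-\sum_{a=0}^{r^*}\Comb{n}{a}\epsilon^a(1-\epsilon)^{n-a}+\epsilon^{r^*+1}(1-\epsilon)^{n-r^*-1}\Big[\sum_{a=0}^{r^*}\Comb{n}{a}-2^{n-k}\sum_{a=0}^{\lfloor k\dbf\rfloor}\Comb{k}{a}\Big].
\]
Since $\epsilon^{r^*+1}(1-\epsilon)^{n-r^*-1}>0$, rearranging to isolate the binomial term gives
\[
2^{n-k}\sum_{a=0}^{\lfloor k\dbf\rfloor}\Comb{k}{a}\ \geq\ \sum_{a=0}^{r^*}\Comb{n}{a}+\frac{1-\delta-\sum_{a=0}^{r^*}\Comb{n}{a}\epsilon^a(1-\epsilon)^{n-a}}{\epsilon^{r^*+1}(1-\epsilon)^{n-r^*-1}}.
\]
By the maximality defining $r^*$ the numerator of the fraction is nonnegative, and by \eqref{eq:lambdaconverse} it equals $\lambda\,\epsilon^{r^*+1}(1-\epsilon)^{n-r^*-1}\Comb{n}{r^*+1}$, so the fraction equals $\lambda\Comb{n}{r^*+1}$. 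Hence $2^{n-k}\sum_{a=0}^{\lfloor k\dbf\rfloor}\Comb{k}{a}\geq\lambda\Comb{n}{r^*+1}+\sum_{a=0}^{r^*}\Comb{n}{a}$, which is \eqref{eq:BUSBSC3}.

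The only genuinely delicate step is the reparametrization in the first part — one must justify that replacing the continuous dual parameter $\gamma$ by the discrete threshold $r$ still yields a valid lower bound (and, in the regime of interest, loses nothing essential), and then carry through the elementary but slightly fussy exponent bookkeeping that matches $2^{-kH(\dbf)-\gamma}$ with $\epsilon^{r+1}(1-\epsilon)^{n-r-1}2^{n-k}$ and the summation limits with $\{0,\dots,r\}$. Once the closed-form bound \eqref{eq:DPBUStightest} is in hand, the rest — the entire second part — is routine algebra.
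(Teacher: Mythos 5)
Your proposal is correct and follows essentially the same route as the paper: specialize \eqref{eq:BMSBSC2} to $p=1/2$ (collapsing the sum over $b$), reparametrize the dual parameter via $r=n\epsilon+\theta-1$ with $\gamma$ restricted so that $r\in\{0,\dots,n\}$, identify $2^{-kH(\dbf)-\gamma}$ with $\epsilon^{r+1}(1-\epsilon)^{n-r-1}2^{n-k}$ to obtain \eqref{eq:DPBUStightest}, and then fix $r=r^{*}$ and substitute \eqref{eq:lambdaconverse} to deduce \eqref{eq:BUSBSC3}. You supply the exponent bookkeeping and the justification that restricting the supremum preserves the lower bound, which the paper leaves as a ``simple exercise,'' but the argument is the same.
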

\begin{proof}
In the RHS of \eqref{eq:BMSBSC2}, put $p=0.5$ and rewrite $2^{-kH(\dbf)-\gamma}$ in terms of $\theta$ using \eqref{eq:theta}. A simple exercise then results in the following bound,
\begin{align*}
&\sup_{\gamma} \biggl \lbrace 1-\sum_{a=0}^{n\epsilon+\theta-1}\Comb{n}{a}\epsilon^a(1-\epsilon)^{n-a}+\epsilon^{n\epsilon+\theta}(1-\epsilon)^{n-n\epsilon-\theta}\times \non\\&\qquad \qquad \biggl[\sum_{a=0}^{n\epsilon+\theta-1} \Comb{n}{a}-2^{n-k}\sum_{a=0}^{\lfloor k\dbf \rfloor}\Comb{k}{a}\biggr]\biggr \rbrace, 
\end{align*} where $\theta$ is as defined in Corollary~\ref{cor:BMSBSC1}. Further substitute $r= n\epsilon+\theta-1$, and let $\gamma$ lie in the range, $\log_2 \biggl(\frac{1-\epsilon}{\epsilon}\biggr)+kR_S(\dbf)-n(1+\log_2(1-\epsilon))\leq\gamma\leq (n+1)\log_2 \biggl(\frac{1-\epsilon}{\epsilon}\biggr)+kR_S(\dbf)-n(1+\log_2(1-\epsilon))$. Consequently, we get the bound in \eqref{eq:DPBUStightest} with the supremum over $r \in [0,n]$.
To see that \eqref{eq:BUSBSC3} follows from \eqref{eq:DPBUStightest}, fix $r=r^{*}$ in \eqref{eq:DPBUStightest} and substitute from   \eqref{eq:lambdaconverse}.
\end{proof}
   
It can be seen from Fig~\ref{Fig:JSCCoutside} and Fig~\ref{Fig:JSCCinside} that the converse in \eqref{eq:BUSBSC3} outperforms the Kostina-Verd\'{u} converse \eqref{eq:kostinaVerdu2} and even our converse of \eqref{eq:BUSBSCimprovedkv} for small blocklengths. The difference, however, diminishes as blocklength increases. Also, notice that improvement of \eqref{eq:BUSBSCimprovedkv} over \eqref{eq:kostinaVerdu2} is significant for rates, $r <\frac{C}{R_S(\dbf)}$.

\subsection{Channel coding for a BSC}
We now come to channel coding for a BSC, numerically illustrated in Fig~\ref{Fig:channel} and Fig~\ref{Fig:channelinside}. 
Fig~\ref{Fig:channel} shows that for rates greater than capacity, our improved converse for channel coding in \eqref{eq:channelcoding} outperforms Wolfowtiz's converse \eqref{eq:wolfowitz}. For rates less than the capacity of the channel, this difference is significant for small blocklengths as can be seen in Fig~\ref{Fig:channelinside}. However, as blocklength increases, this appears to diminish. The hypothesis testing based converse of Polyanksiy, Poor and Verd\'{u} \cite{polyanskiy2010channel} outperforms the converses \eqref{eq:channelcoding} and \eqref{eq:wolfowitz}.
\subsection{Lossy Source Coding of a BMS}
In this section, we particularize the bound in Corollary~\ref{thm:lossysourcecoding} for a BMS with $\Sscr=\Sscrhat=\{0,1\}^k$  and  $P_S$ as defined in \eqref{eq:BMS} with average bit-wise Hamming distance as the distortion measure. The tilted information is as given in \eqref{eq:BMStilted}. The following lower bound follows from Corollary~\ref{thm:lossysourcecoding}.
\begin{corollary}[Source Coding Converse \eqref{eq:lossysourcecoding} for BMS]\label{cor:BMSlossy}
Consider problem SC with $\Sscr=\Sscrhat=\{0,1\}^k$, $\Xscr=\Yscr=\{1,\hdots,M\}$, $P_{S}(s)$ is as given in \eqref{eq:BMS} with bias $p$ and $P_{Y|X}(y|x)\equiv \Ibb\{x=y\}$.  Let $$\kappa(s,x,y,\shat)\equiv \Ibb \biggl\lbrace \frac{1}{k}\sum_{i=1}^k \Ibb\{s_i \neq \shat_i\}>\dbf \biggr \rbrace,$$ where $0\leq \dbf <p$. Then, for any code, the following lower bound follows from Corollary~\ref{thm:lossysourcecoding},
\begin{align}
&\Ebb\biggl[ \Ibb \biggl\lbrace \frac{1}{k}\sum_{i=1}^k \Ibb\{s_i \neq \shat_i\}>\dbf \biggr \rbrace\biggr]\geq \OPT({\SC})\geq 
\OPT(\rm DP)\non\\& \geq \sup_{\gamma}\biggl \lbrace 1-\sum_{j=0}^t\Comb{k}{j}p^j(1-p)^{k-j}+p^{t+1}
(1-p)^{k-t-1}\times\non\\&\qquad \qquad \biggl[\sum_{j=0}^t\Comb{k}{j}-M 2^{kH(\dbf)}\biggr]\biggr\rbrace,\label{eq:BMSimprovedKV}
\end{align}
where $t= kp-1+\frac{\log_2 M-k R_S(\dbf)+\gamma}{\log_2 \bigl(\frac{1-p}{p}\bigr)}$.
\end{corollary}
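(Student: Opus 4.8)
The plan is to instantiate the general lossy source-coding converse of Corollary~\ref{thm:lossysourcecoding}, equation~\eqref{eq:lossysourcecoding}, to the specific BMS setting and then carry out an explicit computation. First I would set $P_{\Ybar}(y)\equiv\frac{1}{M}$ as indicated just below \eqref{eq:lossysourcecoding} (this is the channel-coding-like term since $\Xscr=\Yscr=\{1,\dots,M\}$), so that the lower bound takes the form
\[
\sup_{\gamma}\Bigl\lbrace -2^{-\gamma}+\Pbb[j_S(S,\dbf)\geq\gamma+\log M]+\tfrac1M\sum_s P_S(s)\exp\bigl(j_S(s,\dbf)-\gamma\bigr)\Ibb\{j_S(s,\dbf)<\log M+\gamma\}\Bigr\rbrace.
\]
Here all logarithms/exponentials are to base $2$ (consistent with $H$ and $R_S(\dbf)$ being in bits), and $\exp(-\gamma)=2^{-\gamma}$.

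Next I would substitute the closed form of the $\dbf$-tilted information for the BMS, namely \eqref{eq:BMStilted}, $j_S(s,\dbf)= kH(p)-kH(\dbf)+(w_s-kp)\log_2\!\bigl(\tfrac{1-p}{p}\bigr)$, which depends on $s$ only through its Hamming weight $w_s$. The event $\{j_S(s,\dbf)\geq\gamma+\log_2 M\}$ and its complement therefore reduce to a threshold on $w_s$: solving $kH(p)-kH(\dbf)+(w_s-kp)\log_2\!\bigl(\tfrac{1-p}{p}\bigr)\geq \gamma+\log_2 M$ for $w_s$ gives $w_s\geq t+1$ where (using $R_S(\dbf)=H(p)-H(\dbf)$)
\[
t= kp-1+\frac{\log_2 M-kR_S(\dbf)+\gamma}{\log_2\bigl(\tfrac{1-p}{p}\bigr)},
\]
exactly as stated in the corollary. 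Since $w_S$ is $\mathrm{Binomial}(k,p)$ under $P_S$, the term $\Pbb[j_S(S,\dbf)\geq\gamma+\log_2 M] = \sum_{j=t+1}^{k}\Comb{k}{j}p^j(1-p)^{k-j} = 1-\sum_{j=0}^{t}\Comb{k}{j}p^j(1-p)^{k-j}$. For the third (correction) term I would use $\exp(j_S(s,\dbf))=2^{kH(p)-kH(\dbf)}\bigl(\tfrac{1-p}{p}\bigr)^{w_s-kp}$ so that $P_S(s)\exp(j_S(s,\dbf))= p^{w_s}(1-p)^{k-w_s}\cdot 2^{kH(p)-kH(\dbf)}\bigl(\tfrac{1-p}{p}\bigr)^{w_s-kp}$; a short algebraic simplification — grouping the powers of $p$ and $1-p$ and using $2^{-kH(p)}=p^{kp}(1-p)^{k(1-p)}$ — collapses this to $p^{w_s-kp+kp}(1-p)^{k-w_s}\cdot\text{(something independent of }w_s\text{)}$, and summing $\Comb{k}{j}$ over $j=0,\dots,t$ (the indicator restricts to $w_s\le t$) yields the bracketed expression $\bigl[\sum_{j=0}^t\Comb{k}{j}-M2^{kH(\dbf)}\bigr]$ multiplied by $p^{t+1}(1-p)^{k-t-1}2^{-\gamma}$ after absorbing the $\tfrac1M$ and the remaining $\gamma$-dependence; rewriting $2^{-\gamma}$ via the definition of $t$ converts the prefactor into $p^{t+1}(1-p)^{k-t-1}$, matching \eqref{eq:BMSimprovedKV}.

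The main obstacle is purely bookkeeping: keeping the base-$2$ exponentials, the $\Comb{k}{j}$ binomial sums, and the reindexing $w_s\leftrightarrow t$ all consistent so that the $\tfrac1M$ factor, the $2^{kH(\dbf)}$ factor, and the $\bigl(\tfrac{1-p}{p}\bigr)$ powers combine correctly; there is no conceptual difficulty since Corollary~\ref{thm:lossysourcecoding} already supplies a valid lower bound and we are merely evaluating it. Finally I would note that the supremum over $\gamma\in\Real$ translates, via the monotone change of variable $\gamma\mapsto t$, into a supremum over the stated range of $t$ (or equivalently over $r$ as in the closely related Corollary in the BUS case), completing the proof.
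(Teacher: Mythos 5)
Your proposal is correct and is essentially the paper's own (omitted) argument: the paper merely states that the corollary ``follows by a simple calculation employing \eqref{eq:BMStilted}'', and your plan --- substitute the BMS tilted information into \eqref{eq:lossysourcecoding}, convert the threshold event into the Hamming-weight threshold $t$, use that $w_S$ is ${\rm Binomial}(k,p)$ under $P_S$, and simplify the remaining terms via the definition of $t$ --- is exactly that calculation. The only blemish is a bookkeeping slip in your intermediate expression: $P_S(s)2^{j_S(s,\dbf)}$ collapses to $2^{-kH(\dbf)}$, completely independent of $w_s$ (not to $p^{w_s}(1-p)^{k-w_s}$ times a constant), which is precisely why the unweighted sum $\sum_{j=0}^{t}\Comb{k}{j}$ and the prefactor $p^{t+1}(1-p)^{k-t-1}=2^{-\gamma-kH(\dbf)}/M$ emerge; your subsequent steps implicitly use the correct identity, so the final bound \eqref{eq:BMSimprovedKV} is indeed reached.
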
 The proof follows by a simple calculation employing \eqref{eq:BMStilted}. We skip the proof here.

As in the case of lossy joint source-channel coding, we now show that the tighter converse in Corollary~\ref{thm:lossysourcecodingimproved} also implies a converse of Kostina and Verd\'{u} for a BMS that was obtained by them by employing hypothesis testing \cite[Theorem 20]{kostina2012fixed}. 
\begin{corollary}[Source Coding Converse \eqref{eq:lossysourcecodingimproved} for BMS]
Consider the problem setup as in Corollary~\ref{cor:BMSlossy}. Consequently, the following lower bound follows from the converse in Corollary~\ref{thm:lossysourcecodingimproved},
\begin{align}
&\Ebb\biggl[ \Ibb \biggl\lbrace \frac{1}{k}\sum_{i=1}^k \Ibb\{s_i \neq \shat_i\}>\dbf \biggr \rbrace\biggr]\geq \OPT({\SC})\geq \OPT(\rm DP)\non\\& \geq \sup_{0 \leq t \leq n}\biggl \lbrace 1-\sum_{j=0}^t\Comb{k}{j}p^j(1-p)^{k-j}+p^{t+1}(1-p)^{k-t-1}\times\non\\&\qquad \qquad \biggl[\sum_{j=0}^t\Comb{k}{j}-M \sum_{j=0}^{\lfloor k\dbf \rfloor}\Comb{k}{j}\biggr]\biggr\rbrace. \label{eq:BMShypoth}
\end{align} Further, if $\Ebb\biggl[ \Ibb \biggl\lbrace \frac{1}{k}\sum_{i=1}^k \Ibb\{s_i \neq \shat_i\}>\dbf \biggr \rbrace\biggr]\leq \delta$ ,
 \begin{align*}
t^{*}=\max\biggl \lbrace t: \sum_{j=0}^t \Comb{k}{j}p^j(1-p)^{k-j}\leq 1-\delta\biggr \rbrace,
\end{align*}and $\lambda\in [0,1)$ is the solution to
$$
\sum_{j=0}^{t^{*}} \Comb{k}{j}p^j(1-p)^{k-j}+\lambda p^{t^{*}+1}(1-p)^{k-t^{*}-1}\Comb{k}{t^{*}+1}=1-\delta,$$
then the lower bound in \eqref{eq:BMShypoth} results in the converse of Kostina and Verd\'{u} \cite[Theorem 20]{kostina2012fixed},
\begin{align}
M \geq \frac{\sum_{j=0}^{t^{*}}\Comb{k}{j}+\lambda \Comb{k}{t^{*}+1}}{\sum_{j=0}^{\lfloor k\dbf \rfloor}\Comb{k}{j}}.\label{eq:BMShypothorig}
\end{align}
\end{corollary}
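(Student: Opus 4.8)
The statement to prove is the final Corollary: starting from the lower bound \eqref{eq:lossysourcecodingimproved} specialized to the BMS, I first derive the intermediate bound \eqref{eq:BMShypoth}, and then I show that this bound, under the hypothesis $\Ebb[\ldots]\le\delta$, forces the Kostina--Verd\'u inequality \eqref{eq:BMShypothorig}. The first half is a specialization computation: into Corollary~\ref{thm:lossysourcecodingimproved} I substitute $P_{Y|X}(y|x)\equiv\I{x=y}$ (already assumed), $P_S$ from \eqref{eq:BMS}, the distortion $d(s,\shat)=\frac1k\sum_i\I{s_i\ne\shat_i}$, and the $\dbf$-tilted information from \eqref{eq:BMStilted}, namely $j_S(s,\dbf)=kH(p)-kH(\dbf)+(w_s-kp)\log_2\frac{1-p}{p}$. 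With $P_{\Ybar}\equiv\frac1M$, the event $\{j_S(s,\dbf)\ge\gamma+\log M\}$ becomes an event on the Hamming weight $w_s$, i.e. $\{w_s\le t\}$ (or $\ge$, depending on sign) for the threshold $t=kp-1+\frac{\log_2 M-kR_S(\dbf)+\gamma}{\log_2((1-p)/p)}$ as in Corollary~\ref{cor:BMSlossy}. Summing over $s$ grouped by weight turns $\Pbb[\ldots]$ and the $\exp(j_S)$-weighted sums into binomial sums $\sum_{j=0}^t\Comb kj p^j(1-p)^{k-j}$ and $p^{t+1}(1-p)^{k-t-1}\sum_{j=0}^t\Comb kj$; the term $\sup_{\shat}\frac{\exp(-\gamma)}{M}\sum_s P_S(s)\exp(j_S(s,\dbf))\I{d(s,\shat)\le\dbf}$ becomes $\sup_{\shat}\frac{1}{M}2^{-\gamma+kH(p)-kH(\dbf)}\sum_{s:d(s,\shat)\le\dbf}p^{w_s-kp}(1-p)^{\ldots}$, which, since $\{d(s,\shat)\le\dbf\}$ is a Hamming ball of radius $\lfloor k\dbf\rfloor$ around $\shat$ and $p^{w_s}(1-p)^{k-w_s}$ is maximized termwise at the center, evaluates to $M^{-1}2^{-\gamma}\sum_{j=0}^{\lfloor k\dbf\rfloor}\Comb kj\cdot(\text{a prefactor})$ which recombines with $p^{t+1}(1-p)^{k-t-1}$ after reparametrizing $\gamma$ in terms of $t$. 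Reparametrizing $\gamma\mapsto t$ over the stated range yields the clean form \eqref{eq:BMShypoth} with $\sup$ over $0\le t\le n$.

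**From \eqref{eq:BMShypoth} to \eqref{eq:BMShypothorig}.** For the second half, denote $F(t)=\sum_{j=0}^t\Comb kj p^j(1-p)^{k-j}$, the binomial CDF, and let $\delta$ be the excess-distortion probability bound. By definition $t^*=\max\{t:F(t)\le 1-\delta\}$, and $\lambda\in[0,1)$ is the unique solution of $F(t^*)+\lambda\, p^{t^*+1}(1-p)^{k-t^*-1}\Comb k{t^*+1}=1-\delta$ — this is a linear interpolation so it exists and is unique because the added term is positive. Now \eqref{eq:BMShypoth} is a valid lower bound on $\delta$ for every $t$; in particular evaluate the bracketed expression at $t=t^*$:
\[
\delta\ \ge\ 1-F(t^*)+p^{t^*+1}(1-p)^{k-t^*-1}\!\left[\sum_{j=0}^{t^*}\Comb kj-M\sum_{j=0}^{\lfloor k\dbf\rfloor}\Comb kj\right].
\]
Using $1-F(t^*)=\lambda\,p^{t^*+1}(1-p)^{k-t^*-1}\Comb k{t^*+1}+\delta$ from \eqref{eq:lambdaconverse} (after moving $\delta$), substitute and cancel $\delta$ from both sides:
\[
0\ \ge\ p^{t^*+1}(1-p)^{k-t^*-1}\!\left[\lambda\Comb k{t^*+1}+\sum_{j=0}^{t^*}\Comb kj-M\sum_{j=0}^{\lfloor k\dbf\rfloor}\Comb kj\right].
\]
Since $p\in(0,1)$ the prefactor $p^{t^*+1}(1-p)^{k-t^*-1}$ is strictly positive (note $t^*<k$ because $F(k)=1>1-\delta$), so dividing through gives $\lambda\Comb k{t^*+1}+\sum_{j=0}^{t^*}\Comb kj\le M\sum_{j=0}^{\lfloor k\dbf\rfloor}\Comb kj$, which rearranges to exactly \eqref{eq:BMShypothorig}.

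**Main obstacle.** The conceptual content is light; the real work is bookkeeping in the first half — getting the threshold $t$ and the reparametrization $\gamma\leftrightarrow t$ exactly right so that the $-\sup_{\shat}\frac{\exp(-\gamma)}{M}\sum_s\ldots$ term collapses to $-M\sum_{j=0}^{\lfloor k\dbf\rfloor}\Comb kj$ times the common prefactor $p^{t+1}(1-p)^{k-t-1}$. The one substantive point is justifying that the supremum over $\shat$ is attained and equals the Hamming-ball sum: because $p\le\dbf<p$ is false — actually $\dbf<p$ so $p>\dbf$, hence $p>1/2$ is not assumed, but the maximizer of $\sum_{s\in B(\shat,\lfloor k\dbf\rfloor)}p^{w_s}(1-p)^{k-w_s}$ over $\shat$ is any $\shat$ with all-zero coordinates isn't quite it either; rather, the weighted ball sum is translation-equivariant only up to the weight function, so one argues by a symmetrization/pairing (swapping a coordinate of $\shat$ from $1$ to $0$ weakly increases the sum when $p<1/2$, with the opposite when $p>1/2$), reducing to $\shat$ of weight $0$ or $k$; in either case the sum equals $\sum_{j=0}^{\lfloor k\dbf\rfloor}\Comb kj p^{\cdot}(1-p)^{\cdot}$ evaluated at the extreme, which is what \eqref{eq:BMShypoth} records. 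I would state this monotonicity as a one-line lemma. Everything downstream is algebra, and the cancellation of $\delta$ in the second half is the elegant step that makes \eqref{eq:BMShypothorig} pop out.
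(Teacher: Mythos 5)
Your second half is fine: evaluating \eqref{eq:BMShypoth} at $t=t^{*}$, inserting $1-\sum_{j=0}^{t^{*}}\Comb{k}{j}p^j(1-p)^{k-j}=\delta+\lambda\,p^{t^{*}+1}(1-p)^{k-t^{*}-1}\Comb{k}{t^{*}+1}$, cancelling $\delta$ and dividing by the strictly positive factor $p^{t^{*}+1}(1-p)^{k-t^{*}-1}$ is exactly the route the paper takes (it mirrors the proof of Corollary~\ref{cor:DPtightest}).

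The gap is in the first half, precisely at the point you flag as the ``main obstacle.'' The missing idea is that for the BMS with bit-wise Hamming distortion, $P_S(s)\exp(j_S(s,\dbf))$ is \emph{constant} in $s$ (equal to $2^{-kH(\dbf)}$ with logs to base $2$): substituting \eqref{eq:BMStilted} into $P_S(s)=p^{w_s}(1-p)^{k-w_s}$ makes every $w_s$-dependent factor cancel. Hence $\sup_{\shat}\sum_s P_S(s)\exp(j_S(s,\dbf))\Ibb\{d(s,\shat)\le\dbf\}=2^{-kH(\dbf)}\sum_{j=0}^{\lfloor k\dbf\rfloor}\Comb{k}{j}$ for \emph{every} $\shat$, since the Hamming ball of radius $\lfloor k\dbf\rfloor$ has the same cardinality around every center; no symmetrization/pairing lemma is needed, and the lemma you sketch targets the wrong quantity (a $p$-weighted ball sum $\sum_{s:d(s,\shat)\le\dbf}p^{w_s}(1-p)^{k-w_s}$), whose extremal value is not what appears in \eqref{eq:BMShypoth} --- what appears there is the unweighted count. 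Moreover, your ``recombines after reparametrizing'' hides a factor-of-$M$ discrepancy: using $2^{-\gamma-kH(\dbf)}=M\,p^{t+1}(1-p)^{k-t-1}$ (from the definition of $t$ in Corollary~\ref{cor:BMSlossy}), the last term of \eqref{eq:lossysourcecodingimproved} as printed, with its $\tfrac{1}{M}$, evaluates to $-p^{t+1}(1-p)^{k-t-1}\sum_{j=0}^{\lfloor k\dbf\rfloor}\Comb{k}{j}$, not to the term $-M\,p^{t+1}(1-p)^{k-t-1}\sum_{j=0}^{\lfloor k\dbf\rfloor}\Comb{k}{j}$ of \eqref{eq:BMShypoth}. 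To land exactly on \eqref{eq:BMShypoth} one must return to Theorem~\ref{thm:DPimpliestighterbounds} with $T=1$, the identity channel and $P_{\Ybar}\equiv\tfrac{1}{M}$: there $\gamma^b(y)=\min_{\shat}\sum_s\lambda^a(s,\shat,y)$ is constant over the $M$ values of $y$, so $\sum_y\gamma^b(y)=-\sup_{\shat}\exp(-\gamma)\sum_sP_S(s)\exp(j_S(s,\dbf))\Ibb\{d(s,\shat)\le\dbf\}$ with no residual $\tfrac{1}{M}$, and then the constancy identity and the $\gamma\leftrightarrow t$ substitution give \eqref{eq:BMShypoth} term by term. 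As written, your handling of this term neither evaluates the supremum over $\shat$ correctly nor produces the stated constant, so the derivation of \eqref{eq:BMShypoth} does not go through without these repairs.
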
 The proof is similar to the proof of Corollary~\ref{cor:DPtightest}; we skip the proof here. 

Further, if $p=0.5$ in \eqref{eq:BMShypoth}, then, we get that
\begin{align}
&\Ebb\biggl[ \Ibb \biggl\lbrace \frac{1}{k}\sum_{i=1}^k \Ibb\{s_i \neq \shat_i\}>\dbf \biggr \rbrace\biggr]\geq \OPT({\SC})\geq \OPT(\rm DP)\non\\& \geq 1-2^{-k}M\sum_{j=0}^{\lfloor k\dbf \rfloor}\Comb{k}{j},
\end{align}which coincides with the hypothesis testing based converse of Kostina and Verd\'{u} \cite[Theorem 15]{kostina2012fixed}.
\begin{figure}
\begin{center}
\includegraphics[scale=0.62,clip=true, trim = 1.25in 3.8in 0.9in 3.8in]{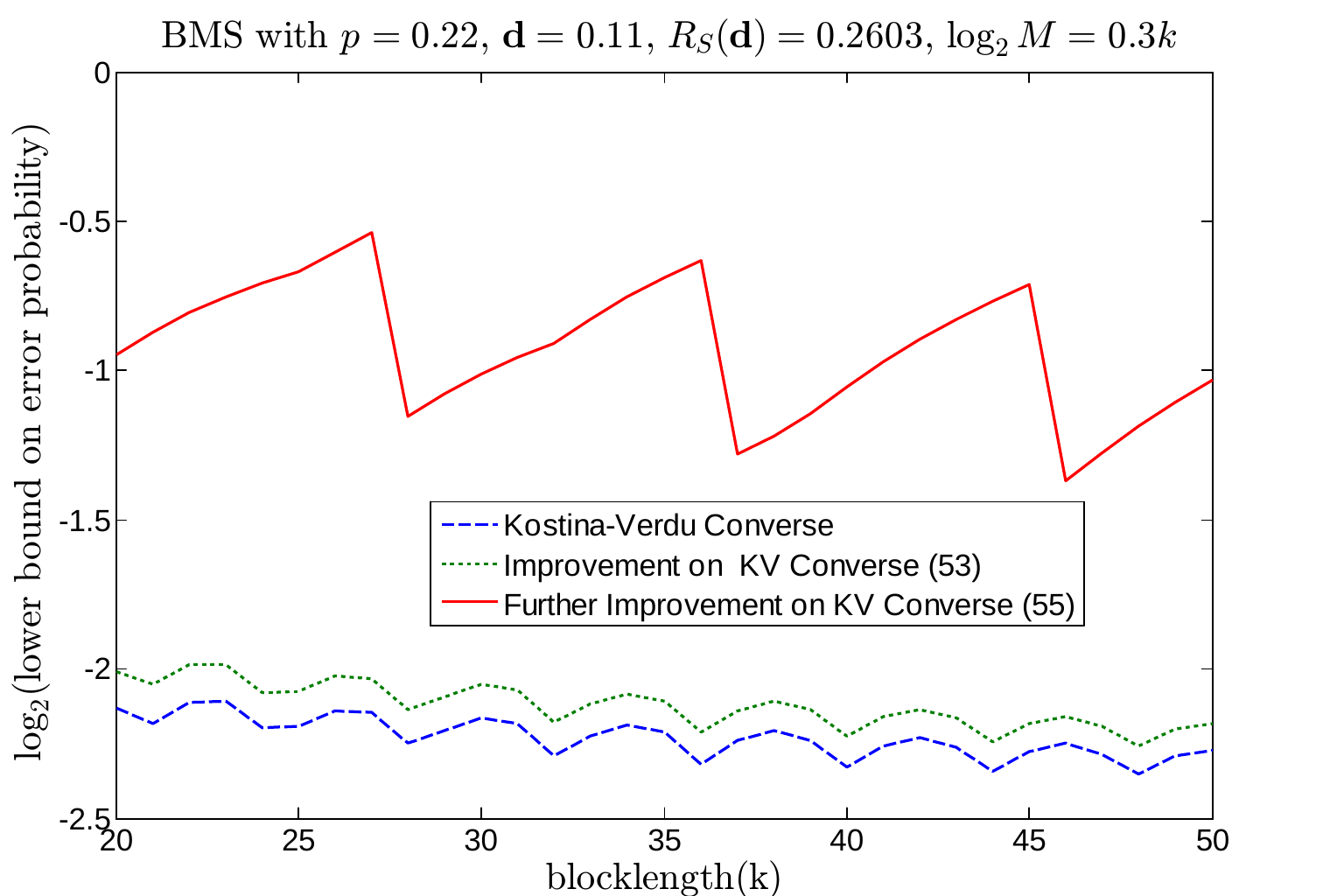} 
\caption{Comparison of performance of converses with blocklengths of a BMS with $p=0.22$, $\dbf=0.11$, $R_S(\dbf)=0.2603$ and $\log M=kR,$ $R>R_S(\dbf)$.
}\label{Fig:SSBMSout}
\end{center} 
\end{figure}
\begin{figure}
\begin{center}
\includegraphics[scale=0.62,clip=true, trim = 1.3in 3.8in 0.9in 3.7in]{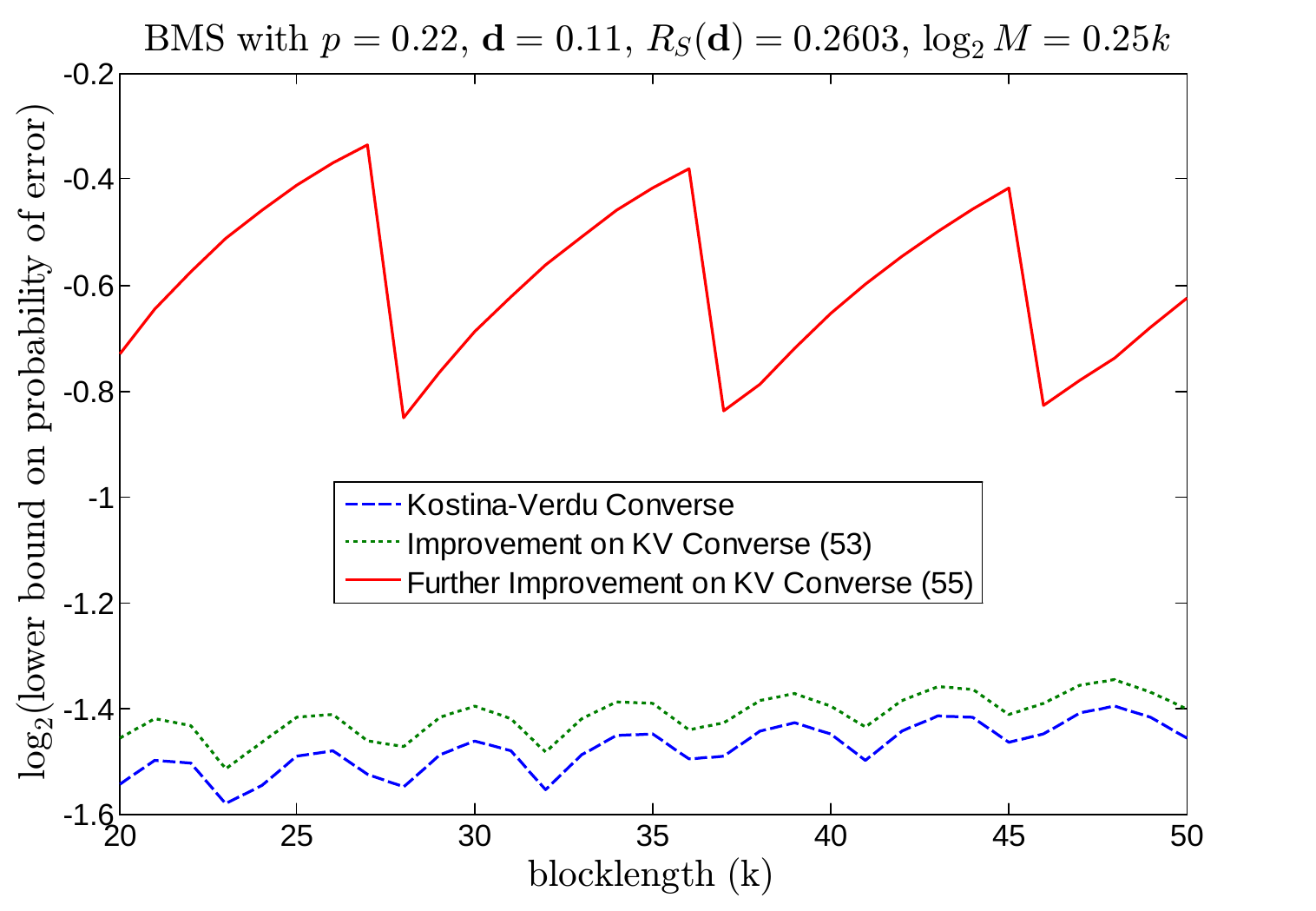} 
\caption{Comparison of performance of converses with blocklengths of a BMS with $p=0.22$, $\dbf=0.11$, $R_S(\dbf)=0.2603$ and $\log M=kR,$ $R<R_S(\dbf)$.
}\label{Fig:SSBMSins}
\end{center} 
\vspace{-.8cm}
\end{figure}

As can be seen from Fig~\ref{Fig:SSBMSins} and Fig~\ref{Fig:SSBMSout}, the converse \eqref{eq:BMSimprovedKV} outperforms Kostina-Verd\'{u} converse \cite[Theorem 7]{kostina2012fixed}.  The converse in \eqref{eq:BMShypothorig} outperforms both these converses by a large margin. 

Finally, the inter-relations between the various converses we have derived are explained in Fig~\ref{fig:interrelations}.
\subsection{To code or not to code}
Kostina and Verd\'{u} in \cite{kostina2012code} compare the minimum excess distortion achieved by joint source-channel codes with that achieved by symbol-by-symbol codes for fixed blocklength and excess distortion probability. It is shown that for transmitting a binary uniform source over a binary symmetric channel, uncoded transmission attains the minimum distortion among all rate 1 (rate, $r=\frac{k}{n}$) codes operating at  blocklength $n$ and  acheiving a fixed excess distortion probability. To show this, Kostina-Verd\'{u} resort to the lower bound in \eqref{eq:BUSBSC3}. Since our LP relaxation implies this lower bound, this  conclusion also follows from our LP relaxation. Indeed, our LP relaxation also shows that uncoded transmission is optimal for transmitting a $q$-ary uniform source over a $q$-ary symmetric channel with average symbol-wise Hamming distance as the loss criterion, as discussed in the next section.
\begin{figure*}
\tikzstyle{start} = [rectangle,minimum width=0.5cm, minimum height=0.3cm,text centered, draw=black, fill=white!]
\tikzstyle{arrow} = [thick,->,>=stealth]
\begin{small}
\begin{tikzpicture}[node distance=1cm]
\node (T1) [start] {$\OPT(\DP)$};
\node (T2)[start, below =of T1] {$\OPT({\rm DP'})$};
\node(T3) [start,right =of T2] {\small Further Improvement -- JSCC \eqref{eq:DPIgivestighterbounds}};
    \node(T4) [start,right =of T3] {\small Improvement on Gen. KV \eqref{eq:DPIimpliesgeneralkostina}};
    \node(T5) [start,right =of T4] {\small Generalized KV \eqref{eq:generalkostina}};
    \node (T6) [start,below =of T5] {\small KV  \eqref{eq:kostinaVerdu2}};
    \node(T7) [start,below =of T4] {\small Converse \eqref{eq:DPIimplieskostina}};
    \node (T8) [start,above =of T3] {\small Converse \eqref{eq:DPBUStightest} };
    \node (T9) [start,above =of T4] {\small Converse  \eqref{eq:BUSBSC3}};
    \node(T20)[start,above=of T5]{\small KV Hypothesis Test \cite[Theorem 4]{kostina2013lossy}};
    \node(T11)[start,below =of T7] {\small Improvement on Wolfowitz \eqref{eq:channelcoding}};
     \node(T19)[start, below= 2.53cm of T2]{\small PPV \cite[Theorem 27]{polyanskiy2010channel}};
   \node(T12)[start,below =of T6] {\small Wolfowitz \eqref{eq:wolfowitz}};
   \node(T13)[start,below = 4.06cm of T3]{\small Further Improvement -- SC \eqref{eq:lossysourcecodingimproved}};
    \node(T14)[start, below= of T11]{\small Improvement -- SC \eqref{eq:lossysourcecoding}};
    \node(T15)[start, below= of T12]{\small KV SC Converse \cite[Theorem 7]{kostina2012fixed}};
    \node(T16)[start, below= of T13]{\small Converse \eqref{eq:BMShypoth}};
    \node(T17)[start, below= of T14]{\small Converse \eqref{eq:BMShypothorig}};
    \node(T21)[start, below= of T15]{\small KV Hypothesis Test \cite[Theorem 8]{kostina2012fixed}};
\draw [arrow] (T1) -- (T2);
     \draw [arrow] (T2) -- node[above] {\small JSCC} (T3);
      \draw [arrow] (T3) -- node[above] {\small JSCC} (T4);
       \draw [arrow] (T4) -- node[above] {\small JSCC} (T5);
        \draw [arrow] (T5) -- node[right] {\small JSCC}(T6);
         \draw [arrow] (T7) -- node[above] {\small JSCC} (T6);
          \draw [arrow] (T4) -- node[right] {\small JSCC} (T7);
           \draw [arrow] (T3) -- node[left] {\small BUS-BSC} (T8);
           \draw [arrow] (T8) -- node[above] {\small BUS-BSC}  (T9);
          \draw [arrow] (T2) -- node [left] {\small CC} (T19);
            \draw [arrow] (T19) -- node [left] {\small CC} (T2);
              \draw [arrow] (T19) -- node [above right] {\small CC \qquad} (T11);
                \draw [arrow] (T11) -- node [left] {\small CC} (T7);
                  \draw [arrow] (T7) -- node [left] {\small CC} (T11);  
            \draw [arrow] (T11) -- node[above] {\small CC} (T12);
\path [draw,thick, ->,>=stealth]  (T3.south) -- ++(0,0cm)-- ++(0cm,0) -- node[left]{\rotatebox{90}{\small SC}} (T13.north);
             \draw [arrow] (T13) -- node[above] {\small SC} (T14);
             \draw [arrow] (T14) -- node[above] {\small SC} (T15);
              \path [draw,thick, ->,>=stealth]  (T6.east) -- ++(1.61,0cm)--++(0,-2.2cm)-|  node[below right]{\small SC} (T15);
             \draw [arrow] (T13) -- node[left] {\rotatebox{-90}\small {BMS}}   (T16);
             \draw [arrow] (T16) -- node[left] {\rotatebox{-90}\small {BMS}}   (T13);
             \draw [arrow] (T16) -- node[above] {\small BMS}  (T17);
             \draw [arrow] (T21) -- node[above] {\small BMS}  (T17);
             \draw [arrow] (T20) -- node[above= 0.2 cm] {\small BUS-BSC}(T9);
             \draw [arrow] (T12) -- node[right]{\small CC} (T6);
              \draw [arrow] (T6) -- node[right]{\small CC} (T12);
\end{tikzpicture} \end{small}
\caption{Implications of the converses derived. An arrow from $A \rightarrow B$ implies that $A \geq B$, $A \leftrightarrow  B$ implies $A=B$ and the heading above the arrow mentions the case in which the relation holds. 
 The abbreviations are: PPV  = 
 Polyanskiy-Poor-Verd\'{u}, 
 KV = Kostina-Verd\'{u}, JSSC = Joint Source-Channel Coding, SC = Source Coding and CC = Channel Coding.
}\label{fig:interrelations}
\end{figure*}
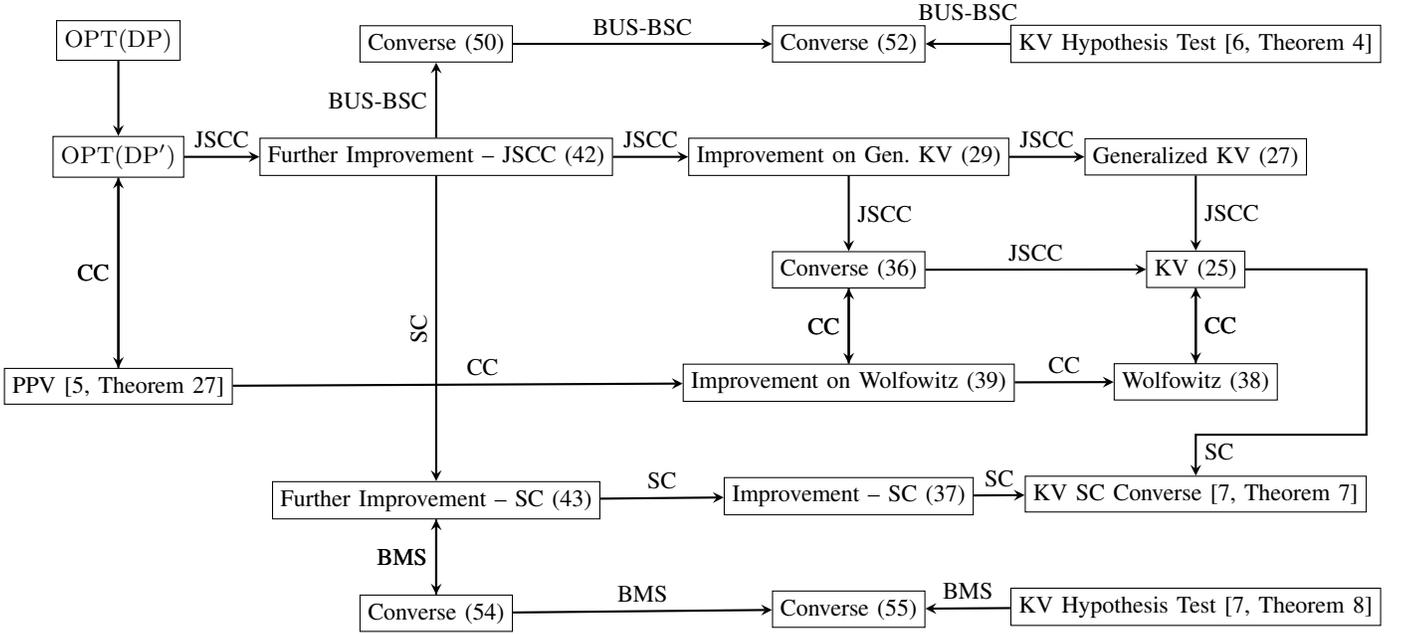

\section{Tight bound on the average symbol-wise Hamming distortion of $q$-ary uniform source over a $q$-ary symmetric channel} \label{sec:bsc}  

In this section, we obtain a general converse on the optimal cost of a finite blocklength joint source-channel code for a class of source-channel pairs and any distortion measure by constructing a feasible solution of DP. As a special case of this converse, in Corollary~\ref{cor:qaryUsSC}, we consider the problem of transmitting a $q$-ary uniform source over a $q$-ary symmetric channel (with crossover probability $\frac{\epsilon}{q-1}$, $\epsilon<1-\frac{1}{q}$) with average symbol-wise Hamming distance as the distortion measure. This source-channel pair is `probabilistically matched' in the sense of Gastpar \cite[Lemma 8]{gastpar03tocode} wherein it is shown that uncoded transmission achieves the optimal system performance in the sense of Pareto optimality of distortion and channel cost. Employing the general converse, we obtain the tight lower bound of $\epsilon$ for this problem for all blocklengths.

We consider the following class of source-channel pairs in this section. The source probability distribution takes the form $P_S(s)\equiv \bar{P}_S(\delta(s))$ where $\delta:\Sscr \rightarrow \Real$, $\bar{P}_S: \Real \rightarrow [0,1]$. The channel is given by the conditional distribution $P_{Y|X}(y|x)\equiv \bar{P}_{Y|X}(\bar{d}(x,y))$ which is defined as a function of a ``metric'' $\bar{d}:\Xscr\times \Yscr \rightarrow  [0,1]$ such that $\bar{P}_{Y|X}:[0,1] \rightarrow[0,1]$. 
These assumptions capture the cases where the probability of a block of source symbols and the probability of the channel error between a block of input and output symbols is a function of a simpler ``sufficient statistic'' such as $\delta(S)$ for the source and $\bar{d}(X,Y)$ for channel input and output. A binary memoryless source with a (i.i.d.) Bernoulli distribution corresponds to the case when $\delta(s) \equiv w_s$, the Hamming weight of $s$. The binary symmetric memoryless channel and binary memoryless erasure channel for input of blocklength $n$ are channels with $\bar{d}(x,y)\equiv \frac{1}{n}d_{x,y}$, the normalised Hamming distance (for the erasure channel, this must be defined by embedding $\Xscr$ as a subset of $\Yscr$). 
We take the loss function to be of the form $\kappa(s,x,y,\shat)\equiv d(s,\shat)$ where $d:\Sscr \times \Sscrhat \rightarrow [0,1]$ is a distortion measure. Notice that both $d$ and $\dbar$ have been normalized to take values in the unit interval. 

The following theorem gives a lower bound on the minimum expected cost of the above problem setup by constructing a feasible solution of DP.
\begin{theorem}\label{thm:generalconverse}
For problem SC, consider $S$ with probability distribution given by $P_S(s)\equiv \bar{P}_S(\delta(s))$, where $\delta: \Sscr \rightarrow \Real$ and $\bar{P}_S:\Real \rightarrow [0,1]$. The channel  conditional probability distribution is given as $P_{Y|X}(y|x) \equiv \bar{P}_{Y|X}(\bar{d}(x,y))$ where $\bar{d}:\Xscr \times \Yscr \rightarrow [0, 1]$ and $\bar{P}_{Y|X}:[0,1]\rightarrow [0,1]$. The loss function is $\kappa(s,x,y,\shat) \equiv d(s,\shat)$, where $d: \Sscr \times \Sscrhat \rightarrow [0,1]$ is a distortion measure. Then, there exists a   feasible solution of DP of the following form,
\begin{align}
\mu(s,x,y,\shat) &\equiv 0, \non\\
\lambda^a(s,\shat,y) &\equiv g_{\delta(s)}(d(s,\shat)), \non \\ 
\lambda^b(x,s,y) &\equiv -g_{\delta(s)} (\bar{d}(x,y))+\bar{d}(x,y)g_{\delta(s)}'(\bar{d}(x,y)), \label{eq:generalconversedual} \\
\gamma^a(s)&\equiv \min_x \sum_y \lambda^b(x,s,y), \non \\
\gamma^b(y)&\equiv \min_{\shat}\sum_s \lambda^a(s,\shat,y), \non
\end{align} where for each $t \in \Real$, $g_{t}: [0,1] \rightarrow \Real$ is a concave differentiable function satisfying
\begin{align} g_{t}'(m) \leq \bar{P}_S(t)\bar{P}_{Y|X}(m),\quad \forall t\in \Real,\ \forall m \in [0,1]\label{eq:derivativecondition}.\end{align} Consequently, for any code, 
\begin{align} &\Ebb\left[d(S,\Shat)\right] \geq \OPT(\rm{SC}) \geq \OPT(\DP)\nonumber\\ &\geq \sup_{g}\left[\sum_s\min_x \sum_y \left[-g_{\delta(s)}(\bar{d}(x,y))\hspace{-0.04cm}+\hspace{-0.04cm}\bar{d}(x,y)g_{\delta(s)}'(\bar{d}(x,y))\right]\right.\nonumber\\& \left. \qquad \hspace{0.2cm} +\sum_y \min_{\shat}\sum_s g_{\delta(s)}(d(s,\shat))\right],\label{eq:generalconverse}\end{align} where the supremum is taken over all concave differentiable functions $g_t, t\in \Real$ satisfying \eqref{eq:derivativecondition}.
\end{theorem}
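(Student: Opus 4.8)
The plan is to verify directly that the tuple displayed in \eqref{eq:generalconversedual} is feasible for DP, and then read off the objective value, which will coincide with the right-hand side of \eqref{eq:generalconverse}; the bound then follows from Lemma~\ref{lem:dualfeasiblebound}. Since $\mu\equiv 0$, only the constraints (D1), (D2), (D3), and $\mu\ge 0$ must be checked, and $\mu\ge 0$ is immediate. Constraints (D1) and (D2) are satisfied automatically by construction because $\gamma^a(s)$ and $\gamma^b(y)$ are \emph{defined} to be the respective minima $\min_x\sum_y\lambda^b(x,s,y)$ and $\min_{\shat}\sum_s\lambda^a(s,\shat,y)$ (with $\mu\equiv 0$, (D1) reads $\gamma^a(s)\le \sum_y\lambda^b(x,s,y)$ for all $x$, and similarly for (D2)). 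So the entire content of the verification is constraint (D3).

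The key step is therefore to show
\[
\lambda^a(s,\shat,y)+\lambda^b(x,s,y)\;\le\; d(s,\shat)\,\bar P_S(\delta(s))\,\bar P_{Y|X}(\bar d(x,y))\qquad\forall s,x,y,\shat .
\]
Substituting the definitions, with $t:=\delta(s)$, $m:=\bar d(x,y)$, and $r:=d(s,\shat)$, this is
\[
g_t(r)\;-\;g_t(m)\;+\;m\,g_t'(m)\;\le\; r\,\bar P_S(t)\,\bar P_{Y|X}(m).
\]
I would handle this in two moves. First, since $g_t$ is concave and differentiable, the tangent inequality at $m$ gives $g_t(r)\le g_t(m)+(r-m)g_t'(m)$; rearranging yields $g_t(r)-g_t(m)+m\,g_t'(m)\le r\,g_t'(m)$. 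Second, apply the derivative hypothesis \eqref{eq:derivativecondition}, $g_t'(m)\le \bar P_S(t)\bar P_{Y|X}(m)$, together with $r=d(s,\shat)\ge 0$, to conclude $r\,g_t'(m)\le r\,\bar P_S(t)\bar P_{Y|X}(m)$. Chaining the two inequalities gives exactly (D3). (One should note in passing that $d$ being normalized to $[0,1]$ and in particular nonnegative is what allows the multiplication by $r$ in the second move; this is where that normalization is used.)

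Having established feasibility, the dual objective $\sum_s\gamma^a(s)+\sum_y\gamma^b(y)-\sum_{s,x,y,\shat}\mu(s,x,y,\shat)$ reduces, since $\mu\equiv 0$, to $\sum_s\min_x\sum_y\lambda^b(x,s,y)+\sum_y\min_{\shat}\sum_s\lambda^a(s,\shat,y)$, which upon substituting the forms of $\lambda^a,\lambda^b$ is precisely the bracketed expression in \eqref{eq:generalconverse} for the given $g$. Taking the supremum over all admissible families $\{g_t\}_{t\in\Real}$ (concave, differentiable, satisfying \eqref{eq:derivativecondition}) and invoking Lemma~\ref{lem:dualfeasiblebound}, which gives $\OPT(\SC)\ge\OPT(\LP)=\OPT(\DP)\ge$ (any dual feasible objective value), completes the argument. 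The main obstacle is essentially bookkeeping rather than depth: one must be careful that $\gamma^a,\gamma^b$ are finite (the sums over the finite alphabets $\Yscr$, $\Sscr$ are finite, and $g_t$ is real-valued on $[0,1]$, so this is fine) and that the tangent-line inequality is applied at the correct point $m=\bar d(x,y)$ uniformly in $\shat$ — the only variable on which $r=d(s,\shat)$ depends, which is exactly why the concavity/tangent bound in the variable $r$ is the right tool.
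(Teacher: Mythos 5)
Your proposal is correct and follows essentially the same route as the paper's proof: feasibility of (D1)--(D2) by construction, (D3) via the concavity (tangent-line) inequality for $g_{\delta(s)}$ at $\bar d(x,y)$ combined with the derivative condition \eqref{eq:derivativecondition} and nonnegativity of $d$, then reading off the dual objective and invoking Lemma~\ref{lem:dualfeasiblebound}. Your explicit remark on where nonnegativity of $d$ enters is a minor clarification the paper leaves implicit, but the argument is the same.
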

\begin{proof} We first verify that the given set of dual variables in fact satisfy the constraints of DP. It can be directly seen that the constraints (D1) and (D2) of DP are satisfied.
We now check the feasibility of the dual variables with respect to (D3). The LHS of (D3),  $\lambda^a(s,\shat,y)+\lambda^b(x,s,y)$, 
evaluates to 
\begin{align*}
g_{\delta(s)}(d(s,\shat))-g_{\delta(s)}&(\bar{d}(x,y)) + \bar{d}(x,y)g_{\delta(s)}'(\bar{d}(x,y))\\
&\stackrel{(a)}{\leq} \left( d(s,\shat))-\bar{d}(x,y)\right)g_{\delta(s)}'(\bar{d}(x,y)) \\
 & \qquad \qquad \qquad + \bar{d}(x,y)g_{\delta(s)}'(\bar{d}(x,y))\\
&=d(s,\shat)g_{\delta(s)}'(\bar{d}(x,y))\\
& \leq d(s,\shat)\bar{P}_S(\delta(s))\bar{P}_{Y|X}(\bar{d}(x,y)),
\end{align*}which is the RHS of (D3). The inequality in $(a)$ results from the concavity of $g_{t}(\cdot)$ for each $t$. Hence, the dual variables satisfy (D3) and are thus feasible for the program DP. Consequently, for any concave differentiable function $g$, the lower bound to $\OPT(\SC)$ is given by the objective of DP, \ie, $\sum_s \gamma^a(s)+\sum_y \gamma^b(y)$. Hence, the supremum of the objective of DP over all concave differentiable functions $g_t, t\in \Real $ satisfying \eqref{eq:derivativecondition} gives the required bound.
\end{proof}

As a particular application of Theorem~\ref{thm:generalconverse} we derive a lower bound on the minimum expected average bit-wise Hamming distance of a binary source over a binary, symmetric and memoryless channel for arbitrary blocklengths $n$. 
\begin{corollary}\label{cor:BMS-BSC}
Consider the setting of Theorem~\ref{thm:generalconverse} with $\mathcal{X}=\mathcal{S}=\mathcal{Y}=\mathcal{\widehat{S}}=\Fbb_2^n$, for some $n \in \Nbb$. Let the source symbols be i.i.d. with distribution ${\rm Bern}(p)$ for some $p\in (0,1)$,  let the channel be binary, symmetric and memoryless as given in \eqref{eq:BSC}, and let the loss function be given as $\kappa(s,x,y,\shat)\equiv \frac{d_{s,\shat}}{n}$.
	Then 
 for any code,
\begin{align}
&\frac{1}{n}\Ebb\left[ \sum_{i=1}^n\I{S_i \neq \Shat_i}\right]\geq \OPT(\rm{SC}) \geq \OPT(\DP)\nonumber                                                                                                                                                                                                                                                                                                                                                                                                                                                                                                                                                                                                                                                                                                                                                                                                                                                                                                                                                                                                                                                                                                                                                                                                                                                                                                                                                                                                                                                                                                                                                                                                                                                                                                                                                   \\
&\geq \sup_g \left \lbrace \sum_{v=0}^n\binom{n}{v}\sum_{k=0}^n \binom{n}{k}\left [-g_{v}\left(\frac{k}{n}\right)+ \frac{k}{n}g_{v}'\left(\frac{k}{n}\right)\right ]+2^n \times\right. \nonumber\\& \left.  \left[ \min_{u \in \{0,\hdots,n\}} \hspace{-0.1cm} \sum_{k_1=0}^{n-u}\sum_{k_2=0}^{u} \hspace{-0.1cm}g_{k_1-k_2+u}\left(\frac{k_1+k_2}{n}\right)\binom{n-u}{k_1}\binom{u}{k_2}\right]\hspace{-0.1cm} \right \rbrace\label{eq:BMS-BSC},
\end{align}where the supremum is taken over all concave differentiable functions $g_t:[0,1] \rightarrow \Real, t\in [0,n]$ satisfying   \begin{align}g_{t}'(m) \leq p^{t}(1-p)^{n-t} \epsilon^{nm}(1-\epsilon)^{n-nm},\label{eq:concavefunction}
\end{align} for all $t \in [0,n],m \in [0,1]$.
\end{corollary}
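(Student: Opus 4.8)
The statement is a direct specialization of Theorem~\ref{thm:generalconverse} to the binary case, so the plan is simply to instantiate that theorem's construction with the right identifications and then simplify each of the two sums in \eqref{eq:generalconverse} into the binomial expressions in \eqref{eq:BMS-BSC}. First I would set $\Sscr=\Xscr=\Yscr=\Sscrhat=\Fbb_2^n$, take $\delta(s)\equiv w_s$ (the Hamming weight), $\bar{P}_S(t)\equiv p^t(1-p)^{n-t}$, $\bar{d}(x,y)\equiv \frac{1}{n}d_{x,y}$, $\bar{P}_{Y|X}(m)\equiv \epsilon^{nm}(1-\epsilon)^{n-nm}$, and $d(s,\shat)\equiv \frac{1}{n}d_{s,\shat}$. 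Then $P_S(s)=\bar P_S(w_s)$, $P_{Y|X}(y|x)=\bar P_{Y|X}(\frac1n d_{x,y})=\epsilon^{d_{x,y}}(1-\epsilon)^{n-d_{x,y}}$ by \eqref{eq:BSC1}, and $\kappa(s,x,y,\shat)=\frac{d_{s,\shat}}{n}$, exactly the hypotheses of Theorem~\ref{thm:generalconverse}. The admissibility condition \eqref{eq:derivativecondition} becomes precisely \eqref{eq:concavefunction}, so any family of concave differentiable $g_t$ satisfying \eqref{eq:concavefunction} yields, via \eqref{eq:generalconversedual}, a feasible point of DP and hence a valid lower bound through \eqref{eq:generalconverse}. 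The chain $\frac1n\Ebb[\sum_i\I{S_i\ne\Shat_i}]=\Ebb[d(S,\Shat)]\ge\OPT(\SC)\ge\OPT(\DP)$ then holds by Lemma~\ref{lem:dualfeasiblebound}.

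The remaining work is to rewrite the two inner minimizations in \eqref{eq:generalconverse}. For the $\gamma^a$ term, $\lambda^b(x,s,y)=-g_{w_s}(\frac1n d_{x,y})+\frac1n d_{x,y}\,g_{w_s}'(\frac1n d_{x,y})$ depends on $(x,y)$ only through $d_{x,y}$; since for fixed $x$ the number of $y\in\Fbb_2^n$ with $d_{x,y}=k$ equals $\binom{n}{k}$, we get $\sum_y\lambda^b(x,s,y)=\sum_{k=0}^n\binom{n}{k}\big[-g_{w_s}(\frac kn)+\frac kn g_{w_s}'(\frac kn)\big]$, which is independent of $x$, so $\gamma^a(s)=\sum_{k=0}^n\binom{n}{k}\big[-g_{w_s}(\frac kn)+\frac kn g_{w_s}'(\frac kn)\big]$. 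Summing over $s$ and grouping by weight $v=w_s$ (with $\binom{n}{v}$ strings of weight $v$) gives the first bracketed sum in \eqref{eq:BMS-BSC}. For the $\gamma^b$ term, $\lambda^a(s,\shat,y)=g_{w_s}(\frac1n d_{s,\shat})$ is independent of $y$, so $\gamma^b(y)=\min_{\shat}\sum_s g_{w_s}(\frac1n d_{s,\shat})$, also independent of $y$, and $\sum_y\gamma^b(y)=2^n\min_{\shat}\sum_s g_{w_s}(\frac1n d_{s,\shat})$. By symmetry the inner minimum over $\shat$ can be taken over representatives parametrized by $u=w_{\shat}\in\{0,\dots,n\}$; splitting the sum over $s$ according to how many of its ones fall inside the support of $\shat$ versus outside — say $k_2$ ones among the $u$ support positions of $\shat$ and $k_1$ ones among the $n-u$ positions outside — gives $w_s=k_1+(u-k_2)=k_1-k_2+u$ and $d_{s,\shat}=k_1+k_2$, with multiplicity $\binom{n-u}{k_1}\binom{u}{k_2}$. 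Thus $\sum_s g_{w_s}(\frac1n d_{s,\shat})=\sum_{k_1=0}^{n-u}\sum_{k_2=0}^{u} g_{k_1-k_2+u}\!\big(\frac{k_1+k_2}{n}\big)\binom{n-u}{k_1}\binom{u}{k_2}$, which is exactly the second bracketed expression in \eqref{eq:BMS-BSC}. Adding the two contributions and taking the supremum over admissible $g$ yields the claimed bound.

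The main obstacle is purely bookkeeping: carefully verifying the counting identities, especially the decomposition of $s$ relative to the support of $\shat$ that produces $w_s=k_1-k_2+u$ and $d_{s,\shat}=k_1+k_2$ with the product-of-binomials multiplicity, and confirming that by the symmetry of the binary Hamming cube the inner minimum over all $\shat\in\Fbb_2^n$ indeed reduces to a minimum over the single parameter $u=w_{\shat}$. Everything else — feasibility of the dual point, the reduction of \eqref{eq:derivativecondition} to \eqref{eq:concavefunction}, and the final inequality chain — is inherited verbatim from Theorem~\ref{thm:generalconverse} and Lemma~\ref{lem:dualfeasiblebound}.
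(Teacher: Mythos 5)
Your proposal is correct and follows essentially the same route as the paper: specialize Theorem~\ref{thm:generalconverse} with $\delta(s)\equiv w_s$, $\bar{P}_S(t)\equiv p^t(1-p)^{n-t}$, $\bar{d}(x,y)\equiv \frac{1}{n}d_{x,y}$, $\bar{P}_{Y|X}(m)\equiv \epsilon^{nm}(1-\epsilon)^{n-nm}$, then evaluate the two terms of \eqref{eq:generalconverse} by the same counting arguments (exactly $\binom{n}{k}$ strings $y$ at distance $k$ from $x$, and the $(k_1,k_2)$ decomposition of $s$ relative to $\shat$ with multiplicity $\binom{n-u}{k_1}\binom{u}{k_2}$). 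The only nit is a wording slip in your description of $k_2$ (your formulas correspond, as in the paper, to $k_2$ counting the support positions of $\shat$ where $s$ is zero, not the ones of $s$ inside the support), which does not affect correctness since the double sum is invariant under the relabeling $k_2\mapsto u-k_2$.
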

\begin{proof}
In Theorem~\ref{thm:generalconverse}, let $\delta(s) \equiv w_s$, $\bar{P}_S(\sbar) \equiv p^{\sbar}(1-p)^{n-\sbar}$, $\bar{d}(x,y) = \frac{1}{n}d_{x,y}$ and  $\bar{P}_{Y|X}\left(m\right) \equiv\left(\epsilon^{m}(1-\epsilon)^{1-m}\right)^n$. Then, for any concave differentiable function $g$ satisfying \eqref{eq:concavefunction}, 
the dual variables in \eqref{eq:generalconversedual} are feasible for DP. It remains to show that \eqref{eq:BMS-BSC} follows from the lower bound obtained in \eqref{eq:generalconverse}. To obtain \eqref{eq:BMS-BSC}, we first evaluate the second term of \eqref{eq:generalconverse}. 

Consider any $\shat \in \Sscrhat$ with $w_{\shat}=u$ and any $s \in \Sscr$ such that $w_s=v$ and $d_{s,\shat}=k$, where $v,k,u\in \{0,\hdots,n\}$. Then, there exist integers $k_1,k_2$ ($k_1$ = number of zeros in $\shat$ changed to ones in $s$ and $k_2$ = number of ones in $\shat$ changed to zeros in $s$), where $0 \leq k_1 \leq n-u$, $0 \leq k_2 \leq u$ such that $k_1+k_2=k$ and $v=u+k_1-k_2.$ We evaluate the RHS of \eqref{eq:generalconverse}. First note that we have $$g_{w_s}\left(\frac{d_{s,\shat}}{n}\right)=g_{u+k_1-k_2}\left(\frac{k_1+k_2}{n}\right).$$ Now fix $u$. Note that for any fixed $u$, $k_1$ and $k_2$, there exist $\binom{n-u}{k_1}\binom{u}{k_2}$ number of $s$'s in $\Sscr$ with $w_s=u+k_1-k_2$ and $d_{s,\shat}=k_1+k_2.$  Thus, to evaluate $\sum_s g_{w_s}(\frac{d_{s,\shat}}{n})$, we now sum over all possible $w_s$ and $d_{s,\shat}$. Since there is a bijection between $(w_s,d_{s,\shat})$ and $(k_1,k_2)$, this amounts to summing over $k_1,k_2$. The second term in \eqref{eq:generalconverse} evaluates to
	\begin{align*}
	&\sum_{y}\min_{\shat}\sum_{s \in \Sscr} g_{w_s}\left(\frac{d_{s,\shat}}{n}\right)\\
	&=2^n \min_{u\in\{0,\hdots,n\}}\sum_{k_1=0}^{n-u}\sum_{k_2=0}^{u}g_{u+k_1-k_2}\hspace{-0.1cm}\left(\frac{k_1+k_2 }{n}\right)\hspace{-0.1cm}\binom{n-u}{k_1}\hspace{-0.1cm}\binom{u}{k_2}.
	\end{align*} 
For evaluating the first term in \eqref{eq:generalconverse}, note that the number of $y$'s in $\Yscr$ which are at a Hamming distance of $k$ from $x\in \Xscr$ (\ie, $d_{x,y}=k$) is given by $\binom{n}{k}$. Consequently, for any $x \in \Xscr$ and any $s \in \Sscr$ with $w_s=v$, \begin{align*} \sum_y -g_{v}\left(\frac{d_{x,y}}{n}\right)&+ \frac{d_{x,y}}{n}g_{v}'\left(\frac{d_{x,y}}{n}\right)\\&=\sum_{k=0}^n\binom{n}{k}\left[-g_{v}\left(\frac{k}{n}\right)+\frac{k}{n}g'_{v}\left(\frac{k}{n}\right)\right],\end{align*} which is independent of $x$. Since there exist $\binom{n}{v}$ number of $s$'s in $\Sscr$ with Hamming weight of $v$, summing the above term over $s \in \Sscr$ becomes equivalent to multiplying the above term by $\binom{n}{v}$ and summing over $v\in \{0,\hdots,n\}$, thereby resulting in the first term of the bound in \eqref{eq:BMS-BSC}. 
\end{proof}

Evidently, the RHS in \eqref{eq:generalconverse} or \eqref{eq:BMS-BSC} is not easy to evaluate. We now consider a special instance of Theorem~\ref{thm:generalconverse} where the suprema in \eqref{eq:generalconverse} and \eqref{eq:BMS-BSC} become easy to evaluate.
\begin{theorem}
Consider the setting of Theorem~\ref{thm:generalconverse} with $\Sscr=\Yscr=\Xscr=\Sscrhat=\Fbb_{q}^n$ with $P_{S}(s) \equiv \frac{1}{| \Sscr|}$ and $d=\dbar$. Further, let $d:\Xscr \times \Yscr \rightarrow \{0,\frac{1}{n},\frac{2}{n},\hdots,1\}$ be such that $|N_{k}(x)|$ is independent of $x \in \Fbb_q^n$ where $N_{k}(x):=\{y \in \Fbb_q^n \mid d(x,y)=\frac{k}{n}\}$, $k \in \{0,\hdots,n\}$ $x\in \Fbb_q^n$. Then, for any code,
\begin{align}
\Ebb\left[d(S,\Shat)\right]&\geq\OPT(\SC)\hspace{-0.1cm} \geq \OPT(\DP)\geq \sum_{k=0}^n\frac{k\Nbar_k \Pbar_{Y|X}\left(\frac{k}{n}\right)}{n},\label{eq:specialgeneralconverse}
\end{align}where $ \Nbar_k:=|N_k(x)|$.
\end{theorem}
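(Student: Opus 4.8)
The plan is to apply Theorem~\ref{thm:generalconverse} and pin down one explicit concave differentiable function for which the general lower bound \eqref{eq:generalconverse} collapses to exactly $\sum_{k=0}^n \tfrac{k\bar{N}_k\bar{P}_{Y|X}(k/n)}{n}$. Two special features of the present setting make this possible. Because $P_S$ is uniform, $\bar{P}_S \equiv 1/q^n$ is constant, so the admissibility condition \eqref{eq:derivativecondition} reduces to $g_t'(m) \le q^{-n}\bar{P}_{Y|X}(m)$ for every $t$; in particular one may take $g_t$ to be a single function $g$ independent of $t$. And because $|N_k(x)| = \bar{N}_k$ is independent of $x$ — and, $d$ being a symmetric metric, $|\{s : d(s,\shat) = k/n\}| = \bar{N}_k$ is independent of $\shat$ — the inner minimizations over $x$ and over $\shat$ in \eqref{eq:generalconverse} are minimizations of constants, hence vacuous.

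Carrying out the evaluation, for each $x$ one has $\sum_y\bigl[-g(d(x,y)) + d(x,y)\,g'(d(x,y))\bigr] = \sum_{k=0}^n \bar{N}_k\bigl[-g(k/n) + (k/n)g'(k/n)\bigr]$, and for each $\shat$ one has $\sum_s g(d(s,\shat)) = \sum_{k=0}^n \bar{N}_k\,g(k/n)$. Summing the first quantity over the $q^n$ source words and the second over the $q^n$ channel outputs, the $g(k/n)$ contributions cancel and \eqref{eq:generalconverse} becomes
\[
\OPT(\DP) \;\ge\; q^n \sum_{k=0}^n \bar{N}_k\,\frac{k}{n}\,g'(k/n).
\]
Since each coefficient $\bar{N}_k\,k/n$ is nonnegative, I would then choose $g$ so that $g'(k/n)$ meets its cap $q^{-n}\bar{P}_{Y|X}(k/n)$ at every grid point; the natural candidate is
\[
g(m) \;:=\; \frac{1}{q^n}\int_0^m \bar{P}_{Y|X}(u)\,du,
\]
which is differentiable with $g'(m) = q^{-n}\bar{P}_{Y|X}(m)$, satisfies \eqref{eq:derivativecondition} with equality, and is concave because $\bar{P}_{Y|X}$ is non-increasing on $[0,1]$. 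Substituting back gives $\OPT(\DP) \ge q^n\sum_{k=0}^n \bar{N}_k (k/n)\, q^{-n}\bar{P}_{Y|X}(k/n) = \sum_{k=0}^n \tfrac{k\bar{N}_k\bar{P}_{Y|X}(k/n)}{n}$, and the outer inequalities $\Ebb[d(S,\Shat)] \ge \OPT(\SC) \ge \OPT(\DP)$ are exactly those furnished by Theorem~\ref{thm:generalconverse} (equivalently Lemma~\ref{lem:dualfeasiblebound}).

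The only step requiring genuine care is the construction of $g$: one needs a single concave, differentiable function whose derivative coincides with the cap $q^{-n}\bar{P}_{Y|X}$ at $0,1/n,\ldots,1$ while remaining below it on the intervals in between. The antiderivative above does this precisely when $\bar{P}_{Y|X}$ is continuous and non-increasing on $[0,1]$, which is exactly the regime of the $q$-ary symmetric channel (where $\bar{P}_{Y|X}(m) = (1-\epsilon)^n\bigl(\tfrac{\epsilon}{(q-1)(1-\epsilon)}\bigr)^{nm}$ with $\epsilon < 1-1/q$, so the base is below $1$) and the $q$-ary erasure channel intended in the corollary; were $\bar{P}_{Y|X}$ non-monotone one would instead take $g'$ to be the continuous, non-increasing derivative of the concave envelope of $q^{-n}\bar{P}_{Y|X}$ and obtain a correspondingly weaker bound. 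A minor bookkeeping point is the claim $|\{s : d(s,\shat) = k/n\}| = \bar{N}_k$, which follows from symmetry of $d$ (or from the double count $\sum_{\shat}|\{s : d(s,\shat)=k/n\}| = \sum_s |N_k(s)| = q^n\bar{N}_k$ together with the permutation invariance of the metric).
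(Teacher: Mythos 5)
Your proposal is correct and takes essentially the same route as the paper's proof: specialize Theorem~\ref{thm:generalconverse} to a single $t$-independent $g$ (possible since $P_S$ is uniform), evaluate both sums via $\bar{N}_k$ so that the $g(k/n)$ terms cancel, and then take $g'$ equal to the cap $\bar{P}_{Y|X}(\cdot)/|\Sscr|$ to obtain $\sum_{k}\frac{k}{n}\bar{N}_k\bar{P}_{Y|X}(k/n)$. The only difference is that you make explicit two points the paper passes over silently — constructing $g$ as the antiderivative of the cap and noting that its concavity requires $\bar{P}_{Y|X}$ to be non-increasing (which holds for the intended $q$-ary symmetric channel with $\epsilon<1-\tfrac{1}{q}$), and justifying that $|\{s: d(s,\shat)=k/n\}|=\bar{N}_k$ via symmetry of $d$ — which is a welcome tightening rather than a different argument.
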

\begin{proof} Since $P_S(s)\equiv \frac{1}{|\Sscr|}$, we let all the dual variables in Theorem~\ref{thm:generalconverse} to be independent of $\delta(s)$. Consequently, by Theorem~\ref{thm:generalconverse}, for any concave differentiable function $g$ satisfying, $g'(m) \leq \frac{ \Pbar_{Y|X}(m)}{\mid \Sscr \mid},$ for all $m \in [0,1]$, we have the following dual variables feasible for DP, $\mu(s,x,y,\shat) \equiv 0$, $\lambda^a(s,\shat,y) \equiv g(d(s,\shat))$, $\lambda^b(x,s,y) \equiv -g(d(x,y))+d(x,y)g'(d(x,y))$, $\gamma^a(s) \equiv \min_x \sum_y \lambda^b(x,s,y)$ and $\gamma^b(y) \equiv \min_{\shat}\sum_s \lambda^a(s,\shat,y)$.

We now evaluate the RHS of \eqref{eq:generalconverse}. Since $| N_{k}(x)|=\Nbar_k$ is independent of $x$, we get that for any $s,x \in \Fbb_q^n$, 
\begin{align*}
&\sum_y \left[-g(d(x,y))+d(x,y)g'(d(x,y))\right]\\&\qquad \qquad \qquad= \sum_{k=0}^n \Nbar_k\bigg[-g\left(\frac{k}{n}\right)+\frac{k}{n}g'\left(\frac{k}{n}\right)\bigg],
\end{align*} 
which is independent of $s,x \in \Fbb_q^n$. Consequently,
\begin{align*}& \sum_s\min_x \sum_y \left[-g(d(x,y))+d(x,y)g'(d(x,y))\right]\\
&\qquad\qquad \qquad=|\Sscr|\sum_{k=0}^n \Nbar_k \bigg[-g\left(\frac{k}{n}\right)+\frac{k}{n}g'\left(\frac{k}{n}\right)\bigg].\end{align*} 
Similarly, we get that 
  $$\sum_y \min_{\shat} \sum_sg(d(s,\shat))=|\Yscr|\sum_{k=0}^n \Nbar_k g\left(\frac{k}{n}\right).$$ Moreover, $|\Sscr|=| \Yscr|$. Then, from the lower bound in \eqref{eq:generalconverse}, we get that \begin{align*}
&\Ebb\left[d(S,\Shat)\right] \geq \OPT(\SC)\geq \OPT(\DP)\nonumber\\
 &\geq \sup_{g}\bigg[|\Sscr |  \sum_{k=0}^n \bigg[  \Nbar_k\hspace{-0.1cm}\left[-g\left(\frac{k}{n}\right)+\frac{k}{n}g'\left(\frac{k}{n}\right)\right]\hspace{-0.05cm}+ \hspace{-0.1cm}\Nbar_k g\left(\frac{k}{n}\right) \bigg] \bigg]\\
&= \sup_g \bigg[ |\Sscr| \sum_{k=0}^n\frac{k}{n}g'\left(\frac{k}{n}\right) \Nbar_k \bigg]\\
&\stackrel{(a)}{=}\sum_{k=0}^n\frac{k}{n}\Pbar_{Y|X}\left(\frac{k}{n}\right)\Nbar_k,
\end{align*}
where the equality in (a) follows since supremum over all concave differentiable functions imply that taking $g'(x)\equiv \frac{\Pbar_{Y|X}(x)}{| \Sscr |}$ gives the strongest bound.
\end{proof}

We now consider the problem of transmitting a $q$-ary uniform source over a $q$-ary symmetric channel which can be addressed as a special case of the above result. In this case, we have $\mathcal{X}=\mathcal{S}=\mathcal{Y}=\mathcal{\widehat{S}}=\Fbb_q^n$. The source distribution $P_S(s)=\frac{1}{\mid \Sscr\mid}$ and the channel is symmetric and memoryless, given by \begin{align}
&\hspace{-0.4cm}P_{Y|X}(y|x)=\prod_{i=1}^n P_{Y_i|X_i}(y_i|x_i),\qquad \mbox{where}\non\\&\hspace{-0.4cm}P_{Y_i|X_i}(y_i|x_i)\equiv \frac{\epsilon}{q-1}\Ibb\{y_i\neq x_i\}+ (1-\epsilon) \Ibb\{x_i=y_i\},   \label{eq:qarychannel}
\end{align}where $\epsilon < 1-\frac{1}{q}$. Let the cost function be $\kappa(s,x,y,\shat)\equiv \frac{1}{n}\sum_{i=1}^n \Ibb\{s_i \neq\shat_i\}\equiv \frac{d_{s,\shat}}{n}$. Then, from the lower bound in \eqref{eq:specialgeneralconverse}, we have the following tight lower bound.
\begin{corollary}\label{cor:qaryUsSC}
Consider problem SC with $\mathcal{X}=\mathcal{S}=\mathcal{Y}=\mathcal{\widehat{S}}=\Fbb_q^n$, where $n \in \mathbb{N }$.
Let $P_S(s)\equiv \frac{1}{|\Sscr|}$, the channel be $q$-ary, symmetric and memoryless as given in \eqref{eq:qarychannel} which can be expressed as $\Pbar_{Y|X}(\frac{d_{x,y}}{n})\equiv\left(\bigl(\frac{\epsilon}{q-1}\bigr)^{\frac{d_{x,y}}{n}}(1-\epsilon)^{1-\frac{d_{x,y}}{n}}\right)^n$ and $\kappa(s,x,y,\shat)\equiv \frac{d_{s,\shat}}{n}$. Then, for any $n,q\in \Nbb$ and $0<\epsilon<1-\frac{1}{q}$,
 \[ \epsilon \geq \OPT(\rm{SC}) \geq \OPT(\LP)=\OPT(\DP)\geq \epsilon.\] Consequently, LP is a tight relaxation of SC for any $n \in \Nbb$ and $\epsilon \in (0,1-\frac{1}{q})$.
\end{corollary}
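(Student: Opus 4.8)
The plan is to apply the preceding theorem (the "$q$-ary, uniform source, $d=\dbar$, symmetric metric" converse) with $d=\dbar$ equal to the normalized Hamming distance and $\Pbar_{Y|X}(m)\equiv\bigl(\bigl(\tfrac{\epsilon}{q-1}\bigr)^{m}(1-\epsilon)^{1-m}\bigr)^n$, obtaining the lower bound $\OPT(\DP)\ge\sum_{k=0}^n \tfrac{k}{n}\Nbar_k\Pbar_{Y|X}(\tfrac kn)$, and then simply to evaluate this sum and observe it equals $\epsilon$. Separately, for the matching upper bound $\OPT(\SC)\le\epsilon$, I would exhibit the uncoded (identity) code: take $f$ and $g$ to be the identity map on $\Fbb_q^n$ (this is legitimate since $\Xscr=\Sscr=\Yscr=\Sscrhat=\Fbb_q^n$), so that $\Shat=Y$ and $S=X$, whence $\Ebb[\tfrac1n d_{S,\Shat}]=\Ebb[\tfrac1n d_{X,Y}]=\tfrac1n\sum_i\Pbb[X_i\ne Y_i]=\epsilon$ by the channel law \eqref{eq:qarychannel}. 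Chaining $\epsilon\ge\OPT(\SC)\ge\OPT(\LP)=\OPT(\DP)\ge\epsilon$ forces equality throughout, which also gives the claimed tightness of the LP relaxation.

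The computational heart is the evaluation of $\sum_{k=0}^n \tfrac kn\,\Nbar_k\,\Pbar_{Y|X}(\tfrac kn)$. Here $\Nbar_k=\binom nk (q-1)^k$ is the number of $y\in\Fbb_q^n$ at Hamming distance $k$ from a fixed $x$, and $\Pbar_{Y|X}(\tfrac kn)=\bigl(\tfrac{\epsilon}{q-1}\bigr)^k(1-\epsilon)^{n-k}$, so the product $\Nbar_k\Pbar_{Y|X}(\tfrac kn)=\binom nk\epsilon^k(1-\epsilon)^{n-k}$ is exactly the $\mathrm{Bin}(n,\epsilon)$ pmf at $k$. Therefore
\begin{align*}
\sum_{k=0}^n \frac kn\,\Nbar_k\,\Pbar_{Y|X}\!\left(\frac kn\right)
=\frac1n\sum_{k=0}^n k\binom nk\epsilon^k(1-\epsilon)^{n-k}
=\frac1n\cdot n\epsilon=\epsilon,
\end{align*}
using that the mean of $\mathrm{Bin}(n,\epsilon)$ is $n\epsilon$. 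I should also check the hypotheses of the invoked theorem: $\Fbb_q^n$ has $P_S$ uniform, $d=\dbar=\tfrac1n d_{x,y}$ takes values in $\{0,\tfrac1n,\dots,1\}$, and $|N_k(x)|=\binom nk(q-1)^k$ is manifestly independent of $x$ by the symmetry of the Hamming metric — so the theorem applies verbatim, and one must confirm the channel can be written in the stated form $\Pbar_{Y|X}(\tfrac{d_{x,y}}{n})$, which is immediate from the memoryless product structure in \eqref{eq:qarychannel}.

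I do not expect a genuine obstacle here — the statement is essentially a corollary-level specialization. The only point requiring a modicum of care is the legitimacy of the "uncoded" code as a feasible point of $\SC$: $\SC$ optimizes over deterministic encoder/decoder pairs $(f,g)$ with $f:\Sscr\to\Xscr$, $g:\Yscr\to\Sscrhat$, and since all four alphabets coincide with $\Fbb_q^n$ the identity maps are admissible, so $\OPT(\SC)\le\epsilon$ is valid; and then the chain of inequalities $\OPT(\SC)\ge\OPT(\LP)=\OPT(\DP)$ (from Theorem~\ref{thm:lpisrelaxation}, Theorem~\ref{thm:strongdual}, and Lemma~\ref{lem:dualfeasiblebound}) together with the already-established $\OPT(\DP)\ge\epsilon$ closes the sandwich. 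The final sentence about LP being tight is then just the observation that all inequalities in $\epsilon\ge\OPT(\SC)\ge\OPT(\LP)\ge\OPT(\DP)\ge\epsilon$ collapse to equalities, so $\OPT(\LP)=\OPT(\SC)=\epsilon$.
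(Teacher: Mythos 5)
Your proposal is correct and follows the paper's own argument: the lower bound comes from applying the preceding theorem with $\Nbar_k=\binom{n}{k}(q-1)^k$ and recognizing the binomial mean, and the upper bound comes from uncoded (identity) transmission, after which the chain of inequalities collapses. No gaps; this is essentially the same proof as in the paper.
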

\begin{proof}
To get the upper bound on the minimum probability of error, we consider $\SC$ with $Q_{X|S}(x|s)\equiv \Ibb\{x=s\}$ and $Q_{\Shat|Y}(\shat|y)\equiv \Ibb\{\shat=y\}$ (\ie uncoded transmission). It can be easily seen that the corresponding cost of $\SC$ is $\epsilon$. For the lower bound, using $\bar{N}_k=\Comb{n}{k}(q-1)^k$ in \eqref{eq:specialgeneralconverse}, we have $$\sum_{k=0}^n \frac{k}{n} \Comb{n}{k}(q-1)^k \biggl(\frac{\epsilon}{q-1}\biggr)^{k}(1-\epsilon)^{n-k}=\epsilon.$$ This completes the proof.
\end{proof}
In the problem setup in Corollary~\ref{cor:qaryUsSC}, if $q=2$, we have the problem of transmitting a binary uniform source over binary symmetric channel.
 It is well known that for a BUS-BSC system, uncoded transmission of signals achieve the minimum expected average bit-wise distortion of $\epsilon$ for all blocklengths. 

This concludes our main results in the point-to-point setting. We have shown that for many cases of interest, the linear programming relaxation yields tight bounds and that it leads to new converses. In the following section we consider an extension to a networked setting and derive an improvement on the converse of Zhou \etal \cite{zhou2016successive} for successively refinable source-distortion measure triplets.

\section{LP Relaxation of a Networked Problem} \label{sec:networked} 
The casting of a point-to-point communication system as an equivalent optimization problem over joint probability distributions can also be extended to the case of a network. The model we consider here is similar but not identical to the discrete memoryless multicast network in~\cite[Chapter 18]{el2011network}. We cast it as an optimization problem and derive its LP relaxation.

Consider a network on a directed acyclic graph with $N$ nodes. Suppose a random variable $Y_i$ is the information available at node $i$ and $X_i$ is the random variable to be produced according to an unknown mapping, $f_i$, \ie $X_i=f_i(Y_i)$. The spaces of $Y_i,X_i$ are $\Yscr_i$ and $\Xscr_i$ respectively and these are assumed to be fixed and finite. $Y_i$ may itself be a block of symbols; in that case we are assuming that the \textit{entire} block is available at node $i$ to produce $X_i$. Assume that information available at node $1$ is the ``source'', say $S=:Y_1$ with a given distribution. A subset of the nodes, say $\Oscr \subseteq \{1,\hdots,N\}$, are output nodes. The output at these nodes is denoted $X_j:=\Sscrhat_j, j\in \Oscr.$ 
The transformation between $X_1,\hdots,X_N$ and $Y_1,\hdots,Y_N$ is given according to a known kernel, abstractly represented as 
\[P_{Y_1,\hdots,Y_N|X_1,\hdots,X_N}.\]
This kernel may be further decomposed based on the structure of the graph. 
We assume that the network has no feedback. Specifically, if there is a directed path from node $i$ to node $j$, then $Y_i$ does not depend on the choice of $f_j.$ 
Hence, the joint distribution of all variables in the network factors as in a graphical model:
\[Q_{X_1,\hdots,X_N,Y_1,\hdots,Y_N} \equiv P_{Y_1,\hdots,Y_N|X_1,\hdots,X_N} \prod_{i=1}^N Q_{X_i|Y_i}.\]

Assume there is a loss function $\kappa(X_1, \hdots, X_N, Y_1,\hdots, Y_N)$ whose expectation we want to minimize over the functions $f_1,\hdots,f_n.$ The laws of $X_1,\hdots,X_N$ and $Y_1,\hdots, Y_N$ are fixed once $f_1,\hdots,f_N$ are fixed and the expectation is with respect to the resulting joint law.  

Let $z$ denote the tuple $(x_1,\hdots,x_N,y_1,\hdots,y_N) \in \Zscr := \prod_{i=1}^N \Xscr_i \times \prod_{i=1}^N \Yscr_i$. Once again, instead of optimizing over functions $f_1,\hdots, f_N$ we will optimize over kernels $Q_{X_i|Y_i},i=1,\hdots,N.$ This results in the following formulation.
$$
\problemsmalla{SCN}
	{Q,Q_{X_i|Y_i}, i=1,\hdots,N}
	{\displaystyle \mathbb{E}[\kappa(X_1,\hdots,X_N,Y_i,\hdots,Y_N)]}
				 {\begin{array}{r@{\ }c@{\ }l}
				 				 Q(z)= P_{Y_1,\hdots,Y_N|X_1,\hdots,X_N}\hspace{-1mm} & & \hspace{-5mm}\prod_{i=1}^N Q_{X_i|Y_i}(z), \forall z \in \Zscr,\\
\sum_{x_i} Q_{X_i|Y_i}(x_i|y_i) &=&1, \ \ \forall y_i \in \Yscr_i, \forall i,\\
Q_{X_i|Y_i}(x_i|y_i) &\geq & 0,  \, \; \, \hspace{-1mm}\forall x_i \in \Xscr_i, y_i \in \Yscr_i, \forall i.
	\end{array}}
	$$ 
The above optimization problem 	is equivalent to a multilinear polynomial optimization problem in the variables $Q_{X_i|Y_i}, i=1,\hdots,N$, 
with separable constraints. It is easy to show that a solution of this problem also lies at an extreme point of the feasible region, which in this case, corresponds to deterministic mappings $f_1,\hdots,f_N$ such that $Q_{X_i|Y_i}(x_i|y_i) \equiv \I{x_i=f_i(y_i)}.$ Hence SCN is equivalent to the minimization of $\Ebb[\kappa(X_1,\hdots,X_N,Y_1,\hdots,Y_N)]$ over $f_1,\hdots,f_N.$

When expressed abstractly the feasible region of SCN,
\begin{figure}
\centering
\includegraphics[scale=.3, clip=true, trim=.3in 4.5in .1in 3in]{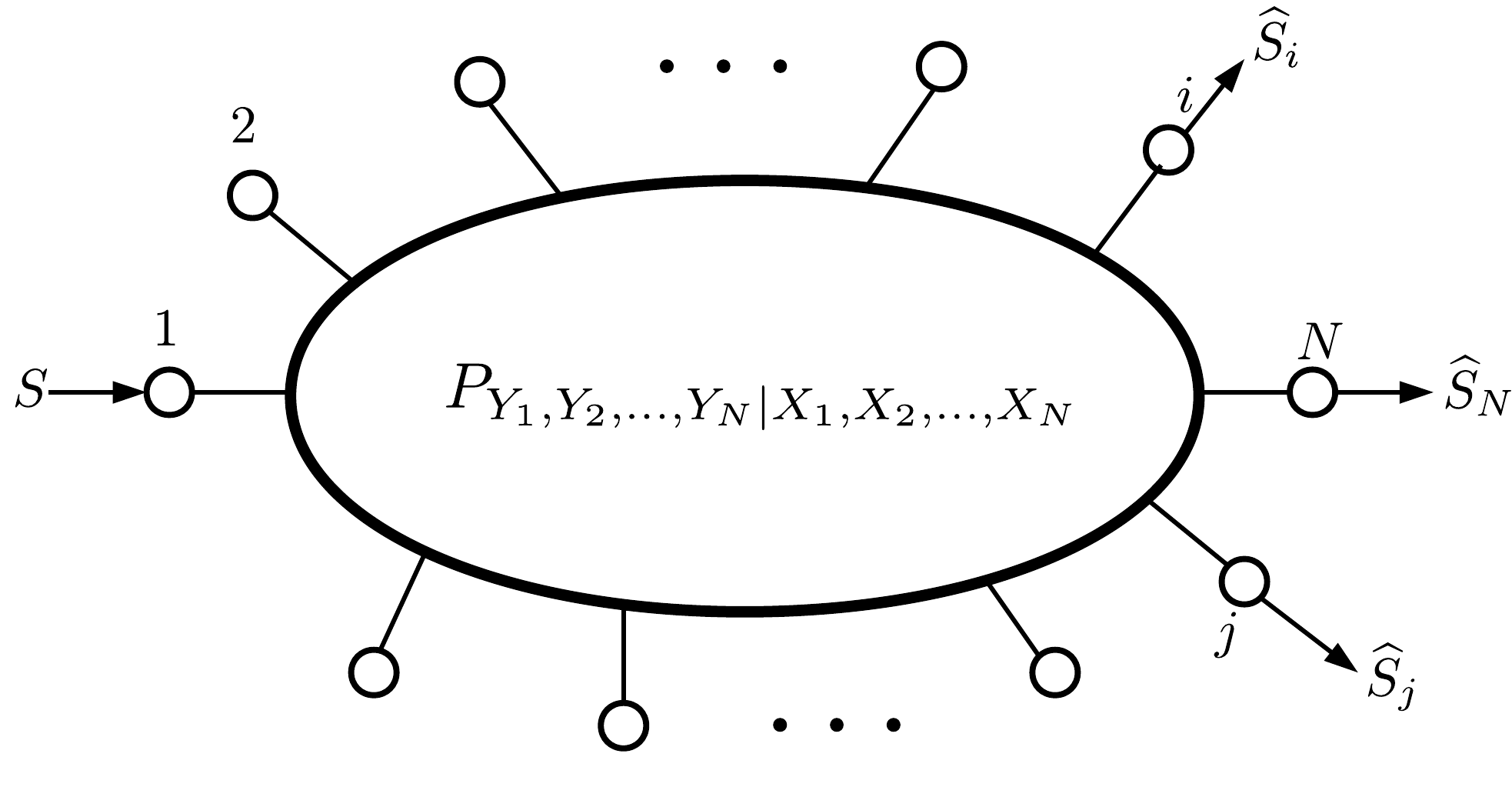}
\caption{A networked setting with source at node 1, some destination nodes and topology abstracted by the kernel $P_{Y_1,\hdots,Y_N|X_1,\hdots,X_N}.$}
\end{figure}
$\FEA(\SCN)$, takes the form,
\begin{align*}\FEA(\SCN)&=\bigg\{\hspace{-0.1cm}(v,u_1,\hdots,u_N )\hspace{-0.1cm}\mid \hspace{-0.1cm}u_i \in \hspace{-0.2cm}\prod_{y_i\in \Yscr_i} \hspace{-0.05cm}\Pscr(\Xscr_i), \hspace{0.035cm}\forall i=1,\hdots,N,\\ &\qquad     \mbox{and} \ v=\pi(u_1,u_2,\hdots,u_N)\bigg\},
\end{align*}
where $u_i = (u_i(x_i,y_i))_{x_i\in \Xscr_i,y_i \in \Yscr_i}$ denotes $Q_{X_i|Y_i}$ and  $\pi$ is a polynomial multilinear function of $(u_1,\hdots u_N)$ of degree  $N$.
Thus $u_i$ is constrained to lie in the $|\Yscr_i|$ fold product $\Pscr(\Xscr_i) \times \cdots \times \Pscr(\Xscr_i).$
To explain the formulation, recall that in the point-to-point setting (problem $\SC$), the variable $Q_{X|S}(\cdot|s)$ is constrained to be a probability distribution in $\Pscr(\Xscr)$ for each $s \in \Sscr.$ Thus $Q_{X|S}$ is constrained to be in the $|\Sscr|$-fold product $\Pscr(\Xscr) \times \cdots \times \Pscr(\Xscr).$ Similarly, $Q_{\Shat|Y}$ is constrained to be in the $|\Yscr|$-fold product $\Pscr(\Sscrhat) \times \cdots \times \Pscr(\Sscrhat).$ Thus, $\SCN$ is a multilinear polynomial optimization where each variable is constrained to lie in the Cartesian product of probability simplices.


To obtain a convex relaxation as in the case of $\SC$, we adopt the following lift-and-project approach. The approach we adopt here is a modification of the approach in \cite{sherali1992global} for polynomial optimization problems. 
Let $\bar{N}=\{1,\hdots,N\}$. Then, for any nonempty $
\mathcal{I} \subseteq \Nbar$, $\ybf = \ybf_{\Iscr} \in \prod_{i \in \Iscr } \Yscr_i$ 
and $\xbf =\xbf_{\Iscr} \in \prod_{i \in \Iscr} \Xscr_i$,  we introduce
new variables $U_{\Iscr}(\xbf, \ybf)$ to denote the product \begin{align}
\prod_{i \in \mathcal{I}} u_{i}(x_i,y_i) \mapsto U_{\Iscr}(\xbf, \ybf),  
\label{eq:newvariables}\end{align} 
where $\ybf=(y_i)_{i\in \Iscr}$ and $\xbf=(x_i)_{i\in \Iscr}.$ 
Now we obtain a set of valid inequalities of maximum degree $N$ in the space of these new variables. One set of valid inequalities is obtained by considering bounds on variables. 
We consider distinct products from combinations of upper and lower bounds, given as \begin{align}\prod_{i \in \mathcal{I}_1}(u_i(x_i,y_i)-0) \prod_{j \in \mathcal{I}_2}(1-u_{j}(x_j,y_j)) \geq 0, \label{eq:validinequality}
\end{align}for each $y_i \in \Yscr_i$, $x_i \in \Xscr_i$, $y_j\in \Yscr_j$, $x_j\in \Xscr_j$, and  $ \Iscr_1, \Iscr_2 \subseteq \bar{N}$ such that $ \mathcal{I}_1 \cap \mathcal{I}_2 =\emptyset$. Replacing the product terms with the newly introduced variables in \eqref{eq:newvariables}, we get valid inequalities in the new space of additional variables.  Notice that by construction, only products that are replaceable by variables given in \eqref{eq:newvariables} arise in \eqref{eq:validinequality}. Notice that nonnegativity of the variables $U=\{U_{\Iscr}\}_{I \subseteq \bar{N}\backslash \emptyset}$ follows from \eqref{eq:validinequality} by considering the $\Iscr=\Iscr_1$ and $\Iscr_2 =\emptyset$ in \eqref{eq:validinequality}.

We further obtain additional equality constraints as follows. For any node $i\in \bar{N}$, and any $y_i \in \Yscr_i$, we multiply the equality constraints corresponding to $u_i(\cdot,y_i)$ with the product of $u_j(x_j,y_j)$ taken over $j\in \Jscr \subseteq \bar{N}\backslash \{i\}$, for every possible $y_j\in \Yscr_j$ and $x_j \in \Xscr_j$, 
\begin{align} \prod_{j \in \Jscr}u_{j} (x_j,y_j)\sum_{x_i\in \Xscr} u_i(x_i,y_i)=\prod_{j \in \Jscr}u_{j} (x_j,y_j).\label{eq:equalityconstraintsgeneration}\end{align} By replacing the product terms with new variables defined in \eqref{eq:newvariables}, we get the following equality constraints,
\begin{align}
\sum_{x_i \in \Xscr_i}U_{\Jscr \cup \{i\}}(\xbf',\ybf')=U_{\Jscr}(\xbf,\ybf), \label{eq:equalityconstraints} 
\end{align} 
$\forall i \in \bar{N}, \ybf \in \prod_{j \in \Jscr} \Yscr_j , \xbf \in \prod_{j \in \Jscr} \Xscr_j$, and $y_i \in \Yscr_i$ with $\ybf'=(\ybf,y_i)$,  $\xbf'=(\xbf,x_i)$. Finally, we have the ``boundary condition",
\begin{equation}
U_{\{i\}}(x_i,y_i) \equiv  Q_{X_i|Y_i}(x_i|y_i), \quad \forall i\in \bar{N}.\label{eq:bdry} 
\end{equation}
Thus the LP relaxation of SCN is given by,
$$
\problemsmalla{LPN}
	{Q_{X_i|Y_i}, i=1,\hdots,N,U}
	{\displaystyle \sum_z \kappa(z)P_{Y_1,\hdots,Y_N|X_1,\hdots,X_N} U_{\bar{N}}(z)}
				 {\begin{array}{r@{\ }c@{\ }l}
				 {\rm Eq} \ \eqref{eq:validinequality},\ \eqref{eq:equalityconstraints} &\aur& \eqref{eq:bdry} \ {\rm hold}, \\ 				 			
\sum_{x_i} Q_{X_i|Y_i}(x_i|y_i) &=&1, \ \ \forall y_i \in \Yscr_i, \forall i,\\
Q_{X_i|Y_i}(x_i|y_i) &\geq & 0,  \ \; \, \forall x_i \in \Xscr_i, y_i \in \Yscr_i, \forall i.
	\end{array}}
	$$ 
Once again, any feasible point of dual of LPN gives a lower bound on SCN.

It is evident that the number of constraints in LPN (and hence number of variables in its dual) is large. Constraint  \eqref{eq:equalityconstraints} results in $N(2^{N-1}-1)$ sets of constraints and there are $3^N-1$ sets of constraints from \eqref{eq:validinequality}. This is indicative of the immense complexity of the underlying geometry of networked problems, even for moderately sized networks.

In the following section, we consider the successive refinement source coding problem with a joint excess distortion probability as the loss criterion. We show that the LP based framework applied to this networked setting, in fact improves on the converse of Zhou \etal \cite{zhou2016successive} for successively refinable source-distortion measure triplets.
\subsection{Successive Refinement Source Coding Problem}\label{subsec:successiveref}
\begin{figure}
\centering
\includegraphics[scale=.4, clip=true, trim=.5in 5.4in .1in 3in]{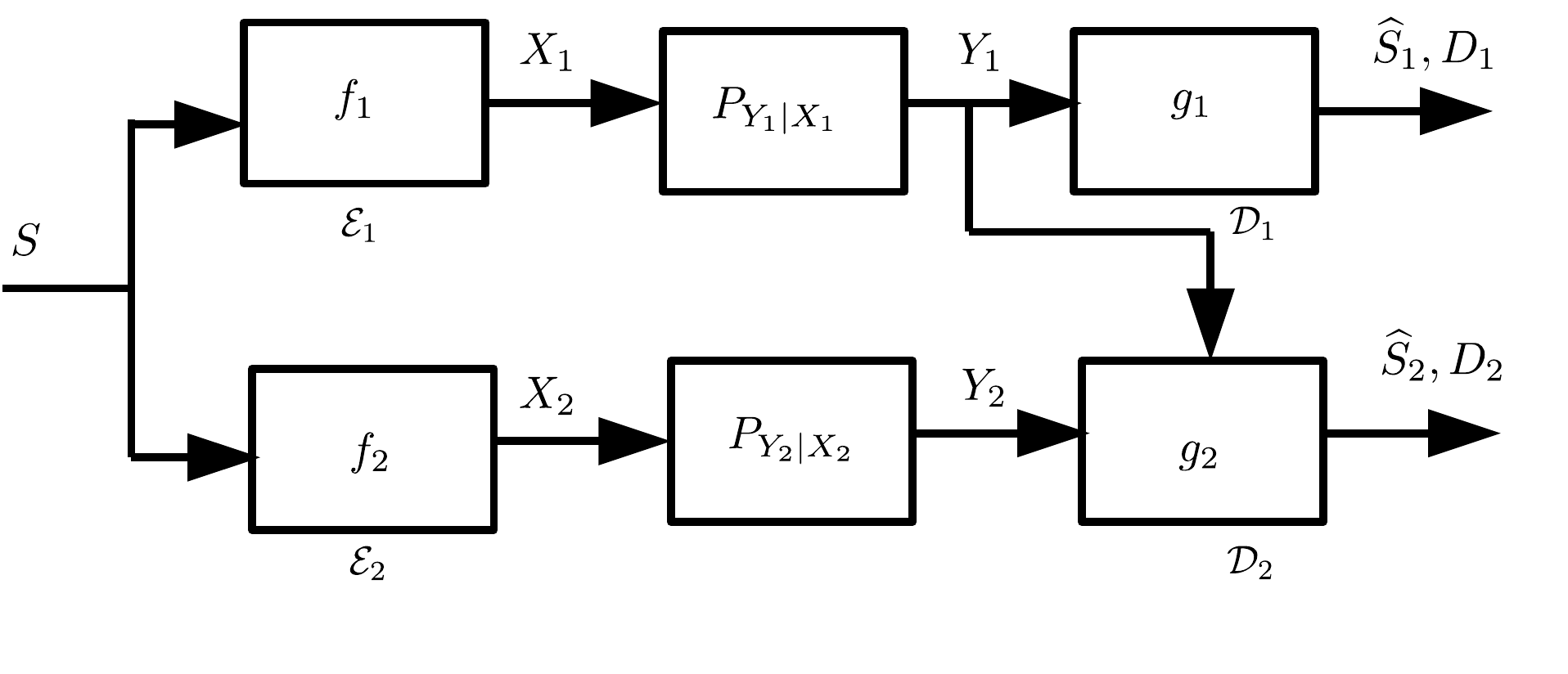}
\caption{The successive refinement source coding problem where $P_{Y_1|X_1}(y_1|x_1)\equiv \Ibb\{y_1=x_1\}$ and $P_{Y_2|X_2}(y_2|x_2)\equiv \Ibb\{y_2=x_2\}$.  }  
\label{fig:SR}
\end{figure}
Consider the successive refinement source coding problem (Fig~\ref{fig:SR}) for successively refinable  source-distortion measure triplets as defined in \cite{zhou2016successive}. Let $S,X_1,X_2,Y_1,Y_2,\Shat_1,\Shat_2$ represent random variables taking values in finite spaces, $\Sscr,\Xscr_1,\Xscr_2,\Yscr_1,\Yscr_2,\Sscrhat_1,\Sscrhat_2$, respectively.
The source message $S$ distributed according to $P_S$ is encoded by two separate encoders according to $f_1:\Sscr \rightarrow \Xscr_1$ and $f_2:\Sscr \rightarrow \Xscr_2$ to get the signals $X_1,X_2$ respectively. $X_1$ is sent through a channel with conditional distribution $P_{Y_1|X_1}$ with $Y_1$ as the output signal. Similarly, $X_2$ is sent through a channel with conditional distribution $P_{Y_2|X_2}$ to get the output signal $Y_2$. There are two decoders, $g_1:\Yscr_1 \rightarrow \Sscrhat_1$ which aim to recover the message $S$ from $Y_1$ under a distortion measure $d_1:\Sscr \times \Sscrhat_1 \rightarrow [0,\infty)$ and distortion level $D_1$, and $g_2:\Yscr_1 \times \Yscr_2 \rightarrow \Sscrhat_2$ which aims to recover $S$ from signals $Y_1$ and $Y_2$ under a distortion measure $d_2:\Sscr \times \Sscrhat_2 \rightarrow [0,\infty)$ and distortion level $D_2$. The output messages of these decoders are $\Shat_1$ and $\Shat_2$. Notice that for the successive refinement source coding problem, we assume that $P_{Y_1|X_1}(y_1|x_1)\equiv \Ibb\{y_1=x_1\}$ and $P_{Y_2|X_2}(y_2|x_2)\equiv \Ibb\{y_2=x_2\}$.
The loss criterion for the successive refinement source coding problem is taken to be the joint excess distortion probability, defined as,
$$\Ebb\bigl[ \Ibb\{d_1(S,\Shat_1)>D_1 \hspace{0.1cm}\mbox{or}\hspace{0.1cm} d_2(S,\Shat_2)>D_2\}\bigr].$$
We assume that the triplet $(S,D_1,D_2)$ is successively refinable in the sense of Zhou \etal \cite{zhou2016successive}. In this case the corresponding tilted informations are defined as,
\begin{align*}
j_{\Shat_i}(s,D_i)=\log \frac{1}{\Ebb[\exp(\lambda^{*}_iD_i-\lambda^{*}_i d_i(s,\Shat_i^{*}))]},\quad i=1,2,
\end{align*} where the expectation is with respect to the unconditional distribution $P_{\Shat_i}^{*}$ on $\Shat_i$ which achieves the infimum in the rate-distortion function, \begin{align}R_S^{(i)}(D_i)=\inf_{P_{\Shat_i|S}:\Ebb[d_i(S,\Shat_i)]\leq D_i}I(S;\Shat_i),\quad i=1,2 ,
\end{align} and $\lambda^{*}_i=-R_S^{(i)'}(D_i)$. Further, as in \eqref{eq:dtiltedproperty}, the tilted information satisfies,
\begin{align}
\Ebb[\exp(j_{\Shat_i}(S,D_i)+\lambda^{*}_iD_i-\lambda^{*}_id_i(S,\shat_i))] \leq 1, \quad \forall \shat_i \in \Sscrhat_i,  \label{eq:LtiltedSR}
\end{align} where $ i=1,2$ and the expectation is with respect to $P_S$.

 Just as in the point-to-point setting, the successive refinement source coding problem can be posed as the following optimization problem,
$$
\problemsmalla{SR}
	{f_1,f_2,g_1,g_2}
	{\displaystyle \mathbb{E}[\Ibb\{d_1(S,\Shat_1)>D_1 \hspace{0.1cm}\mbox{or}\hspace{0.1cm} d_2(S,\Shat_2)>D_2\}]}
				 {\begin{array}{r@{\ }c@{\ }l}
				 				 X_1&=&f_1(S),\qquad 
				 				 X_2=f_2(S),\\
								 \Shat_1&=&g_1(Y_1),\qquad
								\Shat_2=g_2(Y_1,Y_2).
	\end{array}}
	$$
	Consider the following joint probability distribution,
	\begin{align}Q(z)&\equiv P_{S}(s) Q_{X_1|S}(x_1|s) Q_{X_2|S}(x_2|s) P_{Y_1|X_1}(y_1|x_1)\times \non\\ &  P_{Y_2|X_2}(y_2|x_2)Q_{\widehat{S}_1|Y_1}(\shat_1|y_1) Q_{\Shat_2|Y_1,Y_2}(\shat_2|y_1,y_2) \label{eq:joint}.
	\end{align} Employing \eqref{eq:joint}, SR can be posed equivalently as the following optimization problem over randomized codes,
	 $$
\problemsmalla{SR}
	{Q,Q_{X_1|S},Q_{X_2|S},Q_{\Shat_1|Y_1}Q_{\widehat{S}_2|Y_1,Y_2}}
	{\displaystyle \sum_{z}\kappa(z) Q(z)}
				 {\begin{array}{r@{\ }c@{\ }l}
				 Q(z)&\mbox{is}& \hspace{0.1cm} \mbox{as given in \eqref{eq:joint}},\\
				 				 \sum_{x_1} Q_{X_1|S}(x_1|s)&=& 1 \quad \forall s \in \mathcal{S},\\
				 				  \sum_{x_2} Q_{X_2|S}(x_2|s)&=& 1 \quad \forall s \in \mathcal{S},\\
				 				   \sum_{\shat_1} Q_{\Shat_1|Y_1}(\shat_1|y_1)&=& 1 \quad \forall y_1 \in \Yscr_1,\\
				 				    \sum_{\shat_2} Q_{\Shat_2|Y_1,Y_2}(\shat_2|y_1,y_2)&=& 1 \quad \forall y_1 \in \Yscr_1,y_2\in \Yscr_2\\
 Q_{X_1|S_1}(x_1|s_1) &\geq& 0\quad \forall s_1 \in \mathcal{S}_1,x_1 \in \mathcal{X}_1,\\
 Q_{X_2|S_2}(x_2|s_2) &\geq& 0\quad \forall s_2 \in \mathcal{S}_2,x_2 \in \mathcal{X}_2,\\
  Q_{\widehat{S}_1|Y_1}(\shat_1|y_1) &\geq& 0 \quad \forall \shat_1 \in \mathcal{\widehat{S}}_1,y_1 \in \mathcal{Y}_1,\\
  Q_{\widehat{S}_2|Y_1,Y_2}(\shat_2|y_1,y_2) &\geq& 0 \quad \forall \shat_2 ,y_1 ,y_2,
	\end{array}}
	$$ where $\Zscr:=\Sscr \times \Xscr_1 \times \Xscr_2 \times \Yscr_1 \times \Yscr_2 \times \Sscrhat_1 \times \Sscrhat_2$ and $z \in \Zscr$,  $\kappa(z)\equiv \Ibb\{d_1(s,\shat_1)>D_1 \hspace{0.2cm}\mbox{or}\hspace{0.2cm}d_2(s,\shat_2)>D_2\}$.
	To obtain the LP relaxation of the problem SR, we resort to the approach described in the first part of this section. However, since we have two encoders and two decoders, the number of additional valid constraints the approach generates is too large. Thus, for the sake of analytical ease, we omit the inequality constraints generated according to \eqref{eq:validinequality} and include only few of the equality constraints generated from \eqref{eq:equalityconstraints}. 
	
	%
Let $\Nbar=\{\mathcal{E}_1,\mathcal{E}_2,\mathcal{D}_1,\mathcal{D}_2\}$ represent the set of nodes where $\mathcal{E}_1,\mathcal{E}_2,\mathcal{D}_1,\mathcal{D}_2$ are as shown in Fig~\ref{fig:SR} with the corresponding probability distributions, $u_{\mathcal{E}_1}=Q_{X_1|S}$, $u_{\mathcal{E}_2}=Q_{X_2|S}$, $u_{\mathcal{D}_1}=Q_{\Shat_1|Y_1}$ and $u_{\mathcal{D}_2}=Q_{\Shat_2|Y_1,Y_2}$. To obtain a converse for the successive refinement problem, it is enough to consider the constraints generated as explained below. In \eqref{eq:equalityconstraintsgeneration},  for $i=\mathcal{E}_1$, take $\mathcal{J}=\{\mathcal{E}_2,\mathcal{D}_1,\mathcal{D}_2\}, \Jscr = \{\mathcal{E}_2\}$ and $\Jscr = \{\mathcal{D}_2\}$ to generate new valid inequalities. Similarly,   for $i=\mathcal{E}_2$, take $\mathcal{J}=\{\mathcal{E}_1,\mathcal{D}_1,\mathcal{D}_2\}$ and $\Jscr=\{\mathcal{D}_1\}$; for $i=\mathcal{D}_1$, take $\mathcal{J}=\{\mathcal{E}_1,\mathcal{E}_2\}$, $\Jscr = \{\mathcal{E}_1,\mathcal{D}_2\}$ and finally for $i=\mathcal{D}_2$, take $\mathcal{J}=\{\mathcal{E}_1,\mathcal{E}_2,\mathcal{D}_1\}$ and $\{\mathcal{E}_2,\mathcal{D}_1\}$.
	Linearizing these constraints by introducing new variables and adding them to SR results in the LP relaxation LPSR of problem SR. LPSR and its dual DPSR are included in Appendix~\ref{app:SR}.
	
	The following theorem shows that LPSR in fact improves on the converse of Zhou \etal \cite[Lemma 9]{zhou2016successive} for successively refinable source-distortion measure triplets. The resulting improved converse in fact generalizes our improvement on the Kostina-Verd\'{u} converse for lossy source coding in Corollary~\ref{thm:lossysourcecoding} to the network setting. The proof is included in Appendix~\ref{app:SR}.
\begin{theorem}\label{thm:SR}
Consider problem SR where $S$ has the distribution $P_S$ and $\Xscr_1=\Yscr_1=\{1,\hdots,M_1\}$, $\Xscr_2=\Yscr_1=\{1,\hdots,M_2\}$. The channel conditional probabilities are $P_{Y_1|X_1}(y_1|x_1)\equiv \Ibb\{(y_1=x_1\}$ and $P_{Y_2|X_2}(y_2|x_2)\equiv \Ibb\{(y_2=x_2\}$.  The loss function is given as $\kappa(s,x,y,\shat)\equiv \Ibb\{d_1(s,\shat_1)>D_1 \hspace{0.2cm}\mbox{or}\hspace{0.2cm}d_2(s,\shat_2)>D_2\}$ where $d_1:\Sscr \times \Sscrhat_1 \rightarrow [0,\infty)$ and $d_2:\Sscr \times \Sscrhat_2 \rightarrow [0,\infty)$ are the distortion measures and  $D_1,D_2 \in [0,\infty)$ are the distortion levels. Let $(S,D_1,D_2)$ be successively refinable.
Then, for any code, the joint excess distortion probability satisfies, 
\begin{align}
&\Ebb[\Ibb\{d_1(S,\Shat_1)>D_1 \hspace{0.2cm}\mbox{or}\hspace{0.2cm}d_2(S,\Shat_2)>D_2\}]\geq \OPT(\rm{DPSR})\nonumber\\ & \geq  \sup_{\gamma_1,\gamma_2} \biggl \lbrace -\exp(-\gamma_1)+ \Pbb\left[j_{\Shat_1}(S,D_1)\geq \log M_1+ \gamma_1\hspace{0.1cm} \mbox{or} \nonumber\right.\\&\left.j_{\Shat_2}(S,D_2)\geq \log M_2+ \gamma_2\right]-\exp(-\gamma_2)+\sum_{s}\frac{P_S(s)}{2M_1}\times \non \\&\biggl  [\biggl(\exp(j_{\Shat_1}(s,D_1)-\gamma_1)+ \frac{1}{M_2}\exp(j_{\Shat_2}(s,D_2)-\gamma_2) \biggr)\times\non\\& \Ibb\{j_{\Shat_1}(s,D_1)\hspace{-0.1cm}< \hspace{-0.1cm}\log M_1+ \gamma_1, j_{\Shat_2}(s,D_2)< \hspace{-0.1cm}\log M_1M_2+ \gamma_2\}\biggr]\biggr \rbrace. \label{eq:zhouimprov} 
\end{align}
\end{theorem}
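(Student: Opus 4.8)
Following the blueprint of Theorem~\ref{thm:DPIimpliesgeneralkostina}, the plan is to construct an explicit feasible point of the dual program DPSR and compute its objective value, which by the linear programming weak/strong duality bound of Lemma~\ref{lem:dualfeasiblebound} (adapted to the networked LP relaxation LPSR) is a lower bound on $\OPT(\SC)$ for the successive refinement problem. The only genuinely new ingredient relative to the point-to-point case is that there are now two ``tilted information'' constraints, one for each decoder, and the loss function is the \emph{union} event $\Ibb\{d_1(S,\Shat_1)>D_1 \text{ or } d_2(S,\Shat_2)>D_2\}$; consequently the complement event appearing in the analogue of Case~2 of the proof of Theorem~\ref{thm:DPIimpliesgeneralkostina} is $\{d_1(s,\shat_1)\le D_1 \text{ and } d_2(s,\shat_2)\le D_2\}$, and the dual variables tied to this event should carry a \emph{sum} of the two tilted-information exponentials (with weights $1/M_1$ and $1/(M_1M_2)$ reflecting the uniform surrogate output distributions $P_{\Ybar_1}\equiv 1/M_1$, $P_{\Ybar_2|Y_1}\equiv 1/M_2$).

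The key steps, in order, are: (i) Write down DPSR explicitly (this is the dual of LPSR from Appendix~\ref{app:SR}); identify which dual variables correspond to the few equality constraints selected in Section~\ref{subsec:successiveref} and to the simplex/nonnegativity constraints. (ii) Set all ``$\mu$-type'' multipliers to zero, as in Remark~\ref{rem:constfeas}, so that only the analogues of (D1)--(D3) need to be verified. (iii) Choose the multiplier attached to the full variable $U_{\Nbar}(z)$ (equivalently the analogue of $\lambda^b$) to equal $P_S(s)\cdot P_{Y_1|X_1}(y_1|x_1)P_{Y_2|X_2}(y_2|x_2)$ on the region where the relevant channel likelihood ratio exceeds the tilted threshold and a clipped value $P_S(s)\cdot \tfrac{1}{M_1M_2}\exp(\cdots)$ otherwise — mirroring \eqref{eq:newlambdab}/\eqref{eq:DPimpliesgeneraldual}, but now with the threshold governed by $j_{\Shat_1}(s,D_1)$ and $j_{\Shat_2}(s,D_2)$ together. (iv) Choose the multiplier attached to the decoder-side variables (analogue of $\lambda^a$) to be $-P_S(s)\cdot\tfrac{1}{2M_1}\bigl(\exp(j_{\Shat_1}(s,D_1)-\gamma_1)+\tfrac1{M_2}\exp(j_{\Shat_2}(s,D_2)-\gamma_2)\bigr)$ supported on the complement event $\{d_1(s,\shat_1)\le D_1,\ d_2(s,\shat_2)\le D_2\}$; the factor $\tfrac12$ is the bookkeeping needed because this event's multiplier gets ``used'' in both of the node-$\mathcal{D}_1$ and node-$\mathcal{D}_2$ dual constraints. (v) Set $\gamma^a,\gamma^b$-type variables (for the source node $\mathcal{E}_1$ and for the decoder nodes) to the pointwise minima dictated by the remaining linear constraints, exactly as in \eqref{eq:DPimpliesgeneraldual}. (vi) Verify feasibility: the source-side constraint is trivial by construction; the decoder-side constraint uses the tilted-information inequality \eqref{eq:LtiltedSR} applied separately to $i=1$ and $i=2$ together with $\Ibb\{d_i(s,\shat_i)\le D_i\}\le \exp(\lambda^*_i(D_i-d_i(s,\shat_i)))$, exactly paralleling steps $(a)$--$(b)$ of the (D2) check in Theorem~\ref{thm:DPIimpliesgeneralkostina}; the constraint corresponding to (D3) is checked by the same two-case argument (distortion-event true vs.\ complement true) and a per-coordinate clipping argument. (vii) Finally evaluate the dual objective, which collapses — using $|N_k(x)|$-type counting is unnecessary here since the channels are identity — to the expression in \eqref{eq:zhouimprov}, and take the supremum over $\gamma_1,\gamma_2$.

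The main obstacle I expect is step (vi)--(vii): making the feasibility check for the (D3)-analogue go through when the loss function is a union of two events. One has to handle the four combinations of $\{d_1\le D_1\}$ vs.\ $\{d_1>D_1\}$ crossed with $\{d_2\le D_2\}$ vs.\ $\{d_2>D_2\}$, and show in each case that the left-hand side (a sum of the $\lambda^a$- and $\lambda^b$-type terms, summed appropriately over the auxiliary/type-like indices) is dominated by the right-hand side $\kappa(z)P_S(s)P_{Y_1|X_1}(y_1|x_1)P_{Y_2|X_2}(y_2|x_2)$. When $\kappa(z)=0$ (both distortions within tolerance) one needs the clipped terms in $\lambda^b$ to exactly cancel against the negative $\lambda^a$ terms — this is where the weights $\tfrac1{2M_1}$, $\tfrac1{2M_1M_2}$ and the choice of surrogate distributions $P_{\Ybar_1}\equiv 1/M_1$, $P_{\Ybar_2|Y_1}\equiv 1/M_2$ must be pinned down precisely so that the bookkeeping closes. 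A secondary subtlety is that only \emph{some} of the equality constraints \eqref{eq:equalityconstraints} were retained in LPSR, so I must make sure the constructed dual point is feasible for this \emph{reduced} dual (fewer primal constraints means fewer dual variables, hence a \emph{smaller} feasible region for the dual — the construction must not implicitly rely on a dropped constraint). Everything else is a direct, if tedious, transcription of the point-to-point argument, so once the two-event (D3) check is set up cleanly the computation of \eqref{eq:zhouimprov} and the comparison with Zhou \etal's converse (obtained by dropping the extra nonnegative clipped terms, exactly as in the Remark following Theorem~\ref{thm:DPIimpliesgeneralkostina}) should follow immediately.
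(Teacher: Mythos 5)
Your overall architecture is the paper's: build an explicit feasible point of DPSR that mirrors \eqref{eq:DPimpliesgeneraldual}, with uniform surrogate output distributions, a clipped encoder-side multiplier, decoder-side multipliers supported on the complement event $\{d_1(s,\shat_1)\le D_1,\ d_2(s,\shat_2)\le D_2\}$, feasibility checked via the tilted-information property \eqref{eq:LtiltedSR} and a case analysis on the union event, and a final supremum over $\gamma_1,\gamma_2$. But the specific allocation you propose in steps (iii)--(iv) is misplaced in a way that breaks exactly the verification you defer. In the paper's construction the halved sum $\tfrac{1}{2M_1}\bigl(\exp(j_{\Shat_1}(s,D_1)-\gamma_1)+\tfrac{1}{M_2}\exp(j_{\Shat_2}(s,D_2)-\gamma_2)\bigr)$ sits in the \emph{clipped branch of the encoder-side variable} $\lambda^c$ (the multiplier on the full product variable), while the two decoder-side multipliers $\lambda^a$ and $\lambda^b$ are kept \emph{separate}, each carrying only its own tilted term ($\tfrac{1}{M_1}\exp(j_{\Shat_1}-\gamma_1)$ and $\tfrac{1}{M_1M_2}\exp(j_{\Shat_2}-\gamma_2)$, unhalved) on the joint complement event. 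This placement matters for two reasons. First, the dual objective of DPSR is $\sum_s\eta^1(s)+\sum_{y_1}\eta^2(y_1)+\sum_{y_1,y_2}\eta^3(y_1,y_2)$, and only $\eta^1$ is fed (through $\mu$ and $\delta^1$) by $\lambda^c$; so the nonnegative improvement term in \eqref{eq:zhouimprov}, which carries the factor $P_S(s)/(2M_1)$, can only appear in the bound if the mixed sum is placed in $\lambda^c$. With your placement it would never reach the objective, since the decoder-side variables enter only through the $-\exp(-\gamma_1)$ and $-\exp(-\gamma_2)$ terms.

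Second, and more seriously, your single decoder-side multiplier carrying the mixed sum cannot be made feasible by the tilted-information argument. The variables that eventually bound $\eta^2$ and $\eta^3$, namely $\gamma(x_2,s,\shat_1,y_1,y_2)$ and $\theta(x_1,s,y_1,\shat_2,y_2)$ (and then $\delta^3(s,\shat_1,y_1)$, $\delta^2(s,\shat_2,y_1,y_2)$), each depend on only \emph{one} of $\shat_1,\shat_2$; hence the joint indicator must be relaxed to $\Ibb\{d_1(s,\shat_1)\le D_1\}$ (resp.\ $\Ibb\{d_2(s,\shat_2)\le D_2\}$) along the chain, and the (D2)/(D3)-type checks reduce to $\sum_s P_S(s)\exp(j_{\Shat_i}(s,D_i))\Ibb\{d_i(s,\shat_i)\le D_i\}\le 1$, which is exactly what \eqref{eq:LtiltedSR} gives. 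If the mixed sum rides on the decoder side, the same chain produces cross terms of the form $\sum_s P_S(s)\exp(j_{\Shat_2}(s,D_2)-\gamma_2)\Ibb\{d_1(s,\shat_1)\le D_1\}$, which \eqref{eq:LtiltedSR} does not control (there is no $d_2$ term to tilt against, and the indicator cannot keep its $\shat_2$ dependence because the dual variable's domain forbids it). So the ``bookkeeping'' you flag as the main obstacle is not merely tedious: with your allocation it fails, and the fix is precisely to move the mixed, halved term into the clipped branch of $\lambda^c$ and keep $\lambda^a,\lambda^b$ separated as above; the $\tfrac12$ is then what makes the Case-2 (both distortions within tolerance) cancellation against $\lambda^a+\lambda^b$ close. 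Your remaining steps (zeroing the unused multipliers, setting the $\eta$'s to the induced minima, the union-bound/Markov handling of the ``or'' event, and the caution about the reduced constraint set of LPSR) are consistent with the paper.
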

The improvement over \cite[Lemma 9]{zhou2016successive} is on account of the nonnegativity of the term in the second square bracket in \eqref{eq:zhouimprov}.

We end with a few observations on using this approach for networked problems. The LP relaxation of networked problem has in general a large number of constraints. Consequently, the number of variables in the dual of the relaxation is also large. This implies, both, greater flexibility in choosing values for dual variables to yield converses, and greater difficulty in ascertaining specific values for these dual variables.
\section{Conclusion}\label{conclusion}
This paper has presented a linear programming based approach to derive converses on finite blocklength joint source-channel problems. The finite blocklength joint source-channel coding was posed as an optimization problem over joint probability distributions. The resulting problem is nonconvex for which we presented a convex relaxation using LP relaxation.  Lower bounds on the optimal cost of the finite blocklength problem were obtained by constructing dual feasible points. We show that this approach recovers and improves on the known converses of Kostina and Verd\'{u} and implies the converse of Polyanskiy, Poor and Verd\'{u}. Further, we also derive a new general converse for finite blocklength joint source-channel coding which shows  that the LP relaxation gives tight lower bounds to the minimum expected average symbol-wise Hamming distortion of a $q$-ary uniform source over a $q$-ary symmetric channel for all blocklengths. We also discuss the extension of the relaxation to networked settings and show that the relaxation in fact improves on the converse of Zhou \etal \cite{zhou2016successive} for successively refinable source-distortion measure triplets.

\section*{Acknowledgments}
The authors thank the two anonymous reviewers for their comments, particularly a reviewer who made us aware of the work of Matthews~\cite{matthews2012linear}. Addressing these comments has led to a substantial improvement in this paper over its previous version.
\section{Appendices}\label{sec:appendices}
\appendices

\section{Proof of Theorems in Section~\ref{sec:optimisiation}}\label{app:sec3}
\begin{proofarg}[of Theorem~\ref{thm:extremepointsSC}]
We first show that if  $Q_{X|S}$, $Q_{\Shat|Y}$ are deterministic, then $(Q,Q_{X|S},Q_{\Shat|Y})$ $ \in \ext(\FEA(\SC))$.  Assume the contrary, \ie, $(Q, Q_{X|S},Q_{\Shat|Y}) \not \in \ext(\FEA(\SC))$.
  Then, there exist distinct vectors $(Q^1, Q_{X|S}^1,Q_{\Shat|Y}^1)$, $(Q^2, Q_{X|S}^2,Q_{\Shat|Y}^2)$  and $\alpha \in (0,1)$ such that
\begin{small}
 \begin{align*}
 &(Q(z), Q_{X|S}(x|s),Q_{\Shat|Y}(\shat|y)) =\alpha(Q^1(z), Q_{X|S}^1(x|s),Q_{\Shat|Y}^1(\shat|y)) \\&\quad +(1-\alpha)(Q^2(z), Q_{X|S}^2(x|s),Q_{\Shat|Y}^2(\shat|y)), \quad  \forall z.
 \end{align*} \end{small}
However, since $Q_{X|S}$, $Q_{\Shat|Y}$ are deterministic, they are extreme points of the space of marginal distributions $\Pscr(X|S)$ and $\Pscr(\Sscrhat| \Yscr)$ respectively.  Consequently, $Q_{X|S}^1=Q_{X|S}^2=Q_{X|S}$ and $Q_{\Shat|Y}^1=Q_{\Shat|Y}^2=Q_{\Shat|Y}$. Now, consider $Q(z)=\alpha Q^1(z)+(1-\alpha)Q^2(z)$, for all $z$, which evaluates to
\begin{align}&Q_{X|S}(x|s)Q_{\Shat|Y}(\shat|y)=\alpha Q^1_{X|S}(x|s)Q^1_{\Shat|Y}(\shat|y)\nonumber\\& \qquad  \qquad +(1-\alpha)Q^2_{X|S}(x|s)Q^2_{\Shat|Y}(\shat|y) \quad \forall s,x,y,\shat.\label{eq:deterministic}
\end{align} However, since $Q_{X|S}^1=Q_{X|S}^2=Q_{X|S}$ and $Q_{\Shat|Y}^1=Q_{\Shat|Y}^2=Q_{\Shat|Y}$, \eqref{eq:deterministic}
  implies $Q^{1}(z)=Q^{2}(z)=Q(z),$ for all $z$ and thereby contradicts the assumption that $(Q^1, Q_{X|S}^1,Q_{\Shat|Y}^1) \neq(Q^2, Q_{X|S}^2,Q_{\Shat|Y}^2)$. Hence $(Q, Q_{X|S},Q_{\Shat|Y}) \in \ext(\FEA(\SC))$.

We now prove the converse, \ie,
 if $(Q, Q_{X|S},Q_{\Shat|Y}) \in \ext(\FEA(\SC))$, then $Q_{X|S}$, $Q_{\Shat|Y}$ are deterministic. Assume the contrary. Let there exist atleast one, say, $s=s^{*}$ (or $y=y^{*}$) such that $Q_{X|S}(x|s^{*})$ (or, $Q_{\Shat|Y}(\shat|y^{*})$) is not deterministic and $(Q, Q_{X|S},Q_{\Shat|Y}) \in \ext(\FEA(\SC))$.
Since $Q_{X|S}(x|s^{*})$ is not deterministic, there exist $\Qbar_{X|S}(x|s^{*})$, $\Qhat_{X|S}(x|s^{*}) \in \Pscr(\Xscr|\Sscr)$ and $\alpha \in (0,1)$ such that $$Q_{X|S}(x|s^{*})= \alpha \Qbar_{X|S}(x|s^{*})+(1- \alpha) \Qhat_{X|S}(x|s^{*}),\quad \forall x.$$ Multiplying both sides by $P_{S}(s^{*})P_{Y|X}(y|x)Q_{\Shat|Y}(\shat|y)$  we get,
$$
Q(s^{*},x,y,\shat) \equiv \alpha \Qbar(s^{*},x,y,\shat)+(1-\alpha)\Qhat(s^{*},x,y,\shat).
$$ 
For $s \neq s^{*}$, we take $\Qbar_{X|S}(x|s)=\Qhat_{X|S}(x|s)=Q_{X|S}(x|s),$ for all $x$. Consequently, for $s\neq s^{*}$,
$$\Qbar(s ,x,y,\shat)= \Qhat(s ,x,y,\shat)=Q(s,x,y,\shat), $$ for all $x,y,\shat$. Thus, we get the following new vector which can be written as a convex combination of two different vectors,
 \begin{small} \begin{eqnarray*}
 \left( \begin{array}{c}
Q_{X|S}(x|s^{*})\\
Q_{X|S}(x|s \neq s^{*})\\
Q_{\Shat|Y}(\shat|y)\\
Q(s^{*},x,y,\shat)\\
Q(s \neq s^{*},x,y,\shat)\\
 \end{array} \right) \equiv \alpha & \left( \begin{array}{c}
\Qbar_{X|S}(x|s^{*})\\
Q_{X|S}(x|s \neq s^{*})\\
Q_{\Shat|Y}(\shat|y)\\
\Qbar(s^{*},x,y,\shat)\\
Q(s \neq s^{*},x,y,\shat)\\
 \end{array} \right)
 + \\ \qquad (1-\alpha) &\left( \begin{array}{c}
\Qhat_{X|S}(x|s^{*})\\
Q_{X|S}(x|s \neq s^{*})\\
Q_{\Shat|Y}(\shat|y)\\
\Qhat(s^{*},x,y,\shat)\\
Q(s \neq s^{*},x,y,\shat)\\ \end{array} \right).
 \end{eqnarray*} \end{small} 
This implies that $(Q, Q_{X|S},Q_{\Shat|Y}) \not \in \ext(\FEA(\SC))$ when there exists atleast one $s=s^{*}$ corresponding to which $Q_{X|S}(x|s^{*})$ is not deterministic.
  \end{proofarg}
\begin{proofarg}[of Theorem~\ref{thm:lpisrelaxation}]Consider $(Q_{X|S},Q_{\widehat{S}|Y},Q) \in \FEA({\SC})$ and let $W(z)=Q_{X|S}Q_{\widehat{S}|Y}(z)$ for all $z$. Then, $Q(z)\equiv P_SP_{Y|X} W(z)$.
Since $(Q_{X|S},Q_{\widehat{S}|Y},Q) \in \FEA({\SC})$, 
the constraints of SC also hold for LP. 
Clearly, $W(s,x,y,\shat) \geq 0$ for all $s,x,y,\shat$. Also, $\sum_x W(z)= \sum_x Q_{X|S}(x|s)Q_{\widehat{S}|Y}(\shat|y)= Q_{\widehat{S}|Y}(\shat|y)$ for all $s,\shat,y$. Similarly, $\sum_{\shat} W(z)= Q_{X|S}(x|s)$ for all $x,s,y$.
It is now sufficient to show that $-1+Q_{X|S}(x|s)+Q_{\widehat{S}|Y}(\shat|y)-W(z)\leq 0 $ for all $z$.
We have, \begin{align*}
&-1+Q_{X|S}(x|s)+Q_{\widehat{S}|Y}(\shat|y)-W(z)\\ =& -1+Q_{X|S}(x|s)+Q_{\widehat{S}|Y}(\shat|y)-Q_{X|S}(x|s)Q_{\widehat{S}|Y}(\shat|y)\\
=&(1-Q_{\widehat{S}|Y}(\shat|y))(Q_{X|S}(x|s)-1) \leq 0.
\end{align*}
Thus, if $(Q_{X|S},Q_{\widehat{S}|Y},Q) \in \FEA({\SC})$, then 
$(Q_{X|S},Q_{\widehat{S}|Y},W) \in \FEA(\LP)$ where  $W(z) \equiv Q_{X|S} Q_{\widehat{S}|Y}(z)$. This proves the claim.
%
\end{proofarg}

\begin{theorem}\label{thm:fDPI}
Consider the following optimization problem $\rm{DPI_{\fbf}}$, $$ \problemsmalla{$\rm{DPI_{\fbf}}$}
	{ a(\shat|s)}
	{\displaystyle \sum_{s,\shat}d(s,\shat)P_S(s)a(\shat|s)}
				 {\begin{array}{r@{\ }c@{\ }l}
				 				 \sum_{\shat} a(\shat|s)&=& 1 \hspace{0.05cm} : \eta(s) \qquad   \forall s\\
I_{\fbf}(aP_S)-C_{\fbf}&\leq &0 \hspace{0.05cm} :\lambda
 \\ 
%
 a(\shat|s) &\geq& 0 \hspace{0.05cm} : \psi(\shat|s) \quad \hspace{0.05cm} \forall \shat,s,\\
	\end{array}}  $$ where $I_{\fbf}(aP_S)$ is the mutual information of $S,\Shat$ under the distribution $a(\shat|s)P_S(s)$ and $\eta,\lambda,\psi$ are the Lagrange multipliers. Then, if at optimality we have $\lambda>0$, a minimizer $a^{*}$ of $\rm{DPI_{\fbf}}$ satisfies
	\[R_{\fbf}(d^{*})= I_{\fbf}(a^{*}P_S) =C_{\fbf},\]
where $d^{*}=\sum_{s,\shat}d(s,\shat)P_S(s)a^{*}(\shat|s),$ and $R_{\fbf}$ and $C_{\fbf}$ are defined in \eqref{eq:rf} and \eqref{eq:cf}, respectively.	 
\end{theorem}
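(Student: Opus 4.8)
The plan is to exploit the Lagrangian optimality conditions (KKT) for $\rm{DPI_{\fbf}}$ under the stated assumption that the multiplier $\lambda$ of the mutual-information constraint is strictly positive at optimality. The key structural fact I would use is that $R_{\fbf}(\dtilde)$ is a non-increasing, convex function of $\dtilde$ (this is standard for $\fbf$-rate-distortion functions, being the value function of a convex program parameterized by the distortion level), so its behavior is governed by whether its defining constraint $\Ebb[d(S,\Shat)] \le \dtilde$ is active.

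First I would invoke complementary slackness: since $\lambda>0$ at optimality, the constraint $I_{\fbf}(a^{*}P_S) - C_{\fbf} \le 0$ must be tight, i.e. $I_{\fbf}(a^{*}P_S) = C_{\fbf}$. This gives the second equality immediately. Second, I would argue $R_{\fbf}(d^{*}) = I_{\fbf}(a^{*}P_S)$. The inequality $R_{\fbf}(d^{*}) \le I_{\fbf}(a^{*}P_S)$ is immediate from the definition \eqref{eq:rf}, because $a^{*}$ is feasible for the minimization defining $R_{\fbf}(d^{*})$ (it achieves expected distortion exactly $d^{*}$, hence satisfies $\Ebb[d(S,\Shat)]\le d^{*}$). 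For the reverse inequality $R_{\fbf}(d^{*}) \ge I_{\fbf}(a^{*}P_S)$, I would suppose for contradiction that there is a conditional distribution $\atilde$ with $\Ebb_{\atilde}[d(S,\Shat)] \le d^{*}$ and $I_{\fbf}(\atilde P_S) < I_{\fbf}(a^{*}P_S) = C_{\fbf}$. Then $\atilde$ is feasible for $\rm{DPI_{\fbf}}$ (it satisfies the stochasticity constraint, the nonnegativity constraint, and strictly the information constraint), and its objective value $\sum_{s,\shat} d(s,\shat)P_S(s)\atilde(\shat|s) \le d^{*}$ equals the optimal value of $\rm{DPI_{\fbf}}$; so $\atilde$ is also optimal. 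But at $\atilde$ the information constraint is slack, so by complementary slackness the corresponding multiplier would have to be $0$ — and here I would need to reconcile this with the hypothesis that $\lambda>0$, which requires a small argument (e.g., that the multiplier is a property of the dual optimal solution, or a perturbation argument lowering the objective). The cleaner route for the reverse inequality is a sensitivity/perturbation argument: because $\lambda>0$, decreasing $C_{\fbf}$ (equivalently tightening the information budget) would strictly increase $\OPT(\rm{DPI_{\fbf}})$; equivalently, the monotone non-increasing function $R_{\fbf}$ and the optimal tradeoff meet exactly at $(d^{*}, C_{\fbf})$, forcing $R_{\fbf}(d^{*}) = C_{\fbf}$.

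The main obstacle I anticipate is making the reverse inequality $R_{\fbf}(d^{*}) \ge C_{\fbf}$ fully rigorous: one must rule out the possibility that the minimum distortion $d^{*}$ could also be achieved by a conditional distribution with strictly smaller $\fbf$-information, which would drop $R_{\fbf}(d^{*})$ below $C_{\fbf}$. This is where the hypothesis $\lambda>0$ is essential and must be used carefully — it says precisely that the $\fbf$-information budget $C_{\fbf}$ is \emph{binding} on the distortion, so any relaxation of the budget would help, hence $R_{\fbf}$ evaluated at $d^{*}$ cannot be below $C_{\fbf}$. I would formalize this either via the KKT stationarity conditions (writing the Lagrangian $\sum_{s,\shat} d(s,\shat)P_S(s)a(\shat|s) + \sum_s \eta(s)(\sum_{\shat}a(\shat|s)-1) + \lambda(I_{\fbf}(aP_S)-C_{\fbf}) - \sum_{s,\shat}\psi(\shat|s)a(\shat|s)$ and reading off that at $a^{*}$ the gradient balance ties the distortion directions to $\lambda$ times the $\fbf$-information directions) or, more transparently, via the convex-duality identity $R_{\fbf}'(d^{*}) = -\lambda < 0$ combined with $R_{\fbf}$ being the convex conjugate-type value function, which pins $(d^{*}, R_{\fbf}(d^{*}))$ to the point where the optimal $a^{*}$ lives, namely $R_{\fbf}(d^{*}) = I_{\fbf}(a^{*}P_S) = C_{\fbf}$. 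Everything else is routine bookkeeping with the definitions of $R_{\fbf}$ and $C_{\fbf}$.
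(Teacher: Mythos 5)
Your proposal is correct in substance but organized differently from the paper's proof. The paper writes down the KKT conditions of $\rm{DPI_{\fbf}}$ at $a^{*}$ and observes that, when $\lambda>0$, rescaling the stationarity condition by $1/\lambda$ produces a valid KKT system for the convex problem RD ($\min_a I_{\fbf}(aP_S)$ subject to $\Ebb[d(S,\Shat)]\le \tilde d$) at the level $\tilde d=d^{*}$; sufficiency of KKT for convex problems then identifies $a^{*}$ as a minimizer of RD, giving $R_{\fbf}(d^{*})=I_{\fbf}(a^{*}P_S)$ in one shot, and complementary slackness with $\lambda>0$ gives $I_{\fbf}(a^{*}P_S)=C_{\fbf}$. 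You instead split the first equality into two inequalities: $R_{\fbf}(d^{*})\le I_{\fbf}(a^{*}P_S)$ by feasibility of $a^{*}$ in the minimization defining $R_{\fbf}(d^{*})$, and the reverse inequality by contradiction, noting that a competitor $\tilde a$ with $\Ebb_{\tilde a}[d(S,\Shat)]\le d^{*}$ and $I_{\fbf}(\tilde aP_S)<C_{\fbf}$ would itself be optimal for $\rm{DPI_{\fbf}}$ with a strictly slack information constraint. The loose end you flag there is real but standard to close: for a convex program, a KKT multiplier at one optimal point is a dual optimal solution (stationarity says $a^{*}$ minimizes the Lagrangian, so the dual value at $(\eta,\lambda,\psi)$ equals the primal optimal value), and the weak-duality chain then forces complementary slackness between that same multiplier and \emph{every} primal optimal solution; applied to $\tilde a$ this contradicts $\lambda>0$. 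With that sentence added your argument is complete, and it is arguably more elementary than the paper's since it never has to verify that the rescaled stationarity conditions match those of RD; what the paper's KKT-transfer buys is that it avoids any discussion of alternative optima or dual uniqueness. Your alternative ``sensitivity'' sketch (decreasing $C_{\fbf}$ strictly increases the optimal distortion, so the curves meet at $(d^{*},C_{\fbf})$) is the vaguest part of the proposal and would need the perturbation-function subgradient inequality to be made precise; the complementary-slackness route is the one to keep.
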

\begin{proof}
We first note that since the objective of the $\rm {DPI_{\fbf}}$ can be expressed using $a$ alone, we drop variable $Q$ from the original formulation of  $\rm {DPI_{\fbf}}$. 
Let $a^{*}$ be a minimizer of $\rm{DPI_{\fbf}}$. Then, there exist Lagrange multipliers $\eta,\lambda,\psi$ satisfying the following KKT conditions:
\begin{align}
-P_S(s)d(s,\shat)&=\eta(s)+\lambda \frac{\partial I_{\fbf}(aP_S)}{\partial a}\biggr \rvert_{a^{*}}-\psi(\shat|s), \non \\
\lambda &\geq 0, \quad \lambda (I_{\fbf}(a^{*}P_S)-C_{\fbf})=0,\label{eq:lagrangianDPI}\\
\psi(\shat|s) &\geq 0,\quad \psi(\shat|s)a^{*}(\shat|s)\equiv 0.\nonumber 
\end{align} Now, consider the following optimization problem: 
	\begin{align*} ({\rm RD}) \qquad \min_{a(\shat|s) \in \Pscr(\Shat|S):\Ebb[d(S,\Shat)] \leq \dtilde} I_{\fbf}(aP_S). \end{align*} Notice that by the property of $\fbf$-mutual information, for fixed $P_S$, $I_{\fbf}(aP_S)$ is a convex function of $a$ whereby $\DPI_\fbf$ and RD are both convex optimization problems. Comparing the KKT conditions of these problems, it is easy to see that when $\lambda\in (0,+\infty)$, a minimizer $a^{*}$ of $\rm{DPI_{\fbf}}$ also solves RD with $\widetilde{d}=d^*$. 
	Thus, a minimizer  $a^{*}$ of $\rm{DPI_{\fbf}}$ satisfies,
\begin{align*}
	R_{\fbf}(d^{*})&=I_{\fbf}(a^{*}P_S),\\
	&\stackrel{(a)}{=}C_{\fbf}\hspace{0.2cm} \mbox{when $a^{*}$ solves $\rm{DPI_{\fbf}}$},	
	\end{align*}where equality in $(a)$ follows from \eqref{eq:lagrangianDPI} as $\lambda>0$.
\end{proof}
\begin{proofarg}[of Lemma~\ref{lem:extremepointinclusion}] Let $(Q_{X|S}^{*},Q_{\widehat{S}|Y}^{*},Q^{*}) \in \ext(\FEA({\SC}))$ and $W^{*}(z)\equiv Q_{X|S}^{*}  Q_{\widehat{S}|Y}^{*} (z)$.
We know from Theorem~\ref{thm:extremepointsSC} that $Q_{X|S}^{*},Q_{\widehat{S}|Y}^{*} $ are deterministic. This  implies that $W^{*}$ is deterministic too. \ie,    \begin{align*} W^{*
}(z)=Q_{X|S}^{*}Q_{\widehat{S}|Y}^{*}(z) = \I{x=f(s)\ {\rm and }\ \shat=g(y) } \quad \forall z.
\end{align*} 
 Assume that the vector $( Q_{X|S}^{*},Q_{\widehat{S}|Y}^{*}, W^{*})$ does not constitute an extreme point of LP. Then, it can be written as a convex combination of two distinct vectors $(Q^{1}_{X|S}, Q^{1}_{\widehat{S}|Y},W^{1} ) \in \FEA({\rm LP})$ and $(Q^{2}_{X|S},Q^{2}_{\widehat{S}|Y},W^{2})\in \FEA({\rm LP})$ as follows. 
 $$
 \left( \begin{array}{c}
Q_{X|S}^{*}\\
Q_{\widehat{S}|Y}^{*}\\
W^{*}
 \end{array} \right) = \alpha  \left( \begin{array}{c}
Q^{1}_{X|S}\\
Q^{1}_{\widehat{S}|Y}\\
W^{1}
 \end{array} \right)
 + (1-\alpha) \left( \begin{array}{c}
Q^{2}_{X|S}\\
Q^{2}_{\widehat{S}|Y}\\
W^{2}
 \end{array} \right),
$$
\noindent for some $\alpha \in (0,1).$ From the LP constraints, it follows that $0\leq W^1,W^2 \leq 1$. 
But since $Q_{X|S}^{*},Q_{\widehat{S}|Y}^{*} $ and $W^{*}$ are deterministic,  it is clear that they cannot be written as the above convex combination of any other vectors. Hence,
$(Q_{X|S}^{*},Q_{\widehat{S}|Y}^{*},W^{*})=(Q^{1}_{X|S},Q^{1}_{\widehat{S}|Y},W^{1})=(Q^{2}_{X|S},Q^{2}_{\widehat{S}|Y},W^{2})$
 which is a contradiction.  Thus, we have the result.
\end{proofarg}
\begin{proofarg}[of Proposition~\ref{prop:positiveoptimal}]
Since $d(\cdot,\cdot)\geq 0$, it follows that $\OPT(\LP)=\OPT(\DP) \geq 0$. To show that the optimal objective value of DP is strictly positive, it suffices to show by the strong duality of linear programming that the optimal value of LP cannot be zero. 
Assume the contrary. Let $\OPT(\LP)=0$. Recall that we have the LP objective function expressed as \begin{align*}
\sum_{s,x,y,\shat}P_S(s)P_{Y|X}(y|x)d(s,\shat)W(s,x,y,\shat). 
\end{align*}
However, since $P_S(s)P_{Y|X}(y|x) >0,$ for all $s \in \Sscr$, $x \in \Xscr$, $y \in \Yscr$ and for any fixed $x \in \Xscr$, $y \in \Yscr$, $d(s,\shat)=0$ if and only if $s=\shat$, we have that $\OPT(\LP)=0$ only if for all $x \in \Xscr$ and all $y \in \Yscr$, and all $s\neq \shat$, $W(s,x,y,\shat)=0$.
%
%
Now, consider the LP constraint, \begin{align}\sum_x W(s,x,y,\shat)=Q_{\Shat|Y}(\shat|y) \quad \forall s,\shat,y
\label{eq:contradiction}.
\end{align} For any $y \in \Yscr$ and any $\shat \in \Sscrhat$, putting $s \neq \shat$ in \eqref{eq:contradiction}  implies that  $Q_{\Shat|Y}(\shat|y)=0.$ Since this holds for all $\shat \in \Sscrhat$, this contradicts the LP constraint that for $y \in \Yscr$, $\sum_{\shat}Q_{\Shat|Y}(\shat|y)=1$. The result follows.
\end{proofarg} 
\section{Successive Refinement Source-Coding Problem}\label{app:SR}
We first present the linear programming relaxation of SR. To obtain the relaxation, we define the following new set of variables:
\begin{align*}
A_0(z)&\equiv Q_{X_1|S} Q_{X_2|S}Q_{\Shat_1|Y_1} Q_{\Shat_2|Y_1,Y_2}(z),\\A_1(z_1)&\equiv Q_{X_1|S} Q_{X_2|S} Q_{\Shat_1|Y_1}(z_1),\\
A_2(z_2)&\equiv Q_{X_1|S}(x_1|s)Q_{X_2|S}(x_2|s)\\
A_3(z_3)&\equiv Q_{X_1|S}(x_1|s)Q_{\Shat_2|Y_1,Y_2}(\shat_2|y_1,y_2),\\
A_4(z_4)&\equiv Q_{X_2|S}(x_2|s)Q_{\Shat_1|Y_1}(\shat_1|y_1),\\
A_{5}(z_5)&\equiv Q_{X_1|S} Q_{\Shat_2|Y_1,Y_2} Q_{\Shat_1|Y_1}(z_5),\\
A_{6}(z_{6})&\equiv Q_{X_2|S} Q_{\Shat_2|Y_1,Y_2} Q_{\Shat_1|Y_1}(z_{6}),\\
\end{align*}where $z:=(s,x_1,x_2,y_1,y_2,\shat_1,\shat_2)$, $z_1:=(s,x_1,x_2,\shat_1,y_1)$, $z_2:=(x_1,x_2,s)$,  $z_3:=(x_1,s,\shat_2,y_1,y_2)$, $z_4:=(x_2,s,\shat_1,y_1)$, $z_5:=(s,x_1,\shat_1,y_1,\shat_2,y_2)$, $z_{6}:=(s,x_2,\shat_1,y_1,\shat_2,y_2)$.

The LP relaxation of SR is then obtained as,
$$
\problemsmalla{LPSR}
	{T}
	{\displaystyle\sum_z \kappa(z) P_SP_{Y_1|X_1}P_{Y_2|X_2} A_0(z)}
				 {\begin{array}{r@{\ }c@{\ }l}
				 			T &\in& \Gamma,
	\end{array}}
	$$where $T$ represents the collection of variables $(Q_{X_1|S},Q_{X_2|S},Q_{\Shat_1|Y_1},Q_{\Shat_2,Y_2},A_0,A_1 ,\hdots,A_{6})$ and 
	\begin{align*}
	&\Gamma:=\biggl \lbrace T \geq 0 \mid  \sum_{x_2} Q_{X_2|S}(x_2|s)\equiv 1, \hspace{0.1cm} \sum_{\shat_1} Q_{\Shat_1|Y_1}(\shat_1|y_1)\equiv 1 ,\\&\quad \sum_{\shat_2} Q_{\Shat_2|Y_1,Y_2}(\shat_2|y_1,y_2)\equiv 1, \hspace{0.1cm} \sum_{x_1} Q_{X_1|S}(x_1|s)\equiv 1,\hspace{0.1cm}\\&\quad \sum_{x_1}A_0(z)\equiv A_{6}(z_{6}), \hspace{0.1cm} \sum_{x_2}A_0(z)\equiv A_{5}(z_5)\\&\quad \sum_{\shat_2}A_0(z)\equiv A_1(z_1),\hspace{0.1cm}\sum_{\shat_1}A_1(z_1)\equiv A_2(z_2)\\&\quad\sum_{x_1}A_2(z_2)\equiv Q_{X_2|S}(x_2|s), \\ 
	&\quad \sum_{x_1}A_3(z_3)=Q_{\Shat_2|Y_1Y_2}(\shat_2|y_1,y_2),\\&\quad \sum_{x_2}A_4(z_4)\equiv Q_{\Shat_1|Y_1}(\shat_1|y_1),\hspace{0.1cm} \sum_{\shat_1}A_{5}(z_5)\equiv  A_3(z_3), \\&\quad \sum_{\shat_2}A_{6}(z_{6})\equiv A_4(z_4)\biggr \rbrace,
	\end{align*}  where the constraints in $\Gamma$ are obtained as explained in Section~\ref{subsec:successiveref}.
	Let  $\eta_1(s)$, $\eta_2(y_1)$, $\eta_3(y_1,y_2)$, $\eta^4(s)$, $\lambda^a(z_{10})$, $\lambda^b(z_9)$,   $\lambda^c(s,x_1,x_2,y_1,y_2,\shat_1)$,  $\mu(s,x_1,x_2,y_1)$, $\delta^1(x_2,s)$,  $\delta^2(s,\shat_2,y_1,y_2)$,  $\delta^3(s,\shat_1,y_1)$,  $\theta(x_1,s,y_1,\shat_2,y_2)$ and $\gamma(x_2,s,\shat_1,y_1,y_2)$ be the Lagrange multipliers corresponding to the constraints of $\Gamma$ in that order. Let $F:=(\eta^1,\eta^2,\eta^3,\eta^4,\lambda^a,\lambda^b,\lambda^c,\mu,\delta^1,\delta^2,\delta^3,\theta,\gamma)$ represent the collection of all these Lagrange multipliers.

	With $\Pi(z)\equiv P_S(s)P_{Y_1|X_1}(y_1|x_1)P_{Y_2|X_2}(y_2|x_2)$ $\Ibb\{d_1(s,\shat_1)>D_1$ or $d_2(s,\shat_2)>D_2\}$, the dual program of LPSR is given as DPSR.
\begin{small}
	\begin{figure*}[!t]
\normalsize
\begin{equation}
{\rm DPSR}\quad \qquad \max_{F} \quad \qquad \sum_s\eta^1(s)+ \sum_{y_1}\eta^2(y_1)+\sum_{y_1,y_2}\eta^3(y_1,y_2)\nonumber
\end{equation}
\begin{align*}
\eta^1(s)-\delta^1(x_2,s)&\leq 0, \qquad(D1)\\  
\eta^2(y_1)-\sum_s \delta^3(s,\shat_1,y_1)&\leq 0, \qquad(D2)\\
 \eta^3(y_1,y_2)-\sum_s \delta^2(s,\shat_2,y_1,y_2)&\leq 0, \qquad(D3)\\ \lambda^a(z_{10})+\lambda^b(z_9)+\lambda^c(s,x_1,x_2,y_1,y_2,\shat_1)&\leq \Pi(z),\hspace{0.15cm} (D4)\\
	-\sum_{y_2}\lambda^c(s,x_1,x_2,y_1,y_2,\shat_1)+\mu(s,x_1,x_2,y_1)&\leq 0, \qquad(D5),\\
	-\sum_{y_1}\mu(s,x_1,x_2,y_1)+\delta^1(x_2,s)&\leq 0,\qquad (D6)\\
	-\theta(x_1,s,y_1,\shat_2,y_2)+\delta^2(s,\shat_2,y_1,y_2)&\leq 0,\qquad (D7)\\
	-\sum_{y_2}\gamma(x_2,s,\shat_1,y_1,y_2)+\delta^3(s,\shat_1,y_1) &\leq 0,\qquad(D8)\\
	-\lambda^b(z_9)+\theta(x_1,s,y_1,\shat_2,y_2)&\leq 0, \qquad(D9)\\
	-\lambda^a(z_{10})+\gamma(x_2,s,\shat_1,y_1,y_2)&\leq 0, \qquad(D10)\\
	\eta^4(s)&\leq 0,\qquad(D11)
\end{align*}
\hrulefill
\end{figure*}\end{small}
For notational convenience, we use $P(s)$ to represent $P_S(s)$, $P(y_i|x_i)$ to represent $P_{Y_i|X_i}(y_i|x_i)$, $i=1,2$.

	\begin{proofarg}[ of Theorem~\ref{thm:SR}]
	For the proof, consider the following values of dual variables of DPSR,
	\begin{align*}
	\hspace{-0.2cm}\lambda^c(s,x_1,x_2,y_1,y_2,\shat_1)&\equiv P(s)P(y_1|x_1)P(y_2|x_2)\times\\&\hspace{-2.5cm}\biggl[ \Ibb\{\exp(j_{\Shat_1}(s,D_1)-\gamma_1)\geq M_1 \hspace{0.2cm}\mbox{or}\\&\hspace{-2.5cm} \exp(j_{\Shat_2}(s,D_2)-\gamma_2)\geq M_1M_2\}+ \frac{1}{2M_1}\times\\&\hspace{-2.5cm}\biggl(\exp(j_{\Shat_1}(s,D_1)-\gamma_1)+ \frac{1}{M_2}\exp(j_{\Shat_2}(s,D_2)-\gamma_2) \biggr)\times\\& \hspace{-3cm}\Ibb\biggl \lbrace\frac{\exp(j_{\Shat_1}(s,D_1)-\gamma_1)}{M_1}< 1, \frac{\exp(j_{\Shat_2}(s,D_2)-\gamma_2)}{ M_1M_2}<1\biggr \rbrace\biggr], 
	\end{align*}
	\begin{align*}
	\lambda^a(s,x_2,\shat_1,\shat_2,y_1,y_2)&\equiv -\frac{P(s)}{M_1}\sum_{x_1}P(y_1|x_1)P(y_2|x_2)\times\\&\hspace{-3cm}\exp(j_{\Shat_1}(s,D_1)-\gamma_1) \Ibb\{d_1(s,\shat_1)\leq D_1,d_2(s,\shat_2)\leq D_2\}, \\
	\lambda^b(s,x_1,\shat_1,\shat_2,y_1,y_2)&\equiv -\frac{P(s)}{M_1M_2}\sum_{x_1,x_2}P(y_1|x_1)P(y_2|x_2)\times\\& \hspace{-3cm}\exp(j_{\Shat_2}(s,D_2)-\gamma_2)\Ibb\{d_1(s,\shat_1)\leq D_1,d_2(s,\shat_2)\leq D_2\}, \\
	\gamma(x_2,s,\shat_1,y_1,y_2)&\equiv -\frac{P(s)}{M_1}\sum_{x_1}P(y_1|x_1)P(y_2|x_2)\times\\&\hspace{-1cm}\exp(j_{\Shat_1}(s,D_1)-\gamma_1) \Ibb\{d_1(s,\shat_1)\leq D_1\},\\
	\theta(x_1,s,y_1,\shat_2,y_2)&\equiv -\frac{P_S(s)}{M_1M_2}\sum_{x_1,x_2}P(y_1|x_1)P(y_2|x_2)\times\\&\exp(j_{\Shat_2}(s,D_2)-\gamma_2)\Ibb\{d_2(s,\shat_2)\leq D_2\},\\
	\delta^3(s,\shat_1,y_1)&\equiv \sum_{y_2}\gamma(x_2,s,\shat_1,y_1,y_2),\\
	\delta^2(s,\shat_2,y_1,y_2)&\equiv \theta(x_1,s,y_1,\shat_2,y_2),\\
	\mu(s,x_1,x_2,y_1)&\equiv \sum_{y_2}\lambda^c(s,x_1,x_2,y_1,y_2,\shat_1),\\
	\delta^1(x_2,s)&\equiv \sum_{y_1}\mu(s,x_1,x_2,y_1),\\
	\eta^1(s)&\equiv \delta^1(x_2,s),\\
	\eta^2(y_1)&\equiv -\frac{\exp(-\gamma_1)}{M_1}\sum_{x_1}P(y_1|x_1),\\
	\eta^3(y_1,y_2)&\equiv -\frac{\exp(-\gamma_2)}{M_1M_2}\sum_{x_1,x_2}P(y_1|x_1)P(y_2|x_2),
	\end{align*} and $\eta^4(s) \equiv 0$, where $\gamma_1,\gamma_2$ are scalars to be chosen later. Notice that $\lambda^c$ is independent of $\shat_1$, $\sum_{y_2}\lambda^c(\cdot)$ is independent of $x_2$,  $\sum_{y_1} \mu(\cdot)$ is a function of only $s$, and finally, $\sum_{y_2} \gamma(\cdot) $ is independent of $x_2$.
	
	We now check the feasibility of these dual variables with respect to the constraints. It can be easily seen that (D1) is satisfied trivially. To verify (D2), we have, for any $\shat_1\in\Sscrhat_1$, \begin{align*}&\sum_s\delta^3(s,\shat_1,y_1)=\sum_{s,y_2}-\frac{P_S(s)}{M_1}\sum_{x_1}P(y_1|x_1)P(y_2|x_2)\times\\&\exp(j_{\Shat_1}(s,D_1)-\gamma_1) \Ibb\{d_1(s,\shat_1)\leq D_1\}\\&=-\sum_{x_1}P(y_1|x_1)\frac{\exp(-\gamma_1)}{M_1}\sum_s P_S(s)\exp(j_{\Shat_1}(s,D_1))\times\\& \qquad \qquad\Ibb\{d_1(s,\shat_1)\leq D_1\}\\&\stackrel{(a)}{\geq}-\sum_{x_1}P(y_1|x_1)\frac{\exp(-\gamma_1)}{M_1},
	\end{align*}which is $\eta^2(y_1)$, thereby satisfying (D2). The inequality in (a) follows from $\Ibb{\{d_1(s,\shat_1)\leq D_1\}} \leq \exp(\lambda^{*}_1(D_1-d_1(s,\shat_1))),$  $\lambda^{1*}=-R_S^{(1)'}(D_1)>0$ and using \eqref{eq:LtiltedSR}. The feasibility of the dual variables with respect to (D3) can also be verified in a similar manner.
	
To verify the feasibility of dual variables with respect to (D4), we consider the following cases.\\
Case 1: 	$\Ibb\{d_1(s,\shat_1)>D_1$ or $d_2(s,\shat_2)>D_2\}=1$.\\
In this case $\lambda^a,\lambda^b$ are zero. Consequently, LHS of (D4) becomes $\lambda^c$. We further consider the following subcases.\\
Case 1a: $\Ibb\{\exp(j_{\Shat_1}(s,D_1)-\gamma_1)\geq M_1\hspace{0.2cm}\mbox{or}\hspace{0.2cm} \exp(j_{\Shat_2}(s,D_2)-\gamma_2)\geq M_1M_2\}=1$.\\
In this case, LHS of (D4) becomes $P(s)P(y_1|x_1)P(y_2|x_2)$ which is the RHS of (D4).
\\Case 1b: $\Ibb\{\exp(j_{\Shat_1}(s,D_1)-\gamma_1)\geq M_1\hspace{0.2cm}\mbox{or}\hspace{0.2cm} \exp(j_{\Shat_2}(s,D_2)-\gamma_2)\geq M_1M_2\}=0$.\\
In this case, $M_1> \exp(j_{\Shat_1}(s,D_1)-\gamma_1)$ and $M_1M_2> \exp(j_{\Shat_2}(s,D_2)-\gamma_2)$ and the LHS of (D4) evaluates to
\begin{align*}
&\frac{P_S(s)}{2}P(y_1|x_1)P(y_2|x_2)\biggl(\frac{1}{M_1}\exp(j_{\Shat_1}(s,D_1)-\gamma_1)\\&+ \frac{1}{M_1M_2}\exp(j_{\Shat_2}(s,D_2)-\gamma_2) \biggr),
\end{align*}which is less than $P_S(s)P(y_1|x_1)P(y_2|x_2)$, the RHS of (D4).\\
Case 2: $\Ibb\{d_1(s,\shat_1)>D_1$ or $d_2(s,\shat_2)>D_2\}=0$.\\
In this case, RHS of (D4) is zero. We again consider the following sub-cases.\\
Case 2a: $\Ibb\{\exp(j_{\Shat_1}(s,D_1)-\gamma_1)\geq M_1\hspace{0.2cm}\mbox{or}\hspace{0.2cm} \exp(j_{\Shat_2}(s,D_2)-\gamma_2)\geq M_1M_2\}=1$.\\
In this case, we upper bound $\lambda^c$ using that $\Ibb\{\exp(j_{\Shat_1}(s,D_1)-\gamma_1)\geq M_1\hspace{0.2cm}\mbox{or}\hspace{0.2cm} \exp(j_{\Shat_2}(s,D_2)-\gamma_2)\geq M_1M_2\} \leq $ $\Ibb\{\exp(j_{\Shat_1}(s,D_1)-\gamma_1)\geq M_1\}$ $+\Ibb\{ \exp(j_{\Shat_2}(s,D_2)-\gamma_2)\geq M_1M_2\}$. Consequently, we get the following upper bound on $\lambda^c$,
\begin{align*}
&P(s)P(y_1|x_1)P(y_2|x_2)\biggl[\Ibb\{\exp(j_{\Shat_1}(s,D_1)-\gamma_1)\geq M_1\}\\&+\Ibb\{ \exp(j_{\Shat_2}(s,D_2)-\gamma_2)\geq M_1M_2\}\biggr]\end{align*}\begin{align*}&\stackrel{(a)}{\leq}P(s)P(y_1|x_1)P(y_2|x_2)\biggl[\frac{\exp(j_{\Shat_1}(s,D_1)-\gamma_1)}{ M_1}\\&+\frac{\exp(j_{\Shat_2}(s,D_2)-\gamma_2)}{ M_1M_2}\biggr],
\end{align*}where the inequality follows by upper bounding the indicator function, $\Ibb\{\exp(j_{\Shat_1}(s,D_1)-\gamma_1)\frac{1}{ M_1}\geq 1\} \leq \exp(j_{\Shat_1}(s,D_1)-\gamma_1)\frac{1}{ M_1}$ and $\Ibb\{\exp(j_{\Shat_2}(s,D_2)-\gamma_2)\frac{1}{ M_1M_2}\geq 1\} \leq \exp(j_{\Shat_2}(s,D_2)-\gamma_2)\frac{1}{ M_1M_2}.$
Employing this, LHS of (D4) is upper bounded as,
\begin{align*}
&P_S(s)\biggl[P(y_1|x_1)P(y_2|x_2)\biggl[\frac{\exp(j_{\Shat_1}(s,D_1)-\gamma_1)}{ M_1}\\&+\frac{\exp(j_{\Shat_2}(s,D_2)-\gamma_2)}{ M_1M_2}\biggr]-\frac{\exp(j_{\Shat_1}(s,D_1)-\gamma_1)}{M_1}P(y_2|x_2)\times\\&\sum_{x_1}P(y_1|x_1) -\frac{\exp(j_{\Shat_2}(s,D_2)-\gamma_2)}{M_1M_2}\sum_{x_2}P(y_1|x_1)P(y_2|x_2)\biggr],
\end{align*}which is non-positive, thereby satisfying (D4).\\
Case 2b: $\Ibb\{\exp(j_{\Shat_1}(s,D_1)-\gamma_1)\geq M_1\hspace{0.2cm}\mbox{or}\hspace{0.2cm} \exp(j_{\Shat_2}(s,D_2)-\gamma_2)\geq M_1M_2\}=0$.\\
In this case, it can be easily verified that the LHS of (D4) is non-positive, thereby satisfying (D4).
Thus, the considered variables satisfy (D4).
The dual constraints (D5), (D6), (D7), (D8) and (D11) are trivially satisfied.

To check feasibility with respect to (D9), we have for any $s,x_1,\shat_1,\shat_2,y_1,y_2,$ \begin{align*}&\lambda^b(s,x_1,\shat_1,\shat_2,y_1,y_2)= -\frac{P_S(s)}{M_1M_2}\sum_{x_1,x_2}P(y_1|x_1)P(y_2|x_2)\times\\& \exp(j_{\Shat_2}(s,D_2)-\gamma_2)\Ibb\{d_1(s,\shat_1)\leq D_1,d_2(s,\shat_2)\leq D_2\},\\
&\geq -\frac{P_S(s)}{M_1M_2}\sum_{x_1,x_2}P(y_1|x_1)P(y_2|x_2)\exp(j_{\Shat_2}(s,D_2)-\gamma_2)\times\\& \qquad \Ibb\{d_2(s,\shat_2)\leq D_2\},
\end{align*}which is equal to $\theta(x_1,s,y_1,\shat_2,y_2)$, thereby satisfying (D9). The last inequality follows since $\Ibb\{d_1(s,\shat_1)\leq D_1,d_2(s,\shat_2)\leq D_2\}\leq$  $\Ibb\{d_2(s,\shat_2)\leq D_2\}$. The feasibility of the dual variables with respect to (D10) can be verified in a similar manner.
Thus, the considered dual variables satisfy all the dual constraints and is thus feasible for DPSR.

Consequently, taking the dual cost as a lower bound on $\OPT({\rm SR})$ and taking the supremum over $\gamma_1$, $\gamma_2$ to get the best bound, results in the required bound.
	\end{proofarg}
\bibliographystyle{IEEEtran}
\bibliography{ref,apsbib}
\end{document}